\begin{document}

\title{Learning quantum Hamiltonians at any temperature \\ in polynomial time}
\author{
Ainesh Bakshi \\
\texttt{ainesh@mit.edu} \\
MIT
\and
Allen Liu \\
\texttt{cliu568@mit.edu} \\
MIT
\and
Ankur Moitra \\
\texttt{moitra@mit.edu} \\
MIT
\and
Ewin Tang \\
\texttt{ewin@berkeley.edu} \\
UC Berkeley
}
\date{}

\maketitle

\begin{abstract}
    We study the problem of learning a local quantum Hamiltonian $H$ given copies of its Gibbs state $\rho = e^{-\beta H}/\tr(e^{-\beta H})$ at a known inverse temperature $\beta>0$.
    Anshu, Arunachalam, Kuwahara, and Soleimanifar~\cite{aaks20} gave an algorithm to learn a Hamiltonian on $n$ qubits to precision $\eps$ with only polynomially many copies of the Gibbs state, but which takes exponential time.
    Obtaining a computationally efficient algorithm has been a major open problem~\cite{alhambra22,aa23}, with prior work only resolving this in the limited cases of high temperature~\cite{hkt21} or commuting terms~\cite{aaks21commuting}.
    We fully resolve this problem, giving a polynomial time algorithm for learning $H$ to precision $\eps$ from polynomially many copies of the Gibbs state at any constant $\beta > 0$.

    Our main technical contribution is a new \textit{flat} polynomial approximation to the exponential function, and a translation between multi-variate scalar polynomials and nested commutators. This enables us to formulate Hamiltonian learning as a polynomial system. We then show that solving a low-degree sum-of-squares relaxation of this polynomial system suffices to accurately learn the Hamiltonian.
\end{abstract}



\thispagestyle{empty}
\clearpage
\newpage

\microtypesetup{protrusion=false}
\tableofcontents{}
\thispagestyle{empty}
\microtypesetup{protrusion=true}
\clearpage
\setcounter{page}{1}

\section{Introduction}

Quantum computing has sparked a major interest in increasing the scale and control of quantum systems~\cite{gan14}.
This increased interest is accompanied with the need for better algorithms to characterize and verify these systems~\cite{cekkz21}.
A central computational task in controlling and verifying quantum systems is that of \emph{Hamiltonian learning}, where the goal is to estimate physical properties, namely the interaction strengths, of an interacting quantum many-body system from measurements~\cite{cpfsgb10,slp11,BAL19,aaks20}.
Formally, we consider $\qubits$ qubits (quantum particles with a local dimension of two) on a lattice.\footnote{
    Our results do not need the Hamiltonian to be \emph{geometrically local} as described here; we merely require it to be \emph{low-intersection} in the sense of Haah, Kothari, and Tang~\cite{hkt21}. So, our algorithm will still work if the locality structure of the qubits is, say, an expander graph.
}
The resulting system is characterized by a Hamiltonian, a $2^\qubits \times 2^\qubits$ complex Hermitian matrix of the form
$H = \sum_{a=1}^\terms \lambda_a E_a$,
where a term $E_a$ encodes an interaction on at most $\locality$ of the particles, and the coefficient $\lambda_a \in [-1, 1]$ is the strength of the corresponding interaction.
We assume that the system has reached thermal equilibrium at a known inverse temperature $\beta$, in which case it is in the Gibbs state with density matrix
$$\rho = \frac{e^{-\beta H}}{\Tr e^{-\beta H}}.$$
The density matrix is normalized by the partition function, $\Tr e^{-\beta H}$, which ensures that $\rho$ has trace one.
The goal of the Hamiltonian learning problem is to estimate the $\lambda_a$'s, given the ability to prepare copies of the Gibbs state.
\begin{problem*}[Hamiltonian learning, \cref{prob:main} (informal)]
    Consider $\qubits$ qubits on a constant-dimensional lattice.
    Let $H = \sum_{a=1}^\terms \lambda_a E_a \in \mathbb{C}^{2^\qubits \times 2^\qubits}$ be a Hamiltonian whose terms $E_a$ are known, distinct, non-identity Pauli operators supported on at most $k$ qubits that are local with respect to the lattice.
    Further suppose the coefficients $\lambda_a \in \R$ satisfy $\abs{\lambda_a} \leq 1$.
    Given copies of the corresponding Gibbs state $\rho$ at a known inverse temperature $\beta > 0$, and $\eps>0$, find estimates $\tilde{\lambda}_a$ such that $\abs{\tilde{\lambda}_a - \lambda_a} \leq \eps$ for all $a \in [\terms]$.
\end{problem*}
We are interested in both the number of copies of $\rho$ that we need, which is called the sample complexity, as well as the running time of the algorithm.
Of particular interest to us is Hamiltonian learning in the \emph{low-temperature} regime, where $\beta$ is an arbitrarily large constant.

\paragraph{Motivation.}
As alluded to above, this problem is of fundamental importance in science and engineering.
For example, in pursuit of understanding the phenomena like topological order and superconductivity that are studied in condensed matter physics, experimentalists carefully design systems which exhibit this exotic behavior.
In particular, analog quantum simulators are tuned to obey poorly-understood Hamiltonians like the Fermi--Hubbard model for experimental exploration~\cite{gb17,ecfwsg11,hcbbbd23}.\footnote{
    See also \cite[Section 5.4.2]{ghls15} for a description of this work aimed towards theoretical computer scientists.
}
For these experiments, a natural goal is to learn the interactions which give rise to various phenomena~\cite{wgfc14,kbevz21}.
Intractability of computation is a major barrier to resolving open problems like finding the phase diagram of the 2D Fermi--Hubbard model, so having better algorithmic tools is of key importance in this domain~\cite{leblanc15}.
This problem also arises when engineering quantum systems: a major challenge in building near-term quantum devices is being able to validate them---certify that they implement the desired Hamiltonian---and understand sources of error~\cite{slp11,shnbdu14}.
Quantum devices with $100$ or more qubits are challenging to simulate classically, but quantum Hamiltonian learning has emerged as an alternative strategy for benchmarking devices by combining quantum resources and classical learning techniques~\cite{cekkz21,gcc24}.

The low-temperature setting is of particular interest because because quantum phenomena are most prominent at zero or near-zero temperature~\cite{afov08}, precisely where high-temperature series expansions fail~\cite{leblanc15}.\footnote{
    Morally, these expansions fail precisely because of the non-local quantum correlations we'd like to understand!
}
In some sense, this is the only relevant setting for analog quantum simulators, since models at high-temperature can be solved with a classical computer~\cite[Chapter 8]{ohz06}, without needing to resort to a quantum simulation.
More generally, low temperature is the \emph{computationally} interesting regime, since quantum advantage is a low-temperature phenomenon: ``temperature scaling'' laws show that quantum annealers can only achieve large speedups over classical computers when $\beta$ scales with system size~\cite{amh17}.

\paragraph{Prior work.}
Despite its importance, the computational complexity of Hamiltonian learning from Gibbs states is not well understood.
Anshu, Arunachalam, Kuwahara, and Soleimanifar gave the first polynomial sample complexity bounds for this task in 2020~\cite{aaks20}, attaining coefficient estimates using
\begin{equation}
    \frac{2^{\poly(\beta)} m^2 \log m}{\beta^c \epsilon^2}
    \tag*{\cite{aaks20}}
\end{equation}
copies of the Gibbs state~\cite[Remark 4.5]{hkt21}.
However, their work comes with a serious drawback: it is computationally inefficient.
In particular, they give a stochastic gradient descent algorithm and show that it converges to the true parameters in a small number of iterations, but actually computing an iterate involves evaluating a log-partition function, which is well-known to be computationally hard even for classical systems \cite{montanari2015computational}.

Prior work has obtained fast algorithms for Hamiltonian learning in limited regimes.
A follow-up paper of Anshu, Arunachalam, Kuwahara and Soleimanifar~\cite{aaks21commuting} shows that when the terms of $H$ commute, then a direct generalization of the classical algorithm learns the parameters efficiently.
Further, \cite{aaks20} notes that their suggested algorithm can be performed in polynomial time for sufficiently high temperature (small $\beta$), since in this regime the log-partition function can be evaluated, using that its multivariate Taylor series expansion converges rapidly.
Haah, Kothari, and Tang~\cite{hkt21} later gave an improved algorithm that achieves the sample and time complexity
\begin{equation}
    \frac{e^{\bigO{\beta}} \log \terms}{\beta^2 \epsilon^2} \quad \text{and} \quad \frac{ \terms  e^{\bigO{\beta}} \log \terms}{\beta^2 \epsilon^2}, \tag*{\cite{hkt21}}
\end{equation}
respectively, which they prove is tight up to the constant factor in the exponential, even in the classical case.

However, a central open question remains~\cite{aaks20,hkt21,alhambra22,aa23}: 

\begin{question}
\label{question:central-open-q}
\begin{center}
    \textit{Can Hamiltonian learning at low temperature be solved in time polynomial in $\qubits$?} 
\end{center}
\end{question}

In practice quantum many-body systems are run at low temperature, which is also when most macroscopic phenomena arise, so this is the most important regime for the problem.
However, as we discuss later, no strategies had been suggested for solving Hamiltonian learning at low temperature.
In fact, the situation is even more dire: all approaches to Hamiltonian learning used in prior settings fail catastrophically here, since reduction to sufficient statistics~\cite{aaks20}, efficient computation of the partition function~\cite{aaks20}, the approximate Markov property~\cite{kkb20}, and cluster expansion~\cite{hkt21} all provably fail for sufficiently large $\beta$.
This state of the literature reflects a broader scarcity of algorithmic tools known for understanding Hamiltonians outside of special settings like high temperature or one dimension.
So, a negative resolution to this question seemed plausible, or even likely.
Indeed, a recent survey on the complexity of learning quantum systems by Anshu and Arunachalam~\cite{aa23} discusses Hamiltonian learning and conclude by asking two questions:

\begin{question}[\cite{aa23}]
\label{question:ham-learning-approx-conditional-indp}
    Can we achieve Hamiltonian learning under the assumption that the Gibbs states satisfy an approximate conditional independence?\footnote{
    Approximate conditional independence is a property of Gibbs states which is proven to hold in 1D and conjectured to hold in general.
}
\end{question}
\begin{question}[\cite{aa23}]
\label{question:low-temp-gibbs-states-pseudo-random}
    Could low temperature Gibbs states be pseudorandom, which would explain the difficulty in finding a time efficient algorithm?
\end{question}

\subsection{Our results}

Surprisingly, we provide a positive resolution to \cref{question:central-open-q}. 
Our main result is a computationally efficient algorithm for Hamiltonian learning that works at all temperatures.
This is a fortunate development since, if learning were truly computationally hard in the low-temperature regime, then we could not understand the behavior of analog quantum simulators in precisely the regimes where they outperform classical simulators~\cite[Section 6.10]{Pre18}.
As a consequence our main result, we also resolve \cref{question:ham-learning-approx-conditional-indp} positively and \cref{question:low-temp-gibbs-states-pseudo-random} negatively. 

\begin{theorem}[Efficiently learning a quantum Hamiltonian, \cref{thm:main-ham-learning-theorem} (informal)]
Given $\eps>0$, $ \beta \geq \beta_c$, for a fixed universal constant $\beta_c>0$, and $\mathfrak{n}$ copies of the Gibbs state of a low-intersection Hamiltonian, $H= \sum_{a\in [\terms]} \lambda_a E_a $, there exists an algorithm that runs in time $\mathfrak{n}^{\mathcal{O}(1)}$ and outputs estimates $\braces{\hat{\lambda}_a}_{a\in[m]}$ such that with probability at least $99/100$, for all $a\in [m]$, $ \Abs{\lambda_a - \hat{\lambda}_a }\leq \eps$, whenever $\mathfrak{n} \ge \poly\parens{m , (1/\eps)^{2^{\mathcal{O}(\beta) }}  } $. 
\end{theorem}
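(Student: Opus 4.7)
The plan is to reformulate Hamiltonian learning as a multivariate polynomial feasibility problem in the unknown coefficients $\lambda_a$, and then solve it via a constant-degree sum-of-squares (SoS) relaxation. A natural starting point is the Kubo--Martin--Schwinger identity
\[
\Tr(\rho E_a B) \;=\; \Tr\!\bigl(\rho B \, e^{-\beta H} E_a e^{\beta H}\bigr),
\]
valid for every Pauli term $E_a$ and every observable $B$. Expanding the conjugation $e^{-\beta H} E_a e^{\beta H}$ as the nested-commutator series $\sum_{k\ge 0} \tfrac{(-\beta)^k}{k!}\,\mathrm{ad}_H^k(E_a)$ and writing $H=\sum_b \lambda_b E_b$ produces polynomial identities in the $\lambda_b$'s whose coefficients are expectations $\Tr(\rho W)$ of bounded-support Pauli operators $W$. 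First I would estimate each such local expectation to sufficient precision from the samples; because $H$ is low-interaction, only $\poly(m)$ such observables enter the identities I care about, and standard concentration gives the claimed sample complexity.

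The main difficulty at low temperature is that the nested-commutator series diverges term-by-term, so one cannot simply truncate. The key device is a \emph{flat} polynomial approximation $p_\beta(x) = \sum_{j=0}^{D} c_j x^j$ of degree $D$ that approximates $e^{-\beta x}$ on the spectrum of $H$ to precision $\eps$ while keeping $\max_j \lvert c_j \rvert$ bounded by a controlled function of $\beta$ and $\eps$; one can engineer such an approximation, for example, by post-processing a Chebyshev expansion on a suitably rescaled interval. I would then substitute $p_\beta$ for the exponential inside the KMS identity and translate the resulting scalar polynomial in the $\lambda_b$'s into an operator identity via the dictionary that sends each monomial $\lambda_{b_1}\cdots \lambda_{b_j}$ to the nested commutator $\mathrm{ad}_{E_{b_1}}\cdots \mathrm{ad}_{E_{b_j}}(E_a)$. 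Flatness is critical here: it ensures that errors in the estimated local expectations, and errors incurred by substituting $p_\beta$ for $e^{-\beta H}$, are not amplified when the identity is expanded into a polynomial system.

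The output of this reduction is a polynomial system $\{P_i(\lambda)\approx \hat c_i\}$ of degree $O(D)$ whose true coefficient vector $\lambda^\ast$ is an approximate solution. I would solve a degree-$O(D)$ SoS relaxation of this system and output $\hat\lambda_a := \widetilde{\mathbb{E}}[\lambda_a]$, where $\widetilde{\mathbb{E}}$ is the pseudo-expectation of any feasible pseudo-distribution. Since $D$ depends only on $\beta$ and $\eps$, for constant $\beta$ the SoS program has size $\poly(m)$ and runs in polynomial time. Correctness requires a low-degree SoS proof of \emph{identifiability}: any pseudo-distribution satisfying the polynomial constraints must have $\widetilde{\mathbb{E}}[(\lambda_a-\lambda_a^\ast)^2]\le \eps^2$ for every $a$. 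To obtain this I would encode, as a sum-of-squares certificate, a quantum analogue of the strong convexity of the log-partition function that drives the information-theoretic identifiability argument of \cite{aaks20}; the flat polynomial is again the lever that keeps every intermediate polynomial of bounded degree and controlled coefficient norm.

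The step I expect to be the main obstacle is this SoS identifiability argument: the existing analytic identifiability proofs are not phrased as low-degree polynomial inequalities, and translating strong convexity of the log-partition function into a constant-degree SoS certificate---while staying compatible with the flat polynomial approximation and with the noise budget on the local expectations---is where the bulk of the technical work lies. In particular, one must choose the approximation degree $D$, the SoS degree, and the sampling precision in a matched way so that accumulated approximation and estimation errors leave a clean $\eps$ bound at the end, which is what should ultimately force the sample complexity to scale like $\poly(m,(1/\eps)^{2^{O(\beta)}})$.
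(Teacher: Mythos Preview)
Your high-level architecture (KMS-type constraints, polynomial system in the $\lambda_a$'s, SoS relaxation) matches the paper, but two of your concrete technical choices are genuine gaps.

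\textbf{The flat approximation.} You ask for a polynomial that approximates $e^{-\beta x}$ on the full spectrum of $H$ with bounded coefficients, and suggest obtaining it from a Chebyshev expansion. Neither works. The spectrum of $H$ has width $\Theta(\terms)$, so any uniform approximation on it needs degree growing with $\terms$, which would make the SoS degree non-constant. The paper's flat approximation (\cref{def:flat-exp}) only approximates $e^{x}$ on a short interval $[-\kappa,\kappa]$ with $\kappa=O(\beta\log(1/\eps))$, and is merely required to grow no faster than $e^{\eta|x|}$ outside; this relaxed tail condition is then combined with the Arad--Kuwahara--Landau band-diagonality of local operators (\cref{lem:akl}) to control the contribution of eigenvalue pairs with large $|\sigma_i-\sigma_j|$. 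The paper explicitly notes (\cref{rmk:taylor-fails}) that Chebyshev truncations fail the flatness condition on the negative tail; the construction that succeeds is a product of geometrically-scaled Taylor truncations (\cref{def:iterative-truncated-taylor-series}), inspired by Lieb--Robinson-type peeling.

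\textbf{Identifiability.} You propose to SoS-ify the strong convexity of the log-partition function from \cite{aaks20}. That is precisely the route the paper avoids: the sufficient-statistics inversion based on log-partition convexity is the computationally hard step (see the discussion around \cite{montanari2015computational}). The paper's identifiability argument is structurally different. First, it adds linear \emph{commutator} constraints $\wt{\tr}(A_1A_2(H'\rho-\rho H'))\approx 0$ and shows via an SoS proof (\cref{lem:sos-almost-commuting}) that these force $[H,H']$ to be small; this is what drives the error terms in the polynomial/nested-commutator dictionary (\cref{thm:polynomial-equivalence}) to zero. Second, the flat polynomial is engineered to satisfy a specific \emph{monotonicity} inequality with an explicit bounded SoS certificate (\cref{thm:exp-monotone-approx}); combined with the local-variance lower bound of \cite{aaks20} (\cref{lem:large-marginals}, itself a degree-$2$ SoS fact), this yields witnesses $B_1,B_2$ that detect any discrepancy $\lambda_a'-\lambda_a$. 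None of this is a convexity argument, and you would not find it by trying to transcribe the analytic proof of \cite{aaks20} into SoS.
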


\begin{remark}[On temperature]
For our algorithm, we only need to know an upper bound on $\beta$, as we can consider a Gibbs state at temperature $\beta$ to be a Gibbs state at temperature, say, $2\beta$ with Hamiltonian $H/2$.
Our requirement that $\beta > \beta_c$ is for simplicity, and $\beta_c$ can be any constant bounded away from zero.
In particular, we can take $\beta_c$ to be the temperature at which the high-temperature algorithm of \cite{hkt21} fails; so, when $\beta < \beta_c$, we can simply appeal to~\cite{hkt21} to achieve a sample and time complexity of $\frac{\log \terms}{\beta^2 \epsilon^2}$ and $\frac{ \terms \log \terms}{\beta^2 \epsilon^2}$, respectively.
\end{remark}

\begin{remark}[On locality]
    We do not try to optimize the set of measurements used by the algorithm.
    As written, the quantum part of the algorithm simply estimates $\tr(A \rho)$ for $A$ ranging across Pauli matrices with support size $2^{\mathcal{O}(\beta)} \log(1/\eps)$.
    However, we can observe that we only need those $A$ which are local with respect to the geometry of the Hamiltonian.
    The algorithm is still a global one, though, as the classical part of the algorithm uses observables across the entire space to extract any given term of the Hamiltonian.
    See \cref{rmk:local} for more details.
\end{remark}

As noted by prior work~\cite{aaks20}, Hamiltonian learning is a generalization of the classical and well-studied problem of learning undirected graphical models, specifically parameter learning of these models.
This classical problem requires $\frac{e^{\bigO{\beta}} \terms \log(\terms)}{\beta^2\eps^2}$ time (and there is an algorithm matching the lower bound), so exponential dependence on $\beta$ is necessary~\cite{hkt21}.\footnote{
    It is an interesting open question to improve our doubly exponential dependence to singly exponential.
}
Analogies with the classical setting turn out to be of limited use, since the non-commutativity and non-locality inherent in the quantum setting rules out generalizations of classical ideas.
However, with the classical setting we can identify barriers to designing a time-efficient algorithm.

A key challenge of time-efficient Hamiltonian learning is that we cannot work directly with the partition function.
The only previous approach to low temperature~\cite{aaks20} only used its copies of $\rho$ to estimate $\tr(E_a \rho)$ for all $a \in [\terms]$.
It is known in the classical literature that taking just these estimates and using them to compute the parameters $\lambda_a$ is as hard as computing the partition function~\cite{montanari2015computational}.
To avoid this barrier, we take a richer set of expectations $\tr(P \rho)$ that allows us to reduce learning to a tractable, but fairly involved, optimization problem instead.
Along the way we develop several new tools of independent interest, and ultimately give a semi-definite programming algorithm based on the sum-of-squares hierarchy.
Consequently, we show that sophisticated modern tools in optimization theory lead to a surprising resolution of the Hamiltonian learning problem.

\subsection{Technical overview}

The recipe for quantum Hamiltonian learning introduced by Anshu, Arunachalam, Kuwahara, and Soleimanifar~\cite{aaks20} is based on matching the local marginals of the Gibbs state $\rho$, which we can estimate with copies of $\rho$.
Specifically, for two Hamiltonians $H = \sum \lambda_a E_a$ and $H' = \sum \lambda_a' E_a$ with respective Gibbs states
    $$\rho = \frac{e^{-\beta H}}{\tr(e^{-\beta H} )}\mbox{ and } \rho' = \frac{e^{-\beta H'}}{\tr(e^{-\beta H'} )},$$
they show that $H = H'$ (and so $\lambda_a = \lambda_a'$ for all $a \in [\terms]$) if and only if $\rho$ and $\rho'$ are identical on local marginals i.e. $\tr\Paren{ E_a \rho } = \tr\Paren{ E_a \rho'}$ for all $a\in [m]$~\cite[Proposition 4]{aaks20}.
This does not imply a bound on sample complexity, because with copies of $\rho$, we can only compute $\tr(E_a \rho)$ approximately, with noise introduced from sampling error.
The key structural result of \cite{aaks20} is that this equivalence can be made robust, so that if $H'$ only approximately matches marginals, then the corresponding coefficients $\lambda_a'$ approximately match the true coefficients $\lambda_a$.

However, the last step of this algorithm is to invert the map $\{\lambda_a\}_{a \in [\terms]} \mapsto \{\tr(E_a \rho)\}_{a \in [\terms]}$, which is a computationally hard problem.
Formally, for a classical Hamiltonian,\footnote{
    A classical Hamiltonian is a Hamiltonian that is diagonal, i.e.\ its terms are tensor products of the identity and $\sigma_z$ (\cref{def:paulis}).
    For a classical Hamiltonian, the state $\rho$ is a sample from the Gibbs distribution, and $\tr(E_a \rho)$ is a $\locality$-point correlation function.
} $\tr\Paren{ E_a \rho }$ are \emph{sufficient statistics} of a graphical model and it is known that estimating the parameters of a graphical model from these sufficient statistics is computationally intractable~\cite{montanari2015computational}.
This doesn't mean that the problem is hopeless, but rather that to find a tractable algorithm, we should be looking for the opportunity to use a richer family of statistics.

\paragraph{Designing a new system of constraints.}
We interpret the previous argument as defining and then solving a constraint system in the set of unknowns, 
$\{\lambda_a'\}_{a \in [\terms]}$.  The structural result in~\cite{aaks20} shows that an approximate solution to this system will be close to the true parameters $\lambda_a$.
However, this system is computationally hard to solve. Our starting point is to define a larger set of constraints which $\{\lambda_a\}_{a \in [\terms]}$ must satisfy, which can be verified by measuring expectations of observables slightly less local than the terms $\{E_a\}_{a \in [\terms]}$.
Let $\locals_{\text{local}}$ be the set of Pauli matrices whose support is $K$-local for some large constant $K$.
We begin with the following system of constraints:
\begin{equation}
\label{eqn:intro-exp-constraints}
     \left \{
    \begin{aligned}
      & \forall a \in [\terms] & -1 \leq \lambda_a' &\leq 1 \\
      & & H' &= \sum_{a \in [m] } \lambda_a' \cdot E_a \\
      & \forall P, Q \in \locals_{\text{local}},
    &    \tr\Paren{ Q e^{-\beta H'} P e^{\beta H'}   \rho }  
    &  =    \tr\Paren{ P Q \rho }  \\
    \end{aligned}
  \right \},
\end{equation}
The constraints above are indeed satisfied for the true parameters ($\lambda'=\lambda$) since by assumption $\abs{ \lambda_a}\leq 1$  for all $a \in [m]$ and moreover 
\[
\tr\Paren{ Q e^{-\beta H} P e^{\beta H} \rho } =  \tr\Paren{ Q e^{-\beta H} P e^{\beta H} \frac{e^{-\beta H}}{\tr(e^{-\beta H})} }  = \tr\Paren{ PQ \rho } 
\]
which follows from the cyclic property of the trace. 
Two main challenges remain: Must a solution to this system be close to the true coefficients?
And how can we efficiently solve the system?
Eventually we will derive a convex relaxation for it that is based on
\begin{enumerate}
    \item[(A)] replacing the last constraint in \cref{eqn:intro-exp-constraints} which involves the matrix exponential with low degree polynomial constraints on the indeterminates ($\Set{\lambda_a'}_{a\in[m]}$) instead and
    \item[(B)] showing that any choice of  $\lambda'$ that satisfy the constraints must also approximately match the true coefficients $\lambda$.
\end{enumerate}
In general solving systems of polynomial equations is computationally hard, but because our analysis in (B) will be based on sum-of-squares proofs, there is by now standard machinery for turning it into an efficient algorithm (see~\cref{subsec:sos-framework} for detailed explanation). 

\paragraph{Identifying an equivalence between nested commutators and polynomials.}
Working towards the goal of replacing the term $\tr\Paren{ Q e^{-\beta H'}Pe^{\beta H'} \rho }$ with a low-degree polynomial in the variables $\lambda'$, we begin by recalling the Hadamard formula:\footnote{This can be derived from the Baker--Campbell--Hausdorff formula, \[
    \exp(A) \exp(B) = \exp\parens[\Big]{A + B + \frac{1}{2} [A,B] + \frac{1}{12}\parens{[A,[A,B]] +[B, [B,A]]} + \dots }.
\]}
\begin{equation}
\label{eqn:intro-commutator-series}
e^{-\beta H'} P e^{\beta H'} = \sum_{\ell = 0}^{\infty} (-\beta)^\ell \frac{ [H', P]_\ell }{ \ell! }, 
\end{equation} 
where $[H', P ]_\ell = [H',  [H', \ldots , [H', P]\ldots ]]$ is the $\ell$-th nested commutator. A natural first step is to truncate this series at $d$ terms and observe that 
\begin{equation*}
    \tr\Paren{ Q  \Paren{ \sum_{\ell=0}^d (-\beta)^\ell \frac{[H', P]_\ell}{\ell!} } \rho   } 
\end{equation*}
is a low-degree polynomial in the variables $\lambda'$. For instance, observe for the order-$2$ nested commutator, we have
\begin{equation*}
\begin{split}
    \tr\parens[\Big]{ \bracks[\Big]{\sum_{a \in [m]} \lambda_a' E_a , P}_2 \rho } & = \tr\parens[\Big]{ \bracks[\Big]{\sum_{a \in [m]} \lambda_a' E_a , \sum_{a \in [m]} \lambda_a' \bracks{E_a , P}}  \rho } \\
    & = \sum_{a , b \in [m]} \lambda_a'\lambda_{b}' \tr\Paren{  \left[ E_a , [E_{b}, P] \right] \rho},
\end{split}
\end{equation*}
which is a degree-$2$ polynomial in the $\lambda_i'$ indeterminates.
However, the series in \cref{eqn:intro-commutator-series} only converges quickly when $\beta$ is sufficiently small~\cite{hkt21}, so we cannot use it.\footnote{
    This expansion does converge after $\beta\norm{H}$ terms, but our running time is exponential in the degree, so this would be far too large.
}

Nevertheless, from this observation we can develop a general formalism for constructing polynomial approximations of evolutions of operators.
We observe that in the eigenbasis of $H'$, 
\begin{equation*}
    [H', P]_\ell = P \circ \Set{ \Paren{ \sigma_i - \sigma_j }^\ell }_{ij},
\end{equation*}
where $\Set{ \sigma_i }_{i \in [\dims]}$ are the eigenvalues of $H'$ and $\circ$ denotes the Hadamard product (\cref{def:hadamard}).
Similarly,
\begin{equation*}
    e^{-\beta H'} P e^{\beta H'} = P \circ \Set{ e^{-\beta (\sigma_i - \sigma_j)} }_{ij},
\end{equation*}
and thus we can focus our attention on designing polynomials that approximate the scalar quantity $e^{-\beta \Paren{\sigma_i - \sigma_j} }$ with low-degree polynomials in $\Paren{ \sigma_i - \sigma_j}$. Further, any degree-$d$ polynomial $p(z)= \sum_{\ell=0}^d c_\ell z^\ell$ can be extended to commutators as follows:
\begin{equation*}
    p(H'\mid P ) =  P \circ \Set{ p\Paren{\sigma_i - \sigma_j} }_{ij} = \sum_{\ell=0}^d c_\ell [H'
    ,  P]_\ell. 
\end{equation*}
This allows us to translate between matrix series expansions involving nested commutators and univariate polynomials.
We note that for technical reasons we need to extend our equivalence to nested commutators with two distinct operators $X,Y$ appearing in an arbitrary order, such as $[X, [Y, [X, \ldots ] \ldots ] , E_{a_1}]$ and bi-variate polynomials $p\Paren{x,y}$ (\cref{sec:poly-commutators}).
The translation between bi-variate polynomials and nested commutators incurs additive error depending on $[X,Y]$ due to re-ordering of the $X$ and $Y$ operators as expected (\cref{thm:polynomial-equivalence}). In our full algorithm (\cref{sec:algorithm-and-analysis}) we introduce an additional constraint to drive this additive error to zero. 
Focusing on the scalar polynomial approximation to the exponential function, we now formalize the notion of approximation that we require.

\paragraph{Constructing a new, flat approximation to the exponential.} 
Recall that we want a polynomial such that, working in the eigenbasis of $H'$,
\begin{equation}
    p(H' \mid P) = P \circ \Set{ p\Paren{\sigma_i - \sigma_j} }_{ij} \approx P \circ \Set{ e^{-\beta (\sigma_i - \sigma_j)} }_{ij} = e^{-\beta H'} P e^{\beta H'},
\end{equation}
where ``$\approx$'' denotes an unusual notion of approximation which, for the purposes of this discussion, we can consider to mean that the matrices are close in some norm.
The Taylor series approximation to the exponential would be \cref{eqn:intro-commutator-series}, which we established is too high degree.

Our key insight is that we choose a better polynomial approximation. 
We begin by observing that an operator with small support is approximately band-diagonal in the basis of eigenvectors of $H$, which is a property of local terms proved by Arad, Kuwahara, and Landau~\cite{akl16}.
We state a weak version of this here: let $P$ be a Pauli operator with support size $\bigO{1}$, and let $H = \sum_i d_i v_iv_i^\dagger$ be an eigendecomposition of $H$.
Then, considering $P$ in the eigenbasis of $H$,
\begin{equation}
    \abs{P_{ij}} = \abs{v_i^\dagger P v_j} \leq e^{-\bigOmega{\abs{d_i - d_j}}}. \tag{\cref{lem:akl}}
\end{equation}
A consequence of this is that entries of $P \circ \Set{ p\Paren{\sigma_i - \sigma_j} - e^{-\beta (\sigma_i - \sigma_j)} }_{ij}$, which denote the error of the polynomial approximation, are \emph{weighted inverse exponentially in $\sigma_i - \sigma_j$}.
Therefore, our polynomial approximation need not be equally good for all $\sigma_i - \sigma_j$; rather, our approximation should be $\eps$-good in a small range but is allowed to diverge at a sufficiently slow exponential rate outside that range.
We call this a \emph{flat} approximation.
In particular, given parameters $\beta \geq 0$, $0<\eps, \eta<1$, we construct $p$ such that
\begin{equation}
\label{eqn:intro-poly-approx}
    \begin{cases}
        \Abs{ p(z)  - e^{-\beta z} } \leq \eps & \hspace{0.2in} \textrm{if } z\in [-1, 1 ] \\
        \Abs{ p(z)  } \leq \max\Paren{1, e^{-\beta z} } \cdot e^{\eta \beta \abs{z}} & \hspace{0.2in} \textrm{if } z\notin [-1, 1]
    \end{cases}
\end{equation}
The key difficulty in satisfying the above constraints is satisfying $|p(z)| \leq e^{\eta \beta |z|}$ for $z \geq \beta$, as standard approximations like Taylor series truncations and Chebyshev series truncations fail this condition (\cref{rmk:taylor-fails}).
In \cref{sec:poly-approx}, we explicitly construct a degree-$ \Paren{ 2^{ \mathcal{O}(1/\eta) } \cdot \Paren{ \beta + \log(1/(\eps \eta)) } } $ polynomial that satisfies \cref{eqn:intro-poly-approx}.
This construction is inspired by the iterative ``peeling'' of the exponential used in proofs of Lieb-Robinson bounds~\cite{LiebRobinson72,Hastings10}.
We can write 
\[
e^{-\beta z} = \underbrace{e^{-\beta_c z} \cdots e^{-\beta_c z}}_{\beta/\beta_c}
\]
for a fixed small constant $\beta_c$ and then truncate the Taylor series expansion of $e^{-\beta_c z}$ at different scales for all of the $\beta/\beta_c = O(\beta)$ copies in the product so that the tails of the different truncations don't ``interfere".  

We show that when $p$ is a flat approximation of the above form for some sufficiently small $\eta$, then $Qp(H|P)\rho$ is a good approximation to $Qe^{-\beta H}Pe^{\beta H}\rho$.
In other words, the polynomial approximation is good when $H' = H$ and we right multiply by $\rho$; this is crucial for the polynomial system that we set up next to be feasible.

\paragraph{Formulating a polynomial system.}
We now have all the tools to describe a polynomial system that captures the Hamiltonian learning problem. The constraint system we describe in this section is an informal treatment of the system that appears in \cref{sec:algorithm-and-analysis}, and avoids several technical details.  
We show that using our flat approximation to the exponential, we can obtain a polynomial $p$ such that
$$\tr\Paren{ Q e^{-\beta H } P e^{\beta H} \rho } \approx \tr\Paren{ Q p\Paren{ H \mid P } \rho }.$$

Then, we can re-write \cref{eqn:intro-exp-constraints} as the following polynomial constraint system:
\begin{equation}
\label{eqn:poly-constraint-system}
     \left \{
    \begin{aligned}
    & \forall a \in [\terms] & -1 &\leq \lambda_a' \leq 1 \\
    & & H' &= \sum_{a \in [\terms]} \lambda_a' E_a \\
    & \forall P, Q \in \locals_{\text{local}},
    &    |\tr\Paren{ Q p\Paren{H' \mid P}   \rho }  
    &  -   \tr\Paren{ P Q \rho } | \leq \eps
    \end{aligned}
  \right \},
\end{equation}
and observe that the last constraint encodes a relaxation of the last constraint in \cref{eqn:intro-exp-constraints} and is satisfied when $H' = H$.
Further, all of the constraints are indeed succinctly representable as low-degree polynomials in the indeterminates, $\Set{ \lambda_a' }_{a\in [m]}$, as discussed earlier. 
Finally, the coefficients, such as $\tr\Paren{ [E_a, [E_{b}, P ]] }$, are expectations of the Gibbs state against a slightly larger set of local observables, which are the richer class of test functions we desired.
We can obtain estimates of these expectations through quantum measurements (\cref{sec:accessing-gibbs-states}).
Computing these estimates is the only quantum part of our algorithm, and the rest of the algorithm is entirely classical.

\paragraph{Feasibility of the polynomial system.} 
Recall that to show that the polynomial system in \eqref{eqn:poly-constraint-system} is feasible, we need to argue that 
$$\tr\Paren{ Q e^{-\beta H } P e^{\beta H} \rho } \approx \tr\Paren{ Q p\Paren{ H \mid P } \rho }$$
for all $P,Q$.  Working in the eigenbasis of $H$, let its eigenvalues be $\Set{\sigma_i}_{i \in [2^n]}$.  The key tool that we leverage is from \cite{akl16} (see \cref{lem:akl}) which roughly states that any local term $E$ must be approximately diagonal in the eigenbasis of $H$, with off-diagonal entries decaying as $|E_{ij}| \leq e^{- \Omega(|\sigma_i - \sigma_j|)}$.  Thus, we can decompose the matrices $Q, P$ into two parts \--- parts indexed by $i,j$ where $|\sigma_i - \sigma_j| \leq \beta$ and parts indexed by $i,j$ where $|\sigma_i - \sigma_j| \geq \beta$.  Then we use the fact that $p(x)$ is a good approximation to $e^{-x}$ on $[-\beta, \beta]$ to prove that the error on the first part is small. We then appeal to the exponential decay of the  off-diagonals to argue that the contribution from the second part in both $\tr\Paren{ Q e^{-\beta H } P e^{\beta H} \rho }$ and $tr\Paren{ Q p\Paren{ H \mid P } \rho }$ is small.  Our flat approximation to the exponential is designed to ensure that it does not overwhelm the exponential decay in the off-diagonal entries in $P, Q$ in any regime.

\paragraph{Efficiently optimizing polynomial systems.} 
Now that we know that our polynomial system is feasible, we consider a convex relaxation of this system. In particular, we consider a degree-$d$ sum-of-squares relaxation, which can be efficiently optimized by expressing it as a semi-definite program (see \cref{subsec:sos-framework} for details), with $d=  \log(1/\eps) \cdot  2^{\mathcal{O}(\beta)}$. Since we have $m$ variables and $ 2^{ \mathcal{O}(\beta) }$ constraints, and each constraint is a degree-$d$ polynomial, we can solve the degree-$2d$ sum-of-squares relaxation of \cref{eqn:poly-constraint-system}  
in $m^{\Paren{  \log(1/\eps) \cdot 2^{\mathcal{O}(\beta)} } }$ time. 
The main challenge in analyzing the sum-of-squares relaxation is to show that we can \textit{round} it to estimates $\Set{\tilde{\lambda}'_a }_{a \in [m]}$ such that they are close to the true parameters. 
Here, we adopt the so-called \textit{proofs-to-algorithms} philosophy, where we instead work with the dual object to the sum-of-squares relaxation, namely sum-of-squares proofs (see~\cite{barak2016proofs, fleming2019semialgebraic}, and references therein). This perspective states that if the true parameters are identifiable only using the sum-of-squares proof system, then we immediately obtain an efficient algorithm and we show that we can easily and accurately \textit{round} the solution.

We then provide a proof of identifiability, i.e. for all $a\in [m]$, the inequality $\Paren{ \lambda_a'  - \lambda_a} \leq \eps$ can be derived using the system of polynomial constraints and other basic inequalities that admit sum-of-squares proofs (we refer the reader to \cref{sec:sos-identifiability} for a detailed exposition). At a high level, the proof works by arguing that when $H' - H$ is large, there are witnesses $P, Q$ such that 
\[
|\tr(Qp(H'|P)\rho ) - \tr(Q p(H|P)\rho )|
\]
is large.   Since we know that $H$ is a feasible solution, this would imply that $H'$ cannot be a feasible solution so any feasible solution must have $H' - H$ be small.  The construction of the witnesses relies on an additional property of the polynomial $p$ that we construct, namely that it is strongly monotone (in some appropriate quantitative sense).  

For the identifiability proof, we crucially use an additional important property of local Hamiltonians.
It deals with the quantity $\tr(A^2 \rho)$, where $A = \sum_b \sigma_b P_b$ is a Hermitian linear combination of Pauli matrices with small support.
Thinking of $\rho$ as a distribution, $\tr(A^2 \rho)$ is a second moment term with respect to $\rho$; we can prove this is not much smaller than $\tr(A^2 \id/\dim) = \sum_b \sigma_b^2$, the second moment against the uniform distribution: for some constant $c > 0$,
\begin{align*}
    \tr(A^2 \rho) \geq c^{\bigO{\beta}} \max_b \sigma_b^2 . \tag{\cref{aaks-marginals}}
\end{align*}
Intuitively, this shows that $\rho$ is not close to zero in any local direction.
This was first shown by \cite{aaks20} for quasi-local operators; we adjust their proof to hold for just local operators and give a tighter bound.
We show that we can obtain a slightly weaker statement of this form in the sum-of-squares proof system by formulating it as a quadratic inequality. 
This inequality can be used to remove the dependence on $\rho$ in expressions appearing in the proof; for example, it is used to relate $\tr([H, H']^2 \rho)$ to the size of $[H, H']$ itself.

Finally, we observe that our identifiability proof  does not use the full power of a degree-$2d$ sum-of-squares relaxation and therefore, it should suffice to solve a significantly smaller semi-definite program. We show that we can execute our proof of identifiability by only appealing to a sparse subset of monomials of degree at most $2d$ and invoke a linearization theorem by Steurer and Teigel~\cite{st21} to obtain a final running time of $\poly(m)\cdot \Paren{1/\eps}^{2^{\mathcal{O}(\beta) } }$, as desired.

\subsection{Further related work}

\paragraph{Hamiltonian learning.}
Hamiltonian learning is a broad topic studied both in experimental and theoretical contexts.
This work fits into a large body of algorithmic research about learning properties of quantum states modeling physical systems~\cite{cpfsgb10,aa23}.
Here, we point to a few lines of related work in this field.

Hamiltonian learning often focuses on the real-time evolution setting, where one can allow the system to evolve with respect to $H$, applying the unitary $e^{-\ii Ht}$~\cite{slp11,wgfc14,htfs23}.
Some algorithms consider taking time derivatives (i.e.\ taking $t \to 0$), which are similar to small-$\beta$ algorithms in the Gibbs state setting~\cite{zylb21,hkt21,gcc24}.
There is some research on learning from (zero-temperature) ground states~\cite{QR19}, but the algorithmic work is limited because the ground state of a Hamiltonian need not determine the Hamiltonian.
We study the finite temperature case, which is both the typical temperature at which experiments are run and, in the $\beta \to \infty$ limit, a rich approximation to the much less computationally tractable ground state~\cite{alhambra22,ghls15}.

Though our algorithm is not practical, we use constraint systems that bear some similarity to the ``correlation matrix'' linear constraint systems analyzed heuristically and experimentally in prior work~\cite{BAL19,QR19}.
In fact, our constraint system contains these constraints for technical reasons.
Our work places these works on a rigorous basis, as we prove that, though the linear constraint systems might not uniquely identify the true Hamiltonian, adding more, similar constraints eventually fully constrains the Hamiltonian.

\paragraph{Bounding correlations in Gibbs states.}
Though classical Gibbs states have extremely good locality properties, these become much weaker in the quantum setting.
A series of works aims at bounding the non-locality in quantum Gibbs states with various different measures and in various different regimes~\cite{kb19,kkb20,kaa21}, often with the goal of concluding that simulating or learning these systems can be done time-efficiently.
It is an interesting open problem whether one can extract a new kind of ``locality'' statement from our algorithm, to understand how general our approach is for learning quantum systems.
Our polynomial approximation is inspired by proofs of the Lieb--Robinson bound~\cite{LiebRobinson72,Hastings10}, and can be viewed as a ``low-degree'' form of this bound.
This could be of independent interest.

\paragraph{Parameter learning of graphical models.}
There is a rich body of work on the problem of learning graphical models.
Our setting is that of learning Markov random fields; the literature on this topic focuses on the task of \emph{structure learning}, which in our setting corresponds to learning the terms $\{E_a\}_{a \in [\terms]}$, given the guarantee that they form an (unknown) dual interaction graph with bounded degree~\cite{bms13,Bresler2015,hkm17,km17}.
The problem we consider, learning the parameters with known terms, is easy in the classical setting~\cite[Appendix B]{hkt21}, because classical Gibbs states satisfy the Hammersley--Clifford theorem~\cite{HC71}, also known as the Markov property.
A consequence of the Markov property is that estimating a parameter on a $\locality$-body term can be done by computing conditional marginals on the support of this term.
It is not clear how to generalize this argument to the quantum setting, since the Markov property does not hold for low-temperature quantum Hamiltonians, even approximately~\cite{kkb20}.

\paragraph{The sum-of-squares meta-algorithm.}
The sum-of-squares hierarchy has been used to analyze several problems in quantum information, including best state separation~\cite{doherty2002distinguishing, brandao2011quasipolynomial, barak2012hypercontractivity, barak2017quantum}, optimizing fermionic Hamiltonians~\cite{hastings2022optimizing, hastings2023field}, and a quantum analogue of max-cut~\cite{parekh2021application, watts2023relaxations}. Additionally, the proofs-to-algorithms perspective, introduced in~\cite{barak2015dictionary,barak2016proofs}, has been extensively used to design efficient algorithms for several estimation and learning tasks. In particular, this perspective has led to efficient algorithms for robust learning~\cite{hopkins2018mixture,kothari2018robust, klivans2018efficient, bakshi2020outlier, bakshi2021robust, liu2021settling,bakshi2022robustly} and list-decodable learning~\cite{karmalkar2019list, raghavendra2020list, bakshi2021list}.  

\section{Background} \label{sec:prelim}

Throughout, $\log$ denotes the natural logarithm and $\ii = \sqrt{-1}$.
$\bigO{\cdot}$, $\bigTheta{\cdot}$, and $\bigOmega{\cdot}$ are big O notation, and we use the notation $f \lesssim g$ to mean $f = \bigO{g}$.  The notation $\bigOt{f}$ denotes $\bigO{f\polylog(f)}$.
For a parameter $t$, $\calO_t$ denotes big O notation where $t$ is treated as a constant; the same holds for the notation for polynomial scaling, $\poly_t(\cdot)$.
Everywhere, the binary operator $\cdot$ denotes the usual multiplication.
For a sequence $S \in \{0,1\}^*$, $\len(S)$ denotes its length.

\subsection{Linear algebra}

We work in the Hilbert space $\mathbb{C}^{\dims}$ corresponding to a system of $\qubits$ qubits, $\mathbb{C}^2 \otimes \dots \otimes \mathbb{C}^2$, so that $\dims = 2^\qubits$.
For a matrix $A$, we use $A^\dagger$ to denote its conjugate transpose and $\norm{A}$ to denote its operator norm; for a vector $v$, we use $\norm{v}$ to denote its Euclidean norm.
We will work with this Hilbert space, often considering it in the basis of (tensor products of) Pauli matrices.

\begin{definition}[Pauli matrices] \label{def:paulis}
    The Pauli matrices are the following $2 \times 2$ Hermitian matrices.
    \begin{equation*}
    \sigma_\id = \begin{pmatrix}
        1 & 0 \\ 0 & 1
    \end{pmatrix}, \qquad \sigma_x = \begin{pmatrix}
        0 & 1 \\
        1 & 0
    \end{pmatrix}, \qquad \sigma_y = \begin{pmatrix}
        0 & -\ii \\
        \ii & 0
    \end{pmatrix}, \qquad \sigma_z = \begin{pmatrix}
        1 & 0\\
        0& -1
    \end{pmatrix}.
    \end{equation*}
    These matrices are unitary and (consequently) involutory.
    Further, $\sigma_x \sigma_y = \ii \sigma_z$, $\sigma_y \sigma_z = \ii \sigma_x$, and $\sigma_z \sigma_x = \ii \sigma_y$, so the product of Pauli matrices is a Pauli matrix, possibly up to a factor of $\{\ii, -1, -\ii\}$.
    The non-identity Pauli matrices are traceless.
    We also consider tensor products of Pauli matrices, $P_1 \otimes \dots \otimes P_\qubits$ where $P_i \in \{\sigma_\id, \sigma_{\textup x}, \sigma_{\textup y}, \sigma_{\textup z}\}$ for all $i \in [\qubits]$.
    The set of such products of Pauli matrices, which we denote $\locals$, form an orthogonal basis for the vector space of $2^\qubits \times 2^\qubits$ (complex) Hermitian matrices under the trace inner product.
    The product of two elements of $\locals$ is an element of $\locals$, possibly up to a factor of $\{\ii, -1, -\ii\}$.
\end{definition}

\begin{definition}[Support of an operator]
    For an operator $P \in \mathbb{C}^{\dims \times \dims}$ on a system of $\qubits$ qubits, its \emph{support}, $\supp(P) \subset [\qubits]$ is the subset of qubits that $P$ acts non-trivially on.
    That is, $\supp(P)$ is the minimal set of qubits such that $P$ can be written as $P = O_{\supp(P)} \otimes \id_{[n] \setminus \supp(P)}$ for some operator $O$.
\end{definition}

So, for example, the support of a tensor product of Paulis, $P_1 \otimes \dots \otimes P_\qubits$ are the set of $i \in [\qubits]$ such that $P_i \neq \sigma_\id$.
A central object we consider is (nested) commutators of operators.

\begin{definition}[Commutator]\label{def:commutator}
    Given operators $A , B \in \mathbb{C}^{\dims \times \dims}$, the \emph{commutator} of $A$ and $B$ is defined as $[A,B] = AB - BA$.
    The \emph{nested commutator} of order $\ell$ is defined recursively as $[A, B]_k = [A, [A, B]_{k-1}]$, with $[A, B]_1 = [A, B]$.
\end{definition}

Pauli matrices behave straightforwardly under commutation: the commutator of two Pauli matrices is another Pauli matrix up to a scalar (see \cref{lem:pauli-commutator}).

Finally, we define the following notation to extract the piece of an operator that acts on a particular qubit.

\begin{definition}[Localizing an operator] \label{def:localizing-operator}
    For an operator $O \in \C^{\dims \times \dims}$, define
    \begin{align*}
        O_{(i)} = O - \tr_i(O) \otimes \frac{\id_i}{2},
        = O - \int \diff\mu_i(U) U^\dagger O U,
    \end{align*}
    where $\id_i$ denotes the identity operator on the $i$th qubit, $\tr_i$ denotes the partial trace operation with respect to the $i$th qubit, and $\mu_i$ denotes the Haar measure over the set of unitaries only supported on qubit $i$.
    In other words, $[\cdot]_{(i)}: (\mathbb{C}^{2 \times 2})^{\otimes \qubits}\to (\mathbb{C}^{2 \times 2})^{\otimes \qubits}$ is the linear map on operators that is the identity on every qubit but $i$, and on the $i$th qubit maps $M \mapsto M - \frac{1}{2}\tr(M)\id$ for $M \in \mathbb{C}^{2 \times 2}$.
\end{definition}

For a tensor product of Pauli matrices, $P \in \locals$, $P_{(i)}$ is $P$ when $i \not\in \supp(P)$ and $0$ otherwise.
So, for a linear combination of Paulis, $A = \sum_{P \in \locals} \lambda_P P$, applying this map restricts the sum to Pauli matrices that interact with qubit $i$:
\begin{equation} \label{eq:localizing-local}
    A_{(i)} = \sum_{P : i \in \supp(P)} \lambda_P P
\end{equation}
So, $\abs{\supp(A_{(i)})} \leq (\degree + 1)\locality$.

\begin{definition}[Projector onto eigenspaces]
    For a Hermitian matrix $X$ and an interval $\mathcal{I} \subset \mathbb{R}$, $\Pi_{\mathcal{I}}^{(X)}$ denotes the orthogonal projector onto the subspace spanned by eigenvectors of $X$ with eigenvalues in $\mathcal{I}$.
\end{definition}

We sometimes work in bases where matrices are diagonal, in which case Hadamard products become useful.
\begin{definition}[Hadamard product] \label{def:hadamard}
    For $A, B \in \mathbb{C}^{\dims \times \dims}$, their \emph{Hadamard product}, denoted $A \circ B$, satisfies $[A \circ B]_{ij} = A_{ij} B_{ij}$.
\end{definition}

\subsection{Hamiltonians of interacting systems}
\label{subsec:hamiltonian-of-interacting-system}
We begin by defining a Hamiltonian, which encodes the interaction forces between quantum particles in a physical system.

\begin{definition}[Hamiltonian] \label{def:hamiltonian}
    A \emph{Hamiltonian} is an operator $H \in \mathbb{C}^{\dims \times \dims}$ that we consider as a linear combination of local \emph{terms} $E_a$ with associated \emph{coefficients} $\lambda_a$, $H = \sum_{a=1}^\terms \lambda_a E_a$.
    For normalization, we assume $\norm{E_a} \leq 1$ and $\abs{\lambda_a} \leq 1$.
    This Hamiltonian is \emph{$\locality$-local} if every term $E_a$ satsifies $\abs{\supp(E_a)} \leq \locality$.
\end{definition}

We will only consider Hamiltonians whose terms $E_a$ are distinct, traceless product of Pauli matrices.
Other kinds of local Hamiltonians can be reduced to this setting by expanding out local terms, which are Hermitian matrices, into the basis of products of Pauli matrices.
This preserves the locality of the Hamiltonian and only inflates the number of terms by a factor exponential in the locality, which we think of as constant.
The assumption that $E_a$ are traceless is without loss of generality, since adding a multiple of the identity to the Hamiltonian does not affect its corresponding Gibbs state.

\begin{definition}[Low-intersection Hamiltonian~\cite{hkt21}]
\label{def:low-insersection-ham}
    For a $\locality$-local Hamiltonian $H = \sum \lambda_a E_a$ on a system of $\qubits$ qubits, its \emph{dual interaction graph} $\graph$ is an undirected graph with vertices labeled by $[\terms]$ and an edge between $a, b \in [\terms]$ if and only if
    \begin{align*}
        \supp(E_a) \cap \supp(E_b) \neq \varnothing.
    \end{align*}
    Let $\degree$ denote the maximum degree of this graph.
    We call $H$ \emph{low-intersection}\footnote{Also called a ``low-interaction''~\cite{htfs23} or ``sparsely interacting''~\cite{gcc24} Hamiltonian.} if $\locality$ and $\degree$ are constant.
\end{definition}

Throughout, we assume that our Hamiltonian is $\locality$-local and has a dual interaction graph of degree $\degree$.  We will treat $\locality$ and $\degree$ as constants throughout the paper.  This encompasses most Hamiltonians discussed in the literature, including ``geometrically local'' Hamiltonians, i.e.\ Hamiltonians whose qubits are thought of as being on a constant-dimensional lattice like $\mathbb{Z}^3$ and whose terms that are spatially local with respect to the lattice.
This class is more general than geometrically local Hamiltonians, as it can extend to qubits where locality is dictated by an expander graph.

We will be considering operators other than the terms, so we define a notion of locality with respect to the dual interaction graph that generalizes to such operators.
Concretely, what we need about $\ell$-local operators for $\ell \geq k$ is that (1) they contain the Hamiltonian terms $E_a$, (2) the dimension of the subspace spanned $O(\qubits)$, and (3) they contain nested commutators involving $k$-local operators.

\begin{definition}[Local operator with respect to the dual interaction graph] \label{def:dual-interaction-graph}
    Consider a $\locality$-local Hamiltonian $H = \sum_{a=1}^\terms \lambda_a E_a$ with dual interaction graph $\graph$.
    For $P$ a tensor product of Paulis, we say it is $\locality\ell$-$\graph$-local if there is some $S \subset [\terms]$ of size $\ell$ such that $\supp(P) \subset \cup_{a \in S} \supp(E_a)$ and $S$ is connected in $\graph$.

    We define $\locals_{\locality\ell}$ to be the set of $\locality\ell$-$\graph$-local Pauli matrices.
    Generally, we call an operator $M \in \C^{\dims \times \dims}$ $\locality \ell$-$\graph$-local if it is equal to a linear combination of elements in $\locals_{\locality\ell}$.
\end{definition}

By this definition, if $E_a$ and $E_b$ are terms of a Hamiltonian, then $E_b$ is $\locality$-$\graph$-local, and $[E_a, E_b]$ is $2\locality$-$\graph$-local.
We state the form of nested commutators of Pauli matrices below.

\begin{lemma} \label{lem:pauli-commutator}
    Let $P_1 \in \locals_{k_1}, \dots ,P_{a} \in \locals_{k_a}$, and $Q \in \locals_{\ell}$ (with respect to some background dual interaction graph $\graph$).
    Then the nested commutator is either zero or also a $\graph$-local Pauli matrix,
    \begin{align*}
        \frac{\ii^a}{2^a} [P_1,  [P_2, \dots [P_a, Q] \dots ]] \in \locals_{k_1 + \dots + k_a + \ell},
    \end{align*}
    possibly up to a factor of $\pm 1$ where $\ii = \sqrt{-1}$.
    
\end{lemma}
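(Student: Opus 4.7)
The plan is to reduce the general statement to the $a=1$ case (a single commutator of two Pauli tensor products) and then induct. For the base case, I would take two Pauli tensor products $P$ and $Q$ and note that at each qubit the two single-qubit Paulis either commute or anticommute. If they commute at every site, then $P$ and $Q$ commute globally and $[P,Q] = 0$. Otherwise the product $PQ$ equals $cR$ for some Pauli tensor product $R$ and a phase $c \in \{\pm 1, \pm \ii\}$, picked up from the identities $\sigma_x\sigma_y = \ii\sigma_z$, $\sigma_y\sigma_z = \ii\sigma_x$, $\sigma_z\sigma_x = \ii\sigma_y$ and their cyclic counterparts. Hermiticity of $P$, $Q$, $R$ forces $QP = c^* R$, so $PQ = -QP$ exactly when $c$ is purely imaginary, i.e.\ $c = \pm \ii$; in that case $[P,Q] = 2PQ = \pm 2\ii R$, so $\frac{\ii}{2}[P,Q] = \pm R$ is a Pauli tensor product up to sign.

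Next I would handle the $\graph$-locality of $R$ in this base case. Clearly $\supp(R) \subset \supp(P) \cup \supp(Q)$, since $R$ acts as identity wherever both $P$ and $Q$ do. Moreover the commutator is nonzero only when $\supp(P) \cap \supp(Q) \neq \varnothing$, since Paulis with disjoint support commute. So if $P \in \locals_{k_P}$ is covered by a connected set $S_P \subset [\terms]$ and $Q \in \locals_{k_Q}$ is covered by a connected set $S_Q$, then some qubit lies in both $\bigcup_{a\in S_P}\supp(E_a)$ and $\bigcup_{b\in S_Q}\supp(E_b)$, which means some $a \in S_P$ and $b \in S_Q$ share a qubit and are therefore adjacent (or equal) in $\graph$. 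Hence $S_P \cup S_Q$ is connected of size at most $|S_P| + |S_Q|$ and covers $\supp(R)$, giving $\frac{\ii}{2}[P,Q] \in \locals_{k_P + k_Q}$ up to a sign.

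The inductive step on $a$ is then immediate: assuming $\frac{\ii^{a-1}}{2^{a-1}}[P_2,[P_3,\dots,[P_a,Q]\cdots]]$ is either zero or $\pm R'$ with $R' \in \locals_{k_2 + \dots + k_a + \ell}$, I apply the base case to $P_1$ and this inner expression. Bilinearity of the commutator lets me pull the scalar $\frac{\ii^{a-1}}{2^{a-1}}$ out, and multiplying by $\frac{\ii}{2}$ on the outside produces a further factor of $\pm 1$ and combines the covering connected sets into a connected set of size at most $k_1 + k_2 + \dots + k_a + \ell$.

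The only step that requires any real care is the base case phase computation; everything else is bookkeeping. The potential pitfall is to make sure the imaginary unit and the factor of $2$ exactly cancel under the assumption of anticommutation, so that the output is a Pauli tensor product up to $\pm 1$ rather than $\pm \ii$. This is where the Hermiticity argument above is essential, and it is the sole place where the normalization $\ii^a/2^a$ in the statement is pinned down.
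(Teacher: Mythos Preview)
Your proposal is correct and follows essentially the same approach as the paper: establish the $a=1$ case by analyzing the phase of $PQ$ via Hermiticity, then iterate. Your argument is in fact slightly more careful than the paper's on one point: you explicitly verify that $S_P \cup S_Q$ is connected in $\graph$ (using that a nonzero commutator forces overlapping supports, hence an edge between the two covering sets), whereas the paper simply asserts $(k_1+\ell)$-$\graph$-locality from $\supp([P,Q]) \subset \supp(P)\cup\supp(Q)$ without spelling out the connectedness.
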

\begin{proof}
Let $a = 1$.
By properties given in \cref{def:paulis}, for $P \in \locals_{k_1}$ and $Q \in \locals_\ell$, $PQ$ is a tensor product of Pauli matrices, up to fourth root of unity.
Consequently, $[P, Q] = PQ - (PQ)^\dagger$ is either zero (if $PQ$ is Hermitian) or $2PQ$ (if $\ii PQ$ is Hermitian).
Further, $\supp([P, Q]) \subset \supp(PQ) \subset \supp(P) \cup \supp(Q)$, which shows that $[P, Q]$ is $(k_1 + \ell)$-$\graph$-local.
This proves the lemma for $a = 1$; the general statement follows by iterating this case.
\end{proof}

\subsection{Properties of local operators on quantum systems}

For a Hamiltonian describing a quantum system, we consider getting access to the associated state attained by thermalizing it at a particular inverse temperature $\beta$.
This is known as a Gibbs state.

\begin{definition}[Gibbs state]
    The \emph{Gibbs state} of the Hamiltonian $H = \sum \lambda_a E_a$ at inverse temperature $\beta > 0$ is given by
    \begin{align}
        \rho
        = \frac{\exp(-\beta H)}{\Tr \exp (-\beta H)}
        = \exp\biggl(-\beta \sum_a \lambda_a E_a\biggr) \biggr/ \Tr \exp\biggl(-\beta \sum_a \lambda_a E_a\biggr).
    \end{align}
\end{definition}

A useful piece of intuition is to think of $\rho$ as a distribution over its eigenspaces with probability proportional to the eigenvalue.
In that sense, we can think about expectations and variances against this distribution.

\begin{definition}[Expectation against the Gibbs state]
    For a Gibbs state $\rho$ of a Hamiltonian $H$ and an operator $A \in \mathbb{C}^{\dims \times \dims}$, we define $\angles{A} = \tr(A \rho)$.
\end{definition}

A key result in the prior work of Anshu, Arunachalam, Kuwahara, and Soleimanifar gives a lower bound on the variance of a local operator with respect to the energy distribution defined by the Gibbs state.
The authors prove this for the more general class of quasi-local operators, but we only need it for local operators, for which the result can be tightened.
We give this tighter version below; its proof is in \cref{app:aaks}.

\begin{theorem}[{\cite[Theorem 33]{aaks20}}] \label{aaks-marginals}
    Let $H$ be a $\locality$-local Hamiltonian with dual interaction graph $\graph$ with max degree $\degree$.
    Let $A = \sum_b \sigma_b P_b$ be a $\locality$'-local operator where $P_b$ are products of Pauli matrices and $-1 \leq \sigma_b \leq 1$ and whose dual interaction graph has max degree $\degree'$.
    For $\beta > 0$, let $\rho$ be the corresponding Gibbs state of $H$.
    Then
    \begin{align*}
        \angles{A^2} = \tr(A^2 \rho)
        \geq \max_{i \in [\qubits]} \parens[\Big]{c\tr(A_{(i)}^2/\dims)}^{6 + c'\beta}.
    \end{align*}
    Here, $c$ and $c'$ are positive constants that depend on $\locality$, $\degree$, $\locality'$, and $\degree'$.
\end{theorem}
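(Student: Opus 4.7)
The plan is to follow the general strategy of~\cite[Theorem 33]{aaks20}, which proves an analogous lower bound for quasi-local operators, and sharpen the argument using the strict locality of $A$. The argument splits into two parts: a \emph{localization step} reducing the lower bound on $\tr(A^2\rho)$ to a lower bound on $\tr(A_{(i)}^2 \rho)$ for a well-chosen qubit $i$, and a \emph{comparison step} relating $\tr(A_{(i)}^2 \rho)$ to the uniform-state expectation $\tr(A_{(i)}^2/\dims) = \sum_{b:\, i \in \supp(P_b)} \sigma_b^2$ (using~\eqref{eq:localizing-local} and orthogonality of Paulis).

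For the localization step, I would fix a qubit $i$ and show $\tr(A^2 \rho) \geq e^{-\bigO{\beta}} \tr(A_{(i)}^2 \rho)$. By \cref{def:localizing-operator}, $A - A_{(i)} = \int d\mu_i(U)\, U^\dagger A U$ is the Haar average of single-qubit conjugations on site $i$. For each such $U$, the conjugate $U^\dagger \rho U$ is the Gibbs state of $U^\dagger H U$, which differs from $H$ only on the at most $\degree+1$ terms whose support contains site $i$. A Peierls--Bogoliubov/Golden--Thompson estimate on this bounded perturbation gives $U^\dagger \rho U \preceq e^{\bigO{\beta}} \rho$, and combining this with Cauchy--Schwarz applied to the unitary average yields $\tr(A_{(i)}^2 \rho) \leq e^{\bigO{\beta}}\tr(A^2 \rho)$.

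For the comparison step, I would pick the qubit $i$ maximizing $\tr(A_{(i)}^2/\dims)$, so that $A_{(i)}$ is supported on a constant-sized region $S_i$ of at most $(\degree'+1)\locality'$ qubits. The plan is to study the free energy $F(\theta) = -\tfrac{1}{\beta}\log\tr(e^{-\beta (H + \theta A_{(i)})})$, which is concave in $\theta$, and whose second derivative is controlled by the Kubo--Mori variance of $A_{(i)}$ against $\rho$, itself comparable to $\tr(A_{(i)}^2\rho) - \tr(A_{(i)}\rho)^2$ up to $e^{\bigO{\beta}}$ factors. Using a variational trial state built from a suitably perturbed Gibbs state, one can lower-bound $F(\theta)-F(0)$ in terms of $\tr(A_{(i)}^2/\dims)$; concavity then converts this into a lower bound on $F''$, hence on the variance and therefore on $\tr(A_{(i)}^2 \rho)$. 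A Markov-type inequality of order $\ell = \bigO{\beta}$, used to convert moment estimates into tail bounds, produces exactly the claimed inequality $\tr(A_{(i)}^2 \rho)\geq (c\,\tr(A_{(i)}^2/\dims))^{6+c'\beta}$, with the exponent $6+c'\beta$ matching the moment order required to close the argument.

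The main obstacle is ensuring that the exponent depends only on $\beta$ (and on the locality constants $\locality,\degree,\locality',\degree'$), and not on the system size $\qubits$ or any growing support-size factor. This is precisely where strict locality of $A$ is essential: since $A_{(i)}$ is supported on a constant-sized region $S_i$, all of the operator-norm comparisons, trial-state constructions, and partition-function estimates localize to $S_i$, replacing the support-size factor that appears in the quasi-local bound of~\cite{aaks20} by a constant. Carrying out the Markov-type argument so that only $\beta$ (and not a factor growing with $|S_i|$) enters the final exponent is the most delicate part, and is the reason our statement improves on the original \cite[Theorem 33]{aaks20} for the local operators that arise in our application.
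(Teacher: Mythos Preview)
The proposal has a genuine gap in the localization step. You claim that $U^\dagger \rho U \preceq e^{O(\beta)}\rho$ for single-site unitaries $U$ follows from a Peierls--Bogoliubov/Golden--Thompson estimate. But those are \emph{trace} inequalities (they control partition functions), not operator inequalities, and the operator bound you need is in fact false for large $\beta$: it is equivalent to $\|e^{-\beta H/2}Ue^{\beta H/2}\| \leq e^{O(\beta)}$, yet by \cref{lem:akl} the matrix elements of $U$ between energy windows separated by $\Delta$ decay only like $e^{-c\Delta}$ for a locality constant $c$, and once $\beta > 2c$ this decay is overwhelmed by the factor $e^{\beta\Delta/2}$ coming from $e^{\pm\beta H/2}$, so the norm can grow with $\|H\|$ and hence with $\qubits$. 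Precisely this obstacle is why the paper's analogue, \cref{lem:aaks-37}, yields only the weaker relation $\langle(U^\dagger M U)^2\rangle \lesssim \langle M^2\rangle^{1/(1+c'\beta)}$; the power $1/(1+c'\beta)$ rather than a mere multiplicative $e^{O(\beta)}$ is exactly what forces the final exponent $6+c'\beta$.

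The paper's proof therefore proceeds differently. It first finds, via a rearrangement argument (\cref{lem:aaks-36}), a unitary $U_*$ supported on $\supp(A_{(i)})$ with $\tr(A_{(i)}^2/\dims) \leq \langle(U_*^\dagger A_{(i)} U_*)^2\rangle$; then it expands $A_{(i)} = A - \int d\mu_i(U)\,U^\dagger A U$ and bounds each $\langle(V^\dagger A V)^2\rangle$ in terms of $\langle A^2\rangle^{1/(2(1+c'\beta))}$ using \cref{lem:aaks-38}, which applies AKL band-diagonality in the eigenbases of \emph{both} $H$ and $A$. Inverting the resulting inequality and optimizing a free parameter $\gamma$ produces the stated bound. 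Your free-energy/Kubo--Mori sketch for the comparison step is not obviously wrong, but it is too vague to see how the specific exponent $6+c'\beta$ would emerge, and in any case the localization step it rests on does not go through as written.
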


After some manipulation we can conclude that, if a local operator has small variance with respect to $\rho$, then the operator itself must be close to zero.
Some may recognize this as the quantum analogue of the statement that bounded-degree graphical models have local marginals bounded away from zero (see e.g. \cite{Bresler2015}).

\begin{corollary}[``No small local marginals''] \label{lem:large-marginals}
    Let $H$ be a $\locality$-local Hamiltonian with dual interaction graph $\graph$ with max degree $\degree$.
    Let $A = \sum_{P \in \locals} \sigma_P P$ be a sum over Paulis with support at most $\locality'$ and with a dual interaction graph $\graph'$ with max degree $\degree'$, with coefficients $\sigma_P \in \R$.
    Then
    \begin{equation*}
        \tr(A^2 \rho) \geq \exp(- c_{\locality,\degree,\locality', \degree'} - c_{\locality,\degree,\locality', \degree'}' \beta ) \max_{Q \in \locals} \sigma_Q^2 
    \end{equation*}
    where $c_{\locality,\degree, \locality',\degree'}, c'_{\locality,\degree, \locality',\degree'}$ are constants depending only on $\locality, \degree, \locality', \degree'$.
\end{corollary}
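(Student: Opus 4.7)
The plan is to reduce directly to \cref{aaks-marginals} via a rescaling trick. Applied naively at a qubit $i \in \supp(Q^\star)$ for a maximizer $Q^\star$, that theorem only yields $\tr(A^2\rho) \gtrsim \sigma_{Q^\star}^{2(6+c'\beta)}$, which is much weaker than the advertised $\exp(-\bigO{\beta})\,\sigma_{Q^\star}^2$ when $|\sigma_{Q^\star}| < 1$: the power $6 + c'\beta$ lands on a sub-unit quantity and shrinks the bound dramatically. The cure is to first rescale $A$ so that its largest non-identity Pauli coefficient has absolute value one, apply \cref{aaks-marginals} in that normalized form, and then undo the rescaling.

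Concretely, I would let $Q^\star \in \locals \setminus \{\id\}$ attain $\sigma_{Q^\star}^2 = \max_{Q \in \locals \setminus \{\id\}} \sigma_Q^2$; the degenerate case where no such $Q^\star$ exists (i.e.\ $A$ is a multiple of the identity) satisfies the claim trivially, since then $\tr(A^2\rho) = \sigma_\id^2$. Setting $M := |\sigma_{Q^\star}| > 0$ and $\tilde{A} := A/M$, the rescaled operator is still a Pauli sum of support at most $\locality'$ with dual-interaction-graph max degree $\degree'$ and every coefficient in $[-1,1]$, so the hypotheses of \cref{aaks-marginals} are met. Fixing any qubit $i \in \supp(Q^\star)$ and combining \cref{eq:localizing-local} with the orthogonality $\tr(PP')/\dims = \delta_{P,P'}$ for distinct Pauli products, I get
\[
\tr\!\bigl(\tilde{A}_{(i)}^2/\dims\bigr) \;=\; \sum_{P \in \locals \,:\, i \in \supp(P)} (\sigma_P/M)^2 \;\ge\; (\sigma_{Q^\star}/M)^2 \;=\; 1,
\]
where the inequality is because the $Q^\star$-term alone contributes $1$.

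Finally I would invoke \cref{aaks-marginals} on $\tilde{A}$ at this $i$ to conclude $\tr(\tilde{A}^2\rho) \ge c^{6+c'\beta}$, multiply through by $M^2$ to obtain $\tr(A^2\rho) \ge c^{6+c'\beta}\,\sigma_{Q^\star}^2$, and absorb the prefactor into the stated form $\exp(-c_{\locality,\degree,\locality',\degree'} - c'_{\locality,\degree,\locality',\degree'}\beta)$ by taking logarithms (the case $c \ge 1$ only strengthens the bound). I do not anticipate any real obstacle---all of the substance already lives in \cref{aaks-marginals}---and the only conceptual point worth flagging is the normalization step, which is precisely what replaces the unhelpful exponent $6 + c'\beta$ on $\sigma_{Q^\star}^2$ with a single linear factor, as required.
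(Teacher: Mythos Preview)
Your proof is correct and follows essentially the same approach as the paper: rescale $A$ by the maximal coefficient so that \cref{aaks-marginals} applies, pick a qubit $i$ in the support of the maximizing Pauli so that $\tr(\tilde{A}_{(i)}^2/\dims) \ge 1$, and then undo the rescaling to recover the factor $\sigma_{Q^\star}^2$. The paper's proof is line-for-line the same idea, with $g$ in place of your $M$; your explicit handling of the identity case is a minor additional care that the paper omits.
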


\begin{proof}
Let $g^2 = \max_{Q \in \locals_\ell} \sigma_Q^2$.
\begin{align*}
    \tr(X^2 \rho) &= g^2 \tr((X/g)^2 \rho) \\
    &\geq g^2 \max_{i \in [\qubits]} \parens[\Big]{c\tr((X_{(i)}/g)^2/\dims)}^{6 + c'\beta} \tag*{by \cref{aaks-marginals}}\\
    &= g^2 \max_{i \in [\qubits]} \parens[\Big]{c\tr\parens[\Big]{\parens[\Big]{\sum_{\substack{P \in \locals_\ell \\ i \in \supp(P)}} \frac{\sigma_P}{g} P}^2/\dims}}^{6 + c'\beta} \tag*{by \cref{eq:localizing-local}} \\
    &= g^2 \max_{i \in [\qubits]} \parens[\Big]{c\sum_{\substack{P \in \locals_\ell \\ i \in \supp(P)}} \frac{\sigma_P^2}{g^2}}^{6 + c'\beta} \tag*{by \cref{def:paulis}}\\
    &\geq g^2 c^{6 + c'\beta}.
\end{align*}
This gives the desired statement.
\end{proof}

Another key statement we use about local operators is that a local operator is approximately diagonal when considered in the basis of eigenvectors of another local operator.

\begin{theorem}[{\cite[Theorem 2.1]{akl16}}]
    Let $H = \sum_{S \subset [\qubits]} h_S$ be a Hamiltonian where all its terms $h_S$ are positive semi-definite, supported on at most $\locality$ qubits, and the terms interacting with any particular site have bounded norm: $\sum_{S : i \in S} \norm{h_S} \leq \onespin$ for all $i \in [\qubits]$.
    Let $A$ be an operator, and define $R = \sum_{X \in \calC} \norm{h_X}$, where $\calC = \{S \subset [\qubits] \mid [h_S, A] \neq 0\}$ is the set of terms that don't commute with $A$.
    Then
    \begin{equation*}
        \norm{\Pi_{[\sigma, \infty)}^{(H)} A \Pi_{[0, \varsigma]}^{(H)}} \leq \norm{A} e^{-\frac{1}{2\onespin\locality}(\sigma - \varsigma - 2R)}.
    \end{equation*}
\end{theorem}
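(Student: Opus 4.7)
The plan is to use an imaginary-time-evolution argument combined with a Lieb--Robinson-style bound on nested commutators. Let $\tilde A = \Pi_{[\sigma,\infty)}^{(H)} A \Pi_{[0,\varsigma]}^{(H)}$ be the block we wish to bound. For any $\tau > 0$, inserting $e^{-\tau H} e^{\tau H}$ on each side of $A$ gives
\[
\tilde A = \Pi_{[\sigma,\infty)}^{(H)} e^{-\tau H} \bigl( e^{\tau H} A e^{-\tau H}\bigr) e^{\tau H} \Pi_{[0,\varsigma]}^{(H)}.
\]
Since $H \geq \sigma$ on the range of $\Pi_{[\sigma,\infty)}^{(H)}$ and $H \leq \varsigma$ on the range of $\Pi_{[0,\varsigma]}^{(H)}$, the outer factors have operator norm at most $e^{-\tau \sigma}$ and $e^{\tau \varsigma}$ respectively, so
\[
\|\tilde A\| \leq e^{-\tau(\sigma - \varsigma)} \, \bigl\|e^{\tau H} A e^{-\tau H}\bigr\|.
\]
The task reduces to upper bounding the norm of the imaginary-time-evolved operator on a useful range of $\tau$.

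For this I would expand $e^{\tau H} A e^{-\tau H} = \sum_{\ell \geq 0} \tau^\ell [H,A]_\ell / \ell!$ and control the nested commutators inductively, exploiting the locality structure encoded by $\calC$. The base case is $\|[H, A]\| \leq 2R\|A\|$, which is immediate because only terms $h_S \in \calC$ contribute. For the inductive step, only terms whose supports intersect the running support of $[H,A]_{\ell-1}$ contribute to $[H,A]_\ell$; since each commutation with $H$ enlarges the support by at most $\locality$ sites and each site contributes at most $\onespin$ to the sum of norms of interacting terms, one obtains a bound of the schematic form
\[
\|[H, A]_\ell\| \lesssim \|A\| \cdot \ell! \cdot (2 \onespin \locality)^\ell \cdot \binom{R/(\onespin\locality) + \ell}{\ell}.
\]
Summing the series yields $\bigl\|e^{\tau H} A e^{-\tau H}\bigr\| \lesssim \|A\| \cdot (1 - 2\tau \onespin \locality)^{-R/(\onespin\locality) - 1}$, valid for $\tau < 1/(2\onespin\locality)$.

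The main obstacle is the final optimization over $\tau$: the outer factor $e^{-\tau(\sigma - \varsigma)}$ pushes $\tau$ as large as possible, while the commutator series bound blows up as $\tau \to 1/(2\onespin \locality)$. Combining the two estimates gives
\[
\|\tilde A\| \lesssim \|A\| \cdot e^{-\tau(\sigma - \varsigma)} \cdot (1 - 2\tau \onespin \locality)^{-R/(\onespin\locality) - 1},
\]
and a careful asymptotic analysis with $\tau$ tending to $1/(2\onespin\locality)$ from below collapses the polynomial singularity against the exponential factor and reproduces the claimed rate $\exp\bigl(-(\sigma - \varsigma - 2R)/(2\onespin\locality)\bigr)$. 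The technical crux lies in the combinatorial analysis of the nested commutator estimate: the induction has to carefully track how supports grow under repeated commutation with $H$, which is where the original proof of Arad, Kuwahara, and Landau concentrates its work.
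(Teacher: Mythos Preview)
The paper does not give its own proof of this statement; it is quoted verbatim as \cite[Theorem~2.1]{akl16} and used as a black box (the surrounding text only explains how to drop the PSD assumption to obtain \cref{lem:akl}). So there is no in-paper argument to compare against, and your sketch must be measured against the original Arad--Kuwahara--Landau proof.

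Your skeleton---insert $e^{\mp\tau H}$, peel off $e^{-\tau(\sigma-\varsigma)}$ from the spectral projectors, bound $\|e^{\tau H}Ae^{-\tau H}\|$ via locality, and optimize in $\tau$---is exactly the scheme AKL use. The gap is in the middle step and the final optimization. The global Hadamard-series bound you derive,
\[
\bigl\|e^{\tau H}Ae^{-\tau H}\bigr\|\ \lesssim\ \|A\|\,(1-2\tau\onespin\locality)^{-R/(\onespin\locality)-1},
\]
diverges at the critical value $\tau=1/(2\onespin\locality)$, and that singularity is \emph{not} cancelled by $e^{-\tau(\sigma-\varsigma)}$: minimizing the product over $\tau\in(0,1/(2\onespin\locality))$ leaves an extra factor of order $\bigl((\sigma-\varsigma)/R\bigr)^{R/(\onespin\locality)+1}$, so your bound is polynomially weaker in $\sigma-\varsigma$ than the one stated. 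What AKL actually prove as their key local lemma is the sharper estimate $\|e^{\tau H}Ae^{-\tau H}\|\le \|A\|\,e^{2\tau R}$, valid all the way up to $\tau=1/(2\onespin\locality)$; plugging that in with $\tau=1/(2\onespin\locality)$ gives the statement exactly. That inequality is not obtainable from the naive commutator series you wrote; AKL get it by an iterative argument that peels the boundary terms of $H$ one shell at a time rather than expanding all nested commutators at once. Your proposal correctly identifies the combinatorics of nested commutators as the crux, but the sentence ``a careful asymptotic analysis \dots\ collapses the polynomial singularity'' does not hold, and the missing idea is precisely the refined local estimate that avoids the blow-up at the threshold.
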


The assumption that the $h_S$ are PSD can be removed: if the $h_S$ are not PSD, we can apply the theorem to $h_S + \id \norm{h_S}$, which only affects the final bound by inflating $\onespin$ and $R$ by a factor of two.
Note that this just leads to shifting $H$ by a factor of the identity, and so only shifts the spectrum.

\begin{corollary}\label{lem:akl}
    Let $H = \sum_{S \subset [\qubits]} h_S$ be a Hamiltonian where all its terms $h_S$ are supported on at most $\locality$ qubits, and the terms interacting with any particular site has bounded norm: $\sum_{S : i \in S} \norm{h_S} \leq \onespin$ for all $i \in [\qubits]$.
    Let $A$ be an operator, and define $R = \sum_{X \in \calC} \norm{h_X}$, where $\calC = \{S \subset [\qubits] \mid [h_S, A] \neq 0\}$ is the set of terms that don't commute with $A$.
    Then
    \begin{equation*}
        \norm{\Pi_{[\sigma, \infty)}^{(H)} A \Pi_{[-\infty, \varsigma]}^{(H)}} \leq \norm{A} e^{-\frac{1}{4\onespin\locality}(\sigma - \varsigma - 4R)}.
    \end{equation*}
\end{corollary}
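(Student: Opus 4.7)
The plan is to reduce directly to the PSD version of the AKL theorem already stated, using the shift trick outlined in the paragraph preceding the corollary. First, for each term $h_S$, define $h_S' = h_S + \norm{h_S}\,\id$. Since $h_S \succeq -\norm{h_S}\,\id$, each $h_S'$ is positive semi-definite, and by the triangle inequality $\norm{h_S'} \leq 2\norm{h_S}$. Setting $H' = \sum_S h_S'$, we have $H' = H + C\,\id$ where $C = \sum_S \norm{h_S}$, and $H' \succeq 0$.

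Next, I would track how the parameters transform under this shift. The local-norm bound becomes $\sum_{S : i \in S} \norm{h_S'} \leq 2\onespin$, so we use $2\onespin$ in place of $\onespin$. Since $[h_S', A] = [h_S + \norm{h_S}\id, A] = [h_S, A]$, the set $\calC$ is unchanged, and the new $R' = \sum_{X \in \calC} \norm{h_X'} \leq 2R$. For the projectors, $H'$ and $H$ have the same eigenvectors with eigenvalues shifted by $C$, so
\begin{equation*}
\Pi_{[\sigma, \infty)}^{(H)} = \Pi_{[\sigma + C, \infty)}^{(H')} \quad \text{and} \quad \Pi_{[-\infty, \varsigma]}^{(H)} = \Pi_{[0, \varsigma + C]}^{(H')},
\end{equation*}
where the second equality uses that $H' \succeq 0$ (so the projector is automatically zero if $\varsigma + C < 0$, in which case the inequality is trivial).

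Finally, I would apply the cited theorem of Arad--Kuwahara--Landau to $H'$ with parameters $(2\onespin, 2R, \sigma + C, \varsigma + C)$, obtaining
\begin{equation*}
\norm{\Pi_{[\sigma + C, \infty)}^{(H')} A \, \Pi_{[0, \varsigma + C]}^{(H')}} \leq \norm{A}\, e^{-\frac{1}{2(2\onespin)\locality}\bigl((\sigma + C) - (\varsigma + C) - 2(2R)\bigr)} = \norm{A}\, e^{-\frac{1}{4\onespin\locality}(\sigma - \varsigma - 4R)},
\end{equation*}
which is exactly the claimed bound after identifying the projectors via the shift. There is no genuine obstacle here: the only thing to be careful about is the factor-of-two blow-up in both $\onespin$ and $R$ and the fact that the shift by $C\,\id$ cancels inside the exponent, which is precisely why the constants $\frac{1}{2\onespin\locality}$ and $2R$ in the original theorem become $\frac{1}{4\onespin\locality}$ and $4R$ here.
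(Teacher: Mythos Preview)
Your proposal is correct and follows exactly the approach the paper sketches in the paragraph preceding the corollary: shift each term to $h_S + \norm{h_S}\id$, note that this doubles $\onespin$ and $R$ while only translating the spectrum by a constant $C$, and then apply the PSD version of the theorem with the shifted projectors. Your handling of the projector identification (including the trivial case $\varsigma + C < 0$) and the observation that $[h_S',A]=[h_S,A]$ so $\calC$ is unchanged are the only details the paper leaves implicit, and you have them right.
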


When $A$ satisfies $\abs{\supp(A)} \leq \locality'$, we can take $R$ to be the number of terms that intersect with the support of $A$.
In our setup, where the terms are Pauli matrices with a dual interaction graph with maximum degree $\degree$, we can take $\onespin = \degree + 1$.
If the support of $A$ is contained in the support of a term, we can take $R = \degree+1$; otherwise, we can take $R = \locality' \degree$.

\subsection{Bounds on nested commutators}

One key property about commutators $[A,B]$ is that they compose well when the inputs $A,B$ are a sum of local terms (recall \cref{lem:pauli-commutator}).  In this section, we will make these composition properties more precise when we have a nested commutator of the form
\[
[H_1, [H_2 , [ \dots [H_\ell , A] \dots ]]],
\]
where $H_1, \dots , H_\ell$ are local operators and $A$ has small support.
We use a cluster expansion argument to bound these terms.

\begin{definition}\label{def:cluster}
Let $M_1, \dots , M_\ell \in \mathbb{C}^{\dims \times \dims}$ be operators.  We say that the ordered $\ell$-tuple $(M_1, \dots,  M_\ell)$ forms a cluster if, for all $a$, the support of $M_{a+1}$ has nonempty intersection with $\supp(M_1) \cup \dots \cup \supp(M_{a})$.
\end{definition}

\begin{lemma}\label{lem:cluster-expansion}
Let $\mathcal{E} \subset \locals$ be a set of Pauli terms where every $P \in \mathcal{E}$ satisfies $\supp(P) \leq \locality$ and the dual interaction graph associated with $\mathcal{E}$ has max degree $\degree$.
Let $H_1, \dots , H_\ell$ be linear combinations of elements of $\mathcal{E}$, written as $H_i = \sum_{P \in \mathcal{E}} \lambda_{i,P} P$ for all $i$.  Let $A \in \mathcal{E}$.  Then we can write $[H_1, [H_2 , [ \dots [H_{\ell},  A] \dots ]]]$ in the form
\[
2^\ell  \sum_{\substack{ P_1, P_2, \dots , P_\ell \in \mathcal{E} \\ (A, P_{\ell}, \dots , P_1) \text{ is a cluster}} } c_{P_1, \dots , P_\ell} Q_{P_1, \dots , P_\ell} \prod_{j = 1}^\ell \lambda_{j,P_j},
\] 
where the constants $c_{P_1, \dots , P_m} \in \{0, \pm 1, \pm i\}$ and 
\begin{enumerate}[label=(\alph*)]
    \item $Q_{P_1, \dots , P_\ell} \in \locals_{(\ell+1)k}$ (as defined in \cref{def:dual-interaction-graph}) and is distance $\ell$ from $A$ in $\graph$-distance;
    \item The number of terms in the sum is at most  $\ell! (\degree+1)^\ell$.
\end{enumerate}
\end{lemma}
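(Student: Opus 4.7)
The plan is to prove this by multilinear expansion followed by an inductive commutator argument, with the cluster/support structure emerging naturally from the support constraint for a commutator to be nonzero. First, expanding each $H_i = \sum_{P \in \mathcal{E}} \lambda_{i,P} P$ by multilinearity of the commutator, I would write
\begin{equation*}
    [H_1, [H_2, [\dots, [H_\ell, A] \dots]]] = \sum_{P_1, \dots, P_\ell \in \mathcal{E}} \Bigl(\prod_{j=1}^\ell \lambda_{j, P_j}\Bigr) \, [P_1, [P_2, [\dots, [P_\ell, A] \dots]]].
\end{equation*}
The goal is then to analyze a single Pauli nested commutator $[P_1, [P_2, [\dots, [P_\ell, A] \dots]]]$.

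Second, I would apply \cref{lem:pauli-commutator} iteratively. For two Pauli tensor products $P$ and $R$, their commutator is either $0$ (if $P$ and $R$ commute, e.g.\ if their supports are disjoint) or equals $2 c \cdot S$ for some $c \in \{\pm 1, \pm \ii\}$ and some Pauli $S$ with $\supp(S) \subseteq \supp(P) \cup \supp(R)$. Applying this $\ell$ times, I get that $[P_1, [P_2, [\dots, [P_\ell, A] \dots]]]$ is either $0$ or equals $2^\ell c_{P_1,\dots,P_\ell} Q_{P_1,\dots,P_\ell}$ for some constant $c_{P_1,\dots,P_\ell} \in \{\pm 1, \pm \ii\}$ and some Pauli $Q_{P_1,\dots,P_\ell}$.

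Third, I would identify precisely when the term vanishes. Writing $R_a := [P_a, [P_{a+1}, [\dots, [P_\ell, A] \dots]]]$, note $\supp(R_a) \subseteq \supp(P_a) \cup \supp(P_{a+1}) \cup \dots \cup \supp(P_\ell) \cup \supp(A)$. If $\supp(P_{a-1})$ is disjoint from $\supp(R_a)$, then $P_{a-1}$ commutes with $R_a$ (since disjoint-support Paulis commute) and so $R_{a-1} = 0$. Hence the only way for the whole commutator to be nonzero is for each $\supp(P_a)$ to meet $\supp(A) \cup \supp(P_\ell) \cup \dots \cup \supp(P_{a+1})$, which is exactly the statement that $(A, P_\ell, P_{\ell-1}, \dots, P_1)$ forms a cluster in the sense of \cref{def:cluster}. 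This gives the sum structure claimed.

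Fourth, for the bounds (a) and (b), I would argue as follows. For (a): $\supp(Q_{P_1,\dots,P_\ell}) \subseteq \supp(A) \cup \bigcup_j \supp(P_j)$, which is a union of $\ell+1$ supports each of size at most $k$, so $Q \in \locals_{(\ell+1)k}$; the set $\{A, P_\ell, \dots, P_1\}$ is connected in $\graph$ by the cluster property, witnessing the $\graph$-locality, and $P_1$ sits at $\graph$-distance at most $\ell$ from $A$ along the cluster chain, so $Q$ lies within $\graph$-distance $\ell$ of $A$. For (b): I would count cluster tuples by building them up one index at a time. Having already fixed $A, P_\ell, \dots, P_{a+1}$ (that is, $\ell - a + 1$ Paulis from $\mathcal{E}$), the next element $P_a$ must lie in $\mathcal{E}$ and share a qubit with at least one of these $\ell - a + 1$ already-chosen Paulis. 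Each already-chosen Pauli has at most $\mathfrak{d}$ neighbors in $\graph$, so including itself contributes at most $\mathfrak{d}+1$ candidates, for a total of at most $(\ell - a + 1)(\mathfrak{d}+1)$ choices for $P_a$. Multiplying over $a = \ell, \ell-1, \dots, 1$ gives the bound $\ell! (\mathfrak{d}+1)^\ell$.

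The main obstacle is purely bookkeeping — keeping track of the phases from \cref{lem:pauli-commutator} so that the constants $c_{P_1,\dots,P_\ell}$ really land in $\{0, \pm 1, \pm \ii\}$ and ordering the cluster indexing so that the cluster condition reads correctly for $(A, P_\ell, \dots, P_1)$ rather than the reverse; no deep ideas are required beyond multilinearity, the Pauli commutator calculus, and the disjoint-support-implies-commute principle.
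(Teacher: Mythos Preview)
Your proposal is correct and follows essentially the same route as the paper: multilinear expansion, then \cref{lem:pauli-commutator} to evaluate each Pauli nested commutator, the disjoint-support argument to restrict to clusters, and the same step-by-step count giving $\ell!(\degree+1)^\ell$. Your exposition of the counting (building the cluster in the order $A, P_\ell, \dots, P_1$) is in fact a bit cleaner than the paper's indexing.
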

\begin{proof}
We can expand the nested commutator $[H_1, [H_2 , [ \dots [H_\ell , A] \dots ]]]$, using that the commutator is bilinear, into a sum of individual terms
\[
\sum_{P_1, \dots , P_\ell}  [P_1, [P_2 , [ \dots [P_\ell , A] \dots ]]] \prod_{a=1}^\ell \lambda_{a,P_a} \,.
\]
Now we argue about which of the terms in the sum above are nonzero.  Note that for the commutator to be nonzero, $P_\ell$ must intersect the support of $A$, $P_{\ell-1}$ must intersect $\supp(P_\ell) \cup \supp(A)$, and so on.  This condition is equivalent to $A, P_{\ell}, \dots , P_1$ being a cluster.  Now by \cref{lem:pauli-commutator}, we have
\[
 [P_1, [P_2 , [ \dots [P_\ell , A] \dots ]]] \in 2^\ell c_{P_1, \dots , P_\ell} \locals_{\ell+1}
\]
for some $c_{P_1, \dots , P_\ell} \in \{ 0, \pm 1, \pm i \}$.  Because $(A, P_\ell, \dots, P_1)$ is a cluster, we can also conclude that the elements of $\locals_{\ell + 1}$ that appear in the sum have distance $\leq \ell$ from $A$.  Now it remains to count the number of clusters.  When choosing $P_a$, we have $a$ choices for which of $(A, P_1, \ldots, P_{a-1})$, to intersect with, and $\degree+1$ choices for elements of $\mathcal{E}$ that intersect with that element of $\mathcal{E}$.  This gives an upper bound on the total number of clusters of
\[
(\degree+1) \cdot 2(\degree+1) \cdots \ell(\degree+1) = \ell! (\degree+1)^{\ell}.
\]
This completes the proof.
\end{proof}

We will also need the following lemma from \cite{hkt21} that counts the number of distinct multisets of terms that form a cluster (under some ordering).

\begin{lemma}[{\cite[Proposition 3.6]{hkt21}}]\label{lem:count-monomials}
Consider a set of terms $\mathcal{E} \subset \locals$ with dual interaction graph $\graph$ with max degree $\degree \geq 2$.
For a fixed $E_{a_1} \in \mathcal{E}$, the number of multisets of terms $E_{a_2}, \dots , E_{a_\ell} \in \mathcal{E}$ such that there is some ordering $\pi$ where $(E_{a_{\pi(1)}}, \dots , E_{a_{\pi(\ell)}})$ forms a cluster is at most $e\degree(1+e(\degree-1))^{\ell-1} \leq (3\degree)^\ell$.
\end{lemma}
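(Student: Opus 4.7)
The plan is to reduce the multiset count to the number of connected subgraphs of the dual interaction graph $\graph$ containing $E_{a_1}$, and then invoke a standard subtree-enumeration bound.

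First I would establish the key equivalence: a multiset $\{E_{a_1}, E_{a_2}, \ldots, E_{a_\ell}\}$ admits an ordering starting with $E_{a_1}$ that is a cluster in the sense of \cref{def:cluster} if and only if its set of distinct elements forms a connected subgraph of $\graph$ containing $E_{a_1}$. The forward direction is immediate: the cluster definition forces each newly added term to share support with---hence either equal or be $\graph$-adjacent to---some earlier term. For the reverse, any BFS-ordering of the distinct elements out of $E_{a_1}$, with any repeated elements inserted at arbitrary positions after their first occurrence, produces a valid cluster ordering. This reduction is the qualitative heart of the lemma.

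Next I would count these multisets by partitioning according to the number $s$ of distinct elements. For each $s$, let $N_s$ denote the number of connected subgraphs of $\graph$ of size $s$ containing $E_{a_1}$; the classical tree-enumeration estimate in bounded-degree graphs gives $N_s \leq e\degree (e(\degree-1))^{s-1}$, obtained by encoding each connected subgraph as a spanning tree rooted at $E_{a_1}$ and applying a Fuss--Catalan / Bollob\'as-style count for rooted subtrees of size $s$ in a graph of max degree $\degree$. Having fixed the set of $s$ distinct elements, the number of size-$\ell$ multisets with exactly that support is $\binom{\ell-1}{s-1}$, by stars-and-bars on the compositions of $\ell$ into $s$ positive parts (the multiplicities). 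Summing and applying the binomial theorem,
\[
\sum_{s=1}^{\ell} N_s \binom{\ell-1}{s-1}
\leq e\degree \sum_{s=1}^{\ell} (e(\degree-1))^{s-1} \binom{\ell-1}{s-1}
= e\degree (1 + e(\degree-1))^{\ell-1},
\]
which is the stated bound. The weaker form $(3\degree)^\ell$ then follows from $1 + e(\degree-1) \leq e\degree$ together with $e \leq 3$.

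The main obstacle is establishing the tree-enumeration estimate for $N_s$: this is a standard but nontrivial fact about bounded-degree graphs, and tracking the exact constant requires care. If one prefers to avoid invoking it as a black box, an alternative is to directly encode each connected subgraph by a DFS walk on its spanning tree (each move is either ``descend to a $\graph$-neighbor of the current vertex'' with $\leq \degree$ choices, or ``backtrack'' with one choice), which yields a cruder estimate $N_s \leq C_{s-1}\,\degree^{s-1} \leq (4\degree)^{s-1}$ via the Catalan number $C_{s-1}$; this suffices for the weaker $(c\degree)^\ell$ form of the bound, which is all that is used downstream in the paper.
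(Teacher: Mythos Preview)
The paper does not prove this lemma; it is quoted directly from \cite[Proposition 3.6]{hkt21} without reproof, so there is no in-paper argument to compare against. Your reconstruction is correct and recovers the exact bound $e\degree(1+e(\degree-1))^{\ell-1}$: the reduction to counting connected vertex subsets of $\graph$ containing $E_{a_1}$ is sound (existence of \emph{any} cluster ordering is equivalent to connectedness of the distinct support in $\graph$, so restricting to orderings beginning at $E_{a_1}$ is harmless), and the stars-and-bars/binomial combination with the subtree estimate $N_s \leq e\degree(e(\degree-1))^{s-1}$ gives precisely the stated expression. The shape of the bound---a prefactor $e\degree$ times $(1+e(\degree-1))^{\ell-1}$---essentially dictates this decomposition, so your approach almost certainly matches the source; the DFS/Catalan variant you sketch at the end suffices for the cruder $(c\degree)^\ell$ form, which is all the present paper uses downstream.
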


As a consequence, we can bound the number of distinct elements in $\locals_{k\ell}$.

\begin{corollary}\label{coro:number-of-local-paulis}
We have $|\locals_{k\ell}| \leq m (10^{k}\degree)^{\ell }$.  Furthermore, the number of elements of $\locals_{k\ell}$ that have support intersecting with a fixed term $P \in \locals_k$ is at most $(10^{k}\degree)^{\ell + 1}$. 
\end{corollary}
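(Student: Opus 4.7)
The plan is to bound $|\locals_{k\ell}|$ in two stages: first count the number of allowable supports (which are unions $\cup_{a \in S}\supp(E_a)$ over connected $\ell$-subsets $S$ of terms), and then count the number of Pauli tensor products compatible with each support.

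First, I would observe that by \cref{def:dual-interaction-graph}, every $Q \in \locals_{k\ell}$ is associated with (at least one) connected size-$\ell$ subset $S \subseteq [\terms]$ with $\supp(Q) \subseteq \cup_{a \in S}\supp(E_a)$. So it suffices to count the pairs $(S, Q)$. A connected subset $S$ of size $\ell$ containing a fixed term $E_{a_1}$ admits a DFS-style ordering $(E_{a_{\pi(1)}} = E_{a_1}, E_{a_{\pi(2)}}, \ldots, E_{a_{\pi(\ell)}})$ forming a cluster in the sense of \cref{def:cluster}; hence by \cref{lem:count-monomials}, there are at most $(3\degree)^\ell$ such subsets. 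Summing over the $\terms$ possible ``anchor'' terms $E_{a_1}$ (and absorbing the overcount into the constants) yields at most $\terms (3\degree)^\ell$ connected subsets $S$ of size $\ell$. For each such $S$, the union $\cup_{a \in S}\supp(E_a)$ consists of at most $k\ell$ qubits, and any $Q \in \locals_{k\ell}$ with support contained there is determined by a choice from $\{\sigma_\id, \sigma_x, \sigma_y, \sigma_z\}$ on each of these qubits, giving at most $4^{k\ell}$ Paulis. Combining,
\[
|\locals_{k\ell}| \leq \terms \cdot (3\degree)^\ell \cdot 4^{k\ell} \leq \terms (10^k \degree)^\ell,
\]
where the last inequality absorbs $3 \cdot 4^k$ into $10^k$ (possibly after adjusting the base constant; this is the routine calculation).

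For the second claim, fix $P \in \locals_k$. Since $P$ is itself $k$-$\graph$-local, there is a term $E_{a_0}$ with $\supp(P) \subseteq \supp(E_{a_0})$. For $Q \in \locals_{k\ell}$ with associated connected set $S$ to have $\supp(Q) \cap \supp(P) \neq \varnothing$, the union $\cup_{a \in S}\supp(E_a)$ must intersect $\supp(P) \subseteq \supp(E_{a_0})$, so $S$ must contain at least one term that is a graph-neighbor of $E_{a_0}$ (or $E_{a_0}$ itself). There are at most $\degree + 1$ such anchor terms, and for each anchor, \cref{lem:count-monomials} again gives at most $(3\degree)^\ell$ connected subsets of size $\ell$ containing it. Multiplying by the $4^{k\ell}$ Paulis per support gives the bound $(\degree + 1)(3\degree)^\ell \cdot 4^{k\ell} \leq (10^k \degree)^{\ell + 1}$.

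There is no real obstacle here; the main thing to be careful about is the constant manipulation and making sure the ``anchor'' argument does not undercount (every connected $S$ has at least one valid anchor, namely any one of its vertices for the first part, and any vertex adjacent to $E_{a_0}$ for the second part). I would not worry about whether some subsets are counted multiple times, since we only need an upper bound.
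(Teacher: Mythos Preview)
Your proposal is correct and is precisely the intended derivation; the paper omits the proof, treating it as an immediate consequence of \cref{lem:count-monomials}, and your two-stage count (connected $\ell$-subsets via the cluster-counting lemma, then $4^{k\ell}$ Paulis per support) is exactly the natural argument. The only nitpick is the constant: $3\cdot 4^k \le 10^k$ fails for $k=1$, so the stated base $10$ is slightly too small in that edge case, but as you note this is irrelevant since the bound is only ever consumed inside expressions like $(1/\eps)^{10^{C_{\locality,\degree}\beta}}$ with a large $C_{\locality,\degree}$.
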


\subsection{Sums-of-squares polynomials}

We will need some preliminaries about polynomials and the sum-of-squares (SoS) framework.  First, we introduce a notion of polynomials with sum-of-squares representations with bounded coefficients.  We maintain coefficient bounds because later, certain sampling and approximation errors are multiplied by these coefficients.  The general SoS framework is not concerned with such tight bounds on the coefficients and thus we need to make a few definition outside of the general framework.

\begin{definition}[Sum-of-squares polynomial]
    A polynomial $p \in \R[x_1, \dots , x_m ]$ is a \emph{sum-of-squares polynomial} if we can write $p = q_1^2 + \dots + q_k^2$ for some polynomials $q_1, \dots , q_k \in \R[x_1, \dots , x_m ]$.
\end{definition}

We sometimes abbreviate sum-of-squares as SoS.

\begin{definition}[Bounded polynomial]
\label{def:bound-polynomial}
\label{def:sos-bounded-polynomial}
    We say a polynomial $p(x_1, \dots , x_m) \in \R[x_1, \dots , x_m ]$ is \emph{$(d,C)$-bounded} if  
    \begin{enumerate}[label=(\alph*)]
        \item $p$ has degree at most $d$, and
        \item for each monomial of degree $d' \leq d$ in $p$, its coefficient has magnitude at most $C/(d'!)$. 
    \end{enumerate} 
    We say $p$ is a \emph{$(k,d,C)$-bounded sum-of-squares polynomial} if $p$ is a sum-of-squares polynomial, $p = q_1^2 + \dots + q_k^2$, and each of the $q_i$'s are $(d,C)$-bounded.
\end{definition}

\begin{claim}[Composition of bounded SoS polynomials]\label{claim:basic-composition-properties}
Let $p_1(x_1, x_2)$ be a $(k_1, d_1, C_1)$-bounded SoS polynomial and $p_2(x_1, x_2)$ be a $(k_2, d_2, C_2)$-bounded SoS polynomial.  Then
\begin{enumerate}[label=(\alph*)]
    \item $p_1 + p_2$ is a $(k_1 + k_2,\,\max(d_1, d_2),\,\max(C_1, C_2))$-bounded SoS polynomial;
    \item $p_1p_2$ is a $(k_1k_2,\,d_1 + d_2,\,(d_1 + d_2+1)2^{d_1 + d_2}C_1C_2)$-bounded SoS polynomial;
    \item For any $t \in [0,1]$, $p_1((1 - t)x_1 + ty_1, (1 - t)x_2+ ty_2) $ is a $(k_1, d_1, C_1)$-bounded SoS polynomial in $x_1,y_1,x_2,y_2$.
\end{enumerate}
\end{claim}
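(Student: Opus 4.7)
The plan is to unpack the SoS representations in each case and verify the conclusion by tracking how the degree and coefficient bounds transform. In every part the sum-of-squares structure is immediate; the content lies in the coefficient bounds, which reduce to elementary manipulations on multi-indices.

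Part (a) is essentially by definition. Writing $p_1 = \sum_{i=1}^{k_1} q_i^2$ and $p_2 = \sum_{j=1}^{k_2} r_j^2$, one concatenates to express $p_1 + p_2$ as a sum of $k_1 + k_2$ squares; since $(d,C)$-boundedness is monotone in both $d$ and $C$, each $q_i$ and $r_j$ is trivially $(\max(d_1,d_2),\max(C_1,C_2))$-bounded, and we are done.

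Part (c) follows by substituting into the existing decomposition: $p_1(\ldots) = \sum_i \widetilde{q}_i^2$ where $\widetilde{q}_i(x_1,y_1,x_2,y_2) := q_i((1-t)x_1+ty_1,(1-t)x_2+ty_2)$, giving $k_1$ squares of degree at most $d_1$. Expanding each $q_i$ monomial via the binomial theorem, the coefficient of a target monomial $x_1^{i_1}y_1^{j_1}x_2^{i_2}y_2^{j_2}$ in $\widetilde{q}_i$ is forced to equal $a_\alpha \binom{\alpha_1}{i_1}\binom{\alpha_2}{i_2}(1-t)^{i_1+i_2}t^{j_1+j_2}$ with $\alpha = (i_1+j_1, i_2+j_2)$ uniquely determined by the target monomial. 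Each of $\binom{\alpha_a}{i_a}(1-t)^{i_a}t^{j_a}$ is a single summand in the identity $((1-t)+t)^{\alpha_a}=1$, hence lies in $[0,1]$ when $t \in [0,1]$, so the product lies in $[0,1]$ and the bound $|a_\alpha| \leq C_1/|\alpha|!$ is preserved for the new coefficient.

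Part (b) is the main technical step. Write $p_1 p_2 = \sum_{i,j}(q_i r_j)^2$ to produce the $k_1 k_2$ squares, so it remains to bound the coefficients of $q_i r_j$. The coefficient of $x^\gamma$ in $q_i r_j$ equals $\sum_{\alpha+\beta=\gamma} a_\alpha b_\beta$, and the triangle inequality together with the factorial decay $|a_\alpha| \leq C_1/|\alpha|!$ and $|b_\beta| \leq C_2/|\beta|!$ reduces the problem to bounding the combinatorial sum $S_\gamma := \sum_{\alpha+\beta=\gamma} \frac{1}{|\alpha|!|\beta|!}$ for each multi-index $\gamma$ of size $d' := |\gamma| \leq d_1+d_2$. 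Using the identity $\frac{1}{|\alpha|!|\beta|!} = \frac{1}{d'!}\binom{d'}{|\alpha|}$ and grouping decompositions by $j := |\alpha|$ yields $S_\gamma = \frac{1}{d'!}\sum_{j=0}^{d'} N_j(\gamma)\binom{d'}{j}$, where $N_j(\gamma)$ counts the number of $\alpha \leq \gamma$ with $|\alpha| = j$; a standard multi-set-to-distinct-set lifting gives $N_j(\gamma) \leq \binom{d'}{j}$, and this estimate together with $d' \leq d_1+d_2$ produces a per-coefficient bound of the form $C \cdot C_1 C_2 / d'!$ matching the claim. The delicate point is precisely this combinatorial estimate: a naive triangle bound produces too many summands, so one must simultaneously exploit the factorial decay of the input coefficients and the constraint that every decomposition $\alpha+\beta=\gamma$ contributes with total degree exactly $d'$.
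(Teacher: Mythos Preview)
Your arguments for parts (a) and (c) are correct and essentially identical to the paper's. The overall strategy for part (b) is also the paper's strategy: write $p_1p_2=\sum_{i,j}(q_ir_j)^2$ and bound the coefficients of each product $q_ir_j$ via the combinatorial sum $S_\gamma$.

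The only issue is quantitative, in the final combinatorial step of part (b). Your bound $N_j(\gamma)\leq\binom{d'}{j}$ is correct, but plugging it in gives
\[
\sum_{j=0}^{d'} N_j(\gamma)\binom{d'}{j}\;\leq\;\sum_{j=0}^{d'}\binom{d'}{j}^2\;=\;\binom{2d'}{d'},
\]
and $\binom{2d'}{d'}\sim 4^{d'}/\sqrt{\pi d'}$ is strictly larger than the claimed $(d'+1)2^{d'}$ for large $d'$. So your estimate does not recover the constant $(d_1+d_2+1)2^{d_1+d_2}$ in the statement; it only yields something like $4^{d_1+d_2}$.

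The fix is to use a cruder, not finer, bound on $N_j(\gamma)$. In two variables, once $\alpha_1$ is chosen there is at most one valid $\alpha_2=j-\alpha_1$, so $N_j(\gamma)\leq j+1\leq d'+1$. This gives
\[
\sum_{j=0}^{d'} N_j(\gamma)\binom{d'}{j}\;\leq\;(d'+1)\sum_{j=0}^{d'}\binom{d'}{j}\;=\;(d'+1)2^{d'}\;\leq\;(d_1+d_2+1)\,2^{d_1+d_2},
\]
which is exactly the paper's computation. Your multiset-lifting bound is valid (and would be the right move in many variables), but here it throws away the two-variable structure and costs you the constant.
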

\begin{proof}
For this proof, we let $[x_1^ix_2^j]p(x_1, x_2)$ denote the coefficient corresponding to $x_1^ix_2^j$ in $p$.

The first statement is obvious as we can simply combine the two representations as sums of squares.  The second also follows immediately by taking the two representations as sums of squares and expanding the product.
For this, we use that, when $r(x_1, x_2)$ is $(d_1, C_1)$-bounded and $s(x_1, x_2)$ is $(d_2, C_2)$-bounded, $rs$ is $(d_1+d_2, (d_1 + d_2 + 1)2^{d_1 + d_2}C_1C_2)$-bounded.
\begin{multline*}
    [x_1^ix_2^j](rs) = \sum_{\substack{0 \leq i' \leq i \\ 0 \leq j' \leq j}} \abs{[x_1^{i'}x_2^{j'}]r}\abs{[x_1^{i-i'}x_2^{j-j'}]s} \\
    \leq \sum_{\substack{0 \leq i' \leq i \\ 0 \leq j' \leq j}} \frac{C_1}{(i'+j')!}\frac{C_2}{(i+j-i' - j')!}
    = \frac{C_1 C_2}{(i+j)!}\sum_{\substack{0 \leq i' \leq i \\ 0 \leq j' \leq j}} \binom{i+j}{i' + j'}
    \leq \frac{C_1 C_2}{(i+j)!}(i+j+1)2^{i+j}
\end{multline*}
For the final statement, write
$p_1(x_1, x_2) = q_1(x_1,x_2)^2 + \dots + q_{k_1}(x_1,x_2)^2$.  Now we simply substitute in the change of variables.  A coefficient of the new polynomial is
\begin{multline*}
    [x_1^{i_1}y_1^{j_1}x_2^{i_2}y_2^{j_2}]q_{\ell}((1 - t)x_1 + ty_1, (1 - t)x_2+ ty_2) \\
    = 
    [x_1^{i_1 + j_1}x_2^{i_2 + j_2}]q_\ell(x_1, x_2) \cdot \binom{i_1 + j_1}{i_1} (1 - t)^{i_1} t^{j_1} \binom{i_2 + j_2}{i_2} (1 - t)^{i_2} t^{j_2} \leq [x_1^{i_1 + j_1}x_2^{i_2 + j_2}]q_\ell(x_1, x_2)
\end{multline*}
and this shows that after the change of variables, $p_1((1 - t)x_1 + ty_1, (1 - t)x_2+ ty_2) $ is still a $(k_1, d_1, C_1)$-bounded SoS polynomial.
\end{proof}

\subsubsection{The sum-of-squares framework}
\label{subsec:sos-framework}

We now provide an overview of the sum-of-squares proof system.
We closely follow the exposition as it appears in the lecture notes of Barak~\cite{barak2016proofs}.   

\paragraph{Pseudo-Distributions.}
A discrete probability distribution over $\R^m$ is defined by its probability mass function, $D\from \R^m \to \R$, which must satisfy $\sum_{x \in \mathrm{supp}(D)} D(x) = 1$ and $D \geq 0$.
We extend this definition by relaxing the non-negativity constraint to merely requiring that $D$ passes certain low-degree non-negativity tests.
We call the resulting object a pseudo-distribution.

\begin{definition}[Pseudo-distribution]
A \emph{degree-$\ell$ pseudo-distribution} is a finitely-supported function $D:\R^m \rightarrow \R$ such that $\sum_{x} D(x) = 1$ and $\sum_{x} D(x) p(x)^2 \geq 0$ for every polynomial $p$ of degree at most $\ell/2$, where the summation is over all $x$ in the support of $D$.
\end{definition}
Next, we define the related notion of pseudo-expectation.
\begin{definition}[Pseudo-expectation]
The \emph{pseudo-expectation} of a function $f$ on $\R^m$ with respect to a pseudo-distribution $\mu$, denoted by $\pexpecf{\mu(x)}{f(x)}$,  is defined as
\begin{equation*}
    \pexpecf{\mu(x)}{f(x)} = \sum_{x} \mu(x) f(x).
\end{equation*}
\end{definition}
We use the notation $\pexpecf{\mu(x)}{(1,x_1, x_2,\ldots, x_m)^{\otimes \ell}}$ to denote the degree-$\ell$ moment tensor of the pseudo-distribution $\mu$.
In particular, each entry in the moment tensor corresponds to the pseudo-expectation of a monomial of degree at most $\ell$ in $x$. 

\begin{definition}[Constrained pseudo-distribution]
\label{def:constrained-pseudo-distributions}
Let $\calA = \Set{ p_1\geq 0 , p_2\geq0 , \dots, p_r\geq 0}$ be a system of $r$ polynomial inequality constraints of degree at most $d$ in $m$ variables.
Let $\mu$ be a degree-$\ell$ pseudo-distribution over $\mathbb{R}^m$.
We say that $\mu$ \emph{satisfies} $\calA$ at degree $\ell \ge1$ if for every subset $\calS \subset [r]$ and every sum-of-squares polynomial $q$ such that $\deg(q) + \sum_{i \in \calS } \max\Paren{ \deg(p_i), d} \leq \ell$, $\pexpecf{\mu}{ q \prod_{i \in \calS} p_i } \geq 0$.
Further, we say that $\mu$ \emph{approximately satisfies} the system of constraints $\calA$ if the above inequalities are satisfied up to additive error $\pexpecf{\mu}{ q \prod_{i \in \calS} p_i } \geq -2^{-n^{\ell} } \norm{q} \prod_{i \in \calS} \norm{p_i}$, where $\norm{\cdot}$ denotes the Euclidean norm of the coefficients of the polynomial, represented in the monomial basis.  
\end{definition}

Crucially, there's an efficient separation oracle for moment tensors of constrained pseudo-distributions. 
Below gives the unconstrained statement; the constraint statement follows analogously.

\begin{fact}[\cite{shor1987approach, nesterov2000squared, parrilo2000structured, grigoriev2001complexity}]
    \label{fact:sos-separation-efficient}
    For any $m,\ell \in \N$, the following convex set has a $m^{\bigO{\ell}}$-time weak separation oracle, in the sense of \cite{grotschel1981ellipsoid}:\footnote{
        A separation oracle of a convex set $S \subset \R^M$ is an algorithm that can decide whether a vector $v \in \R^M$ is in the set, and if not, provide a hyperplane between $v$ and $S$.
        Roughly, a weak separation oracle is a separation oracle that allows for some $\eps$ slack in this decision.
    }:
    \begin{equation*}
        \Set{  \pexpecf{\mu(x)} { (1,x_1, x_2, \ldots, x_m)^{\otimes \ell } } \Big\vert \text{ $\mu$ is a degree-$\ell$ pseudo-distribution over $\R^m$}}
    \end{equation*}
\end{fact}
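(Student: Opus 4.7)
The key observation is that the convex set of valid moment tensors of degree-$\ell$ pseudo-distributions admits an explicit semidefinite description of polynomial size, after which the existence of a weak separation oracle reduces to the standard fact that the PSD cone admits one.

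First, I would show that a symmetric tensor $T = (T_\alpha)_{|\alpha| \leq \ell}$ indexed by multi-indices $\alpha \in \N^m$ is the moment tensor of some degree-$\ell$ pseudo-distribution if and only if (a) $T_0 = 1$, and (b) the \emph{moment matrix} $M(T) \succeq 0$, where $M(T)$ has rows and columns indexed by monomials of degree at most $\lfloor \ell/2 \rfloor$, with entries $M(T)_{\alpha, \beta} = T_{\alpha + \beta}$. The size of this matrix is $N = \binom{m + \lfloor \ell/2 \rfloor}{\lfloor \ell/2 \rfloor} = m^{\bigO{\ell}}$. The forward direction is immediate from the definition of pseudo-expectation. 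For the reverse direction, given $T$ with $T_0 = 1$ and $M(T) \succeq 0$, define the linear functional $L$ on polynomials of degree $\leq \ell$ by $L(x^\alpha) = T_\alpha$ and extend linearly; for any $p = \sum_\alpha c_\alpha x^\alpha$ of degree $\leq \ell/2$ with coefficient vector $c$, one has $L(p^2) = c^\top M(T) c \geq 0$. A standard duality argument then exhibits a finitely-supported function $\mu : \R^m \to \R$ realizing $L$, which by construction is a degree-$\ell$ pseudo-distribution.

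Given this characterization, the weak separation oracle works as follows. On input $T$ and tolerance $\eps$, in time $m^{\bigO{\ell}}$ we assemble $M(T)$ and examine two conditions. If $|T_0 - 1| > \eps$ we return the hyperplane enforcing $T'_0 = 1$. Otherwise we compute $\lambda_{\min}(M(T))$ together with a unit eigenvector $v$ via any polynomial-time eigensolver; if $\lambda_{\min}(M(T)) \geq -\eps$, the oracle declares $T$ to be $\eps$-close to the valid set, and otherwise returns the linear functional $T' \mapsto v^\top M(T') v$, which is nonnegative on every genuine moment tensor but takes value $\lambda_{\min}(M(T)) < -\eps$ on $T$. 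All operations run in time $\poly(N) = m^{\bigO{\ell}}$.

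The main subtlety — and the reason the oracle is only a \emph{weak} one in the sense of \cite{grotschel1981ellipsoid} — is that eigenvalues and eigenvectors are computed to finite precision, so membership can only be decided up to an $\eps$-tolerance. This matches the standard notion required for the ellipsoid method, and is precisely what is needed downstream in the paper to optimize over constrained pseudo-distributions in the sense of \cref{def:constrained-pseudo-distributions}: adding the polynomial inequality constraints $p_i \geq 0$ amounts to further PSD constraints on localized moment matrices, and the same argument applies verbatim with a slightly larger (but still $m^{\bigO{\ell}}$-sized) semidefinite description.
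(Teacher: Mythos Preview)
The paper does not prove this statement at all: it is stated as a \emph{Fact} with citations to the foundational SoS literature and is used as a black box. So there is no ``paper proof'' to compare against.

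Your sketch is the standard one and is correct in outline. The only step that deserves a word of care is the sentence ``a standard duality argument then exhibits a finitely-supported function $\mu$ realizing $L$.'' What you need here is not really duality but the elementary fact that point-evaluation functionals $\{f \mapsto f(x) : x \in \R^m\}$ span the dual of the finite-dimensional space of polynomials of degree at most $\ell$; hence any linear functional $L$ on that space can be written as $L = \sum_i c_i\,\mathrm{ev}_{x_i}$ for finitely many points $x_i$ and real weights $c_i$, and setting $\mu(x_i) = c_i$ gives the finitely-supported signed function. The PSD condition on $M(T)$ then guarantees $L(p^2) \geq 0$, so $\mu$ is indeed a degree-$\ell$ pseudo-distribution in the paper's sense. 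Once you say this explicitly, the proof is complete.
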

This fact, together with the equivalence of weak separation and optimization \cite{grotschel1981ellipsoid} forms the basis of the sum-of-squares algorithm, as it allows us to efficiently approximately optimize over pseudo-distributions. 

Given a system of polynomial constraints, denoted by $ \calA$, we say that it is \emph{explicitly bounded} if it contains a constraint of the form $\{ \|x\|^2 \leq 1\}$. Then, the following fact follows from  \cref{fact:sos-separation-efficient} and \cite{grotschel1981ellipsoid}:

\begin{theorem}[Efficient optimization over pseudo-distributions]
    \label{fact:eff-pseudo-distribution}
There exists an $(m+r)^{O(\ell)} $-time algorithm that, given any explicitly bounded and satisfiable system $ \calA$ of $r$ polynomial constraints in $m$ variables, outputs a degree-$\ell$ pseudo-distribution that satisfies $ \calA$ approximately, in the sense of~\cref{def:constrained-pseudo-distributions}.\footnote{
    Here, we assume that the bit complexity of the constraints in $ \calA$ is $(m+t)^{O(1)}$.
}
\end{theorem}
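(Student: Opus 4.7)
The plan is to combine the weak separation oracle of \cref{fact:sos-separation-efficient} with the standard equivalence between weak separation and weak optimization given by the ellipsoid method of Gr\"otschel, Lov\'asz, and Schrijver \cite{grotschel1981ellipsoid}. The key observation is that the set of moment tensors corresponding to degree-$\ell$ pseudo-distributions that satisfy $\calA$ in the sense of \cref{def:constrained-pseudo-distributions} is a convex set in $\R^{(m+1)^\ell}$ defined as the intersection of (i) the ``unconstrained'' pseudo-distribution cone from \cref{fact:sos-separation-efficient}, and (ii) the finitely many additional linear constraints $\pexpecf{\mu}{q \prod_{i \in \calS} p_i} \geq 0$ indexed by pairs $(\calS, q)$ with $\calS \subset [r]$, $q$ a sum-of-squares polynomial, and $\deg(q) + \sum_{i \in \calS}\max(\deg(p_i),d) \leq \ell$.

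First, I would argue that membership in (ii) is itself testable by a small SDP. Writing $q = \sum_j q_j^2$ with $\deg(q_j) \leq (\ell - \sum_{i\in\calS}\deg(p_i))/2$, the nonnegativity constraint $\pexpecf{\mu}{q\prod_i p_i} \geq 0$ is equivalent to requiring that the matrix with $(\alpha,\beta)$ entry $\pexpecf{\mu}{x^\alpha x^\beta \prod_i p_i}$ (indexed by monomials $x^\alpha$ of appropriate degree) be positive semidefinite. Each such matrix is a linear image of the degree-$\ell$ moment tensor, and there are at most $2^r \cdot r^{\ell}$ choices of subset $\calS$, each giving a PSD constraint on a matrix of size $m^{O(\ell)}$. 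Thus the total constrained set still admits an $(m+r)^{O(\ell)}$-time weak separation oracle: on input $T$, check the unconstrained membership via \cref{fact:sos-separation-efficient}, then iterate over all relevant subsets $\calS$ and test PSD-ness of the induced localizing matrices, returning the corresponding hyperplane for any violation.

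Second, I would verify that the feasible convex body is well-conditioned enough to apply the ellipsoid method. The explicit bound $\|x\|^2 \leq 1$ in $\calA$ implies $\pexpecf{\mu}{x^\alpha} \leq 1$ for every monomial of degree at most $\ell$, so the feasible body is contained in a ball of radius $m^{O(\ell)}$; any honest distribution over the origin yields a trivial interior point (when the constraints admit a strict interior) or, more generally, satisfiability of $\calA$ together with averaging over a sufficiently small perturbation gives an interior pseudo-distribution. This outer/inner ball guarantee is exactly the ``well-bounded'' hypothesis required to invoke the ellipsoid-based equivalence of weak separation and weak membership/optimization in \cite{grotschel1981ellipsoid}.

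Finally, I would invoke that equivalence: the ellipsoid method produces a point in the $2^{-n^\ell}$-slack version of the feasible set in $(m+r)^{O(\ell)}$ iterations, each requiring one call to the separation oracle and $(m+r)^{O(\ell)}$ auxiliary arithmetic. The resulting vector is interpreted as the degree-$\ell$ moment tensor of a pseudo-distribution satisfying $\calA$ up to the additive error stated in \cref{def:constrained-pseudo-distributions}. The main subtlety to watch is the bit-complexity bookkeeping needed to control the slack term $2^{-n^\ell}\|q\|\prod_i \|p_i\|$ against the precision of the ellipsoid; this is standard but requires using the assumed polynomial bit-complexity of the constraints and that our PSD testing routines introduce only polynomially small numerical error per iteration.
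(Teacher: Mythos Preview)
Your approach is correct and matches the paper's: the paper does not give a detailed proof of this theorem but simply states that it ``follows from \cref{fact:sos-separation-efficient} and \cite{grotschel1981ellipsoid},'' which is precisely the combination of separation oracle plus ellipsoid-method equivalence that you outline. Your plan is a reasonable elaboration of that one-line justification; the only minor caveat is that the inner-ball/interior-point discussion is a bit hand-wavy (satisfiability alone does not automatically give a full-dimensional interior), but this is a standard technicality in the SoS literature and is handled by the ``approximately satisfies'' slack in \cref{def:constrained-pseudo-distributions}.
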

\begin{remark}[Bit complexity and approximate satisfaction]
    \label{rmk:tedium}
    We will eventually apply this result to a constraint system that can be defined with numbers with $\log(t)$ bits, where $t$ is the sample complexity of the algorithm (scaling polynomially with $\terms$).
    Consequently, we can run this algorithm efficiently, and the errors incurred here (which is exponentially small in $\qubits$) can be thought of as a ``machine precision'' error, and is dominated by the sampling errors incurred elsewhere.
    We can therefore safely ignore precision issues in the rest of our proof.

    The pseudo-distribution $D$ found will satisfy $\calA$ only approximately, but provided the bit complexity of the sum-of-squares proof of $\calA \sststile{r'}{} B$, i.e.\ the number of bits required to write down the proof, is bounded by $\terms^{\bigO{\ell}}$ (assuming that all numbers in the input have bit complexity $\terms^{\bigO{1}}$), we can compute to sufficiently good error in polynomial time that the soundness will hold approximately.
    All of our sum-of-squares proofs will have this bit complexity.
\end{remark}

We now state some standard facts for pseudo-distributions, which extend facts that hold for standard probability distributions.
These can be found in the prior works listed above.

\begin{fact}[Cauchy--Schwarz for pseudo-distributions]
Let $f,g$ be polynomials of degree at most $d$ in the variables $x \in \R^m$. Then, for any degree-$d$ pseudo-distribution $\mu$,
$\pexpecf{\mu}{fg}  \leq \sqrt{ \pexpecf{\mu}{ f^2} } \cdot \sqrt{ \pexpecf{\mu}{ g^2} }$.
    \label{fact:pseudo-expectation-cauchy-schwarz}
\end{fact}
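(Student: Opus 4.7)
The plan is to imitate the classical proof of the Cauchy--Schwarz inequality, replacing the classical non-negativity of a squared random variable with the defining SoS property of pseudo-distributions. (As a preliminary remark, for $\pexpecf{\mu}{f^2}$ to be defined when $f$ has degree $d$, the pseudo-distribution must have degree at least $2d$; I read the statement with this convention.) Concretely, for a real parameter $t$, consider the polynomial
\begin{equation*}
    (tf-g)^2 = t^2 f^2 - 2t fg + g^2,
\end{equation*}
which is a square (hence a sum-of-squares polynomial) of degree at most $2d$. By the definition of a degree-$2d$ pseudo-distribution, $\pexpecf{\mu}{(tf-g)^2}\geq 0$, and linearity of $\widetilde{\mathbb{E}}$ then yields
\begin{equation*}
    t^2\, \pexpecf{\mu}{f^2} \;-\; 2t\, \pexpecf{\mu}{fg} \;+\; \pexpecf{\mu}{g^2} \;\geq\; 0 \qquad \text{for all } t \in \R.
\end{equation*}

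From here I would split into two cases. In the non-degenerate case $\pexpecf{\mu}{f^2}>0$, the left-hand side is a quadratic in $t$ that is non-negative everywhere, so its discriminant must be non-positive:
\begin{equation*}
    \bigl(2\,\pexpecf{\mu}{fg}\bigr)^2 \;\leq\; 4\, \pexpecf{\mu}{f^2} \cdot \pexpecf{\mu}{g^2}.
\end{equation*}
Taking square roots gives the desired bound $\pexpecf{\mu}{fg} \leq \sqrt{\pexpecf{\mu}{f^2}}\sqrt{\pexpecf{\mu}{g^2}}$. Equivalently, one can simply set $t = \sqrt{\pexpecf{\mu}{g^2}/\pexpecf{\mu}{f^2}}$ and rearrange.

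In the degenerate case $\pexpecf{\mu}{f^2}=0$, the displayed inequality reduces to $-2t\,\pexpecf{\mu}{fg} + \pexpecf{\mu}{g^2}\geq 0$ for all $t\in\R$; letting $t\to\pm\infty$ forces $\pexpecf{\mu}{fg}=0$, and then the claimed inequality $0\leq 0$ holds trivially (and similarly if $\pexpecf{\mu}{g^2}=0$). There is essentially no hard step here: the only subtlety is remembering that the SoS condition only gives us access to squared test polynomials, which is exactly what the choice $(tf-g)^2$ exploits. If the problem required the sharper one-sided bound $\pexpecf{\mu}{fg}\leq \sqrt{\pexpecf{\mu}{f^2}\pexpecf{\mu}{g^2}}$ (without absolute values), the proof is unchanged since $\pexpecf{\mu}{fg}\leq |\pexpecf{\mu}{fg}|$.
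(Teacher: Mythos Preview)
Your proof is correct and is the standard argument for Cauchy--Schwarz over pseudo-distributions. The paper does not actually prove this statement: it is listed as a standard fact with a pointer to prior work, so there is no paper proof to compare against. Your remark that the pseudo-distribution must have degree at least $2d$ for $\pexpecf{\mu}{f^2}$ to be defined is well taken; the paper's phrasing is slightly loose on this point.
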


\begin{fact}[Hölder's inequality for pseudo-distributions] \label{fact:pseudo-expectation-holder}
Let $f,g$ be polynomials of degree at most $d$ in the variables $x \in \R^m$. 
Fix $t \in \N$. Then, for any degree-$dt$ pseudo-distribution $\mu$,
\begin{equation*}
    \pexpec{\mu}{ f^{t-1}  g} \leq \Paren{ \pexpec{\mu}{ f^t }  }^{\frac{t-1}{t}} \cdot  \Paren{  \pexpec{\mu }{ g^t }  }^{\frac{1}{t}}.  
\end{equation*}
In particular, when $t$ is even,
$\pexpec{\mu}{f}^t \leq \pexpec{\mu}{ f^t }$.
\end{fact}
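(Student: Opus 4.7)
The plan is to derive Hölder's inequality from a sum-of-squares version of Young's inequality, and then pass the polynomial inequality to pseudo-expectations. I focus on the case of even $t$, which suffices for the ``in particular'' corollary; general $t$ is similar but more delicate.

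First, I set up the SoS Young's inequality. For any real $\mu > 0$, consider the binary form
\[
P_\mu(f, g) := (t-1) \mu \cdot f^t + \mu^{-(t-1)} \cdot g^t - t \cdot f^{t-1} g.
\]
When $t$ is even, $P_\mu$ is nonnegative on $\mathbb{R}^2$: the classical weighted AM--GM inequality handles the quadrant $\{f, g \geq 0\}$, and a sign-tracking argument using that $t$ is even handles the remaining quadrants. Since $P_\mu$ is then a nonnegative binary form of even degree $t$, Hilbert's theorem on PSD binary forms yields a sum-of-squares decomposition $P_\mu(f, g) = \sum_i q_i(f, g)^2$ with each $q_i$ of degree $t/2$ and coefficients depending continuously on $\mu$. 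Substituting polynomials $f, g$ of degree at most $d$ into this identity makes each $q_i$ a polynomial of degree at most $dt/2$, within the admissible degree for a degree-$dt$ pseudo-distribution.

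Taking pseudo-expectations of $P_\mu \geq 0$ then gives $t \cdot \pexpec{\mu}{f^{t-1} g} \leq (t-1)\mu \cdot \pexpec{\mu}{f^t} + \mu^{-(t-1)} \cdot \pexpec{\mu}{g^t}$ for every $\mu > 0$. In the nondegenerate case $A := \pexpec{\mu}{f^t} > 0$ and $B := \pexpec{\mu}{g^t} > 0$ (both are automatically nonnegative for even $t$ since $f^t, g^t$ are squares), minimizing the right-hand side over $\mu$ by calculus gives the optimizer $\mu^\star = (B/A)^{1/t}$, at which the right-hand side equals exactly $t \cdot A^{(t-1)/t} B^{1/t}$. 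Dividing by $t$ yields Hölder's inequality. The degenerate case $A = 0$ is handled via Cauchy--Schwarz (Fact~\ref{fact:pseudo-expectation-cauchy-schwarz}): $\pexpec{\mu}{f^{t-1}g}^2 \leq \pexpec{\mu}{f^{2(t-1)}} \pexpec{\mu}{g^2}$, and iterating bounds $\pexpec{\mu}{f^{2(t-1)}}$ by powers of $\pexpec{\mu}{f^t} = 0$, forcing $\pexpec{\mu}{f^{t-1}g} = 0$ so that Hölder's inequality holds trivially. The ``in particular'' corollary follows by specializing Hölder to $f = 1$ (the constant polynomial), which gives $\pexpec{\mu}{g} \leq \pexpec{\mu}{g^t}^{1/t}$; applying the same argument with $-g$ in place of $g$ and using $(-g)^t = g^t$ for even $t$ yields $|\pexpec{\mu}{g}| \leq \pexpec{\mu}{g^t}^{1/t}$, and raising to the $t$-th power gives the claim.

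The main obstacle is the first step above for odd $t$: the polynomial $P_\mu$ is not globally nonnegative (e.g., $P_1(-2, 0) < 0$ when $t$ is odd), so it has no unconditional SoS decomposition. Extending to odd $t$ therefore requires a different polynomial identity --- typically one that is conditional on SoS-type assumptions about $f$ or $g$, or else a reduction to an adjacent even value of $t$ via a power-mean interpolation argument.
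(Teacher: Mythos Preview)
The paper states this as a \emph{Fact} without proof, citing standard prior work, so there is no paper proof to compare against. Your approach via Hilbert's theorem for binary forms applied to the Young-type polynomial $P_\lambda(f,g) = (t-1)\lambda f^t + \lambda^{-(t-1)}g^t - t f^{t-1}g$ (I have renamed your scalar parameter from $\mu$ to $\lambda$ to avoid the clash with the pseudo-distribution $\mu$) is the standard and correct route for even $t$.

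There is one small gap in your degenerate-case handling. You invoke Cauchy--Schwarz in the form $\pexpec{\mu}{f^{t-1}g}^2 \leq \pexpec{\mu}{f^{2(t-1)}}\pexpec{\mu}{g^2}$, but $f^{2(t-1)}$ has degree $2d(t-1)$, which exceeds the available degree $dt$ once $t>2$, so the right-hand side is not even defined. Two clean fixes: either apply Cauchy--Schwarz with the split $f^{t-1}g = (f^{t/2})(f^{t/2-1}g)$, so that both factors have degree $dt/2$ and $\pexpec{\mu}{(f^{t/2})^2}=\pexpec{\mu}{f^t}=0$ kills the product; or simply take limits in your already-established parametrized inequality $t\,\pexpec{\mu}{f^{t-1}g}\le (t-1)\lambda A + \lambda^{-(t-1)}B$ (send $\lambda\to\infty$ when $A=0$, $\lambda\to 0^+$ when $B=0$), which avoids any additional degree bookkeeping.

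Your honesty about odd $t$ is appropriate: as stated in the paper, the inequality for odd $t$ is already delicate (e.g.\ $\pexpec{\mu}{f^t}$ need not be nonnegative, so the fractional power is ill-defined), and the statement is typically used only with additional positivity assumptions or restricted to even $t$, exactly the case you handled.
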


\paragraph{Sum-of-squares proofs.}
Up to minor technical details, our algorithm is to set up a polynomial constraint system and then call \cref{fact:eff-pseudo-distribution} to get a pseudo-distribution $\mu$ over the indeterminates $\{\lambda_i' \mid i \in [\terms]\}$ which approximately satisfies the constraints.
With this pseudo-distribution, we will output $\pexpec{\mu}{\lambda'}$ as our estimated Hamiltonian coefficients.
To show that these estimates are close to the true coefficients $\lambda$, we use that, under $\mu$, the polynomial constraints hold.
That is, for a constraint $p \geq 0$, we have $\pexpec{\mu}{p} \geq 0$.
If we can use these constraints to derive that $\abs{\pexpec{\mu}{\lambda_a'} - \lambda_a} \leq \eps$ for every $a \in [\terms]$, then the algorithm is correct.
So, we provide such a proof: this proof will be in the sum-of-squares proof system.

Let $f_1, f_2, \ldots, f_r$ and $g$ be multivariate polynomials in the indeterminates $x \in \R^m$.
Given the constraints $\{f_1 \geq 0, \ldots, f_r \geq 0\}$, a \emph{sum-of-squares proof} of the identity $\{g \geq 0\}$ is a set of polynomials $\{p_S\}_{S \subseteq [r]}$ such that
\begin{equation*}
    g = \sum_{S \subseteq [r]} p^2_S \cdot \prod_{i \in S} f_i.
\end{equation*}
As its name suggests, the existence of such an SoS proof shows that if the constraints $\{f_i \geq 0 \mid i \in [r]\}$ are satisfied, then the identity $g \geq 0$ is also satisfied.
We say that this SoS proof has \emph{degree $\ell$} if for every set $S \subseteq [r]$, the polynomial $p^2_S \Pi_{i \in S} f_i$ has degree at most $\ell$.
If there is a degree-$\ell$ SoS proof that $\{f_i \geq 0 \mid i \in [r]\}$ implies $\{g \geq 0\}$, we write
\begin{equation}
    \{f_i \geq 0 \mid i \in [r]\} \sststile{\ell}{x}\{g \geq 0\}.
\end{equation}
We will sometimes drop the indeterminate in $\sststile{\ell}{x}$ when this causes no confusion.
For all polynomials $f,g\colon\R^m \to \R$ and for all coordinate-wise polynomials $F\colon \R^m \to \R^{m_F}$, $G\colon \R^m \to \R^{m_G}$, $H\colon \R^{m_H} \to \R^m$, we have the following inference rules.\footnote{
    This notation should be read in the following way: given the proofs above the bar line, we can derive the proof below the bar line.
}
\begin{figure}[h]
    \renewcommand{\arraystretch}{1.5}
    \begin{center}
    \begin{tabular}{c c}
        Addition Rule & Multiplication Rule \\
        \vspace{1em}
        $\displaystyle\frac{ \calA \sststile{\ell}{} \{f \geq 0, g \geq 0 \} } { \calA \sststile{\ell}{} \{f + g \geq 0\}}$
        & $\displaystyle\frac{ \calA \sststile{\ell}{} \{f \geq 0\},\quad \calA \sststile{\ell'}{} \{g \geq 0\}} { \calA \sststile{\ell+\ell'}{} \{f \cdot g \geq 0\}}$ \\
        Transitivity Rule & Substitution Rule \\
        $\displaystyle\frac{ \calA \sststile{\ell}{}  \calB,\quad \calB \sststile{\ell'}{} C}{ \calA \sststile{\ell \cdot \ell'}{} C}$
        & $\displaystyle\frac{\{F \geq 0\} \sststile{\ell}{} \{G \geq 0\}}{\{F(H) \geq 0\} \sststile{\ell \cdot \deg(H)} {} \{G(H) \geq 0\}}$
    \end{tabular}
    \end{center}
    \renewcommand{\arraystretch}{1}
\end{figure}

Sum-of-squares proofs allow us to deduce properties of pseudo-distributions that satisfy some constraints.
\begin{fact}[Soundness]
  \label{fact:sos-soundness}
  Let $\mu$ be a degree-$\ell$ pseudo-distribution.
  If $\mu$ is consistent with the set of degree-$d_A$ polynomial constraints $\calA$, denoted $\mu \sdtstile{d_A}{} \calA$, and there is a degree-$d_B$ sum-of-squares proof that $ \calA \sststile{d_B}{} \calB$, and $\ell \geq d_Ad_B$, then $\mu \sdtstile{d_A d_B}{}  \calB$.
\end{fact}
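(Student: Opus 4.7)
The plan is to unpack the definitions and apply linearity of the pseudo-expectation to the sum-of-squares certificate provided by $\calA \sststile{d_B}{} \calB$. For each constraint $\{g \geq 0\}$ in $\calB$, this hypothesis furnishes SoS polynomials $\{p_S\}_{S \subseteq [r]}$ with
\begin{equation*}
    g \;=\; \sum_{S \subseteq [r]} p_S^2 \prod_{i \in S} f_i \quad \text{and} \quad \deg\Paren{ p_S^2 \prod_{i \in S} f_i } \leq d_B,
\end{equation*}
where $f_1, \ldots, f_r$ are the polynomials underlying the constraints in $\calA$.

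First I would treat a single constraint $\{g \geq 0\}$ in $\calB$. For any SoS polynomial $q$ of allowed degree, applying $\pexpec{\mu}{\cdot}$ to the decomposition and using linearity gives
\begin{equation*}
    \pexpec{\mu}{ q \cdot g } \;=\; \sum_{S \subseteq [r]} \pexpec{\mu}{ (q \cdot p_S^2) \prod_{i \in S} f_i }.
\end{equation*}
Since $q$ and $p_S^2$ are each sums of squares, so is their product $q \cdot p_S^2$, so every summand on the right is exactly of the form whose non-negativity is promised by the hypothesis $\mu \sdtstile{d_A}{} \calA$, provided the degree condition in \cref{def:constrained-pseudo-distributions} is satisfied for that summand.

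The general case, in which we check $\pexpec{\mu}{q \prod_{j \in T} g_j} \geq 0$ for a subset $T$ of constraints of $\calB$, reduces to the single-constraint case via the multiplication rule for SoS proofs: combining $\calA \sststile{d_B}{} \{g_j \geq 0\}$ across $j \in T$ yields $\calA \sststile{\abs{T} d_B}{} \Set{\prod_{j \in T} g_j \geq 0}$, and I would then apply the single-constraint argument above with $g$ replaced by $\prod_{j \in T} g_j$. Expanding the combined certificate, linearity of $\pexpec{\mu}{\cdot}$ again reduces everything to per-summand non-negativity under $\mu \sdtstile{d_A}{} \calA$.

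The only real obstacle is the degree bookkeeping, which is slightly fiddly due to the $\max(\deg(f_i), d)$ convention in \cref{def:constrained-pseudo-distributions}. The argument is as follows: each $f_i$ has degree at most $d_A$ (so the maximum simplifies to $d_A$) and at least $1$ (else the constraint is vacuous), so from $\deg(p_S^2) + \sum_{i \in S} \deg(f_i) \leq d_B$ one obtains $\abs{S} \leq d_B - \deg(p_S^2)$, whence
\begin{equation*}
    \deg(p_S^2) + d_A \abs{S} \;\leq\; \deg(p_S^2) + d_A\Paren{d_B - \deg(p_S^2)} \;\leq\; d_A d_B.
\end{equation*}
Combined with the hypothesis $\ell \geq d_A d_B$ and the degree budget allowed by $\mu \sdtstile{d_A d_B}{} \calB$ on $q$, this certifies that each summand falls within the range where $\mu \sdtstile{d_A}{} \calA$ applies, so each pseudo-expectation is non-negative and the claim follows. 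The analogous estimate for the multi-constraint case replaces $d_B$ by $\abs{T}d_B$ and matches, term-for-term, the bound $\deg(q) + \sum_{j \in T} \max(\deg(g_j), d_A d_B) \leq \ell$ coming from the conclusion.
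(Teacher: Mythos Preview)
The paper does not prove this statement: it is recorded as a standard ``Fact'' from the sum-of-squares literature and left without proof. There is therefore no paper proof to compare against.

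Your approach---expand the degree-$d_B$ SoS certificate for each $g \in \calB$, multiply by the test SoS polynomial $q$, apply linearity of $\pexpec{\mu}{\cdot}$, and verify that each resulting summand lies within the degree budget guaranteed by $\mu \sdtstile{d_A}{} \calA$---is exactly the standard argument for soundness. The reduction of the multi-constraint case to the single-constraint case via the multiplication rule is also standard. One small caveat: the precise degree arithmetic depends on how one reads the $\max(\deg(p_i),d)$ convention in \cref{def:constrained-pseudo-distributions} and what exactly the subscript in $\mu \sdtstile{d_A}{} \calA$ records; the paper is (deliberately) loose here, as is typical when these facts are quoted as folklore. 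Your bookkeeping is the right shape, and the assumption $\deg(f_i)\geq 1$ is harmless since degree-$0$ constraints are vacuous.
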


We also have a converse to \cref{fact:sos-soundness}: every property of low-level pseudo-distributions can be derived by low-degree sum-of-squares proofs.
\begin{fact}[Completeness]
  \label{fact:sos-completeness}
  Let $d \geq r \geq r'$.
  Suppose $ \calA$ is a collection of polynomial constraints with degree at most $r$, and $ \calA \sststile{}{x} \{ \sum_{i = 1}^m x_i^2 \leq 1\}$.   Let $\{g \geq 0 \}$ be a polynomial constraint.
  If every degree-$d$ pseudo-distribution that satisfies $D \sdtstile{r}{}  \calA$ also satisfies $D \sdtstile{r'}{} \{g \geq 0 \}$, then for every $\epsilon > 0$, there is a sum-of-squares proof $ \calA \sststile{d}{} \{g \geq - \epsilon \}$.
\end{fact}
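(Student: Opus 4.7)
My plan is to establish this completeness statement via strong duality of semidefinite programming. I view the set of degree-$d$ pseudo-moment vectors satisfying $\calA$ at degree $r$ as a spectrahedron $\calK$, and observe that the cone of degree-$d$ sum-of-squares certificates from $\calA$ sits in the polar cone. Concretely, I would first set up the primal SDP: minimize $\pexpec{D}{g}$ over $D \in \calK$. By hypothesis every $D \in \calK$ also satisfies $D \sdtstile{r'}{}\{g \geq 0\}$, and taking the trivial SoS multiplier $q = 1$ gives $\pexpec{D}{g} \geq 0$, so the primal optimum is at least $0$. After unrolling definitions, the dual SDP maximizes $-c$ subject to the identity $g + c = \sum_S q_S^2 \prod_{i \in S} f_i$ with $q_S$ a sum of squares and each summand of degree at most $d$ --- which is precisely a degree-$d$ SoS proof $\calA \sststile{d}{}\{g \geq -c\}$.

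Second, I would invoke strong duality. Here the assumption $\calA \sststile{}{x}\{\sum_i x_i^2 \leq 1\}$ is essential: combined with the soundness result \cref{fact:sos-soundness}, it forces $|\pexpec{D}{x^\alpha}|$ to be uniformly bounded over $D \in \calK$ and monomials $x^\alpha$ of degree $\leq d$, so $\calK$ is compact. Compactness of $\calK$ is exactly what ensures that the cone of degree-$d$ SoS certificates is closed in the finite-dimensional space $\R[x]_{\leq d}$ and that there is no SDP duality gap. Consequently, the dual optimum equals the primal optimum and is $\geq 0$, so for every $\eps > 0$ the dual admits a feasible $c \geq -\eps$; absorbing the slack $c + \eps \geq 0$ into the $S = \emptyset$ term as a constant square then yields the desired degree-$d$ sum-of-squares proof $\calA \sststile{d}{}\{g \geq -\eps\}$.

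The main obstacle is the clean execution of strong duality, since SDPs can in general have positive duality gaps and cones of SoS-from-$\calA$ polynomials need not be closed without a compactness hypothesis. Both pathologies are ruled out by the boundedness built into $\calA$, which makes $\calK$ compact and confines attention to the finite-dimensional ambient space $\R[x]_{\leq d}$, but formally checking this (either via a Slater condition or via a direct compactness argument on the primal feasible set) is the delicate step. A secondary subtlety is the parameter asymmetry $d \geq r \geq r'$: the hypothesis requires $D$ to satisfy $\calA$ only at degree $r$, which enlarges the primal feasible set and so lowers the primal optimum, but this actually strengthens what the dual certifies --- every dual-feasible solution arising from this setup is in particular a degree-$d$ SoS proof from $\calA$ in the paper's sense --- so the conclusion follows at degree $d$ without further bookkeeping.
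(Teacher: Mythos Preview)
The paper does not actually prove this statement: it is stated as a \emph{Fact} in the preliminaries section, alongside soundness and the efficient separation oracle, as standard background on the sum-of-squares framework drawn from the prior literature (the paper points to the monograph~\cite{fleming2019semialgebraic} and the lecture notes~\cite{barak2016proofs} for such material). So there is no proof in the paper to compare against.

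Your sketch is the standard SDP-duality argument for this completeness result, and the high-level logic is sound. One caution: your claim that ``compactness of $\calK$ is exactly what ensures that the cone of degree-$d$ SoS certificates is closed'' is not quite the right implication to invoke. Compactness of the primal feasible set $\calK$ gives boundedness of the primal optimum; what makes strong duality go through cleanly is either strict primal feasibility (a Slater-type condition, which holds here because the interior of $\calK$ is nonempty---for instance the uniform distribution on the unit ball furnishes a strictly feasible point) or, on the dual side, the archimedean property that $\calA \sststile{}{x}\{\sum_i x_i^2 \leq 1\}$ grants, which is precisely the hypothesis used in Putinar-style positivstellensatz arguments to show closure of the quadratic module. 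Either route works, but you should be explicit about which one you are using rather than asserting that primal compactness directly yields dual-cone closedness.
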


\paragraph{Basic sum-of-squares proofs.}

Now, we recall some basic facts about sum-of-squares proofs.
First, any univariate polynomial inequality admits a sum-of-squares proof over the reals.
\begin{fact}[Univariate polynomial inequalities admit SoS proofs~\cite{laurent2009sums}]
\label{fact:univariate-sos-proofs}
Let $p$ be a polynomial of degree $d$.
If $p(x)\geq 0$ for all $x \geq 0$, we have $\sststile{d}{x} \Set{ p(x) \geq 0 } $.
If $p(x)\geq 0$ for all $x \in [a, b]$, then $\Set{ x\geq a, x\leq b } \sststile{d}{x} \Set{ p(x) \geq 0 }$.
\end{fact}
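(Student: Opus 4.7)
The plan is to invoke classical representation theorems for nonnegative univariate polynomials---namely Luk\'acs' theorem for a bounded interval, P\'olya--Szeg\H{o} for a half-line, and the direct consequence of the Fundamental Theorem of Algebra for $\mathbb{R}$---each of which yields an explicit decomposition of $p$ as a combination of squares weighted by the constraint polynomials. Such a decomposition is precisely what the SoS proof system requires, so the certificate $\{p_S\}_{S \subseteq [r]}$ is simply read off the decomposition.

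For the second statement, which is the more technical one, I would appeal to Luk\'acs' theorem: a degree-$d$ polynomial $p$ nonnegative on $[a,b]$ admits the representation
\begin{equation*}
    p(x) = \begin{cases} A(x)^2 + (x-a)(b-x)\, B(x)^2 & \text{if } d = 2e,\ \deg A \leq e,\ \deg B \leq e-1, \\ (x-a)\, A(x)^2 + (b-x)\, B(x)^2 & \text{if } d = 2e+1,\ \deg A, \deg B \leq e. \end{cases}
\end{equation*}
Either form is immediately a degree-$d$ sum-of-squares proof of $\Set{p \geq 0}$ from the axioms $\Set{x - a \geq 0,\, b - x \geq 0}$: the certificates $p_S^2$ for the subsets $S \subseteq \{1,2\}$ indicating which constraint polynomials are multiplied in are exactly $A^2$, $B^2$, or constants $0, 1$, depending on the case.

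To establish Luk\'acs' theorem itself, I would factor $p$ over $\mathbb{C}$ via the Fundamental Theorem of Algebra and classify the roots into three kinds. Real roots in the open interval $(a,b)$ must have even multiplicity (otherwise $p$ would change sign on $[a,b]$) and contribute perfect squares directly. Complex roots come in conjugate pairs $\alpha, \bar\alpha$, each contributing $|x - \alpha|^2 = (x - \Re \alpha)^2 + (\Im \alpha)^2$, which is already a sum of two squares. Real roots outside $[a,b]$ get absorbed into the constraint factors, using identities such as $x - r = (a - r) + (x - a)$ for $r \leq a$, in which $a - r$ is a nonnegative constant (trivially a square). Repeated application of the Brahmagupta--Fibonacci identity $(\alpha^2 + \beta^2)(\gamma^2 + \delta^2) = (\alpha\gamma - \beta\delta)^2 + (\alpha\delta + \beta\gamma)^2$ keeps the number of summed squares bounded as these contributions are multiplied together, and a case analysis on the parity of $d$ determines how many external real factors get attached to the constraint factor $(x-a)$ versus $(b-x)$ in the final representation.

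The first statement follows by a very similar factorization argument: interpreting it as $\Set{x \geq 0} \sststile{d}{x} \Set{p \geq 0}$ (or, when $p$ is nonnegative on all of $\mathbb{R}$, as an unconditional SoS representation), the same grouping of roots into even-multiplicity real roots, complex conjugate pairs, and external real factors yields either P\'olya--Szeg\H{o}'s representation $p(x) = s_0(x) + x s_1(x)$ with $s_0, s_1$ sums of squares, or a plain two-square decomposition respectively. The main obstacle throughout is the parity bookkeeping---tracking which real roots lie to the left of $a$ versus to the right of $b$, and ensuring that the sign of the leading coefficient is compatible with the chosen representation---but this combinatorics is routine, and the resulting degree bound of $d$ on the entire SoS proof follows directly from $\deg A, \deg B \leq \lceil d/2 \rceil$.
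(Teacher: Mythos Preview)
The paper does not supply its own proof of this fact; it is stated as a known result with a citation to~\cite{laurent2009sums}. Your proposal via Luk\'acs' theorem for $[a,b]$, P\'olya--Szeg\H{o} for $[0,\infty)$, and the FTA-based two-square decomposition for $\mathbb{R}$ is exactly the standard argument and is correct, including the degree bookkeeping. You also correctly flag the ambiguity in the first clause: as written, $\sststile{d}{x}\Set{p\geq 0}$ with an empty hypothesis set denotes an unconditional SoS certificate, which cannot exist for a polynomial nonnegative only on $[0,\infty)$; the intended reading is either $\Set{x\geq 0}\sststile{d}{x}\Set{p\geq 0}$ or else the hypothesis should be $p\geq 0$ on all of $\mathbb{R}$.
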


Second, if $p \geq 0$ and $p$ is a quadratic, then this admits a sum-of-squares proof.
\begin{fact}[Quadratic polynomial inequalities admit SoS proofs]
    \label{fact:nonnegative-quadratic}
Let $p$ be a polynomial in the indeterminates $x \in \R^m$ such that $p$ has degree $2$ and $p \geq 0$ for all $x \in \mathbb{R}^m$. Then $\sststile{2}{x} \Set{ p(x) \geq 0  }$. 
\end{fact}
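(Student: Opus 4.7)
The plan is to reduce the claim to the spectral theorem via a standard homogenization of the quadratic form. First, I would write $p$ in canonical form as $p(x) = x^\top A x + b^\top x + c$, where $A \in \R^{m \times m}$ is symmetric, $b \in \R^m$, and $c \in \R$. The goal is to produce an explicit decomposition $p = \sum_i \ell_i(x)^2$ where each $\ell_i$ is an affine polynomial in $x$; such a decomposition is exactly a degree-$2$ sum-of-squares proof of $p \geq 0$ in the sense of the sum-of-squares proof system (with $S = \emptyset$ in the definition).

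Next, I would consider the $(m+1) \times (m+1)$ symmetric matrix
$$
M = \begin{pmatrix} c & b^\top / 2 \\ b/2 & A \end{pmatrix},
$$
chosen so that $(1, x^\top)\, M\, (1, x^\top)^\top = p(x)$ for every $x \in \R^m$. The crux is to show that the global nonnegativity of $p$ forces $M \succeq 0$. For any $(t, y^\top) \in \R \times \R^m$ with $t \neq 0$ we have $(t, y^\top)\, M\, (t, y^\top)^\top = t^2 p(y/t) \geq 0$ by assumption; for $t = 0$ the form is $y^\top A y$, and this is nonnegative because $p(ty) = t^2 y^\top A y + t b^\top y + c \geq 0$ for all $t$, so dividing by $t^2$ and letting $t \to \infty$ gives $y^\top A y \geq 0$. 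Combining these two cases and invoking continuity shows that $M \succeq 0$ on all of $\R^{m+1}$.

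Finally, the spectral theorem lets us write $M = \sum_{i=1}^{m+1} v_i v_i^\top$ for vectors $v_i = (v_{i,0}, v_{i,1}, \dots, v_{i,m}) \in \R^{m+1}$. Setting $\ell_i(x) = v_{i,0} + v_{i,1} x_1 + \dots + v_{i,m} x_m$, we obtain
$$
p(x) = (1, x^\top)\, M\, (1, x^\top)^\top = \sum_{i=1}^{m+1} \ell_i(x)^2,
$$
which is the desired sum-of-squares representation, and each summand is a polynomial of degree at most $2$. In the notation of the paper, this yields $\sststile{2}{x} \{p(x) \geq 0\}$ by taking $p_\emptyset^2 = \sum_i \ell_i^2$ (equivalently, $k = m+1$ squared affine terms). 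The main (and essentially only) obstacle is the PSD step: one must handle the affine term $b^\top x$ and constant $c$ carefully to verify that global nonnegativity of $p$ on $\R^m$ is equivalent to PSD-ness of the homogenized matrix $M$. Once this equivalence is in hand, the sum-of-squares decomposition is immediate from the spectral theorem, and the proof is complete.
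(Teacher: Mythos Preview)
Your proposal is correct and follows essentially the same approach as the paper's own proof: both form the $(m+1)\times(m+1)$ symmetric matrix $M$ satisfying $(1,x^\top)M(1,x^\top)^\top = p(x)$, argue that global nonnegativity of $p$ forces $M\succeq 0$ via the scaling/homogenization trick (handling the $t=0$ case by a limit), and then read off the sum-of-squares decomposition from $M = \sum_i u_i u_i^\top$. If anything, your limit argument for the $t=0$ case is stated a bit more carefully than the paper's.
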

\begin{proof}
Let $M$ be the unique $(m + 1) \times (m+1 )$ Hermitian matrix such that, for $v(x) = (1,x_1, \dots , x_m)^\dagger$,
\[
p(x_1, \dots , x_m) = v(x)^\dagger M v(x) \,.
\]
The inequality $p \geq 0$ implies that $M$ is PSD: consider a vector $v  = (v_1, \dots , v_{m+1}) \in \R^{m+1}$.
If $v_1 \neq 0$, then $v^\dagger M v = p(w) \geq 0$ for $w_1 = v_2/v_1, \dots , w_m = v_{m+1}/v_1$.
If $v_1 = 0$, then $v^\dagger M v = \lim_{c \to \infty} p(c \cdot w) \geq 0$ for $w_1 = v_2, \dots , w_m = v_{m+1}$.
This shows that $M$ must be PSD, so we can write $M = \sum_{i = 1}^{m+1} u_iu_i^\dagger$ for some vectors $u_i \in \R^{m+1}$.  Thus,
\[
p(x_1, \dots , x_m) = v(x)^\dagger M v(x) = \sum_{i = 1}^{m+1} \langle u_i, v(x)\rangle^2
\]
which is a degree-$2$ SoS polynomial and we are done.
\end{proof}

We also use the following basic sum-of-squares proofs.
For further details, we refer the reader to a recent monograph~\cite{fleming2019semialgebraic}.

\begin{fact}[Operator norm bound]
\label{fact:operator_norm}
For a symmetric matrix $A \in \R^{d \times d}$ and a vector $v \mathbb{R}^d$,
\[
\sststile{2}{v} \bracks[\Big]{ v^\dagger A v \leq \norm{A}\norm{v}^2 }.
\]
\end{fact}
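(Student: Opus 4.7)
The plan is to show that the polynomial $p(v) = \norm{A}\norm{v}^2 - v^\dagger A v$ is a nonnegative quadratic in the indeterminates $v_1, \dots, v_d$, and then invoke \cref{fact:nonnegative-quadratic} to conclude that it admits a degree-$2$ sum-of-squares proof.

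First, I would observe that $p(v) = v^\dagger(\norm{A}\id - A) v$, which is a homogeneous quadratic polynomial in $v$ (of degree exactly $2$). Since $A$ is symmetric, its eigenvalues are real and lie in $[-\norm{A}, \norm{A}]$, so every eigenvalue of $\norm{A}\id - A$ is nonnegative. Hence $\norm{A}\id - A \succeq 0$, and therefore $p(v) \geq 0$ for all $v \in \R^d$.

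Applying \cref{fact:nonnegative-quadratic} to $p$ immediately yields $\sststile{2}{v} \{p(v) \geq 0\}$, which is the desired statement. Alternatively, since $\norm{A}\id - A$ is PSD, we can factor it as $\norm{A}\id - A = \sum_{i=1}^{d} u_i u_i^\dagger$ (e.g., via its spectral decomposition), which gives the explicit SoS representation
\[
p(v) = \sum_{i=1}^{d} \angles{u_i, v}^2,
\]
a sum of squares of degree-$1$ polynomials in $v$. There is no real obstacle here; the only thing to verify is the standard linear-algebraic fact that $\norm{A}\id - A \succeq 0$ for symmetric $A$, and then the conclusion follows directly from \cref{fact:nonnegative-quadratic}.
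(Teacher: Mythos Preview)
Your proof is correct. The paper does not actually give its own proof of this fact; it is stated as a standard result with a reference to the monograph~\cite{fleming2019semialgebraic}. Your argument via $\norm{A}\id - A \succeq 0$ and \cref{fact:nonnegative-quadratic} (or the explicit spectral decomposition) is exactly the expected one.
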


\begin{fact}[Almost triangle inequality] \label{fact:sos-almost-triangle}
Let $f_1, f_2, \ldots, f_r$ be indeterminates. Then
\[
\sststile{2t}{f_1, f_2,\dots,f_r} \Set{ \parens[\Big]{\sum_{i\leq r} f_i}^{2t} \leq r^{2t-1} \parens[\Big]{\sum_{i =1}^r f_i^{2t}}}.
\]
\end{fact}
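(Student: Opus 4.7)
The plan is to prove the statement by induction on $t$, with the base case coming from a direct algebraic identity and the inductive step combining the hypothesis with a symmetrization identity.

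For the base case $t=1$, I would verify the polynomial identity
\[
r\sum_{i=1}^r f_i^2 - \left(\sum_{i=1}^r f_i\right)^2 = \sum_{1 \leq i < j \leq r}(f_i - f_j)^2,
\]
which is manifestly a sum of squares of degree $2$, so $\sststile{2}{f} \{r\sum_i f_i^2 \geq (\sum_i f_i)^2\}$.

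For the inductive step from $t$ to $t+1$, the plan is to multiply SoS inequalities. Setting $A_1 = (\sum_i f_i)^{2t}$, $B_1 = r^{2t-1}\sum_i f_i^{2t}$, $A_2 = (\sum_i f_i)^2$, and $B_2 = r\sum_i f_i^2$, each of these four quantities is itself a sum of squares ($A_1, A_2$ are literal squares; $B_1, B_2$ are nonnegative-coefficient sums of squares). Combined with the inductive hypothesis ($B_1 - A_1$ is SoS) and the base case ($B_2 - A_2$ is SoS), the identity $B_1 B_2 - A_1 A_2 = (B_1 - A_1)B_2 + A_1(B_2 - A_2)$ exhibits a degree-$(2t+2)$ SoS certificate of
\[
r^{2t}\left(\sum_i f_i^{2t}\right)\left(\sum_j f_j^2\right) \geq \left(\sum_i f_i\right)^{2t+2}.
\]
To close the induction, I would then establish the auxiliary SoS inequality $r \sum_i f_i^{2t+2} \geq (\sum_i f_i^{2t})(\sum_j f_j^2)$ via the explicit identity obtained by symmetrizing in $i \leftrightarrow j$ and factoring $f_i^{2t} - f_j^{2t} = (f_i^2 - f_j^2)\sum_{k=0}^{t-1}f_i^{2(t-1-k)}f_j^{2k}$:
\[
r\sum_i f_i^{2t+2} - \left(\sum_i f_i^{2t}\right)\left(\sum_j f_j^2\right) = \tfrac{1}{2}\sum_{i,j}\sum_{k=0}^{t-1}\left[(f_i^2 - f_j^2)\,f_i^{t-1-k}f_j^k\right]^2.
\]
Scaling by $r^{2t}$ and adding to the previous SoS inequality yields $r^{2t+1}\sum_i f_i^{2t+2} \geq (\sum_i f_i)^{2t+2}$ at degree $2t+2$, completing the induction.

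The main obstacle is identifying the right auxiliary identity: naive multiplication of the two SoS inequalities only controls $(\sum f_i)^{2t+2}$ in terms of the cross-term $(\sum f_i^{2t})(\sum f_j^2)$ rather than $\sum f_i^{2t+2}$, so the induction does not immediately close. The key observation that makes everything work is that $(f_i^{2t} - f_j^{2t})(f_i^2 - f_j^2)$ factors as $(f_i^2 - f_j^2)^2$ times a sum of monomial squares, giving a clean SoS certificate at exactly the right degree.
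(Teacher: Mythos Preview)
The paper does not actually give a proof of this fact; it is listed among ``basic sum-of-squares proofs'' with a pointer to the monograph~\cite{fleming2019semialgebraic}, so there is no in-paper argument to compare against.

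Your proof is correct. The base case is the standard Lagrange-type identity, and in the inductive step both pieces check out: the telescoping $B_1B_2 - A_1A_2 = (B_1-A_1)B_2 + A_1(B_2-A_2)$ is an SoS certificate at degree $2t+2$ since $B_2$ and $A_1$ are themselves sums of squares, and your auxiliary identity is exactly
\[
r\sum_i f_i^{2t+2} - \Bigl(\sum_i f_i^{2t}\Bigr)\Bigl(\sum_j f_j^2\Bigr)
= \tfrac{1}{2}\sum_{i,j}(f_i^{2t}-f_j^{2t})(f_i^2-f_j^2)
= \tfrac{1}{2}\sum_{i,j}\sum_{k=0}^{t-1}\bigl[(f_i^2-f_j^2)\,f_i^{t-1-k}f_j^{k}\bigr]^2,
\]
each square having degree $t+1$, hence the whole expression is SoS of degree $2t+2$. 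Combining the two gives the degree-$(2t+2)$ certificate for the $(t{+}1)$ case, closing the induction.
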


\begin{fact}[SoS Hölder's inequality]\label{fact:sos-holder}
Let $w_1, \ldots w_n$ be indeterminates and let $f_1,\ldots f_n$ be polynomials of degree $d$ in the variables $x \in \R^m$. 
Let $k$ be a power of 2.  
Then
\[
\Set{w_i^2 = w_i, \forall i\in[n] } \sststile{2kd}{x,w} \Set{  \parens[\Big]{\frac{1}{n} \sum_{i = 1}^n w_i f_i}^{k} \leq \parens[\Big]{\frac{1}{n} \sum_{i = 1}^n w_i}^{k-1} \parens[\Big]{\frac{1}{n} \sum_{i = 1}^n f_i^k}}. 
\]
\end{fact}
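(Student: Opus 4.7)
The plan is to induct on $k$, which is a power of $2$. The base case $k=2$ is the SoS Cauchy--Schwarz inequality
\[
\left(\tfrac{1}{n}\sum_i w_i f_i\right)^2 \leq \left(\tfrac{1}{n}\sum_i w_i\right)\left(\tfrac{1}{n}\sum_i f_i^2\right),
\]
which I would establish in two steps. First, symmetrizing over the index pair $(i,j)$ yields the identity
\[
\left(\sum_i w_i\right)\left(\sum_i w_i f_i^2\right) - \left(\sum_i w_i f_i\right)^2 = \tfrac{1}{2}\sum_{i,j} w_i w_j (f_i - f_j)^2,
\]
and each summand is SoS modulo the Boolean constraint because $(w_iw_j)^2 = w_i^2 w_j^2 = w_iw_j$ modulo $\{w_i^2 = w_i\}$, so $w_iw_j(f_i - f_j)^2 \equiv (w_iw_j(f_i - f_j))^2$ up to multiples of the constraint polynomials. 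Second, I would upgrade $\sum_i w_i f_i^2$ to $\sum_i f_i^2$ using $(1 - w_i)f_i^2 \geq 0$, which holds in SoS since $1 - w_i = (1 - w_i)^2$ modulo $\{w_i^2 = w_i\}$ and $f_i^2$ is a square.

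For the inductive step from $k$ to $2k$, I would raise the (non-upgraded) Cauchy--Schwarz inequality to the $k$th power. The key meta-fact is that whenever $B-A$ is SoS and both $A$ and $B$ are themselves SoS, so is $B^k - A^k = (B-A)\sum_{j=0}^{k-1} A^j B^{k-1-j}$. Setting $A = (\sum_i w_i f_i)^2$ and $B = (\sum_i w_i)(\sum_i w_i f_i^2)$---both SoS modulo the Boolean constraint, after using $w_i = w_i^2$ to rewrite $\sum_i w_i = \sum_i w_i^2$ and $\sum_i w_i f_i^2 = \sum_i (w_i f_i)^2$ as explicit sums of squares---this yields
\[
\left(\sum_i w_i f_i\right)^{2k} \leq \left(\sum_i w_i\right)^k \left(\sum_i w_i f_i^2\right)^k.
\]
Applying the inductive hypothesis with $g_i := f_i^2$ of degree $2d$ bounds the second factor by $(\sum_i w_i)^{k-1} (\sum_i f_i^{2k})$, and multiplying gives $(\sum_i w_i f_i)^{2k} \leq (\sum_i w_i)^{2k-1} (\sum_i f_i^{2k})$, which is the claim at $2k$. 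The $1/n^k$ prefactors cancel identically on both sides, recovering the normalized form in the statement.

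The main obstacle will be bookkeeping: tracking that every ``raising to a power'' step preserves SoS-ness modulo the Boolean constraint, and that the degree of the SoS certificate stays at most $2kd$ at parameter $k$. For the degree, invoking the inductive hypothesis at parameter $k$ on the degree-$2d$ polynomials $g_i = f_i^2$ costs $2k \cdot (2d) = 4kd$, which matches the target $2\cdot (2k)\cdot d$ for the inductive conclusion; the raised Cauchy--Schwarz contributes only lower-order terms in degree once $d \geq 1$. The subtler point is that polynomials such as $\sum_i w_i$ or $w_iw_j(f_i - f_j)^2$ are not squares on the nose---they become SoS only after rewriting $w_i = w_i^2$---so the derivation must carefully consume copies of the Boolean equalities rather than treat these polynomials as unconditional SoS objects.
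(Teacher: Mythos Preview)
The paper does not actually prove this fact: it is listed among ``basic sum-of-squares proofs'' with a pointer to the monograph~\cite{fleming2019semialgebraic}, so there is no paper proof to compare against. Your induction-on-powers-of-two argument via SoS Cauchy--Schwarz is the standard route to this inequality and is correct at the level of the argument; the Lagrange-type identity for the base case, the use of the Boolean constraint to turn $w_iw_j(f_i-f_j)^2$ and $(1-w_i)f_i^2$ into genuine squares, the telescoping factorization $B^k-A^k=(B-A)\sum A^jB^{k-1-j}$, and the substitution $g_i=f_i^2$ for the inductive step are all sound.

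One small caveat on your degree bookkeeping: even at $k=2$ the square $(w_iw_j(f_i-f_j))^2$ has degree $2d+4$ in $(x,w)$, and in the inductive step multiplying the degree-$4kd$ certificate from the hypothesis by $(\sum_i w_i)^k$ adds another $k$ (or $2k$, after rewriting via $w_i=w_i^2$). So the certificate you build has degree $2kd+O(k)$ rather than exactly $2kd$. This is harmless for the paper's purposes---only the $\poly(k,d)$ order matters---but if you want to hit $2kd$ on the nose you would need to be more careful about how the $w$-degree interacts with the $x$-degree, or simply state the bound as $O(kd)$.
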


\begin{fact}[Almost square-root]
\label{fact:squared-value-to-magnitude}
Given a scalar indeterminate $v$, $   \set{v^2 \leq 1 } \sststile{2}{v} \set{  -1 \leq v\leq 1  }$.
\end{fact}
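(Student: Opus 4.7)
The goal is to exhibit, for the single scalar indeterminate $v$, explicit degree-$2$ SoS certificates that derive $1-v \geq 0$ and $1+v \geq 0$ from the hypothesis $1 - v^2 \geq 0$. By the definition of SoS proofs, I need to write each of $1 \pm v$ as a nonnegative combination $\sigma_0 + \sigma_1 (1 - v^2)$ where $\sigma_0, \sigma_1$ are SoS polynomials (in fact squares of degree $\leq 1$ polynomials suffice), with the whole expression having degree at most $2$.

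The plan is to use the identities
\begin{align*}
    1 - v &= \tfrac{1}{2}(1-v)^2 + \tfrac{1}{2}(1 - v^2), \\
    1 + v &= \tfrac{1}{2}(1+v)^2 + \tfrac{1}{2}(1 - v^2).
\end{align*}
Each can be verified by expanding the right-hand side: for the first, $\tfrac{1}{2}(1 - 2v + v^2) + \tfrac{1}{2}(1 - v^2) = 1 - v$, and symmetrically for the second. In both identities, the first summand is the square of a degree-$1$ polynomial (hence a $(1,1,1)$-bounded SoS polynomial multiplying no constraint), while the second summand is the constraint $1 - v^2 \geq 0$ multiplied by the SoS polynomial $\tfrac{1}{2}$ (which is $(\tfrac{1}{\sqrt 2})^2$). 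The overall degree of each certificate is $2$.

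Assembled into the SoS proof formalism, these two identities establish $\{v^2 \leq 1\} \sststile{2}{v} \{1 - v \geq 0\}$ and $\{v^2 \leq 1\} \sststile{2}{v} \{1 + v \geq 0\}$; combining them via the addition (conjunction) of derived constraints yields the claimed $\{v^2 \leq 1\} \sststile{2}{v} \{-1 \leq v \leq 1\}$. There is no real obstacle here: the only ``trick'' is the averaging identity above, which is the standard Positivstellensatz certificate for the interval $[-1,1]$ cut out by a single quadratic constraint, and the verification reduces to a one-line polynomial expansion.
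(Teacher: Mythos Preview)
Your proof is correct and is essentially identical to the paper's: both use the identity $(1-v)^2 + (1-v^2) = 2(1-v)$ (and its mirror) to certify $1\pm v \geq 0$ from the constraint $1-v^2 \geq 0$; you simply divide through by $2$ up front, whereas the paper first derives $2-2v\geq 0$ and then rescales.
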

\begin{proof}
We know that $\set{ \Paren{1-v}^2 = 1 + v^2 -2v \geq 0 }$ and  $\set{ \Paren{1 + v}^2 = 1 + v^2 + 2v \geq 0 }$.
By assumption, $\Set{ 1-v^2 \geq 0}$.
So, from the addition rule we have $\Set{ 2 + 2v\geq0  }$ and  $\Set{ 2 - 2v \geq0  }$.
Rearranging yields the claim. 
\end{proof}

\section{Translating between polynomials and nested commutators}\label{sec:poly-commutators}
In this section, we relate nested commutators of matrices to polynomials of their associated eigenvalues.
We begin with the following basic observation mentioned in the technical overview.

\begin{lemma}[Nested commutators to eigen-polynomials] \label{lem:commutator-to-polynomial}
    For matrices $A,B \in \C^{n \times n}$, in the basis where $A$ is diagonal with entries $A_{ii} = \alpha_i$, $A B = B \circ \{\alpha_i\}_{ij}$ and $B A = B \circ \{\alpha_j\}_{ij}$.
    Consequently,
    \begin{align*}
        [A, B]_k = B \circ \braces{(\alpha_i - \alpha_j)^k}_{ij}.
    \end{align*}
    Further, by linearity, for a polynomial $q(x) = \sum_{k = 0}^{d} c_k x^k$,
    \begin{align*}
        \sum c_k [A, B]_k = B \circ \braces{q(\alpha_i - \alpha_j)}_{ij}
    \end{align*}
\end{lemma}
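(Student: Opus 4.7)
The plan is to establish the two basic multiplication identities by direct entrywise computation, then bootstrap to nested commutators by induction on $k$, and finally extend to polynomials by linearity.

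First I would verify the two formulas for $AB$ and $BA$ by computing matrix entries in the basis where $A$ is diagonal. Since $A_{ik} = \alpha_i \delta_{ik}$, we get
\begin{equation*}
    (AB)_{ij} = \sum_k A_{ik} B_{kj} = \alpha_i B_{ij},
\end{equation*}
which is exactly the $(i,j)$ entry of $B \circ \{\alpha_i\}_{ij}$. An analogous calculation, in which the Kronecker delta from $A$ now fixes the right index, yields $(BA)_{ij} = \alpha_j B_{ij}$. Subtracting gives the base case
\begin{equation*}
    [A,B]_{ij} = (\alpha_i - \alpha_j) B_{ij},
\end{equation*}
i.e.\ $[A,B] = B \circ \{\alpha_i - \alpha_j\}_{ij}$.

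Next I would prove the formula $[A,B]_k = B \circ \{(\alpha_i - \alpha_j)^k\}_{ij}$ by induction on $k$, with $k=1$ handled above. For the inductive step, set $C = [A,B]_{k-1}$, so by hypothesis $C_{ij} = (\alpha_i - \alpha_j)^{k-1} B_{ij}$. Applying the base case to the pair $(A,C)$ gives
\begin{equation*}
    [A,B]_k = [A,C] = C \circ \{\alpha_i - \alpha_j\}_{ij} = B \circ \{(\alpha_i - \alpha_j)^k\}_{ij},
\end{equation*}
where in the last equality I use that the Hadamard product is associative and commutative on scalars.

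Finally, the polynomial statement is immediate: the Hadamard product distributes over sums and commutes with scalar multiplication, so
\begin{equation*}
    \sum_{k=0}^d c_k [A,B]_k = \sum_{k=0}^d c_k \bigl( B \circ \{(\alpha_i - \alpha_j)^k\}_{ij} \bigr) = B \circ \Bigl\{ \sum_{k=0}^d c_k (\alpha_i - \alpha_j)^k \Bigr\}_{ij} = B \circ \{q(\alpha_i - \alpha_j)\}_{ij}.
\end{equation*}

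There is no real obstacle here; the entire argument is a direct entrywise calculation plus a one-line induction, and the only thing to be careful about is keeping the row versus column index straight when computing $AB$ versus $BA$, since that is precisely what distinguishes $\alpha_i$ from $\alpha_j$ in the Hadamard factors.
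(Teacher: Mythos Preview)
Your proposal is correct and is exactly the natural entrywise verification plus induction that the paper has in mind; in fact the paper treats this lemma as a basic observation and does not write out a proof at all, so your argument simply fills in the routine details.
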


In light of the above, we make the following definition associating a polynomial to an expression involving commutators of matrices.

\begin{definition}[Univariate ``commutator polynomials'']
\label{def:polynomial-nested-commutator}
For a polynomial $p(x) = a_0 + a_1x + \dots + a_dx^d$, given square matrices $X,A$ of the same size, define
\[
p(X|A) = a_0A + a_1 [X,A]_1 + \dots + a_d [X,A]_d \,.
\]
\end{definition}

We will need a generalization of the above that associates bivariate polynomials $p(x,y)$ to expressions with nested commutators involving two matrices $X,Y$.
We will primarily be interested in the case where $X,Y$ commute or are close to commuting.
We begin by generalizing the nested commutator.

\begin{definition}[Bivariate nested commutators]
\label{def:bi-variate-nested-commutators}
Let $S \in \{0,1 \}^\ell$ and $X,Y,A \in \C^{\dims \times \dims}$ be matrices.  Consider a sequence $Z_1,Z_2,\dots, Z_\ell$ of length $\ell$  where each $Z_i \in \{ X,Y \}$ and $Z_i = X$ if and only if the $i$th entry of $S$ is $0$.  We define 
\[
[ (X,Y)_S ,A] = [Z_1, [Z_2, [ \dots [Z_\ell, A ] \dots ]]] \,.
\]
\end{definition}

For a monomial $x^iy^j$, we wish to associate it to a nested commutator $[(X, Y)_S, A]$ where the number of $0$'s and $1$'s in $S$ is $i$ and $j$, respectively.
There are many different such commutators, reflecting that $X$ and $Y$ need not commute.
We will show that when $X,Y$ are close to commuting, the nested commutators are also close, so the ordering in $S$ does not matter.
When $\abs{S} = 2$, this follows from the identity below.

\begin{fact}[Jacobi identity]\label{fact:switch:HH'}
    We have the identity $[X,[Y,A]] - [Y, [X, A]] = [[X,Y],A]$.
\end{fact}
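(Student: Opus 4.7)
The plan is to prove the Jacobi identity by direct expansion of both sides using only the definition of the commutator $[U,V] = UV - VU$, since no deeper machinery is needed and the identity is purely algebraic. I would not invoke any of the locality, Pauli, or Hamiltonian structure developed earlier; the identity holds for arbitrary elements of any associative algebra.

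Concretely, I would first expand the left-hand side. Writing
\[
[X,[Y,A]] = X(YA - AY) - (YA - AY)X = XYA - XAY - YAX + AYX,
\]
and symmetrically
\[
[Y,[X,A]] = YXA - YAX - XAY + AXY.
\]
Subtracting, the $XAY$ and $YAX$ terms cancel, leaving
\[
[X,[Y,A]] - [Y,[X,A]] = XYA - YXA - AXY + AYX = (XY - YX)A - A(XY - YX).
\]
The right-hand side of this last expression is precisely $[[X,Y],A]$ by definition, completing the proof.

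There is essentially no obstacle: the argument is a mechanical cancellation of eight monomials into four, with no ordering assumptions on $X,Y,A$ required. The only care needed is to keep track of signs carefully when distributing the outer commutators. In the paper this identity is stated precisely because it will be applied in the regime where $[X,Y]$ is small (so that switching the order of $X$ and $Y$ inside a nested commutator introduces a controlled error involving $[X,Y]$), but the identity itself is exact and requires no smallness assumption for its proof.
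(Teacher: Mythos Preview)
Your proof is correct: the direct expansion using $[U,V]=UV-VU$ yields the cancellation you describe and recovers $[[X,Y],A]$ exactly. The paper states this as a fact without proof, so there is no alternative approach to compare against; your argument is the standard and complete verification.
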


We extend this to higher-order commutators.
\begin{lemma}[Reordering bivariate nested commutators]\label{lem:polynomial-equivalence}
For any two sequences $S,S' \in \{0,1 \}^\ell$ with the same number of $0$'s and $1$'s, let $t \leq \ell^2$ be the number of adjacent swaps needed to transform $S$ to $S'$.
Then there are some coefficients $c_1, \dots , c_t \in \{-1,1 \}$, and sequences $S_1,T_1, \dots , S_t, T_t$ where $\len(S_i) + \len(\abs{T_i}) = \ell - 2$ such that
\[
[(X,Y)_S,A] - [(X,Y)_{S'},A] = \sum_{i = 1}^t c_i \left[ (X,Y)_{S_i}, \left[[X,Y], \left[(X,Y)_{T_i},A \right]\right] \right]. 
\]
\end{lemma}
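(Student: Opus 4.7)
The plan is to induct on the number of adjacent swaps $t$. The base case $t = 0$ is trivial since then $S = S'$ and both sides vanish. For the inductive step, fix any sequence of $t$ adjacent swaps transforming $S$ into $S'$, and let $S''$ denote the sequence obtained after the first swap, say at positions $k, k+1$ (where necessarily $S_k \ne S_{k+1}$). I would then write
\begin{equation*}
    [(X,Y)_S, A] - [(X,Y)_{S'}, A] = \bigl([(X,Y)_S, A] - [(X,Y)_{S''}, A]\bigr) + \bigl([(X,Y)_{S''}, A] - [(X,Y)_{S'}, A]\bigr),
\end{equation*}
so that the second difference is controlled by the inductive hypothesis (which contributes $t-1$ terms of the stated form), and it remains to handle the single-swap difference.

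For the single-swap difference, I would apply the Jacobi identity (\cref{fact:switch:HH'}) at depth $k$. Let $M = [Z_{k+2}, [\dots [Z_\ell, A] \dots]]$, which has the form $[(X,Y)_T, A]$ for the suffix $T = S_{k+2\colon\ell}$, of length $\ell - k - 1$. Since $\{Z_k, Z_{k+1}\} = \{X, Y\}$, the Jacobi identity gives
\begin{equation*}
    [Z_k, [Z_{k+1}, M]] - [Z_{k+1}, [Z_k, M]] = c\,[[X,Y], M],
\end{equation*}
with $c = +1$ if $Z_k = X$ and $c = -1$ otherwise. Because each outer commutator $[Z, \,\cdot\,]$ is a linear map on matrices, we can pull this difference out through the first $k-1$ layers of commutators with the prefix $S_1 = S_{1\colon k-1}$ (of length $k-1$), obtaining
\begin{equation*}
    [(X,Y)_S, A] - [(X,Y)_{S''}, A] = c\,[(X,Y)_{S_1}, [[X,Y], [(X,Y)_{T_1}, A]]],
\end{equation*}
where $T_1 := T$, and $\len(S_1) + \len(T_1) = (k-1) + (\ell - k - 1) = \ell - 2$, as required. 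Combining with the inductive hypothesis yields the decomposition with $t$ terms total.

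The proof is essentially just a careful unwinding, so I do not expect a serious obstacle; the main bookkeeping concerns are keeping track of the signs $c_i \in \{-1, +1\}$ (determined by whether the specific swap at step $i$ is $X,Y \mapsto Y,X$ or vice versa) and verifying the length identity, both of which are straightforward as above. The bound $t \le \ell^2$ is loose and follows from the fact that any permutation of $\ell$ elements can be realized by at most $\binom{\ell}{2}$ adjacent transpositions (bubble sort), so any two sequences $S, S'$ with the same multiset of entries are reachable from each other in at most $\ell^2$ adjacent swaps.
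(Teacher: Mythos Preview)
Your proposal is correct and follows essentially the same approach as the paper: reduce to a single adjacent swap, apply the Jacobi identity at the swap position to extract one term with the prefix $S_i$ and suffix $T_i$, and iterate over the sequence of swaps. Your write-up is in fact more explicit than the paper's about the sign $c_i$ and the length count $\len(S_i) + \len(T_i) = \ell - 2$.
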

\begin{proof}
Consider when $S,S'$ differ exactly by a single swap of two adjacent elements.  In this case, by Fact~\ref{fact:switch:HH'}, the difference on the LHS is equal to exactly one term of the form
\[
\left[ (X,Y)_{S_i}, \left[[X,Y], \left[(X,Y)_{T_i},A \right]\right] \right] 
\]
where $S_i$ is the prefix up to the point where $S,S'$ differ and $T_i$ is the suffix.  Now we can repeatedly apply this to swap adjacent elements of $S$ until it matches $S'$.  Each of the residual terms is of the form given on the RHS so we are done.
\end{proof}

Now we define the correspondence between bivariate polynomials and bivariate nested commutators by arbitrarily choosing an order for the $S$ associated to every monomial.

\begin{definition}[Bivariate ``commutator polynomials'']\label{def:polynomial-translation}
Given a polynomial of degree $d$ in two variables $x,y$, $p(x,y) = \sum_{i + j \leq d}a_{ij}x^iy^j$, we define its associated matrix commutator polynomial for matrices $X,Y,A \in \C^{n \times n}$ as
\[
    p(X,Y|A) = \sum_{i + j \leq d} a_{ij} [X, [Y,A]_j ]_i
\]
\end{definition}

The key property that we will need is that certain polynomial identities (in the original bivariate polynomials) are essentially preserved under this translation. 
We begin by showing this for monomials.
The following fact shows a relation between a commutator polynomial on $A$ to a commutator polynomial on $B$.

\begin{fact} \label{eq:move-commutator}
For any Hermitian matrix $X$ and matrices $A,B,\rho$, we have the identity
\begin{equation*}
    \tr\Paren{ [X,A] B^\dagger \rho } - \tr\Paren{  A [X,B]^\dagger \rho }  = -\tr\Paren{AB^\dagger[X,\rho] } \,.
\end{equation*}
\end{fact}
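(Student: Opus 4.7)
The plan is a direct computation: expand each commutator in the identity, use Hermiticity of $X$ to simplify $[X,B]^\dagger$, and then invoke the cyclic property of the trace to match terms. Since both sides are linear in $A$, $B$, and $\rho$, there is no need for any deeper structural argument.

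First I would expand the left-hand side. Writing $[X,A] = XA - AX$ gives
\[
\tr\bigl([X,A] B^\dagger \rho\bigr) = \tr(X A B^\dagger \rho) - \tr(A X B^\dagger \rho).
\]
Because $X = X^\dagger$, we have $[X,B]^\dagger = (XB - BX)^\dagger = B^\dagger X - X B^\dagger$, so
\[
\tr\bigl(A [X,B]^\dagger \rho\bigr) = \tr(A B^\dagger X \rho) - \tr(A X B^\dagger \rho).
\]
Subtracting, the two $\tr(AXB^\dagger \rho)$ terms cancel and the left-hand side collapses to
\[
\tr(X A B^\dagger \rho) - \tr(A B^\dagger X \rho).
\]

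Next I would expand the right-hand side: $-\tr(AB^\dagger[X,\rho]) = -\tr(AB^\dagger X \rho) + \tr(AB^\dagger \rho X)$. Applying cyclicity of the trace to the last term, $\tr(AB^\dagger \rho X) = \tr(X A B^\dagger \rho)$, so the right-hand side equals $\tr(XAB^\dagger\rho) - \tr(AB^\dagger X \rho)$, matching the left-hand side.

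There is essentially no obstacle here; the only small point to be careful about is that the Hermiticity of $X$ is used precisely to turn $[X,B]^\dagger$ into $-[X,B^\dagger]$ (written out as $B^\dagger X - X B^\dagger$), which is what lets the two $AXB^\dagger \rho$ contributions cancel. This identity is the fundamental building block that later allows one to move a commutator from the observable side of the trace to the state $\rho$, which is how it will be used to control the error between $\tr(Qp(H'|P)\rho)$ and $\tr(Qe^{-\beta H'} P e^{\beta H'}\rho)$.
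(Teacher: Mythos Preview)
Your proof is correct and essentially identical to the paper's: both expand the commutators, use $X=X^\dagger$ to write $[X,B]^\dagger = B^\dagger X - X B^\dagger$, cancel the $\tr(AXB^\dagger\rho)$ terms, and then apply cyclicity of the trace to match the remaining expression with $-\tr(AB^\dagger[X,\rho])$.
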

\begin{proof}
\begin{align*}
\tr\Paren{ [X,A] B^\dagger \rho } - \tr\Paren{  A [X,B]^\dagger \rho }  & = \tr\Paren{ XAB^\dagger \rho - AXB^\dagger \rho  }  - \tr\Paren{ AB^\dagger X \rho -  A XB^\dagger \rho } \\
& = \tr\Paren{ XAB^\dagger \rho  }  - \tr\Paren{ AB^\dagger X \rho } \\
& = \tr\Paren{ AB^\dagger \rho X } - \tr\Paren{ AB^\dagger X \rho }\\
& = -\tr\Paren{AB^\dagger[X,\rho] } \qedhere
\end{align*}
\end{proof}

This can be extended to general monomials.

\begin{lemma}[Commutator monomial equivalences]
    \label{lem:monomial-equivalence}
    Let $p(x, y) = x^{i_1}y^{i_2}$, $q(x, y) = x^{j_1}y^{j_2}$, and $r(x, y) = p(x, y)q(x, y)$.
    Let $d = \deg(r)$.
    Then for some $\ell \leq d^2$, we can write
    \begin{align*}
        &\tr\parens[\Big]{
            p(X, Y | A) q(X, Y | B)^\dagger \rho
        } - \tr \parens[\Big]{
            A \cdot r(X, Y | B)^\dagger \rho
        } = \sum_{i=1}^\ell Z_i,
    \end{align*}
    where every $Z_i$ is one of the following three types of errors:
    \begin{enumerate}
        \item $\pm \tr\left( \left[(X,Y)_S, A \right] \left[(X,Y)_T, B \right]^\dagger [X , \rho]  \right)$, where $\len(S) + \len(T) = d - 1$;
        \item $\pm \tr\left( \left[(X,Y)_S, A \right] \left[(X,Y)_T, B \right]^\dagger [Y, \rho]  \right)$, where $\len(S) + \len(T) = d - 1$;
        \item $\pm \tr\left( A \left[ (X,Y)_S, \left[[X, Y], [(X,Y)_T, B] \right]\right]^\dagger  \rho \right)$, where $\len(S) + \len(T) = d - 2$.
    \end{enumerate}
\end{lemma}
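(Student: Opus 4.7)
The plan is to derive the identity in two stages. First, iteratively apply Fact~\ref{eq:move-commutator} to peel every outer commutator off the $A$-side of $p(X,Y|A)$, moving each one onto the $B$-side. Then invoke Lemma~\ref{lem:polynomial-equivalence} to reorder the resulting nested commutator on $B$ into the canonical ordering underlying Definition~\ref{def:polynomial-translation} for $r(X,Y|B)$. The first stage will produce exactly the type~1 and type~2 errors, and the second stage will produce exactly the type~3 errors.

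Explicitly, $p(X,Y|A) = [X, [Y, A]_{i_2}]_{i_1}$ has $i_1$ outer $[X,\cdot]$'s wrapped around $i_2$ inner $[Y,\cdot]$'s. Fact~\ref{eq:move-commutator} at the outermost level reads
\[
\tr([X, A']\, \tilde B^\dagger \rho) = \tr(A'\, [X, \tilde B]^\dagger \rho) - \tr(A'\, \tilde B^\dagger\, [X, \rho]),
\]
so stripping the outer $[X,\cdot]$ from the $A$-side wraps a fresh $[X,\cdot]$ around the outside of the $B$-side and produces a type~1 error; an analogous identity for $Y$ gives a type~2 error. Iterating $i_1$ times to strip the $X$'s off $A$ and then $i_2$ times to strip the $Y$'s off $A$ exhausts all commutators on the $A$-side, producing $i_1+i_2 \leq d$ error terms. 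At each peel, the $A$-side residue has one fewer commutator and the $B$-side residue has one more, so the total length of commutators in each error term is exactly $d-1$, matching the type~1/type~2 templates. After all the peels the principal term becomes $\tr(A\, [(X,Y)_{\tilde S}, B]^\dagger \rho)$, where $\tilde S = 1^{i_2}\, 0^{i_1+j_1}\, 1^{j_2}$: the $Y$'s from $A$ (peeled last) sit outermost on the $B$-side, the $X$'s from $A$ (peeled earlier) sit in the middle, and the original $q(X,Y|B) = [(X,Y)_{0^{j_1}1^{j_2}}, B]$ structure sits innermost.

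The main obstacle is that $\tilde S$ does not agree with the canonical sequence $S^\ast = 0^{i_1+j_1}\, 1^{i_2+j_2}$ that encodes $r(X,Y|B) = [X, [Y, B]_{i_2+j_2}]_{i_1+j_1}$, because $X$ and $Y$ need not commute. To fix this, apply Lemma~\ref{lem:polynomial-equivalence} to $[(X,Y)_{\tilde S}, B]$: it takes $i_2(i_1+j_1) \leq d^2$ adjacent swaps to transform $\tilde S$ into $S^\ast$, and each swap contributes a correction of the form $\pm[(X,Y)_{S_i}, [[X,Y], [(X,Y)_{T_i}, B]]]$ with $\len(S_i) + \len(T_i) = d-2$. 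Substituting the resulting identity into $\tr(A \cdot (\cdot)^\dagger \rho)$ yields the type~3 errors. Summing both stages gives at most $(i_1+i_2) + i_2(i_1+j_1) \leq d^2$ error terms, each of one of the three claimed types. The only genuine care needed is the bookkeeping of the evolving $A$- and $B$-side sequences during the peels, together with verifying that the corrections from Lemma~\ref{lem:polynomial-equivalence} slot exactly into the type~3 template after the outer $\tr(A\, \cdot\, \rho)$ is applied.
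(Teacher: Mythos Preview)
Your proposal is correct and follows essentially the same approach as the paper: peel the $i_1$ outer $X$'s and then the $i_2$ outer $Y$'s off the $A$-side via Fact~\ref{eq:move-commutator} (producing the type~1 and type~2 errors), arrive at $\tr(A\,[Y,[X,[Y,B]_{j_2}]_{j_1+i_1}]_{i_2}^\dagger\rho)$, and then apply Lemma~\ref{lem:polynomial-equivalence} to reorder the sequence $1^{i_2}0^{i_1+j_1}1^{j_2}$ into $0^{i_1+j_1}1^{i_2+j_2}$ with $i_2(i_1+j_1)$ swaps (producing the type~3 errors). The sequence bookkeeping, the length counts for each error type, and the total bound $(i_1+i_2)+i_2(i_1+j_1)\leq d^2$ all match the paper's argument.
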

\begin{proof}
Our goal is to express
\begin{align*}
    \tr\Paren{  [X,[Y,A]_{i_2}]_{i_1} [X,[Y,B]_{j_2}]_{j_1}^{\dagger} \rho}
    - \tr \parens[\Big]{ A[X,[Y,B]_{ i_2 + j_2}]_{i_1 + j_1}^{\dagger} \rho }
\end{align*}
as a sum of errors.
Observe that \cref{eq:move-commutator} allows us to remove one copy of $X$ or $Y$ from the commutator in front of $A$ and move it onto the commutator in front of $B$ at the cost of an error term of type $1$ or $2$.
Thus, we can repeatedly apply \cref{eq:move-commutator} to move all of the $X$'s and $Y$'s in the commutator in front of $A$ onto the commutator in front of $B$ and write 
\[
\tr\Paren{  [X,[Y,A]_{i_2}]_{i_1} [X,[Y,B]_{j_2}]_{j_1}^{\dagger} \rho} -  \tr\Paren{ A[Y, [X,[Y,B]_{j_2}]_{j_1 + i_1} ]_{i_2}^{\dagger} \rho} 
\]
as a sum of $i_1$ terms of type $1$ and $i_2$ terms of type $2$.
Next, we can apply \cref{lem:polynomial-equivalence} to reorder the sequence of $X$ and $Y$ on the commutator in front of $B$ at the cost of error terms of type $3$.  This allows us to write 
\[
 \tr \parens[\Big]{ A[Y, [X,[Y,B]_{j_2}]_{j_1 + i_1} ]_{i_2}^{\dagger} \rho}  - \tr \parens[\Big]{ A[X,[Y,B]_{ i_2 + j_2}]_{i_1 + j_2}^{\dagger} \rho }
\]
as a sum of $i_2(j_1 + i_1)$ terms of type $3$.
Together, this gives the desired bound.
\end{proof}

\begin{theorem}[Polynomial identities to nested commutator identities]
\label{thm:polynomial-equivalence}
    Consider a formal polynomial identity in two variables
    \[
    p_1(x,y)q_1(x,y) + \dots + p_k(x,y)q_k(x,y) = 0
    \]
    where each of the polynomials $p_i,q_i$ is $(d,C)$-bounded. Let $X, Y \in \C^{\dims \times \dims}$ be Hermitian matrices and $\rho, A,B \in \C^{\dims \times \dims}$ be arbitrary matrices.
    Then we can write
    \begin{align*}
        \tr\left( \left(p_1(X,Y|A)q_1(X,Y|B)^\dagger + \dots + p_k(X,Y|A)q_k(X,Y|B)^\dagger \right) \rho \right)
        = \sum_{i = 1}^{t} c_i Z_i,
    \end{align*}
    where $t \leq 4kd^6$,
    the coefficients $c_i$ satisfy $\abs{c_i} \leq C^2 2^{2d}$,
    and every $Z_i$ is one of the following three types of errors:
    \begin{enumerate}
        \item $\frac{\pm 1}{(\len(S) + \len(T))!} \tr\left( \left[(X,Y)_S, A \right] \left[(X,Y)_T, B \right]^\dagger [X , \rho]  \right)$, where $\len(S) + \len(T) \leq 2d$;
        \item $\frac{\pm 1}{(\len(S) + \len(T))!} \tr\left( \left[(X,Y)_S, A \right] \left[(X,Y)_T, B \right]^\dagger [Y, \rho]  \right)$, where $\len(S) + \len(T) \leq 2d$;
        \item $\frac{\pm 1}{(\len(S) + \len(T))!} \tr\left( A \left[ (X,Y)_S, \left[[X, Y], [(X,Y)_T, B] \right]\right]^\dagger  \rho \right)$, where $\len(S) + \len(T) \leq 2d$.
    \end{enumerate}
\end{theorem}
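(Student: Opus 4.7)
The plan is to reduce this identity-level statement to its monomial-level counterpart, \cref{lem:monomial-equivalence}, and then book-keep the combinatorics.  First I would expand each $p_i(x,y) = \sum_{i_1,i_2} a^{(i)}_{i_1 i_2} x^{i_1} y^{i_2}$ and $q_i(x,y) = \sum_{j_1,j_2} b^{(i)}_{j_1 j_2} x^{j_1} y^{j_2}$ into monomials; by linearity of $p \mapsto p(X,Y|A)$, this also expands $p_i(X,Y|A) q_i(X,Y|B)^\dagger$ into a sum over pairs of monomials.  For each such pair, I would invoke \cref{lem:monomial-equivalence} to write
\[
\tr\Paren{p_{\mathrm{mon}}(X,Y|A)\, q_{\mathrm{mon}}(X,Y|B)^\dagger\, \rho} - \tr\Paren{A \cdot (p_{\mathrm{mon}} q_{\mathrm{mon}})(X,Y|B)^\dagger\, \rho}
\]
as a sum of at most $(2d)^2$ error terms of the three declared types.

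Next, I would combine the ``leading'' terms across all monomial pairs.  Summing over the monomials in one product $p_i q_i$ gives $\tr(A \cdot (p_i q_i)(X,Y|B)^\dagger \rho)$ by linearity of the commutator-polynomial map; summing over $i$, the polynomial identity $\sum_i p_i(x,y) q_i(x,y) = 0$ (which holds formally, hence coefficient-wise) implies $\sum_i (p_iq_i)(X,Y|B) = 0$, so the leading contribution vanishes identically.  All that remains is the total collection of error terms from each monomial pair, which is exactly the desired form.

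The last task is a bookkeeping exercise.  Since each $p_i, q_i$ has at most $\binom{d+2}{2} = O(d^2)$ monomials, there are at most $k \cdot O(d^4)$ monomial pairs, each yielding $\leq (2d)^2$ errors, giving $t \leq 4kd^6$ terms as required.  For the coefficient bound, a monomial of degree $d_1'$ in some $p_i$ has coefficient at most $C/d_1'!$ and similarly for $q_i$, so the unnormalized coefficient of each error term is at most $C^2/(d_1'! d_2'!)$ where $d_1' + d_2'$ is the total degree of the source monomial product.  The theorem's normalization extracts a factor $(\len(S)+\len(T))! \leq (d_1' + d_2')!$, and then
\[
\frac{(d_1'+d_2')!}{d_1'! \, d_2'!} = \binom{d_1'+d_2'}{d_1'} \leq 2^{d_1'+d_2'} \leq 2^{2d},
\]
which produces the claimed bound $|c_i| \leq C^2 2^{2d}$.

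The main obstacle will not be conceptual but notational: matching the per-term factor $1/(\len(S)+\len(T))!$ used in the statement to the combinatorial factor produced by \cref{lem:monomial-equivalence}, and tracking that the $\pm 1$ signs from \cref{lem:monomial-equivalence} do not interfere with the cancellation of the leading terms.  Both issues are handled cleanly by applying the monomial lemma uniformly before summing, so the polynomial identity can be used on the leading parts before any rearrangement happens on the error parts.
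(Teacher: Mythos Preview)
Your proposal is correct and follows essentially the same approach as the paper: expand each $p_i,q_i$ into monomials, apply \cref{lem:monomial-equivalence} to every monomial pair, use the formal identity $\sum_i p_iq_i=0$ (via linearity of $p\mapsto p(X,Y|B)$) to kill the leading terms, and then carry out the same counting and coefficient-bound arithmetic. The paper orders the steps slightly differently (it first observes that $\sum_\ell \tr(A\,r_\ell(X,Y|B)^\dagger\rho)=0$ and then expands each difference $\tr(p_\ell(X,Y|A)q_\ell(X,Y|B)^\dagger\rho)-\tr(A\,r_\ell(X,Y|B)^\dagger\rho)$ into monomials), but the logic and the bounds are the same.
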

\begin{proof}
Let $r_\ell(x, y) = p_\ell(x, y)q_\ell(x, y)$.
By the assumed formal polynomial identity,
\begin{align*}
    \tr\left( \left(A r_1(X,Y|B)^\dagger + \dots + A r_k(X,Y|B)^\dagger \right) \rho \right) = 0.
\end{align*}
So, it suffices to express each product
\begin{align*}
    \tr\left( p_\ell(X,Y|A)q_\ell(X,Y|B)^\dagger \rho \right)
    - \tr\left( A r_\ell(X,Y|B)^\dagger \rho \right)
\end{align*}
as a linear combination of errors.
Let $p_\ell = \sum_{i_1,i_2} a_{\ell, i_1, i_2 } x^{i_1}y^{i_2}$ and $q_\ell = \sum_{ j_1, j_2}b_{\ell,j_1,j_2} x^{j_1}y^{j_2} $.  We can expand the above expression into its commutator monomials:
\begin{align*}
    &\tr\left( p_\ell(X,Y|A)q_\ell(X,Y|B)^\dagger \rho \right)
    - \tr\left( A r_\ell(X,Y|B)^\dagger \rho \right) \\
    &= \sum_{i_1, i_2, j_1, j_2}a_{\ell,i_1,i_2} \cdot b_{\ell, j_1, j_2} \parens[\Big]{\tr([X,[Y,A]_{i_2}]_{i_1} \cdot [X,[Y,B]_{j_2}]_{j_1}^{\dagger} \rho) - \tr(A\cdot [X,[Y,B]_{i_2 + j_2}]_{i_1 + j_1}^{\dagger} \rho)}\,.
\end{align*}
\cref{lem:monomial-equivalence} shows how to expand summand into error terms; note that the degree of the product is $i_1 + j_1 + i_2 + j_2 \leq 2d$.
There are at most $(2d)^2$ error terms per summand, and there are $d^4$ summands per product $p_\ell q_\ell$.
This gives a total bound on the number of error terms as $4kd^6$ as desired.
All that is left is to bound the size of the coefficients.
For an error term of any type associated to a particular summand, the coefficient is $a_{\ell,i_1,i_2}b_{\ell,j_1,j_2}$.
Let $S, T$ be the sequences associated to the error term.
Because $p_\ell$ and $q_\ell$ are $(d, C)$-bounded, we can bound
\begin{align*}
    \abs{a_{\ell,i_1,i_2}b_{\ell,j_1,j_2}} \leq \frac{C^2}{(i_1 + i_2)! (j_1 + j_2)!} \leq \frac{C^2 2^{2d}}{(i_1 + i_2 + j_1 + j_2)!} \leq \frac{C^2 2^{2d}}{(\len(S) + \len(T))!}.
\end{align*}
The last inequality holds because, in all error types, $\len(S) + \len(T)$ is at most the degree of the product.
\end{proof}

\section{Polynomial approximations of the exponential}\label{sec:poly-approx}

Our goal for this section is to construct a polynomial approximation to $e^x$ that satisfies the properties necessary for our algorithm.
In particular, we need a polynomial that approximates $e^x$ on an interval $[-\kappa, \kappa]$, but with the additional property that the polynomial does not grow too quickly outside the area of approximation.

\begin{definition}[Flat approximation of the exponential] \label{def:flat-exp}
    Given $\eps, \eta \in (0, 1)$ and $\kappa \geq 1$,  we say a polynomial $p(x)$ is a $(\kappa, \eta,\eps)$-\textit{flat exponential approximation} if
    \begin{enumerate}
        \item $|p(x) - e^x| \leq \eps $ for $x \in [-\kappa, \kappa]$, and
        \item $|p(x)| \leq \max(1, e^x)e^{\eta |x|}$ . 
    \end{enumerate}
\end{definition}

The canonical polynomial approximation of $e^x$ is its Taylor series truncation.

\begin{definition}
    Let $s_{\ell}(x) = \sum_{k = 0}^\ell \frac{x^k}{k!}$ denote the degree-$\ell$ truncation of the Taylor series of $e^x$.    
\end{definition}

\begin{remark} \label{rmk:taylor-fails}
The Taylor series truncation $s_\ell(x)$ does not satisfy our desired approximation properties from \cref{def:flat-exp}, even for large $\ell$.
This is because the truncation blows up too quickly on the negative tail: $\abs{s_\ell(-\ell)} \eqsim e^{\ell} > e^{\eta \ell}$.
The same issue occurs with conventional approximations of the exponential, like truncations of the Chebyshev series and QSVT-style ``bounded'' approximations of the Chebyshev series~\cite{gslw18,tt24}: these truncations at degree $\ell$ fail in the region around $-\ell$ (so, notably, increasing the degree does not improve the flatness parameter of these approximations).
\end{remark}

While standard techniques for approximating the exponential do not suffice, we construct a modified polynomial that is a flat approximation to the exponential. 

\begin{definition}[Iteratively truncated Taylor series]
\label{def:iterative-truncated-taylor-series}
For positive integers $k,\ell$, we define
\[
p_{k,\ell}(x) = s_{2\ell}(x/k) \cdot s_{4\ell}(x/k) \cdots s_{2^{k}\ell}(x/k) \,
\]
to be the product of $k$ geometrically increasing Taylor series truncations.  Next, we define a shifted integral of $p_{k,\ell}(x)$ as follows:
\[
q_{k,\ell}(x) = 1 + \int_{0}^x p_{k,\ell}(y) \diff y \,.
\]
The degree of $p_{k, \ell}(x)$ and $q_{k, \ell}$ is $(2^{k+1}-1)\ell$ and $(2^{k+1}-1)\ell+1$, respectively.
\end{definition}

It will also be important that $p_{k,\ell}$ is always positive and thus $q_{k,\ell}$ is monotonically increasing.
The main theorems that we show in this section are as follows:

\begin{theorem}[Flat approximations to the exponential]
\label{thm:exp-fancy-approx}
Given $\eps, \eta \in (0,1)$ and $\kappa \geq 1$,  let $k \geq 5/\eta$ and $\ell \geq 100(\kappa +\log k/\eps)$.  Then the polynomials $p_{k,\ell}, q_{k,\ell}$ are $(\kappa, \eta, \eps)$-\textit{flat exponential approximation}. 
\end{theorem}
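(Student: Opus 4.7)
I would first establish both properties for $p_{k,\ell}$, then transfer them to $q_{k,\ell}$ through the identity $q_{k,\ell}(x) - e^x = \int_0^x (p_{k,\ell}(y) - e^y)\, dy$ together with $q_{k,\ell}(0) = 1$. The factor decomposition $p_{k,\ell}(x) = \prod_{j=1}^k s_{d_j}(x/k)$ with $d_j = 2^j \ell$, matched against $e^x = \prod_{j=1}^k e^{x/k}$, reduces the analysis to the behavior of scalar Taylor truncations.

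\textbf{Approximation property for $p_{k,\ell}$.} For $x \in [-\kappa, \kappa]$ the scaled argument $x/k$ is tiny compared to $d_j \geq 2\ell$, so by Taylor remainder and Stirling, $|s_{d_j}(x/k) - e^{x/k}| \leq e^{\kappa/k}(e\kappa/(k d_j))^{d_j}$. Under the hypothesis $\ell \geq 100(\kappa + \log(k/\eps))$, each factor error is far smaller than $\eps/(k e^{\kappa})$. Combined with the crude bound $|s_{d_j}(x/k)| \leq 2 e^{\kappa/k}$ and the telescoping identity $\prod a_j - \prod b_j = \sum_j (a_j - b_j) \prod_{j' < j} a_{j'} \prod_{j' > j} b_{j'}$, the total product error is at most $\eps$.

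\textbf{Flat property for $p_{k,\ell}$.} For $x \geq 0$, every term of $s_d(x/k)$ is nonnegative, so $s_d(x/k) \leq e^{x/k}$ and $p_{k,\ell}(x) \leq e^x = \max(1,e^x)$, giving the desired bound. The substantive case is $x \leq 0$; set $u = |x|/k$ and let $j^* = \max\{j : d_j < 2eu\}$. For $j > j^*$ the Taylor remainder gives $|s_{d_j}(-u)| \leq e^{-u} + (eu/(d_j+1))^{d_j+1} \leq 2$, so these factors contribute at most $2^k$ to the product. For $j \leq j^*$ the Taylor sum is dominated by its final term, yielding $|s_{d_j}(-u)| \leq (d_j+1)(eu/d_j)^{d_j}$. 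Writing $d_j = 2^{j-j^*}d_{j^*}$ with $d_{j^*} \asymp u$, the sum $\sum_{j \leq j^*} d_j \log(eu/d_j)$ forms a rapidly decaying geometric series in $j^*-j$, summing to $Cu$ for an absolute constant $C$ strictly smaller than $5$. Thus
\[
    \log|p_{k,\ell}(x)| \leq Cu + k\log 2 + O(k^2 + k\log\ell),
\]
and since the hypothesis $k \geq 5/\eta$ forces $k\eta \geq 5 > C$, the leading $Cu$ is dominated by $k\eta u = \eta|x|$. The lower order additive terms are absorbed by the hypothesized growth of $\ell$ in the large-$u$ regime, while for small $u$ the approximation property already gives $|p_{k,\ell}(x)| \approx e^x \leq e^{\eta|x|}$ directly (noting that near $0$ the truncation error $|p_{k,\ell}(x) - e^x|$ is pointwise $O(|x|^{2\ell+1})$, not just $\eps$).

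\textbf{Transfer and main obstacle.} The approximation bound for $q_{k,\ell}$ on $[-\kappa, \kappa]$ follows by integrating the bound for $p_{k,\ell}$, losing only a factor of $\kappa$, which is absorbed by the $\log(1/\eps)$ slack in the hypothesis. For the flat bound on $q_{k,\ell}$, I would re-run the analysis above with $\eta/2$ in place of $\eta$ (which only doubles the constant in $k \geq 5/\eta$) and integrate to obtain $|q_{k,\ell}(x)| \leq 1 + (2/\eta)(e^{\eta|x|/2} - 1)$, which is dominated by $\max(1, e^x) e^{\eta|x|}$. The principal obstacle is the flat bound for $p_{k,\ell}$ at large negative $x$: individual truncations $s_d(-u)$ blow up polynomially when $u \gtrsim d$, and a naive product of $k$ truncations could compound these blow-ups. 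The geometric choice $d_j = 2^j \ell$ is exactly what prevents this: for any argument magnitude $u$, only $O(\log u)$ factors diverge, and their cumulative logarithmic contribution is a convergent geometric series with explicit constant $C < 5$, matching the hypothesis $k \geq 5/\eta$.
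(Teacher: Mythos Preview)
Your proposal is correct and follows essentially the same route as the paper: per-factor Taylor bounds plus telescoping for the approximation property, the same case split on $j$ for the negative-$x$ flat bound, and integration to transfer both properties to $q_{k,\ell}$. The one place where the paper is more direct is the constant in the negative-$x$ flat bound: rather than Stirling plus a geometric series, the paper keeps the factorials and uses the combinatorial inequality $\prod_{j=1}^{k_0}(2^j\ell)! \geq ((2^{k_0+1}-2)\ell)!/4^{(2^{k_0+1}-2)\ell}$ together with $z^N/N! \leq e^z$, which gives $|p_{k,\ell}(x)| \leq e^{5|x|/k}$ on the nose and matches the hypothesis $k \geq 5/\eta$ exactly; your claimed $C < 5$ via the Stirling route is not obviously correct as sketched (a back-of-envelope with your bound $|s_{d_j}(-u)| \leq (d_j+1)(eu/d_j)^{d_j}$ gives a constant closer to $7$--$8$), so you would either need to sharpen that step or adopt the paper's factorial trick.
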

\begin{theorem}[Monotonicity of the flat approximations]\label{thm:exp-monotone-approx}
For any positive integers $k, \ell$, we have the following:
\begin{equation*}
    \sststile{2^{k+1} \ell + 20 }{x,y} \Set{ 0.5(x - y)(1 + 0.25(x - y)^2) (q_{k,\ell}(x) - q_{k,\ell}(y)) - 0.00025(x - y)^2 p_{k,\ell}(x) \geq 0  } .
\end{equation*}
Moreover, the LHS is a  $(10^{2^k\ell},\,2^k \ell + 10,\,200^{2^k\ell})  $-bounded sum-of-squares polynomial in $x$ and $y$.
\end{theorem}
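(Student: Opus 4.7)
The plan is to reduce the claim to exhibiting an explicit sum-of-squares decomposition of a simpler bracketed polynomial, then to construct that decomposition using the multiplicative structure of $p_{k,\ell}$. Since $q_{k,\ell}' = p_{k,\ell}$, the identity
$$q_{k,\ell}(x) - q_{k,\ell}(y) = (x - y)\int_0^1 p_{k,\ell}(y + t(x - y))\,dt$$
is a coefficient-by-coefficient polynomial identity in $x, y$. Substituting into the LHS and pulling out the common factor of $(x-y)^2$, which is already a square,
$$\mathrm{LHS} = (x-y)^2 \Bigl[0.5\bigl(1 + 0.25(x-y)^2\bigr)\int_0^1 p_{k,\ell}(y + t(x-y))\,dt - 0.00025\, p_{k,\ell}(x)\Bigr].$$
So it suffices to show that the bracket is an SoS polynomial in $x, y$ with the claimed $(10^{2^k\ell},\, 2^k\ell + 10,\, 200^{2^k\ell})$-bounded representation.

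To handle the integral in an SoS-friendly way, I would replace it by a Gaussian quadrature of degree $\geq (2^{k+1}-1)\ell$, which is exact on polynomials of that degree: $\int_0^1 p_{k,\ell}(y + tz)\,dt = \sum_j w_j\, p_{k,\ell}(y + t_j z)$ with positive weights $w_j$ and nodes $t_j \in (0,1)$, where $z = x-y$. Each summand factors as $\prod_i s_{2^i\ell}((y + t_j z)/k)$, and every factor is an even-degree Taylor truncation of the exponential. A standard induction shows that $s_{2n}(v) > 0$ on all of $\R$ (at any critical point $s_{2n-1}(v)=0$, so $s_{2n}(v) = v^{2n}/(2n)! \geq 0$), so \cref{fact:univariate-sos-proofs} gives an explicit univariate SoS decomposition of each factor; after the affine substitution $v = (y + t_j z)/k$ and multiplication, the whole summand is SoS in $x, y$.

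The crux is that after subtracting $0.00025\, p_{k,\ell}(x)$ the bracket remains SoS. I would establish this via a per-node identity
$$0.5\, w_j\bigl(1 + 0.25 z^2\bigr) p_{k,\ell}(y + t_j z) = w_j\, \delta_j\, p_{k,\ell}(x) + \sigma_j(x,y),$$
with $\sigma_j$ an explicit SoS polynomial and constants $\delta_j \geq 0$ chosen so that $\sum_j w_j \delta_j \geq 0.00025$. The guiding pointwise intuition is that the logarithmic derivative of each $s_{2^i\ell}(v/k)$ is of order $1/k$, so $p_{k,\ell}(y + t_j z)/p_{k,\ell}(x)$ behaves like $e^{-(1-t_j)z}$, and the prefactor $1 + 0.25 z^2$ is calibrated to dominate this ratio after averaging $t_j$ over $[0,1]$. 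Summing the per-node identities, restoring the outer $(x-y)^2$, and propagating coefficient bounds through \cref{claim:basic-composition-properties} then produces the promised $(10^{2^k\ell},\, 2^k\ell + 10,\, 200^{2^k\ell})$-bounded SoS decomposition.

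The main obstacle is constructing the per-node SoS comparison between $p_{k,\ell}(y + t_j z)$ and $p_{k,\ell}(x)$. A naive polynomial inequality of this shape with fixed constants $0.5, 0.25, 0.00025$ genuinely fails for arbitrary positive polynomials of comparable degree; for instance, taking $p(x) = x^{2n}$ with $n \gg 1000$ and $y = 0,\, x = 1$ already violates the analogous bound. The construction must therefore exploit the specific geometrically-doubling degrees $2\ell, 4\ell, \ldots, 2^k\ell$ in the definition of $p_{k,\ell}$: this doubling is precisely what keeps the logarithmic derivative of the product bounded independently of the total degree, making the required ratio $p_{k,\ell}(u+\eta)/p_{k,\ell}(u)$ controllable by an SoS certificate of degree $\bigO{2^k\ell}$ rather than one that grows with $\deg p_{k,\ell} = (2^{k+1}-1)\ell$. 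This matches the claimed degree bound $2^k\ell + 10$ on the individual squares; the coefficient bound $200^{2^k\ell}$ then falls out of tracking arithmetic through the per-factor SoS representations of each $s_{2^i\ell}$ and the quadrature weights.
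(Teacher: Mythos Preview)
Your reduction to showing the bracket
\[
0.5\bigl(1 + 0.25(x-y)^2\bigr)\int_0^1 p_{k,\ell}(y + t(x-y))\,dt \;-\; 0.00025\, p_{k,\ell}(x)
\]
is SoS is fine, and you correctly identify that the doubling structure of $p_{k,\ell}$ must be used. But the ``per-node identity'' you propose is exactly the hard part, and you do not construct it: an SoS certificate that $p_{k,\ell}(y+t_j z)$ dominates a fixed positive multiple of $p_{k,\ell}(x)$ (up to the prefactor $1+0.25z^2$) is a \emph{ratio} bound between evaluations of $p_{k,\ell}$ at two different points, and your ``pointwise intuition'' that the ratio behaves like $e^{-(1-t_j)z}$ is not an SoS argument. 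Nothing in your outline explains how to certify this ratio as a polynomial identity with a residual SoS term, and the final paragraph is essentially an admission that this step is missing.

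The paper sidesteps this ratio comparison entirely. It first symmetrizes, subtracting $0.00025(x-y)^2\bigl(p_{k,\ell}(x)+p_{k,\ell}(y)\bigr)$ and adding back the $p_{k,\ell}(y)$ piece (which is already SoS). Then, writing $z=(x+y)/2$ and $a=(x-y)/2$, it applies the fundamental theorem of calculus a \emph{second} time to $p_{k,\ell}(z+a)+p_{k,\ell}(z-a)$, converting the subtracted endpoint term into an integral of $p'_{k,\ell}$. After this the integrand involves only $p_{k,\ell}(u)$ and $p'_{k,\ell}(u)$ at the \emph{same} point $u$, and the whole thing reduces to showing $p_{k,\ell}(u)\pm 0.004\,p'_{k,\ell}(u)$ is SoS. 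By the product rule this equals $\sum_j \tfrac{1}{k}\bigl(s_{2^j\ell}(u/k)\pm 0.004\,s_{(2^j\ell)-1}(u/k)\bigr)\prod_{j'\neq j} s_{2^{j'}\ell}(u/k)$, and each factor is SoS by the one-variable lemma $|s_{2n-1}|<99\,s_{2n}$. The integral over $t$ is handled not by quadrature but by integrating the associated PSD Gram matrix directly. This derivative trick is the idea your proposal is missing.
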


\begin{remark}
Note that \cref{thm:exp-monotone-approx} stems from the intuition that roughly we should have
\[
q_{k,\ell}(x) - q_{k,\ell}(y) = \int_y^x p_{k, \ell}(y) \diff y \geq (x - y)(p_{k,\ell}(x) - p_{k,\ell}(y)) \,.
\]
However, the above isn't quite true, so the additional terms in \cref{thm:exp-monotone-approx} are there to account for this.  The fact that the difference is not only non-negative, but also a sum of square polynomials will be crucial later on in the analysis of our algorithm.
\end{remark}

We will prove \cref{thm:exp-fancy-approx} in this section.  The proof of \cref{thm:exp-monotone-approx} is deferred to the appendix as it is quite long and computational.  We begin by establishing some basic facts about the truncated Taylor series and our polynomials $p$ and $q$.  The following fact follows from Taylor's theorem, which implies that for every $x \in \R$ there exists some $c \in [0,1]$ such that
\begin{align} \label{eq:exp-taylors-thm}
    e^x - s_\ell(x) = e^{cx}\frac{x^{\ell+1}}{(\ell+1)!}\,.
\end{align}
\begin{fact}[Bounds on truncated Taylor series of $e^x$]\label{fact:even-odd-truncations}
For $x \geq 0$, $e^x \geq s_{\ell}(x)$ for all $\ell$.  For $x < 0$, we have $s_{2\ell - 1}(x) \leq e^x \leq s_{2\ell}(x)$ for all $\ell$.
\end{fact}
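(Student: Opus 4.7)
The plan is to deduce everything directly from the Taylor remainder identity just stated, namely equation~\eqref{eq:exp-taylors-thm}: for every $x \in \mathbb{R}$ and every nonnegative integer $\ell$ there exists $c \in [0,1]$ with
\[
e^x - s_\ell(x) \;=\; e^{cx}\,\frac{x^{\ell+1}}{(\ell+1)!}.
\]
Since the factor $e^{cx}/(\ell+1)!$ is always strictly positive, the sign of $e^x - s_\ell(x)$ is determined entirely by the sign of $x^{\ell+1}$. This reduces the whole fact to elementary parity considerations.

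For the first claim, suppose $x \geq 0$. Then $x^{\ell+1} \geq 0$ for every $\ell \geq 0$, so the remainder formula gives $e^x - s_\ell(x) \geq 0$, i.e.\ $e^x \geq s_\ell(x)$. This handles the nonnegative case in one line.

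For the second claim, fix $x < 0$. When the truncation index is $2\ell$ (even), the exponent $\ell+1 = 2\ell+1$ is odd, so $x^{2\ell+1} < 0$ and the remainder formula yields $e^x - s_{2\ell}(x) < 0$, that is $e^x \leq s_{2\ell}(x)$. When the truncation index is $2\ell - 1$ (odd), the exponent $\ell + 1 = 2\ell$ is even, so $x^{2\ell} > 0$ and the remainder is positive, giving $s_{2\ell-1}(x) \leq e^x$. Combining these two inequalities gives the sandwich $s_{2\ell-1}(x) \leq e^x \leq s_{2\ell}(x)$ for all $\ell$.

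There is no real obstacle here: the entire proof is a one-step application of the sign of $x^{\ell+1}$ in \eqref{eq:exp-taylors-thm}. The only minor care needed is in bookkeeping the parity of $\ell + 1$ in each case, but this is immediate and requires no further tools beyond what the paper has already set up.
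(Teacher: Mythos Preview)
Your argument is correct and matches the paper's own proof, which simply cites Taylor's theorem via \eqref{eq:exp-taylors-thm} and reads off the sign of the remainder. The only blemish is a harmless notational overload where you write ``the exponent $\ell+1 = 2\ell+1$'' (reusing $\ell$ for both the generic index and the specific value), but the reasoning is clear.
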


\begin{corollary}[Even truncations are non-negative]\label{coro:positive-truncation}
For any $\ell \in \mathbb{N}$, for all $x \in \mathbb{R}$,   $s_{2\ell}(x) \geq 0$.  
\end{corollary}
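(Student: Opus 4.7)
The plan is to establish the corollary by a straightforward case analysis on the sign of $x$, leveraging \cref{fact:even-odd-truncations} which was just stated.

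First I would handle the non-negative case: when $x \geq 0$, each summand $x^k/k!$ in the definition $s_{2\ell}(x) = \sum_{k=0}^{2\ell} x^k/k!$ is non-negative, so the partial sum itself is non-negative (and in fact strictly positive since the $k=0$ term equals $1$). This case requires no appeal to the preceding fact.

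Next I would handle the negative case: when $x < 0$, I would invoke the second half of \cref{fact:even-odd-truncations} directly, which says $s_{2\ell}(x) \geq e^x$ for all $\ell \in \mathbb{N}$. Since $e^x > 0$ everywhere on $\mathbb{R}$, this gives $s_{2\ell}(x) > 0$ as required. Combining the two cases covers all of $\mathbb{R}$ and completes the proof.

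There is no real obstacle here — the corollary is an immediate consequence of the fact stated directly above it, and is included to isolate the positivity property of even truncations (which will be needed downstream, presumably to argue that $p_{k,\ell}(x) = \prod_{i=1}^{k} s_{2^i \ell}(x/k)$ is non-negative, and hence that $q_{k,\ell}$ is monotone non-decreasing as an integral of a non-negative function). I would write the proof in roughly two lines without further embellishment.
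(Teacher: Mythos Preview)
Your proposal is correct and matches the paper's approach exactly: the paper states the corollary immediately after \cref{fact:even-odd-truncations} without a separate proof, treating it as an immediate consequence, and your two-case argument is precisely the intended justification.
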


\begin{lemma}[Taylor series truncation of $e^x$] \label{lem:exp-taylor}
Given $\eps \in (0,1)$ and $b \geq 0$,  let $\ell \geq 10b + \log(1/\eps)$. Then $\abs{s_\ell(x) - e^x} \leq \eps$ for $x \in [-b, b]$.
\end{lemma}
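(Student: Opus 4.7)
The plan is to invoke Taylor's theorem with Lagrange remainder, which is already recorded in the excerpt as \cref{eq:exp-taylors-thm}: for every $x \in \R$ there exists $c \in [0,1]$ with $e^x - s_\ell(x) = e^{cx} \cdot x^{\ell+1}/(\ell+1)!$. For $x \in [-b,b]$, the factor $e^{cx}$ is at most $e^{|x|} \leq e^b$ (since $c \in [0,1]$), and $|x|^{\ell+1} \leq b^{\ell+1}$. Consequently the error is uniformly bounded on $[-b,b]$ by
\[
|s_\ell(x) - e^x| \leq e^b \cdot \frac{b^{\ell+1}}{(\ell+1)!}.
\]

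The remaining step is purely arithmetic: show that the hypothesis $\ell \geq 10b + \log(1/\eps)$ forces the right-hand side to be at most $\eps$. For this I would use the elementary bound $(\ell+1)! \geq ((\ell+1)/e)^{\ell+1}$ to get $\frac{b^{\ell+1}}{(\ell+1)!} \leq \bigl(\tfrac{eb}{\ell+1}\bigr)^{\ell+1}$. Since $\ell \geq 10b$, we have $\tfrac{eb}{\ell+1} \leq e/10 < 1/3$, so the bound becomes $e^b \cdot (1/3)^{\ell+1} = \exp(b - (\ell+1)\log 3)$. Using $\log 3 > 1$ together with $\ell \geq 10b + \log(1/\eps)$ gives $(\ell+1)\log 3 \geq b + \log(1/\eps)$, hence $e^b \cdot (1/3)^{\ell+1} \leq \eps$, as required.

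There is no real obstacle here; the only thing to be careful about is the constant check, i.e.\ verifying that the constant $10$ (and not a tighter constant) in the hypothesis is indeed sufficient after applying a Stirling-type lower bound on $(\ell+1)!$. The argument above uses slack at essentially every step, so the constant $10$ is more than enough and the proof is just a few lines.
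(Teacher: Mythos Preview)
Your proof is correct and follows essentially the same approach as the paper: bound the Lagrange remainder by $e^{|x|}(e|x|/\ell)^{\ell}$ via a Stirling-type inequality, then verify arithmetically that the hypothesis $\ell \geq 10b + \log(1/\eps)$ suffices. The paper's write-up shifts the index (bounding $|s_{\ell-1}(x)-e^x|$) and phrases the sufficient condition as $\ell \geq e^2|x| + \log(e^{|x|}/\eps)$, but the substance is identical.
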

\begin{proof}
This statement also follows from Taylor's theorem~\eqref{eq:exp-taylors-thm}.
From this, for all $x$, the approximation error of the Taylor series truncation is
\begin{align*}
    \abs{s_{\ell-1}(x) - e^x} \leq \frac{e^{\abs{x}} \abs{x}^{\ell}}{\ell!}
    \leq e^{\abs{x}}\parens[\Big]{\frac{e\abs{x}}{\ell}}^\ell.
\end{align*}
Taking $\ell \geq e^2 \abs{x} + \log(e^{\abs{x}}/\eps)$ makes the above expression bounded by $\eps$.\footnote{
    This can be improved by a $\log\log$ factor~\cite[Lemma 59]{gslw18}, but this does not improve the degree for the parameter setting in our application.
}
The final bound comes from taking $\abs{x} \leq b$.
\end{proof}

\begin{claim}\label{claim:simple-coeff-bound}
The polynomial $p_{k,\ell}$ is $( (2^{k+1} - 1)\ell, 1)$-bounded and the polynomial $q_{k,\ell}$ is $( (2^{k+1} - 1)\ell + 1, 1)$-bounded.
\end{claim}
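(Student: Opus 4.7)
The plan is to expand $p_{k,\ell}$ in the monomial basis directly and use the multinomial identity to bound each coefficient. Writing each Taylor truncation as $s_{2^i\ell}(x/k) = \sum_{j=0}^{2^i\ell} (x/k)^j/j!$ and multiplying out the $k$ factors, the coefficient $c_j$ of $x^j$ in $p_{k,\ell}(x)$ satisfies
\[
c_j = \frac{1}{k^j} \sum_{\substack{j_1+\cdots+j_k = j \\ 0 \leq j_i \leq 2^i \ell}} \prod_{i=1}^k \frac{1}{j_i!}.
\]
All terms are non-negative, so dropping the truncation upper bounds on the $j_i$ only increases the sum. I will then invoke the multinomial identity $\sum_{j_1+\cdots+j_k = j} j!/(j_1!\cdots j_k!) = k^j$ to get
\[
c_j \leq \frac{1}{k^j} \cdot \frac{1}{j!} \sum_{j_1+\cdots+j_k = j} \binom{j}{j_1,\ldots,j_k} = \frac{1}{k^j} \cdot \frac{k^j}{j!} = \frac{1}{j!},
\]
which is exactly the coefficient condition in \cref{def:bound-polynomial} with $C=1$. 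For the degree bound, the product has degree $\sum_{i=1}^k 2^i \ell = (2^{k+1}-2)\ell \leq (2^{k+1}-1)\ell$, which suffices.

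For $q_{k,\ell}(x) = 1 + \int_0^x p_{k,\ell}(y)\,dy$, termwise integration sends a coefficient $c_j$ on $x^j$ to a coefficient $c_j/(j+1)$ on $x^{j+1}$. Combined with the bound $c_j \leq 1/j!$ just established, the coefficient of $x^{j'}$ in $q_{k,\ell}$ (for $j'\geq 1$) is at most $\frac{1}{j'\cdot (j'-1)!} = \frac{1}{j'!}$, and the constant term is $1 = 1/0!$. The degree of $q_{k,\ell}$ is exactly one more than that of $p_{k,\ell}$, so the $\big((2^{k+1}-1)\ell+1,\,1\big)$-boundedness follows.

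I do not anticipate any real obstacle: this is essentially the observation that the $1/j!$ form of the Taylor coefficients is preserved under products (via the multinomial identity) and under integration (since $\int x^j\,dx = x^{j+1}/(j+1)$ matches the factorial recursion $(j+1)! = (j+1)\cdot j!$). The only subtlety worth flagging is that the degree as given in the paper is a slight overcount (the actual degree of $p_{k,\ell}$ is $(2^{k+1}-2)\ell$), but the stated bound is a valid upper bound, which is all that the notion of $(d,C)$-boundedness requires.
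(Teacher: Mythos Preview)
Your proof is correct and takes essentially the same approach as the paper: both bound the coefficients of $p_{k,\ell}$ by those of the untruncated product $(e^{x/k})^k = e^x$, which gives the $1/j!$ bound directly. The paper phrases this as ``replace each truncated sum with the full infinite sum, which equals $e^x$,'' while you unpack the same fact via the multinomial identity; these are the same computation, and your observation about the degree overcount is accurate.
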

\begin{proof}
Note that the coefficients of the monomials in $p_{k,\ell}$ are all positive and are all less than the coefficients of the monomials in 
\[
\left(1 + \frac{(x/k)}{1!} + \frac{(x/k)^2}{2!} + \dots  \right)^k
\]
where we replace each truncated sum with the full infinite sum.  However the above is exactly equal to $1 + x + \frac{x^2}{2!} + \dots $.  Thus for any degree $d$, the coefficient of $x^d$ in $p_{k,\ell}(x)$ is at most $\frac{1}{d!}$ so $p_{k,\ell}$ is $((2^{k+1} - 1)\ell , 1)$-bounded since its degree is $(2^{k+1} - 1)\ell$.  Recall that $q_{k,\ell}$ is obtained by integrating $p$ and thus we immediately get the same bound on the coefficients of $q_{k,\ell}$ i.e. it is is $((2^{k+1} - 1)\ell + 1, 1)$-bounded and we are done.
\end{proof}

Now we move onto the proof of Theorem~\ref{thm:exp-fancy-approx}.  

\begin{proof}[Proof of \cref{thm:exp-fancy-approx}]
We prove the statement for $p_{k,\ell}$ first. 
By Lemma~\ref{lem:exp-taylor} (with $\eps \gets \eps/(k2^{2\kappa}) )$), we have that for all $x \in [-\kappa, \kappa]$
\[
\left\lvert \frac{s_{2^j\ell}(x/k)}{e^{x/k}} - 1 \right\rvert \leq \frac{\eps}{k \cdot e^{2\kappa}}
\]
and thus,
\[
1 - \frac{\eps}{e^{2\kappa}} \leq \frac{p_{k,\ell}(x)}{e^x}  -1 \leq 1 + \frac{\eps}{e^{2\kappa}} 
\]
so we have $|p_{k,\ell}(x) - e^x| \leq \eps/\kappa$ on the interval $[-\kappa, \kappa]$.  Now we immediately get the same guarantee for $q_{k,\ell}(x)$ by integrating the above.

Now we prove the second condition for $p_{k,\ell}$.  For $x \geq 0$, we simply incur the $e^x$ upper bound by \cref{fact:even-odd-truncations}.  For $x \leq 0$, let $j_0$ be the smallest positive integer such that $-x < 2^{j_0+1}k \ell$ then for all $1 \leq j < j_0$,
\[
s_{2^j\ell}(x/k) \leq \frac{(x/k)^{2^j\ell}}{(2^j\ell)!} \,.
\]

Also, for all $j > j_0 + 2$,
\begin{equation}
s_{2^j\ell}(x/k) = s_{ (2^j\ell) - 1}(x/k) + \frac{(x/k)^{2^j\ell}}{(2^j\ell)!} \leq e^{x/k} + \left(\frac{3x}{2^j\ell k} \right)^{2^j\ell}\leq  1
\end{equation}
where we used \cref{fact:even-odd-truncations}.  Finally, for $j_0 \leq j \leq j_0 + 2 $
we have
\begin{equation}
\label{eqn:weak-upper-bound-for intermediate-terms}
s_{2^j\ell}(x/k) = s_{(2^j \ell) - 1}(x/k) + \frac{(x/k)^{2^j\ell}}{(2^j\ell)!}  \leq 1 + \frac{(x/k)^{2^j\ell}}{(2^j\ell)!} \leq 2 \max\left( \frac{(x/k)^{2^j\ell}}{(2^j\ell)!} , 1\right)
\end{equation}
Overall, combining the above, we conclude
\begin{equation}
\label{eqn:combined-upper-bound-pkell}
p_{k,\ell}(x) = s_{2\ell}(x/k) \cdots s_{2^k\ell}(x/k) \leq 8 \prod_{j = 1}^k \max\left(\frac{(x/k)^{2^j\ell}}{(2^j\ell)!} , 1 \right) 
\end{equation}
where the factor of $8$ is because we need to apply \eqref{eqn:weak-upper-bound-for intermediate-terms} on at most $3$ terms.  Note that there must be some $k_0 \in \Set{j_0, j_{0}+1, j_0 + 2}$, such that
\[
\frac{(x/k)^{2^j\ell}}{(2^j\ell)!} > 1,
\]
exactly when $j \leq k_0$.  Then, we can upper bound Equation \eqref{eqn:combined-upper-bound-pkell} as follows:
\begin{equation}
\label{eqn:refined-upper-bound-pkell}
p_{k,\ell}(x) \leq \frac{8(x/k)^{(2^{k_0+1} - 2) \ell}}{(2\ell)! \cdots (2^{k_0}\ell)!} 
\end{equation}
Further, we have that
\[
(2\ell)! \cdots (2^{k_0}\ell)! \geq \frac{((2^{k_0+1} - 2)\ell)!}{4^{(2^{k_0+1} - 2)\ell}}.
\]
Substituting in Equation \eqref{eqn:refined-upper-bound-pkell}, we get 
\begin{equation}
    p_{k,\ell}(x) \leq \frac{(5x/k)^{(2^{k_0+1} - 2)\ell}}{((2^{k_0+1} - 2)\ell)!} \leq e^{5|x|/k} \,.
\end{equation}
The desired bound for $q$ then follows as well from the definition as $q$ is obtained by simply integrating $p$.
\end{proof}

\section{Accessing Gibbs states: the quantum piece}
\label{sec:accessing-gibbs-states}

We now describe how our algorithm uses its copies of $\rho$.
This is the only location in the algorithm where we access $\rho$; the rest of the algorithm is entirely classical.
By making measurements on copies of $\rho$, we can estimate expectations of observables $\tr(X\rho)$ for various choices of the matrix $X$.  These estimates will then be used in the learning algorithm.

\begin{lemma} \label{lem:estimator-helper}
    Let $X = UDU^\dagger \in \mathbb{C}^{\dims \times \dims}$ be a unitarily diagonalizable matrix, and suppose we are given copies of a quantum state with density matrix $\rho \in \mathbb{C}^{\dims \times \dims}$.
    Then we can estimate $\tr(X \rho)$ to $\eps\norm{X}$ error with probability $\geq 1-\delta$ using $\bigO{\frac{1}{\eps^2}\log\frac{1}{\delta}}$ copies of $\rho$.
    The running time is the number of samples times the cost of applying $U^\dagger$ to a quantum state and measuring in the computational basis.
\end{lemma}
\begin{proof}
Consider taking $\rho$ and applying $U^\dagger$ to form $U^\dagger \rho U$, then measuring in the computational basis.
We see the outcome $i$ with probability $\bra{u_i} \rho \ket{u_i}$, where $u_i$ is the $i$th column of $U$.
Let $z$ be the random variable attained by performing this measurement and taking $z = D_{i,i}$; then $\E[z] = \sum_i D_{i,i}\bra{u_i} \rho \ket{u_i} = \tr(X \rho)$, and $z$ is always bounded by $\max_i \abs{D_{i,i}} = \norm{X}$.
By a Chernoff bound we conclude that averaging $\bigO{\frac{1}{\eps^2}\log\frac{1}{\delta}}$ independent copies of this estimator gives the desired error bound.
\end{proof}

We wish to estimate $\tr(X\rho)$ for many different Pauli observables $X$, so we could simply run \cref{lem:estimator-helper} for each one.
However, we can use a classical shadows-like procedure~\cite{hkp20} to estimate them all at once.

\begin{lemma}[Computing expectations of observables of Gibbs states]\label{lem:basic-estimator}
    Let $H = \sum_{a=1}^\terms \lambda_a E_a$ be a $\locality$-local Hamiltonian on $\qubits$ qubits whose dual interaction graph $\graph$ has maximum degree $\degree$.
    Consider the set of $\ell$-$\graph$-local Paulis, $\locals_{\ell}$.
    There is a quantum algorithm that, with probability $\geq 1-\delta$, outputs estimates of $ \tr(P_1P_2P_3 \rho)$ to $\eps$ error for all $P_1, P_2 , P_3 \in \locals_\ell$.
    This algorithm uses $t = \bigO{\frac{4^{3\ell}}{\eps^2}\log\frac{\abs{\locals_\ell}^3}{\delta}}$ samples, $\bigO{\qubits t}$ quantum gates, and $\poly( t, \qubits, \abs{P_\ell} )$ classical post-processing time.
    
\end{lemma}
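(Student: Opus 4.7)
The plan is to run a classical-shadows style protocol on the copies of $\rho$ a single time, and then post-process to estimate every desired triple expectation. First, I would observe that for any triple $P_1, P_2, P_3 \in \locals_\ell$, by the Pauli multiplication rules (\cref{def:paulis}), the product $P_1 P_2 P_3$ is, up to a scalar in $\{\pm 1, \pm \ii\}$, again a tensor product of Pauli matrices with $\supp(P_1P_2P_3) \subseteq \supp(P_1) \cup \supp(P_2) \cup \supp(P_3)$ of size at most $3\ell$. This reduces the task of estimating $\tr(P_1P_2P_3\rho)$ for all $\abs{\locals_\ell}^3$ triples to estimating $\tr(P\rho)$ simultaneously for all Paulis $P$ of support at most $3\ell$ (with a known phase correction to extract a real-valued quantity).

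Second, I would use a random Pauli basis measurement scheme: for each copy of $\rho$, independently sample, for each qubit, a uniformly random Pauli basis from $\{\sigma_x, \sigma_y, \sigma_z\}$, then measure, recording the $\{\pm 1\}$ outcomes. Given a target Pauli $P$ with support $S$, $\abs{S} \leq 3\ell$, this snapshot produces an unbiased estimator of $\tr(P\rho)$ in the standard way: on each copy, output $3^{\abs{S}}$ times the product of measured signs on $S$ whenever the random basis agrees with $P$ on every site in $S$ (which occurs with probability $3^{-\abs{S}}$), and output $0$ otherwise. Each per-sample estimator is bounded in magnitude by $3^{3\ell}$, so a median-of-means scheme using $t = \bigO{4^{3\ell}\log(\abs{\locals_\ell}^3/\delta)/\eps^2}$ samples, combined with a union bound over the $\abs{\locals_\ell}^3$ triples, delivers simultaneous $\eps$-accuracy with failure probability at most $\delta$. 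The loose base $4$ in place of $3$ comfortably absorbs all hidden constants.

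The quantum cost per sample is $\bigO{\qubits}$ single-qubit gates (the basis-rotation unitaries before a computational-basis measurement), giving $\bigO{\qubits t}$ total gates. Post-processing for a triple $(P_1, P_2, P_3)$ amounts to forming the Pauli $P_1P_2P_3$ (in $\bigO{\qubits}$ time), reading off the relevant entries of each snapshot, and averaging; this fits in $\poly(t, \qubits, \abs{\locals_\ell})$ time. There is no substantive obstacle: the only content is the combinatorial observation that products of three $\ell$-$\graph$-local Paulis remain Paulis of support at most $3\ell$, which is exactly what lets us re-use a single shadow dataset for every triple instead of paying $\abs{\locals_\ell}^3$ in sample complexity.
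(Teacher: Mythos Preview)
Your proposal is correct and essentially mirrors the paper's proof: both reduce to the observation that $P_1P_2P_3$ is (up to a phase) a Pauli of support at most $3\ell$, then use a random-Pauli-basis measurement scheme and a union bound over the $\abs{\locals_\ell}^3$ targets. The only cosmetic differences are that the paper samples the basis from $\{\sigma_\id,\sigma_x,\sigma_y,\sigma_z\}^{\otimes \qubits}$ (hence the $4^{-3\ell}$ matching probability) and phrases the per-target estimation via \cref{lem:estimator-helper} on the matching subsample, whereas you use the $\{\sigma_x,\sigma_y,\sigma_z\}^{\otimes \qubits}$ shadows estimator with median-of-means; note that your justification ``bounded in magnitude by $3^{3\ell}$'' should really invoke the variance bound $\mathbb{E}[\hat o^2]=3^{\abs{S}}\le 3^{3\ell}$ (the range-only Hoeffding argument would give $9^{3\ell}$), after which $3^{3\ell}\le 4^{3\ell}$ gives the stated sample complexity.
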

\begin{proof}
Consider the following procedure, where we take as input a copy of $\rho$, rotate into the eigenbasis of a Pauli matrix $P \in \locals$ chosen uniformly at random, and then measure in the computational basis to get a measurement outcome $b \in \{0,1\}^\qubits$.
Take $t = \bigO{\frac{4^{3\ell}}{\eps^2}\log\frac{\abs{\locals_\ell}^3}{\delta}}$ of these samples $(P, b)$ and denote the collection $\mathcal{S}$.

Now, consider estimating some $\tr(X \rho)$ where $X = UDU^\dagger \in \locals$ and $\supp(X) \leq 3\ell$; $P_1 P_2 P_3$ is such an $X$, up to a root of unity.
Since
\begin{align*}
    \tr(X \rho) = \tr_{\supp(X)}(X_{(\supp(X))}\tr_{[n] - \supp(X)}(\rho)),
\end{align*}
it suffices to consider $\rho$ on the support of $X$.
Let $\mathcal{T} \subset \mathcal{S}$ be the collection of samples whose Pauli matrix agrees with $X$ on the support of $X$.
Then we can use this subsample to run \cref{lem:estimator-helper} and estimate $\tr(X \rho)$, as all that is needed to use it are measurements in a basis where $X$ is diagonal.
Since the probability that any given sample is in a basis where $X$ is diagonal is $\geq 4^{-3\ell}$, with $t$ samples we can guarantee that we get an estimate $\tr(X \rho)$ to $\eps$ error with failure probability $\delta / \abs{\locals_\ell}^3$.
By union bound, we can then conclude that with our set of samples we can get $\eps$-good estimates of every $\tr (P_1 P_2 P_3 \rho)$ with probability $\geq 1-\delta$.

The running time can be seen to be $\poly(t, \qubits, \abs{\locals_\ell})$.
Note that, for any product of Paulis $P \in \locals$, we can apply the quantum gate to rotate into its eigenbasis with $\bigO{\qubits}$ gates, so the quantum gate complexity is $\bigO{\qubits t} = \bigO{\frac{4^{3\ell}\qubits}{\eps^2}\log\frac{\abs{\locals_\ell}^3}{\delta}}$.
\end{proof}
\section{Algorithm and analysis}
\label{sec:algorithm-and-analysis}

Now we are ready to present our learning algorithm.
We solve the following Hamiltonian learning problem:

\begin{problem}[Hamiltonian learning] \label{prob:main}
    Recalling the set-up described in \cref{def:hamiltonian,def:low-insersection-ham}, let $H = \sum_{a=1}^\terms \lambda_a E_a \in \mathbb{C}^{\dims \times \dims}$ be a $\locality$-local Hamiltonian on $\qubits$ qubits whose terms $E_a$ are known, distinct, non-identity Pauli operators, and whose coefficients $\lambda_a \in \R$ satisfy $\abs{\lambda_a} \leq 1$.
    Let $\degree$ denote the maximum degree of a vertex in the dual interaction graph associated to the Hamiltonian.
    Given $\eps, \delta > 0$, along with copies of the Gibbs state $\rho = \frac{\exp(-\beta H)}{\tr \exp(-\beta H)}$ corresponding to $H$ at a known inverse temperature $\beta > 0$, find estimates $\hat{\lambda}_a$ such that, with probability $\geq 1-\delta$, $\parens{\hat{\lambda}_a - \lambda_a}^2 \leq \eps^2$ for all $a \in [\terms]$.
\end{problem}

\begin{theorem}[Efficiently learning a quantum Hamiltonian]
\label{thm:main-ham-learning-theorem}
    Let $H = \sum_{a=1}^\terms \lambda_a E_a \in \mathbb{C}^{\dims \times \dims}$ be a $\locality$-local Hamiltonian on $\qubits$ qubits whose dual interaction graph has degree $\degree$ (as in \cref{prob:main}).
    Suppose we are given the terms $\{E_a\}_{a \in [\terms]}$, $\eps > 0$, $\delta > 0$, and copies of the Gibbs state $\rho$ at a known inverse temperature $\beta > 0$.
    Then there exists an algorithm that can output estimates $\tilde{\lambda}_a$ such that, with probability $\geq 1-\delta$, $\parens{\hat{\lambda}_a - \lambda_a}^2 \leq \eps^2$ for all $a \in [\terms]$.
    This algorithm uses
    \begin{equation*}
        \bigO[\Big]{(\terms^6/\eps^{e^{f(\locality, \degree)\beta}})\log(\terms/\delta) + (f(\locality, \degree)/(\beta^2\eps^2))\log(\terms/\delta)}
    \end{equation*}
    copies of the Gibbs state and running time
    \begin{equation*}
        \poly\Paren{ m,\log(1/\delta) }\cdot \Paren{1/\eps}^{e^{f(\locality,\degree)\beta}} + f(\locality, \degree)\cdot (\terms/(\beta^2\eps^2))\log(\terms/\delta),
    \end{equation*}
    where $f(\locality, \degree)$ is a positive function depending only on $\locality$ and $\degree$.
\end{theorem}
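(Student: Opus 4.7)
The plan is to algorithmically implement the strategy laid out in the technical overview. First, restrict attention to the low-temperature regime $\beta \geq \beta_c$, since for $\beta < \beta_c$ we can directly invoke \cite{hkt21}. Invoke \cref{lem:basic-estimator} to estimate $\tr(P_1 P_2 P_3 \rho)$ for every triple $P_1, P_2, P_3 \in \locals_{k\ell}$ to additive precision $\eps'$, where $\ell = 2^{\bigO{\beta}}\log(1/\eps)$ and $\eps'$ is inverse polynomial in $\terms$ and $1/\eps$; this accounts for both the sample complexity and the only quantum step of the algorithm. Set up the polynomial constraint system \cref{eqn:poly-constraint-system} in the indeterminates $\lambda_a'$, using the flat approximation $q = q_{k,\ell}$ from \cref{def:iterative-truncated-taylor-series} (with flatness parameter $\eta$ sufficiently small relative to the locality constants). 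Expanding $\tr(Q\, q(H' \mid P)\rho)$ via \cref{lem:commutator-to-polynomial,def:polynomial-nested-commutator}, each constraint becomes a low-degree polynomial in $\lambda'$ whose coefficients are linear combinations of the measured traces.

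Next, argue \emph{feasibility}: the true coefficients $\lambda$ satisfy the constraints up to the sampling noise. Here the essential tool is \cref{lem:akl}, which gives $|v_i^\dagger P v_j| \leq e^{-\Omega(|\sigma_i - \sigma_j|)}$ in the eigenbasis of $H$. Split the computation of $\tr(Q e^{-\beta H} P e^{\beta H} \rho) - \tr(Q\,q(H \mid P)\rho)$ into energy windows $|\sigma_i - \sigma_j| \leq \beta$ and $|\sigma_i - \sigma_j| > \beta$: on the first window, \cref{thm:exp-fancy-approx} gives a pointwise $\eps$-approximation; on the second, the subexponential growth of the flat approximation is dominated by the exponential off-diagonal decay of $P, Q$. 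This shows that for a suitable choice of parameters, the true coefficients satisfy the relaxed constraint $|\tr(Q\,q(H \mid P)\rho) - \tr(PQ\rho)| \leq \eps$, and hence that the system is feasible.

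The heart of the proof is an \emph{identifiability} argument in the sum-of-squares system: using the constraints together with the basic SoS toolkit (\cref{fact:operator_norm,fact:sos-almost-triangle,fact:sos-holder,fact:nonnegative-quadratic}), I derive $\sststile{}{} \{(\lambda_a' - \lambda_a)^2 \leq \eps^2\}$ for each $a$. The witness construction starts from the observation that if $H' - H$ has a large component along some $E_a$, then by monotonicity (\cref{thm:exp-monotone-approx}) there exist local $P, Q$ for which $\tr(Q\, q(H' \mid P)\rho) - \tr(Q\, q(H \mid P)\rho)$ dominates an SoS expression that is positive in $H' - H$; combined with the \emph{no small marginals} bound \cref{aaks-marginals,lem:large-marginals}, this removes the $\rho$ factor and lower bounds $(\lambda_a' - \lambda_a)^2$. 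The manipulations between matrix expressions and polynomial identities in two variables $(x,y)$ corresponding to $H$ and $H'$ are carried out via \cref{thm:polynomial-equivalence}, which introduces only controlled error terms involving $[H, H']$; these are killed by adding constraints of the form $\tr(R[H', S]\rho) \approx \tr(R[H, S]\rho)$ to the system so that the commutator $[H' - H, \cdot]$ is small in a pseudo-distribution sense. The hardest step technically is checking that every manipulation in this identifiability chain admits a bounded SoS representation with the right degree — this is exactly why we needed the $(k, d, C)$-boundedness of \cref{def:sos-bounded-polynomial} and the careful coefficient bookkeeping in \cref{thm:polynomial-equivalence,thm:exp-monotone-approx}.

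Finally, solve a degree-$\bigO{d}$ SoS relaxation of the polynomial system with $d = \log(1/\eps)\cdot 2^{\bigO{\beta}}$ using \cref{fact:eff-pseudo-distribution}, obtaining a pseudo-distribution $\mu$. Output $\hat{\lambda}_a = \pexpecf{\mu}{\lambda_a'}$. By soundness (\cref{fact:sos-soundness}) and pseudo-expectation Cauchy--Schwarz (\cref{fact:pseudo-expectation-cauchy-schwarz}), $(\hat{\lambda}_a - \lambda_a)^2 \leq \pexpecf{\mu}{(\lambda_a' - \lambda_a)^2} \leq \eps^2$. The naive runtime $\terms^{\bigO{d}}$ is far too large; the main obstacle to the stated complexity is to show that the identifiability proof only uses a \emph{sparse} set of degree-$2d$ monomials (essentially those indexed by connected clusters in the dual interaction graph, enumerated by \cref{lem:cluster-expansion,lem:count-monomials,coro:number-of-local-paulis}), after which the Steurer--Teigel linearization~\cite{st21} yields a semi-definite program of size $\poly(\terms)\cdot(1/\eps)^{2^{\bigO{\beta}}}$, as claimed.
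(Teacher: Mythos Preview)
Your proposal is correct and mirrors the paper's approach: feasibility (\cref{lem:feasibility}), an SoS proof that $[H,H']$ is small (\cref{lem:sos-almost-commuting}), SoS identifiability (\cref{lem:sos-identifiability}) via monotonicity and the no-small-marginals bound, rounding via pseudo-expectation, and the Steurer--Tiegel speedup (\cref{thm:solve-limited-sos-system}) applied after counting the relevant cluster monomials. The one detail to adjust is the form of the commutator constraint: the paper uses $|\wt{\tr}(A_1A_2(H'\rho - \rho H'))|^2 \leq \eps_0^2$, which is implementable (since $[H,\rho]=0$ makes it a polynomial in $\lambda'$ with measurable coefficients), whereas your stated form $\tr(R[H',S]\rho) \approx \tr(R[H,S]\rho)$ would require knowing the unknown $H$ on the right-hand side.
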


Specifically, we prove that, when $\beta > g(\locality, \degree)$ for some positive $g$, there exists an algorithm that succeeds with sample complexity $(\terms^6/\eps^{e^{f(\locality, \degree)\beta}})\log(\terms/\delta)$ and time complexity $\poly\Paren{ m,\log(1/\delta) }\cdot \Paren{1/\eps}^{e^{f(\locality,\degree)\beta}}$ for some positive $f(\locality, \degree)$.
We get \cref{thm:main-ham-learning-theorem} by taking $g$ to be the threshold at which the high-temperature algorithm~\cite{hkt21} stops working and then combining our complexity bounds with that of the high-temperature algorithm.
That $\beta$ is lower bounded by a constant is just a simplification; the same algorithm and analysis should work for any temperature upon appropriate adjustments.

We are concerned with the setting where $\locality$ and $\degree$ are constant (that is, when the Hamiltonian is low-intersection), so we do not try to optimize dependence on these parameters.
Further, we assume that $\qubits = \bigO{\terms}$; this is without loss of generality, as $\terms \geq \qubits / \locality$ is necessary for the support of the Hamiltonian to be all $\qubits$ qubits.

Our main theorem follows from showing that a low-degree sum-of-squares relaxation of a carefully chosen polynomial system can be rounded to obtain accurate estimates of the true coefficients $\{\lambda_a\}_{a \in [\terms]}$.
We first set up the system, and then build up to the full algorithm (\cref{algo:learning-arbitrary-hams}).

\paragraph{Presenting the polynomial system.}
We begin by setting up the precise polynomial system that we will then solve.  Let $\eps_0 = \frac{\eps^{10^{C_{\locality,\degree}\beta}}}{\terms^3}$ for some sufficiently large constant $C_{\locality,\degree}$ depending only on $\locality$ and $\degree$.  Next, let $\ell_0 = 2^{C_{\locality,\degree}\beta}  \log(1/\eps), \ell_1 = 4\locality$ and define $\calA = \locals_{4^{C_{\locality,\degree} \beta } \ell_0}$ and $\calB = \locals_{\ell_1}$ (as in \cref{def:dual-interaction-graph}).

Let $\lambda' = [\lambda_1', \lambda_2', \ldots, \lambda_\terms']$ be a set of indeterminates.  We will solve a polynomial system in these indeterminates, and then use it to extract estimates to the true coefficients.
In the system, and throughout, we denote $H' = \sum_{a} \lambda'_a E_a$.
This object $H'$ is merely notational convenience: we do not optimize with this exponential sized object. We only ever need to work with traces of $H'$ and thus our polynomial system admits a succinct (polynomial sized) representation. 

We begin by making measurements of $\rho$ to construct $\eps_0$-accurate estimates to $\tr\left(A_1A_2A_3\rho \right)$  for all $A_1, A_2 , A_3 \in \calA$ using \cref{lem:basic-estimator}.
Let $\wt{\tr}$ denote the map from $A_1A_2A_3\rho$ to our estimate of $\tr(A_1A_2A_3\rho)$, extended by linearity to linear combinations of such matrices that we have estimated.\footnote{
    To make this well-defined, we need to ensure that we only estimate a set of $A_1 A_2 A_3$ that are linearly independent.
    This simply involves removing duplicates up to phase.
}
Again, this is notational convenience for our polynomial system.

By \cref{coro:number-of-local-paulis}, $|\calA| \leq \terms(10\degree)^{4^{C_{\locality,\degree} \beta }\ell_0} \leq \terms (1/\eps)^{10^{C_{\locality,\degree}\beta}}$ so we can produce these estimates in running time $\poly_{\beta, \locality, \degree}(\terms,1/\eps)$ and sample complexity
\begin{equation}
    \bigO[\Bigg]{\frac{4^{3 \cdot 4^{C_{\locality, \degree} \beta}\ell_0}}{\eps_0^2}\log\frac{\abs{\locals_{4^{C_{\locality, \degree} \beta}\ell_0}}}{\delta}}
    = \bigO[\Bigg]{\frac{\terms^6}{\eps^{20^{C_{\locality, \degree}\beta}}}\log\frac{\terms}{\delta}}.
\end{equation}
This is the sample complexity of the entire algorithm, as this is all of the information we need of $\rho$.
Then, we write the following polynomial system:

\newcommand{\constraintsystem}{
    \calC_{\lambda'} =
        \left \{\begin{aligned}
            & \forall a \in [\terms] & -1 \leq \lambda_a' &\leq 1 \\
            & & H' = \sum_{a \in [\terms] } \lambda_a' \cdot E_a &\\
            & \forall A \in \calA &  \abs{\wt{\tr}\Paren{ A \Paren{ H'\rho - \rho H' } } }^2  &\leq \eps_0^2 \\
            & \forall B_1, B_2 \in \calB,
            &  \abs*{\wt{\tr}\Paren{ B_2  q_{C_{\locality,\degree}\beta, \ell_0}\Paren{-\beta H'| B_1} \rho }  -\wt{\tr}\Paren{ B_1B_2\rho }}^2
            & \leq   \eps^2 \\
        \end{aligned}\right \},
}
\begin{equation}\label{eqn:const-system}
    \constraintsystem
\end{equation}
where $q$ is the polynomial from \cref{def:iterative-truncated-taylor-series} extended to commutators using~\cref{def:polynomial-translation}.
With this set of parameters, by \cref{thm:exp-fancy-approx},
\begin{gather}
    \deg(q_{C_{\locality,\degree}\beta, \ell_0}) \leq 2 \cdot 4^{C_{\locality,\degree}\beta}\log(1/\eps) + 1 \text { and} \label{eq:q-main-degree}\\
    q_{C_{\locality,\degree}\beta, \ell_0} \text{ is a } \parens[\Big]{0.001 \cdot 2^{C_{\locality,\degree}\beta} \log(1/\eps) , \frac{5}{C_{\locality,\degree}\beta} , 0.001\eps}\text{-flat exponential approximation.} \label{eq:q-main-flatness}
\end{gather}

\paragraph{Representing the polynomial system.}
It is important to understand how the above can be represented as a polynomial system in the $\lambda'$ with real coefficients and polynomial (in $\terms$) size.  The key is that we can write the expressions inside the traces in the form $\left(\sum_{\alpha} (\lambda')^{\alpha} M_{\alpha}\right) \rho$ where $\alpha$ ranges over low-degree monomials and $M_{\alpha}$ are rescaled Pauli matrices, which we can explicitly and succinctly represent.  Then by linearity of $\wt{\tr}$, we can write the system by plugging in our estimates for $\wt{\tr}(M_{\alpha}\rho)$.  For example, for the first constraint, we can write
\[
    \wt{\tr}(A_1A_2 (H' \rho - \rho H')) = \wt{\tr}( (A_1A_2H' - H'A_1A_2)\rho) = \sum_{a} \lambda_a' (\wt{\tr}(A_1A_2E_a\rho) - \wt{\tr}( E_aA_1A_2 \rho ) )\,.
\]
We can then plug in our estimates $\wt{\tr}(A_1A_2E_a\rho)$ and $\wt{\tr}(E_aA_1A_2\rho)$ obtained from \cref{lem:basic-estimator}.
These trace estimates may be complex-valued, so that $\wt{\tr}(A_1A_2E_a\rho) - \wt{\tr}( E_aA_1A_2 \rho ) = \psi_a + \ii \chi_a $ for some $\psi_a, \chi_a \in \R$.
We can rewrite this,
\begin{align*}
    \abs[\Big]{ \sum_{a} \lambda_a' (\chi_a + \ii \psi_a) }^2 &\leq \eps_0^2 \\
    \iff
    \parens[\Big]{ \sum_{a} \lambda_a' \chi_a }^2 + \parens[\Big]{\sum_a \lambda_a' \psi_a}^2 &\leq \eps_0^2,
\end{align*}
revealing that we have a polynomial constraint in $\lambda'$ with real coefficients.  For the second constraint, $q_{C_{\locality,\degree}\beta, \ell_0}\Paren{-\beta H'|B_1}$ is a linear combination of nested commutators $[E_{a_\ell},[E_{a_{\ell-1}}, [\dots[E_{a_1}, B_1]]]]$ where $\ell \leq 2 \cdot 4^{C_{\locality,\degree}\beta}\log(1/\eps) + 1$, so by \cref{lem:pauli-commutator}, such a commutator is an element of $\calA$.
The coefficient associated with this nested commutator is $(-\beta)^\ell\lambda_{a_\ell}'\dots \lambda_{a_1}'$, so that for some polynomials $p_A(\lambda')$ we can write
\[
    \wt{\tr}\Paren{ B_2  q_{C_{\locality,\degree}\beta, \ell_0}\Paren{-\beta H'| B_1} \rho } = \sum_{A \in \calA} p_A(\lambda') \wt{\tr}\left( B_2A \rho \right).
\]
Since we can plug in our estimates with $\wt{\tr}$ as before, this expression is well-defined.
We can further conclude that the associated constraint can be written as a polynomial constraint in $\lambda'$ with real coefficients.
Recall that $|\calA| \leq  \terms (1/\eps)^{10^{C_{\locality,\degree}\beta}}$.
Furthermore, by \cref{lem:count-monomials}, there are at most $\terms(10\degree)^{4^{C_{\locality,\degree}\beta }\ell_0} \leq \terms (1/\eps)^{10^{C_{\locality,\degree}\beta}}$ distinct monomials in the $\lambda'$ that appear in all of the $p_A$ and thus this entire representation has size $\poly(\terms, 1/\eps)$.

\begin{mdframed}
\begin{algorithm}[Learning a Hamiltonian from Gibbs states, \cref{thm:main-ham-learning-theorem}]
    \label{algo:learning-arbitrary-hams}\mbox{}
    \begin{description}
    \item[Input:] Description of $\locality$-local Hamiltonian terms $\{E_a\}_{a \in [\terms]}$ with dual interaction graph degree $\degree$; accuracy and failure probability parameters $\eps, \delta \in (0,1)$; inverse temperature $\beta > 0$; $(\terms^6/\eps^{e^{f(\locality, \degree)\beta}})\log(\terms/\delta)$ copies of the Gibbs state $\rho = \frac{\exp\Paren{-\beta H} }{ \tr\exp\Paren{-\beta H} }$ for an unknown Hamiltonian $H = \sum_a \lambda_a E_a$.
    
    \item[Operation:]\mbox{}
    \begin{enumerate}
    \item Set $\eps_0 = \frac{\eps^{10^{\const\beta} }}{m^3}$ where $\const$ is a sufficiently large constant depending only on $\locality,\degree$;
    \item Set $\ell_0 = 2^{\const\beta}  \log(1/\eps), \ell_1 = 4\locality$;
    \item Define $\calA = \locals_{4^{\const\beta} \ell_0}, \calB = \locals_{\ell_1}$;
    \item For all $A \in \calA$, compute estimates $\wt{\tr} (A \rho)$ of $\tr\Paren{ A \rho }$ to $\eps_0$ error using \cref{lem:basic-estimator};
    \item Consider the following constraint system: 
    \begin{equation*}
        \constraintsystem
    \end{equation*}
    \item Compute a degree-$\bigO{2^{\const\beta}\ell_0}$  pseudo-distribution $\mu$ consistent with $\calC_{\lambda'}$;
    \item Set $\hat{\lambda} = \pexpecf{\mu}{ \lambda' }$;
    \end{enumerate}
    \item[Output:]  $\hat{\lambda}$ such that with probability at least $1-\delta$,  $\abs{\hat{\lambda}_a - \lambda_a}^2 \leq \eps^2$ for all $a \in [\terms]$.
    \end{description}
\end{algorithm}
\end{mdframed}

\paragraph{Analyzing the polynomial system.}

We have now described the bulk of the algorithm (\cref{algo:learning-arbitrary-hams}).
To obtain \cref{thm:main-ham-learning-theorem}, we prove the following intermediate lemmas.  In light of Lemma~\ref{lem:basic-estimator}, we may assume that all of our estimates are accurate to within $\eps_0$ with high probability.
\begin{assumption}[Measuring Gibbs states]\label{assumption:accurate-estimates}
For all $A_1, A_2, A_3 \in \calA$, our estimate satisfies
\[
|\wt{\tr}( A_1A_2 A_3 \rho ) - \tr(A_1A_2A_3 \rho) | \leq \eps_0 \,.
\]   
\end{assumption} 
The first lemma states that when our estimates are accurate, the true coefficients i.e. $\lambda' = \lambda$ are a feasible solution to the system.

\begin{lemma}[Feasibility of the constraint system]
\label{lem:feasibility}
If Assumption~\ref{assumption:accurate-estimates} holds, then the constraints in the system $\calC_{\lambda'}$ in Equation \eqref{eqn:const-system} are satisfied when $\lambda' = \lambda$ (so that $H' = H$).
\end{lemma}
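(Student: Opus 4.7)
}
I will verify each of the four types of constraints in $\calC_{\lambda'}$ at $\lambda' = \lambda$, so $H' = H$. The boundedness of the $\lambda_a$'s and the definition of $H'$ are immediate from the problem assumptions. For the constraint $|\wt{\tr}(A_1A_2(H\rho-\rho H))|^2 \leq \eps_0^2$, we use the true identity $H\rho = \rho H$, which gives $\tr(A_1A_2(H\rho-\rho H)) = 0$. Expanding $H = \sum_a \lambda_a E_a$ by linearity and using that each $A_1 A_2 E_a$ is an element of $\calA$ up to a phase (by \cref{lem:pauli-commutator}, since $A_1, A_2, E_a$ are $\graph$-local Paulis), Assumption~\ref{assumption:accurate-estimates} lets us bound the measurement error of $\wt{\tr}$ on each of the $2m$ summands, with coefficients $|\lambda_a| \leq 1$. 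The resulting total error is $O(m\eps_0)$, which is dominated by $\eps_0$ for the value $\eps_0 = \eps^{10^{C_{\locality,\degree}\beta}}/m^3$ after absorbing polynomial factors into the constant $C_{\locality,\degree}$ (equivalently, by tightening the per-estimate accuracy by a polynomial factor in $m$, which does not change the sample-complexity bound).

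The substantive content is the fourth constraint. First I will use cyclicity of the trace together with $[H, e^{-\beta H}] = 0$ to establish the exact identity
\[
\tr\!\bigl(B_2\, e^{-\beta H} B_1 e^{\beta H}\, \rho\bigr) \;=\; \frac{1}{Z}\tr\!\bigl(B_2 e^{-\beta H} B_1\bigr) \;=\; \tr(B_1 B_2 \rho),
\]
so that it suffices to bound $|\tr(B_2 \Delta \rho)|$ for $\Delta := q_{C_{\locality,\degree}\beta,\ell_0}(-\beta H \mid B_1) - e^{-\beta H} B_1 e^{\beta H}$, and then to absorb the additional measurement error. For the latter, I expand $q(-\beta H\mid B_1)$ as a linear combination of nested commutators $[E_{a_\ell},[\dots,[E_{a_1},B_1]\dots]]$, each of which lies in $\calA$ by \cref{lem:pauli-commutator}, with coefficients whose magnitudes are bounded by the $(d,1)$-boundedness of $q$ from \cref{claim:simple-coeff-bound} combined with the multinomial expansion of $H^\ell$; applying \cref{lem:count-monomials} gives a polynomial-in-$\terms$ bound on the number of effective monomials, so the measurement error is at most $\poly(\terms, (1/\eps)^{2^{O(\beta)}})\cdot \eps_0 \ll \eps$ by our choice of $\eps_0$.

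The main task, and the main obstacle, is bounding $|\tr(B_2 \Delta \rho)|$ by $O(\eps)$. I will work in the eigenbasis of $H$, where by \cref{lem:commutator-to-polynomial} the entry $(\Delta)_{ij} = (B_1)_{ij}\,f(\sigma_i - \sigma_j)$ with $f(z) = q(-\beta z) - e^{-\beta z}$, and $\rho$ is diagonal with $\rho_{jj} = e^{-\beta\sigma_j}/Z$. Decompose $B_1 = \sum_{I_1,I_2} \Pi_{I_1} B_1 \Pi_{I_2}$ and similarly for $B_2$, where $I_1, I_2$ range over a partition of $\R$ into unit-width intervals, and split the resulting double sum into a \emph{near} region with $|I_1 - I_2| \le \kappa/\beta$ and a \emph{far} region otherwise, where $\kappa = 0.001\cdot 2^{C_{\locality,\degree}\beta}\log(1/\eps)$ is the flatness width from \eqref{eq:q-main-flatness}. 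On the near region, \eqref{eq:q-main-flatness} gives $|f(\sigma_i-\sigma_j)| \le 0.001\eps$ pointwise, so the contribution is bounded by $0.001\eps\cdot \|B_2\|\|B_1\| \cdot \tr(\rho) = O(\eps)$. On the far region, \cref{lem:akl} gives $\|\Pi_{I_1}B_r \Pi_{I_2}\| \le e^{-c|I_1-I_2|+c'}$ for some constant $c = c(\locality,\degree) > 0$, while the flat property yields $|f(z)| \le 2\,\max(1,e^{-\beta z})\,e^{\eta\beta|z|}$; the key cancellation is that when the $\max$ equals $e^{-\beta z}$ (i.e.\ $\sigma_i < \sigma_j$), pairing with $\rho_{jj}$ yields $e^{-\beta(\sigma_i-\sigma_j)}\rho_{jj} = \rho_{ii}$, which is harmless, while the residual growth $e^{\eta\beta|z|}$ is dominated by the combined AKL decay $e^{-2c|z|}$ as long as $\eta\beta = 5/C_{\locality,\degree} < 2c$. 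Choosing $C_{\locality,\degree}$ large enough ensures this, and the far-region contribution decays geometrically in $|I_1-I_2|$ and sums to at most $\eps$. The hardest step is thus this balance between the flat-approximation exponent $\eta$ and the AKL decay rate; once this is set up, the two bounds combine to give $|\tr(B_2\Delta\rho)| = O(\eps)$, and the fourth constraint is verified.
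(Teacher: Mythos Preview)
Your proposal is correct and follows essentially the same route as the paper: the commutator constraint reduces to sampling error via $[H,\rho]=0$, and the polynomial-approximation constraint is handled by writing $\tr(B_2\,q(-\beta H\mid B_1)\rho)-\tr(B_1B_2\rho)$ in the eigenbasis of $H$, splitting into near and far eigenvalue gaps, and balancing the flat-approximation tail $e^{\eta\beta|z|}$ against the AKL decay of $B_1,B_2$. The paper packages these as Lemmas~7.1--7.3 and uses level sets of width $L=\kappa/(3\beta)$ rather than unit intervals, but the mechanism is identical; one small slip is your appeal to \cref{lem:pauli-commutator} for the third constraint---you actually just need that $E_a\in\calA$ so that $\wt{\tr}(A_1A_2E_a\rho)$ is among the estimated quantities.
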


Next, we need to prove soundness.  The proof of soundness is broken into two key steps.  First, we prove that any solution to the system must have $H' = \sum_a \lambda'_a E_a$ nearly commute with $H$.

\begin{remark}[On sum-of-squares proof degree]
All of our sum-of-squares proofs will be of degree $\log(1/\eps)^{2^{f(\locality,\degree)\beta} }$.
For the sake of brevity, we omit the precise degrees from all subsequent statements.
\end{remark}

\begin{lemma}[SoS finds an \textit{almost-commuting} state]
\label{lem:sos-almost-commuting}
    Assume that \cref{assumption:accurate-estimates} holds.  Let $H' = \sum_a \lambda'_a E_a$ and write $\ii[H,H'] = \sum \gamma_b F_b$ for some $2\locality$-$\graph$-local terms $F_b$ where each of $\gamma_b$ are linear expressions in the $\lambda_a'$ (such a representation exists by Lemma~\ref{lem:pauli-commutator}).  Then 
    \begin{equation*}
        \Set{\calC_{\lambda'} } \sststile{}{\lambda'} \Set{ -e^{\calO_{\locality,\degree}(\beta)} \terms^{1.5}\sqrt{\eps_0} \leq \gamma_b  \leq e^{\calO_{\locality,\degree}(\beta)} \terms^{1.5}\sqrt{\eps_0} } \,.
    \end{equation*}
\end{lemma}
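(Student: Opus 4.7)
The plan is to bound $\tr((\ii[H, H'])^2 \rho)$ in the SoS proof system and then apply \cref{aaks-marginals} (the ``no small local marginals'' corollary) to the $2\locality$-$\graph$-local operator $\ii[H, H'] = \sum_b \gamma_b F_b$. This latter result yields $\tr((\ii[H, H'])^2 \rho) \geq c^{\calO_{\locality, \degree}(\beta)} \max_b \gamma_b^2$ as an SoS-provable quadratic inequality, so it suffices to prove in SoS a matching upper bound of the form $\tr((\ii[H, H'])^2 \rho) \lesssim e^{\calO(\beta)}\, m^3\, \eps_0$. Taking square roots via \cref{fact:squared-value-to-magnitude} then gives the claim.

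The key algebraic fact I would exploit is $[H,\rho] = 0$, which as an SoS identity (where $\lambda_a$ are constants and $\lambda'_a$ are the indeterminates) implies $\tr([X, H]\rho) = 0$ for any operator $X$. Taking $X = A H'$ produces the useful derivative-like identity $\tr(A [H, H'] \rho) = \tr([A, H] H' \rho)$. Specializing $A = H'$ yields $\tr(H'[H, H']\rho) + \tr([H, H']H'\rho) = 0$, and combining with cyclicity $\tr([H, H']\rho H') = \tr(H'[H, H']\rho)$ gives the clean identity $\tr([H, H'][H', \rho]) = -2\, \tr(H'[H, H']\rho)$.

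Next, to bound this quantity by the first type of constraint in $\calC_{\lambda'}$, apply the constraint with $A_1 = E_a$ and $A_2 = E_c$ and sum with coefficients $\lambda'_a \lambda_c$ (both bounded by $1$): by SoS Cauchy--Schwarz over the $m^2$ terms, we get $\wt{\tr}(H'H[H', \rho])^2 \lesssim m^4 \eps_0^2$. But the operator-level product rule $[H'H, H'] = H'[H, H'] + [H', H']H = H'[H, H']$ identifies $\wt{\tr}(H'H[H', \rho])$ with $\wt{\tr}(H'[H, H']\rho)$, so combined with the identity from the previous paragraph we obtain $\wt{\tr}([H, H'][H', \rho])^2 \lesssim m^4 \eps_0^2$ in the SoS proof system.

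The hard part will be bridging this bound on $\tr([H, H'][H', \rho])$ to the sought-after bound on $\tr((\ii[H, H'])^2 \rho)$. Heuristically, in the energy eigenbasis of $H$ with eigenvalues $\{d_i\}$, the former equals $\sum_{ij}(d_i - d_j)\,|H'_{ij}|^2\,(\rho_{ii} - \rho_{jj})$ while the latter equals $\sum_{ij}(d_i - d_j)^2\,|H'_{ij}|^2\,\rho_{ii}$. In the low-frequency regime $|d_i - d_j| \lesssim \beta^{-1}$, a Taylor expansion of $\rho_{ii} - \rho_{jj} \approx -\beta(d_i - d_j)\rho_{ii}$ relates the two expressions with a factor of $\beta$, while in the high-frequency regime the AKL band-diagonality theorem (\cref{lem:akl}) ensures $|H'_{ij}|$ decays exponentially, making high-frequency contributions negligible. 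Formalizing this spectral comparison inside SoS is the main obstacle; I expect it requires leveraging the flat polynomial approximation $q_{\const\beta, \ell_0}$ (\cref{thm:exp-fancy-approx}) together with \cref{thm:polynomial-equivalence} to translate Duhamel-style integrals into low-degree commutator polynomials that can be directly manipulated in SoS, with the second constraint of $\calC_{\lambda'}$ ensuring that $q(-\beta H' \mid \cdot)$ and $q(-\beta H \mid \cdot)$ agree against $\rho$. Once this bridge gives $\tr((\ii[H, H'])^2 \rho) \lesssim e^{\calO(\beta)}\,m^3\,\eps_0$ in SoS, the AAKS step above closes out the argument.
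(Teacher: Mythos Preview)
Your overall plan is the paper's plan: bound $\tr([H,H'](H'\rho-\rho H'))$ from the commutator constraints, relate it to $\tr((\ii[H,H'])^2\rho)$, then invoke the AAKS-style marginal bound (\cref{lem:large-marginals}) to extract each $\gamma_b^2$. Your first step (paragraphs 2--3) is essentially the paper's \cref{lem:commutator-bound1}, and your last step is exactly \cref{lem:commutator-bound3}.

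The gap is in your ``hard part''. You reach for AKL band-diagonality and the flat approximation $q_{C_{\locality,\degree}\beta,\ell_0}$ to split into low/high-frequency pieces, but none of that machinery is needed. The bridge is a one-line scalar inequality valid for \emph{all} real $x$:
\[
x(1-e^{-\beta x}) \;\geq\; \frac{\beta x^2}{1+\beta|x|},
\]
which is an immediate consequence of $e^u \geq 1+u$. Working in the eigenbasis of $H$ with eigenvalues $\sigma_i$, both $\tr([H',H](H'\rho-\rho H'))$ and $\tr((\ii[H',H])^2\rho)$ expand as $\frac{1}{Z}\sum_{ij}|H'_{ij}|^2 \cdot (\cdots)$, and applying the inequality termwise with $x=\sigma_j-\sigma_i$ gives
\[
\tr([H',H](H'\rho-\rho H')) \;\geq\; \frac{\beta}{1+2\beta\|H\|}\,\tr((\ii[H',H])^2\rho).
\]
Crucially, the $\sigma_i$ and the eigenvectors of $H$ are \emph{constants} (they do not involve $\lambda'$), so both sides are quadratic polynomials in $\lambda'$; the inequality holds for all real $\lambda'$ and is therefore SoS by \cref{fact:nonnegative-quadratic}. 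With $\|H\|\leq m$ this loses only a factor $\lesssim m$, and combining with the $m^2\eps_0$-type bound on the left yields $\gamma_b^2 \leq e^{\calO_{\locality,\degree}(\beta)}m^3\eps_0$ as you wanted---no frequency splitting, no AKL, and in particular no use of the second (polynomial-approximation) constraint family in $\calC_{\lambda'}$. This last point matters architecturally: Section~\ref{sec:sos-identifiability} \emph{uses} the present lemma to control the reordering error terms (\cref{lem:commutator-coeff-bound}) that arise when manipulating precisely those constraints, so importing them here would tangle the dependency structure.
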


Using the above lemma, we can derive identifiability of each parameter in the sum-of-squares proof system:

\begin{lemma}[A sum-of-squares proof of identifiability]
\label{lem:sos-identifiability}
Suppose \cref{assumption:accurate-estimates} holds.  Given the constraint system $\calC_{\lambda'}$, for any term $a^* \in [\terms]$, 
\begin{equation*}
    \Set{\calC_{\lambda'} } \sststile{}{\lambda'} \Set{ (\lambda_{a^*} - \lambda_{a^*}')^2 \leq 2^{\const\beta} \eps }.
\end{equation*}
\end{lemma}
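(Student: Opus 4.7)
My plan is to produce an SoS proof in $\lambda'$ in three main steps: (i) combine the second constraint in $\calC_{\lambda'}$ with the feasibility of the true parameters (\cref{lem:feasibility}) to bound $|\wt{\tr}(B_2 [q(-\beta H'|E_{a^*}) - q(-\beta H|E_{a^*})]\rho)|$ by $O(\eps)$ for all $B_2 \in \calB$; (ii) promote this bound to a second-moment inequality using the monotonicity of $q$ (\cref{thm:exp-monotone-approx}) and the polynomial-to-commutator translation (\cref{thm:polynomial-equivalence}); (iii) apply \cref{lem:large-marginals} to extract $(\lambda_{a^*} - \lambda'_{a^*})^2$. Step (i) is immediate: specialize the second constraint of $\calC_{\lambda'}$ to $B_1 = E_{a^*}$, apply the triangle inequality (an SoS move), then extend by linearity to any $B_2$ that is a $\calB$-local combination with $\lambda'$-dependent coefficients.

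For step (ii), \cref{thm:exp-monotone-approx} provides the SoS polynomial identity $0.5(x-y)(1+0.25(x-y)^2)(q(x)-q(y)) - 0.00025(x-y)^2 p(x) = \sum_i s_i(x,y)^2$. I will translate this identity via \cref{thm:polynomial-equivalence} with $X = -\beta H'$, $Y = -\beta H$, $A = B = E_{a^*}$, noting that each $\tr(s_i(X,Y|E_{a^*}) s_i(X,Y|E_{a^*})^\dagger \rho)$ is manifestly SoS in $\lambda'$ (factor $\rho = \rho^{1/2}\rho^{1/2}$ and sum squared real and imaginary parts of the matrix entries). After rearrangement this yields an SoS inequality of the schematic form
\[
\tr\bigl([(x-y)^2](X,Y|E_{a^*}) \cdot p(X|E_{a^*})^\dagger \rho\bigr) \lesssim \tr\bigl([(x-y)(1+0.25(x-y)^2)](X,Y|E_{a^*}) \cdot [q(X|E_{a^*}) - q(Y|E_{a^*})]^\dagger \rho\bigr) + \text{err}.
\]
The error terms from \cref{thm:polynomial-equivalence} involve $[X, \rho]$, $[Y, \rho]$, and $[X, Y]$: $[Y, \rho] = -\beta[H, \rho] = 0$ exactly; the $[X, \rho] = -\beta[H', \rho]$ contributions are controlled by the first constraint of $\calC_{\lambda'}$; and the $[X, Y] = \beta^2[H, H']$ contributions are controlled by \cref{lem:sos-almost-commuting}. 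The choice $\eps_0 = \eps^{10^{O(\beta)}}/m^3$ absorbs all error contributions well below $\eps$, and the right-hand side above is $O(\eps)$ by applying step (i) with $B_2$ set to the $\calB$-local decomposition of $[(x-y)(1+0.25(x-y)^2)](X,Y|E_{a^*})$, combined with SoS Cauchy--Schwarz (\cref{fact:pseudo-expectation-cauchy-schwarz}).

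For step (iii), I will use the flat-approximation property of $p$ (\cref{thm:exp-fancy-approx}), cyclicity of the trace (with $E_{a^*}^2 = I$), the almost-commuting guarantee (\cref{lem:sos-almost-commuting}), and $[H, \rho] = 0$ to argue that the left-hand side above approximates $\beta^2 \tr((H-H')^2 \rho)$; contributions from the region where $p$ may diverge are controlled by the band-diagonality of $E_{a^*}$ in the eigenbasis of $H'$ (\cref{lem:akl}). Finally, \cref{lem:large-marginals} gives $\tr((H-H')^2 \rho) \geq e^{-O(\beta)}(\lambda_{a^*} - \lambda'_{a^*})^2$, closing the argument. The main obstacle will be SoS-certifying step (iii): \cref{lem:akl} is a spectral statement about fixed operators, and I will need to repackage it into an SoS-friendly quadratic inequality in $\lambda'$ in order to rigorously reduce $p(-\beta H'|E_{a^*})$-weighted expectations to $e^{-\beta H'} E_{a^*} e^{\beta H'}$-weighted ones. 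Tracking the degree and coefficient growth through \cref{thm:polynomial-equivalence} (whose degree scales with $\deg(q) = 2^{O(\beta)}\log(1/\eps)$) is delicate but should be manageable using the bounded-coefficient SoS framework (\cref{def:sos-bounded-polynomial}).
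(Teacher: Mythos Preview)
Your overall architecture matches the paper: bound the difference $q(-\beta H'|\cdot) - q(-\beta H|\cdot)$ against local witnesses using feasibility of the true parameters, translate the monotonicity SoS identity (\cref{thm:exp-monotone-approx}) via \cref{thm:polynomial-equivalence}, control the translation errors with \cref{lem:sos-almost-commuting} and the $[H',\rho]$ constraint, and finish with the AAKS variance bound. However, two specific choices in your proposal create genuine gaps that the paper handles differently.

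First, you assign $X = -\beta H'$ and $Y = -\beta H$, so that the $p(x)$ term in the monotonicity identity becomes $p(-\beta H'|E_{a^*})$, a function of the indeterminates $\lambda'$. Your step (iii) then needs the flat-approximation property of $p$ with respect to $H'$, and you correctly flag this as the main obstacle: \cref{lem:akl} applied to $H'$ is a spectral statement whose proof constants depend on $\lambda'$, and you give no mechanism for packaging it as an SoS inequality in $\lambda'$. The paper sidesteps this entirely by the opposite assignment, $x \leftrightarrow -\beta H$ and $y \leftrightarrow -\beta H'$, so that the relevant term is $p(-\beta H|B)$, a \emph{fixed} operator. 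Then $|\tr(F\,p(-\beta H|B)\rho) - \tr(BF\rho)| \leq 0.1\eps$ is a purely numerical inequality (\cref{lem:feasibility-of-poly-approx}) that can be summed against $\lambda'$-dependent coefficients, and the remaining reduction to $\tr((\ii[H-H',B])^2\rho)$ needs only the $[\rho,H']$ constraint.

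Second, your claimed endpoint $\beta^2\tr((H-H')^2\rho)$ is not what the commutator translation yields. The monomial $(x-y)^2$ maps to the nested commutator $[H-H', E_{a^*}]_2$, not $(H-H')^2$; even using $E_{a^*}^2 = I$ and cyclicity, $\tr([H-H', E_{a^*}]_2\, E_{a^*}\, \rho)$ does not collapse to $\tr((H-H')^2\rho)$ (expand it: you get cross terms like $\tr((H-H')E_{a^*}(H-H')E_{a^*}\rho)$ with the wrong sign). The paper instead arrives at $\tr((\ii[H-H',B])^2\rho) \leq O(\eps)$ for each $B \in \locals_\locality$, and then needs one more step you omit: choose $B$ with the same support as $E_{a^*}$ so that some Pauli coefficient of $\ii[H-H',B]$ equals $\pm(\lambda_{a^*}-\lambda'_{a^*})$, and only then apply \cref{lem:large-marginals} to the $2\locality$-local operator $\ii[H-H',B]$ rather than to $H-H'$ directly.
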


Combining the above lemmas, we can almost prove \cref{thm:main-ham-learning-theorem}.  We first show how to immediately get the same learning guarantee with the same sample complexity but a slightly worse running time.  Later on, we show in \cref{sec:faster-solver} how to improve this runtime and complete the proof of the stronger \cref{thm:main-ham-learning-theorem}.

\begin{theorem}[Weaker version of \cref{thm:main-ham-learning-theorem}]\label{thm:weak-ham-learning-theorem}
    Let $H = \sum_{a=1}^\terms \lambda_a E_a \in \mathbb{C}^{\dims \times \dims}$ be a $\locality$-local Hamiltonian on $\qubits$ qubits whose dual interaction graph has degree $\degree$ (as in \cref{prob:main}).
    Suppose we are given the terms $\{E_a\}_{a \in [\terms]}$, $\eps > 0$, $\delta > 0$, and copies of the Gibbs state $\rho$ at a known inverse temperature $\beta > (10(\degree+1))^{-10}$.
    Then there exists an algorithm that can output estimates $\tilde{\lambda}_a$ such that, with probability $\geq 1-\delta$, $\parens{\hat{\lambda}_a - \lambda_a}^2 \leq \eps^2$ for all $a \in [\terms]$.
    This algorithm uses
    \begin{equation*}
        \bigO[\Big]{(\terms^6/\eps^{e^{f(\locality, \degree)\beta}})\log(\terms/\delta)}
    \end{equation*}
    copies of the Gibbs state and running time
    \begin{equation*}
        \Paren{\terms/\eps}^{\log(1/\eps) \cdot \exp(f(\locality,\degree)\cdot\beta)}.
    \end{equation*}
    where $f(\locality, \degree)$ is a positive function depending only on $\locality$ and $\degree$.
\end{theorem}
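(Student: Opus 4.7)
The plan is to combine the three intermediate lemmas already stated — \cref{lem:feasibility}, \cref{lem:sos-almost-commuting}, and \cref{lem:sos-identifiability} — with the generic sum-of-squares machinery of \cref{subsec:sos-framework}, in the standard proofs-to-algorithms template. First, I would invoke \cref{lem:basic-estimator} with target error $\eps_0 = \eps^{10^{\const\beta}}/\terms^3$ and failure probability $\delta$ to obtain estimates $\wt{\tr}(A_1 A_2 A_3 \rho)$ for all $A_1,A_2,A_3 \in \calA$. Since $|\calA| \leq \terms (1/\eps)^{10^{\const\beta}}$ by \cref{coro:number-of-local-paulis}, the resulting sample complexity is $\bigO{(\terms^6/\eps^{e^{f(\locality,\degree)\beta}})\log(\terms/\delta)}$, matching the claim, and on the good event (probability $\geq 1 - \delta$), \cref{assumption:accurate-estimates} holds, so the rest of the argument can be run deterministically.

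Next, I would run the SoS algorithm on $\calC_{\lambda'}$. By \cref{lem:feasibility} the system admits $\lambda' = \lambda$ as a feasible point, and since the axis constraints $-1 \leq \lambda_a' \leq 1$ make it explicitly bounded, \cref{fact:eff-pseudo-distribution} applies. I would set the pseudo-distribution degree to $D = \log(1/\eps) \cdot 2^{f(\locality,\degree)\beta}$, large enough to absorb the degree of the identifiability proof of \cref{lem:sos-identifiability} when invoked via \cref{fact:sos-soundness}. From the discussion preceding \cref{algo:learning-arbitrary-hams}, $\calC_{\lambda'}$ has a succinct representation of size $\poly(\terms, 1/\eps)$, so the resulting semi-definite program solves in time $(\terms/\eps)^{\bigO{D}} = (\terms/\eps)^{\log(1/\eps)\exp(f(\locality,\degree)\beta)}$, which matches the advertised runtime. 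Write $\mu$ for the pseudo-distribution returned and set $\hat{\lambda}_a = \pexpec{\mu}{\lambda_a'}$.

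To finish, \cref{fact:sos-soundness} combined with \cref{lem:sos-identifiability} yields $\pexpec{\mu}{(\lambda_a - \lambda_a')^2} \leq 2^{\const\beta}\eps$ for each $a \in [\terms]$. Since $\lambda_a$ is a deterministic scalar, the elementary inequality $\pexpec{\mu}{f}^2 \leq \pexpec{\mu}{f^2}$ (the scalar case of \cref{fact:pseudo-expectation-cauchy-schwarz}) applied to $f = \lambda_a - \lambda_a'$ gives $(\hat{\lambda}_a - \lambda_a)^2 \leq 2^{\const\beta}\eps$. To upgrade this to the target error $\eps^2$, I would simply rerun the algorithm with accuracy parameter $\eps^2 / 2^{\const\beta}$; this perturbs $f(\locality,\degree)$ only by an additive constant, so both complexity bounds are preserved. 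The real difficulty of the theorem lies entirely upstream — in the construction of $\calC_{\lambda'}$ and the SoS identifiability argument of \cref{lem:sos-identifiability} — while the assembly here is essentially bookkeeping: confirming the polynomial-size representation of the system, checking that the degree budget $D$ accommodates the proofs of \cref{lem:sos-almost-commuting} and \cref{lem:sos-identifiability}, and absorbing bit-complexity issues into the approximate-satisfaction slack of \cref{rmk:tedium}.
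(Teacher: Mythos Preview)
Your proposal is correct and follows essentially the same route as the paper's proof: invoke \cref{lem:basic-estimator} to obtain the estimates (yielding the sample bound and \cref{assumption:accurate-estimates}), use \cref{lem:feasibility} for feasibility, solve the degree-$\bigO{2^{\const\beta}\log(1/\eps)}$ SoS relaxation via \cref{fact:eff-pseudo-distribution}, apply \cref{lem:sos-identifiability} through \cref{fact:sos-soundness}, bound $(\hat\lambda_a - \lambda_a)^2 \leq \pexpec{\mu}{(\lambda_a - \lambda_a')^2}$ via pseudo-expectation Cauchy--Schwarz, and finally rescale $\eps$. The paper's proof does exactly this, with the same degree choice, the same $\eps \mapsto (\eps/2^{\const\beta})^2$ rescaling, and the same runtime accounting.
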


\begin{proof}[Proof of Theorem~\ref{thm:weak-ham-learning-theorem}]

Given $\bigO{(\terms^6/\eps^{e^{f(\locality, \degree)\beta}})\log(\terms/\delta)}$ copies of the Gibbs state $\rho$, it follows from \cref{lem:basic-estimator} that with probability at least $1-\delta$, \cref{assumption:accurate-estimates} holds.
Conditioning on this event, it follows from \cref{lem:feasibility} that the constraint system $\calC_{\lambda'}$ is feasible. Given a degree $t = \Omega\Paren{ 2^{\const \beta} \log(1/\eps) } = \Omega\Paren{ 2^{\const \beta} \log(1/\eps) }$ pseudo-distribution $\mu$, where $\const$ is a fixed universal constant that only depends on $\locality, \degree$,  it follows from \cref{lem:sos-identifiability} (and redefining $\eps$ appropriately as $(\eps/2^{\const\beta})^2$) that for all $a \in [\terms]$,
\begin{equation*}
      \Paren{\lambda_{a^*} - \pexpecf{\mu}{\lambda'_{a^*} }}^2 \leq \pexpecf{\mu}{ \Paren{\lambda_{a^*} - \lambda'_{a^*} }^2 } \leq \eps^2 .
\end{equation*}

The running time is dominated by computing a degree-$t$ pseudo-distribution over $\terms$ indeterminates and $\abs{\calA} + \abs{\calB} \leq  \terms^{\mathcal{O}(1) } \Paren{ 1/\eps }^{\exp(\bigO{ \const \beta })}$ many constraints. It follows from~\cref{fact:eff-pseudo-distribution} that the overall running time is at most 
\begin{equation*}
    \Paren{ \terms/\eps }^{ \log(1/\eps) \exp(\bigO{\const \beta } )},
\end{equation*}
which concludes the proof. 
\end{proof}

\begin{remark}[Measurements can be made local] \label{rmk:local}
    The algorithm as stated requires knowing estimates to three-body observables as in \cref{assumption:accurate-estimates}.
    However, we actually only need single-body observables of the form $\tr(A \rho)$ where $A \in \locals_{10^{C_{\locality, \degree} \beta} \log(1/\eps)}$.
    This suffices for the constraints involving $\wt{\tr}\parens{ A \Paren{ H'\rho - \rho H' } } = \wt{\tr}([A, H'] \rho)$, since $[A, H']$ can be expanded into such observables.
    As for the constraints involving $\wt{\tr}\Paren{ B_2  q_{C_{\locality,\degree}\beta, \ell_0}\Paren{-\beta H'| B_1} \rho }  -\wt{\tr}\Paren{ B_1B_2\rho }$, we only need those constraints which are used in the proof of \cref{lem:sos-identifiability}.
    This set of constraints can be inferred from the proof of \cref{lem:sos-test-functions}, where we take $B_1 \in \locals_{\locality}$ and $B_2$ to be a Pauli appearing in the decomposition of $[H - H', B_1] + 0.25[H - H',B_1]_3$.
    This means that we only need constraints where $B_2$ and $B_1$ are spatially close together, and therefore we only need estimates to local observables.
\end{remark}
\section{A proof of feasibility (\texorpdfstring{\cref{lem:feasibility}}{Lemma 6.4})}

Our goal in this section is to prove \cref{lem:feasibility}, which states that our constraint system \cref{eqn:const-system} is satisfied by the ground truth, $\lambda' = \lambda$, when \cref{assumption:accurate-estimates} is satisfied.
Since we are working with the \emph{ground truth} Hamiltonian $H$, we can relax, as our proofs need not be in the sum-of-squares system.

We first use the assumption to bound the error of replacing $\wt{\tr}$ with $\tr$ in the constraints of the sum-of-squares system.
This constraint error comes from the sampling error of our estimates of observables in \cref{sec:accessing-gibbs-states}.
\begin{lemma}\label{lem:error-in-constraints}
    When \cref{assumption:accurate-estimates} holds, for all $B_1, B_2 \in \calB$,
    \begin{align*}
        \abs{ \wt{\tr}(B_1B_2\rho) - \tr(B_1B_2\rho) } &\leq \tfrac12\eps \\
        \abs{ \wt{\tr}(B_2q_{\const\beta, \ell_0}(-\beta H| B_1) \rho) - \tr(B_2q_{\const\beta, \ell_0}(-\beta H| B_1) \rho) } & \leq \tfrac12 \eps
    \end{align*}
\end{lemma}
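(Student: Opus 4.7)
The plan is to leverage the linearity of $\wt{\tr}$ and the fact that the accuracy parameter $\eps_0 = \eps^{10^{\const\beta}}/m^3$ is extremely tight compared to the $0.1\eps$ error budget, so we just need to control the sum of absolute values of coefficients appearing in each expression.

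First, for the bound on $|\wt{\tr}(B_1 B_2 \rho) - \tr(B_1 B_2 \rho)|$: since $\calB = \locals_{\ell_1} \subseteq \calA$ and the identity operator $\id$ lies in $\locals_r$ for every $r$, we can write $B_1 B_2 \rho$ as $B_1 \cdot B_2 \cdot \id \cdot \rho$, which is a product of three elements of $\calA$ times $\rho$. By \cref{assumption:accurate-estimates}, $|\wt{\tr}(B_1 B_2 \id \rho) - \tr(B_1 B_2 \id \rho)| \leq \eps_0$, and $\eps_0 \ll 0.1\eps$ by construction, giving the first inequality immediately.

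For the second inequality, expand $q_{\const\beta,\ell_0}(-\beta H \mid B_1)$ using \cref{def:polynomial-nested-commutator} as a linear combination of nested commutators of the form $(-\beta)^{\ell} [H, B_1]_\ell$ weighted by the coefficients of $q$. Expanding $H = \sum_a \lambda_a E_a$ via multilinearity, each $[H, B_1]_\ell$ becomes a signed sum of terms $\prod_i \lambda_{a_i} \cdot [E_{a_1}, [E_{a_2}, \dots, [E_{a_\ell}, B_1]\dots]]$. By \cref{lem:pauli-commutator}, each such nested commutator is (up to a phase in $\{\pm 1, \pm \ii\}$ and a power of $2$) a single Pauli operator lying in $\locals_{\ell_1 + k\ell}$. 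For $\ell \leq \deg(q_{\const\beta,\ell_0}) \leq 2 \cdot 4^{\const\beta}\log(1/\eps) + 1$ and for $\const$ sufficiently large, $\ell_1 + k\ell \leq 4^{\const\beta}\ell_0$, so each such Pauli lies in $\calA$. Consequently $B_2 \cdot [E_{a_1}, \dots, [E_{a_\ell}, B_1]\dots]$ is (up to a bounded constant prefactor) a product $B_2 \cdot P \cdot \id$ of three elements of $\calA$, and $\wt{\tr}$ is defined and $\eps_0$-accurate on it.

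Applying the triangle inequality and linearity of $\wt{\tr}$, the total error is at most $\eps_0$ times the sum of absolute values of all coefficients in the expansion. To bound this sum, use \cref{claim:simple-coeff-bound} to see the coefficient of $x^\ell$ in $q_{\const\beta,\ell_0}$ is at most $1/\ell!$, and use \cref{lem:cluster-expansion} (or the crude count $m^\ell$ combined with cluster constraints) to bound the number of nonzero nested commutators at order $\ell$ by at most $\ell!(\degree+1)^\ell$. Including the factor $\beta^\ell$ from $(-\beta)^\ell$, the constant factor $2^\ell$ from \cref{lem:pauli-commutator}, and the $|\lambda_{a_i}|\leq 1$ factors, the contribution at level $\ell$ is at most $(2(\degree+1)\beta)^\ell$, so the total sum is at most $\bigO{(2(\degree+1)\beta)^{\deg(q)+1}} = \eps^{-\bigO{4^{\const\beta}}}$. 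Multiplying by $\eps_0 = \eps^{10^{\const\beta}}/m^3$ yields an error at most $\eps^{10^{\const\beta} - \bigO{4^{\const\beta}}}/m^3 \leq 0.1\eps$ for $\const$ sufficiently large. The main obstacle to watch out for is ensuring the degree and locality parameters line up correctly (i.e. that the nested commutators genuinely land in $\calA$), which is precisely what drives the choice $\calA = \locals_{4^{\const\beta}\ell_0}$.
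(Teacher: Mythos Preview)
Your proposal is correct and takes essentially the same approach as the paper: invoke the accuracy assumption directly for the first bound, and for the second bound expand $q_{\const\beta,\ell_0}(-\beta H\mid B_1)$ into a linear combination of local Paulis using \cref{lem:cluster-expansion} and \cref{claim:simple-coeff-bound}, then bound the total coefficient mass and multiply by $\eps_0$. The paper's proof is terser but relies on exactly the same two ingredients and the same arithmetic comparison of $\eps_0$ against the coefficient sum.
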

\begin{proof}
\cref{assumption:accurate-estimates} implies that
\[
    \abs{ \wt{\tr}(B_1B_2\rho) - \tr(B_1B_2\rho) } \leq 0.5\eps
\]
for all $B_1, B_2$. Next, by \cref{lem:cluster-expansion} and \cref{claim:simple-coeff-bound}, we can write $q_{\const\beta, \ell_0}(-\beta H| B_1)$ as a linear combination of the form
\[
    \sum_{A \in \calA} c_A A
\]
where $\calA \subset \locals_{2^{\const\beta + 1}  \ell_0 \locality}$ and the coefficients satisfy $\sum_{A \in \calA}|c_A| \leq ((1 + \beta) \degree)^{2^{\const\beta + 5}\ell_0}$.  Thus, conditioned on the event that \cref{assumption:accurate-estimates} holds, we have
\begin{equation*}
    | \wt{\tr}(B_2q_{\const\beta, \ell_0}(-\beta H| B_1) \rho) - \tr(B_2q_{\const\beta, \ell_0}(-\beta H| B_1) \rho)|  \leq   0.5\eps  \,. \qedhere
\end{equation*}
\end{proof}

We now prove that the first two constraints are satisfied, when $\wt{\tr}$ is replaced by $\tr$.

\begin{lemma}[Feasibility of commutators]
\label{lem:feasibility-of-commutators}
Let $A$ be a matrix and let $\rho$ be a Gibbs state associated with the $\locality$-local Hamiltonian $H = \sum_{a\in [\terms]} \lambda_a E_a$.
Then for all $a\in [\terms]$, $\abs{\lambda_a} \leq 1$ and $\tr(A(H\rho - \rho H)) = 0$.
\end{lemma}
\begin{proof}
By assumption, $\abs{\lambda_a} \leq 1$.  Next, since $H$ commutes with $\rho$, for all $A$,
\begin{equation*}
    \tr(A(H\rho - \rho H)) = 0. \qedhere
\end{equation*}
\end{proof}

It remains to prove that $H$ satisfies the final set of constraints in \cref{eqn:const-system}, which states that $\tr\parens{B_2 q(-\beta H \vert  B_1) \rho} \approx \tr(B_1B_2 \rho)$ when $q$ is a good flat approximation to the exponential $e^x$ (recall \cref{def:flat-exp}).
To provide intuition for this statement, consider working in the basis where $H$ is diagonal and its diagonal entries (and eigenvalues) are $\sigma_i$.
Then
\begin{align*}
    \tr(\paren{B_2 q(-\beta H \vert B_1) - B_1 B_2} \rho)
    &= \tr(\paren{B_2 (B_1 \circ \braces{q(-\beta (\sigma_i - \sigma_j))}_{ij}) - B_1B_2} \rho) \\
    &\approx \tr(\paren{B_2 (B_1 \circ \braces{e^{-\beta(\sigma_i - \sigma_j)}}_{ij}) - B_1B_2} \rho) \\
    &= \tr(\paren{B_2 \rho B_1 \rho^{-1} - B_1 B_2} \rho) \\
    &= \tr(B_2 \rho B_1 - B_1 B_2 \rho) \\
    &= 0,
\end{align*}
where the second equality uses that $q$ is a good approximation to $e^x$.
In the following lemma, we make the above intuition precise, showing that when $q$ is a sufficiently good flat exponential approximation, this derivation indeed holds up to some small error.

\begin{lemma}[Feasibility of the polynomial approximation]
\label{lem:feasibility-of-poly-approx}
Let $H$ be a $\locality$-local Hamiltonian with interaction degree $\degree$ and let $\rho$ be a Gibbs state of $H$ at inverse temperature $\beta$.
For $A, B \in \locals_{4\locality}$ and a polynomial $q$ that is a $(\kappa, \eta, \eps)$-flat exponential approximation where 
\begin{enumerate}[label=\textup{(\alph*)}]
    \item $\kappa \geq C_1 \beta \log(1/\eps)$ for some sufficiently large constant $C_1$ depending only on $\locality,\degree$,
    \item $\eta < 1/(C_2\beta)$ for some sufficiently large constant $C_2$ depending only on $\locality, \degree$,
\end{enumerate}
we have 
\begin{equation*}
 |\tr\Paren{ B q\Paren{-\beta H | A}  \rho } - \tr\Paren{AB\rho}| \leq \tfrac12\eps \norm{A} \norm{B}.
\end{equation*}
\end{lemma}
\begin{proof}
By a change of basis, we take $H$ to be diagonal with $H_{ii} = \sigma_i$.
Consequently, $\rho$ is also diagonal with $\rho_{ii} = e^{-\beta \sigma_i} / \tr e^{-\beta H}$, and
\begin{equation} \label{eqn:feasibility-pt-1}
    \begin{aligned}[b]
    \tr\Paren{B q(-\beta H | A) \rho} - \tr\Paren{AB \rho}
    &= \tr\Paren{B \cdot q(-\beta H | A) \cdot \rho} - \tr\Paren{B \rho A} \\
    &= \frac{1}{\tr e^{-\beta H}} \sum_{i, j = 1}^{\dims} \parens[\Big]{B_{ji} [q(-\beta H | A)]_{ij} e^{-\beta \sigma_j} - B_{ji} A_{ij} e^{- \beta \sigma_i}} \\
    &= \frac{1}{\tr e^{-\beta H}} \sum_{i, j = 1}^{\dims} \parens[\Big]{B_{ji} A_{ij} q(\beta (\sigma_j - \sigma_i)) e^{-\beta \sigma_j} - B_{ji} A_{ij} e^{- \beta \sigma_i}} \\
    &= \frac{1}{\tr e^{-\beta H}} \sum_{i, j = 1}^{\dims} B_{ji}A_{ij}\parens[\Big]{q(\beta(\sigma_j - \sigma_i)) - e^{\beta (\sigma_j - \sigma_i)}} e^{-\beta \sigma_j},
    \end{aligned}
\end{equation}
where the first equality follows from cyclicity of trace, the second uses its definition, the third uses \cref{def:polynomial-nested-commutator} and \cref{lem:commutator-to-polynomial}, and the fourth groups relevant terms together.
Now, we use that $q$ is a $(\kappa, \eta, \eps)$-flat exponential approximation to bound this quantity.
Let $L = \kappa/(3\beta)$ and let $S_a = \braces{i \in [\dims] : \sigma_i \in [aL, (a+1)L)}$, so that $[\dims] = \bigsqcup_a S_a$ (where the index goes from $-\infty$ to $\infty$.)
Continuing from \cref{eqn:feasibility-pt-1},
\begin{equation}
\label{eqn:splitting-into-level-sets}
    \begin{aligned}[b]
        & \frac{1}{\tr e^{-\beta H}} \sum_{i, j = 1}^{\dims} B_{ji}A_{ij}\parens[\Big]{q(\beta(\sigma_j - \sigma_i)) - e^{\beta (\sigma_j - \sigma_i)}} e^{-\beta \sigma_j} \\
        & = \frac{1}{\tr e^{-\beta H}} \sum_a \sum_b \sum_{i \in S_a} \sum_{j \in S_b}B_{ji} A_{ij}\parens[\Big]{q(\beta(\sigma_j - \sigma_i)) - e^{\beta (\sigma_j - \sigma_i)}} e^{-\beta \sigma_j} \\
        &= \sum_\alpha \frac{1}{\tr e^{-\beta H}}\sum_a \sum_{i \in S_a} \sum_{j \in S_{a + \alpha}} B_{ji} A_{ij}\parens[\Big]{q(\beta(\sigma_j - \sigma_i)) - e^{\beta (\sigma_j - \sigma_i)}} e^{-\beta \sigma_j}
    \end{aligned}
\end{equation}
Now, we bound \eqref{eqn:splitting-into-level-sets} by breaking into cases depending on the value of $\alpha$.
First, suppose $-2 \leq \alpha \leq 2$.
For $i \in S_a$ and $j \in S_{a + \alpha}$, the input to $q$ is bounded by $\kappa$: $\beta \abs{\sigma_j - \sigma_i} < \beta(\alpha + 1)L \leq \kappa$.
So it follows from the approximation guarantee (\cref{def:flat-exp}) that
\begin{equation}
    \abs{ q(\beta(\sigma_j - \sigma_i)) - e^{\beta(\sigma_j - \sigma_i)}} \leq \eps,
\end{equation}
and thus
\begin{equation}\label{eq:submatrix-case1}
\begin{aligned}[b]
    & \abs[\Big]{\sum_{i \in S_a}\sum_{j \in S_{a+\alpha}} B_{ji} A_{ij}\parens[\Big]{q(\beta(\sigma_j - \sigma_i)) - e^{\beta (\sigma_j - \sigma_i)}} e^{-\beta \sigma_j}} \\
    & \leq \eps \sum_{i \in S_a}\sum_{j \in S_{a+\alpha}} \abs{B_{ji} A_{ij}}e^{-\beta \sigma_j} \\
    & \leq \eps  \parens[\Big]{\max_{j\in S_{a + \alpha}} \sum_{ i \in S_a} |B_{ji}A_{ij}| } \sum_{j \in S_{a + \alpha}} e^{-\beta \sigma_j} \\
    &\leq \eps \parens[\Big]{\max_{j \in S_{a + \alpha}} \sum_{i \in S_a} |B_{ji}|^2 \sum_{i \in S_a} |A_{ij}|^2}^{1/2} \sum_{j \in S_{a + \alpha}} e^{-\beta \sigma_j} \\
    &\leq \eps \norm{A_{S_{a}, S_{a+\alpha}}}\norm{B_{S_a,S_{a+\alpha}}} \sum_{j \in S_{a + \alpha}} e^{-\beta \sigma_j},
\end{aligned}
\end{equation}
where $A_{S_a,S_b}$ and $B_{S_a, S_b}$ are the submatrices of $A$ and $B$ indexed by $i \in S_a$ and $j \in S_b$ (in the eigenbasis of $H$).
Next, suppose $\alpha \geq 3$.
For $i \in S_a$ and $j \in S_{a + \alpha}$, $\sigma_j > \sigma_i$, so by the guarantee on $q$ (\cref{def:flat-exp}), $\abs{q(\beta(\sigma_j - \sigma_i))} \leq \max(1, e^{\beta(\sigma_j - \sigma_i)}) e^{\eta \abs{\beta(\sigma_j - \sigma_i)}} = e^{(1+\eta) \beta(\sigma_j - \sigma_i)}$.
Thus,
\begin{equation}\label{eq:submatrix-case2}
\begin{aligned}[b]
    &\abs[\Big]{\sum_{i \in S_a}\sum_{j \in S_{a+\alpha}} B_{ji} A_{ij} \parens[\Big]{q(\beta(\sigma_j - \sigma_i)) - e^{\beta (\sigma_j - \sigma_i)}} e^{-\beta \sigma_j}}  \\
    &\leq  \sum_{i \in S_a}\sum_{j \in S_{a+\alpha}} \abs{B_{ji} A_{ij}} \parens[\Big]{e^{(1+\eta) \beta(\sigma_j - \sigma_i)} + e^{\beta(\sigma_j - \sigma_i)}} e^{-\beta \sigma_j} \\
    &=  \sum_{i \in S_a}\sum_{j \in S_{a+\alpha}} \abs{B_{ji} A_{ij}} \parens[\Big]{e^{\eta \beta(\sigma_j - \sigma_i)} + 1} e^{-\beta \sigma_i} \\
    &\leq  (e^{ \eta \beta (\alpha + 1) L} + 1) \parens[\Big]{\max_{i \in S_{a}} \sum_{ j \in S_{a + \alpha}} |B_{ji}A_{ij}| } \sum_{i \in S_{a}} e^{-\beta \sigma_i}  \\
    &\leq (e^{ \eta \beta (\alpha + 1) L} + 1) \norm{A_{S_{a}, S_{a+\alpha}}}\norm{B_{S_a,S_{a+\alpha}}} \sum_{i \in S_{a}} e^{-\beta \sigma_i} \,.
\end{aligned}
\end{equation}
Finally, suppose $\alpha \leq -3$.
For $i \in S_a$ and $j \in S_{a+\alpha}$, we have $\sigma_j < \sigma_i$, so by the same guarantee on $q$ (\cref{def:flat-exp}), $\abs{q(\beta(\sigma_j - \sigma_i))} \leq e^{\eta \beta(\sigma_i - \sigma_j)}$.
Thus,
\begin{equation}\label{eq:submatrix-case3}
\begin{aligned}[b]
    &\abs[\Big]{ \sum_{i \in S_a}\sum_{j \in S_{a+\alpha}} B_{ji} A_{ij} \parens[\Big]{q(\beta(\sigma_j - \sigma_i)) - e^{\beta (\sigma_j - \sigma_i)}} e^{-\beta \sigma_j} } \\
    &\leq \sum_{i \in S_a}\sum_{j \in S_{a+\alpha}} \abs{B_{ji} A_{ij}} (e^{-\beta \sigma_j}e^{\eta \beta (\abs{\alpha} +1)L} + e^{-\beta \sigma_i})\\
    &\leq \norm{B_{S_a,S_{a+\alpha}}}\norm{A_{S_{a}, S_{a+\alpha}}}\parens[\Big]{\sum_{i \in S_{a}} e^{-\beta \sigma_i}  + e^{\eta \beta (|\alpha| + 1)L} \sum_{j \in S_{a + \alpha}} e^{-\beta \sigma_j} } \,.
\end{aligned}
\end{equation}
By \cref{lem:akl}, for an absolute constant $c > 1$ depending only on $\locality,\degree$,
\begin{equation} \label{eqn:feasibility-akl}
\begin{split}
    \norm{A_{S_a, S_{a+\alpha}}} &\leq \norm{A} e^{c-(\abs{\alpha} - 1)L/c} \\
    \norm{B_{S_a, S_{a+\alpha}}} &\leq \norm{B} e^{c-(\abs{\alpha} - 1)L/c} \,.
\end{split}
\end{equation}
We choose $\eta = 1/(C_2 \beta)$ small enough to ensure that $\eta \beta(\abs{\alpha} + 1) \leq (\abs{\alpha} - 1)/c$ whenever $\abs{\alpha} \geq 3$.
Combining \cref{eq:submatrix-case1,eq:submatrix-case2,eq:submatrix-case3,eqn:feasibility-akl} with the above, we get the following bound on \cref{eqn:splitting-into-level-sets}.
\begin{align}
    &\abs*{\sum_\alpha \frac{1}{\tr e^{-\beta H}}\sum_a \sum_{i \in S_a} \sum_{j \in S_{a + \alpha}} B_{ji} A_{ij}\parens[\Big]{q(\beta(\sigma_j - \sigma_i)) - e^{\beta (\sigma_j - \sigma_i)}} e^{-\beta \sigma_j}} \tag*{from \cref{eqn:splitting-into-level-sets}} \\
    &\leq \frac{1}{\tr e^{-\beta H}}\sum_a\Big\lparen
        \sum_{\alpha=-2}^2 \eps \norm{A_{S_{a}, S_{a+\alpha}}}\norm{B_{S_a,S_{a+\alpha}}} \sum_{j \in S_{a + \alpha}} e^{-\beta \sigma_j} \tag*{by \cref{eq:submatrix-case1}} \\
        &\hspace{4em} + \sum_{\alpha = 3}^\infty (e^{ \eta \beta (\alpha + 1) L} + 1) \norm{A_{S_{a}, S_{a+\alpha}}}\norm{B_{S_a,S_{a+\alpha}}} \sum_{i \in S_{a}} e^{-\beta \sigma_i} \tag*{by \cref{eq:submatrix-case2}}\\
        &\hspace{4em} + \sum_{\alpha = -3}^{-\infty} \norm{B_{S_a,S_{a+\alpha}}}\norm{A_{S_{a}, S_{a+\alpha}}}\parens[\Big]{\sum_{i \in S_{a}} e^{-\beta \sigma_i}  + e^{\eta \beta (|\alpha| + 1)L} \sum_{j \in S_{a + \alpha}} e^{-\beta \sigma_j} }\Big\rparen \tag*{by \cref{eq:submatrix-case3}}\\
    &\leq \frac{\norm{A}\norm{B}}{\tr e^{-\beta H}}\sum_a\Big\lparen
        \sum_{\alpha=-2}^2 \eps \sum_{j \in S_{a + \alpha}} e^{-\beta \sigma_j} \nonumber\\
        &\hspace{4em} + \sum_{\alpha = 3}^\infty (e^{ \eta \beta (\alpha + 1) L} + 1) e^{2(c-(\abs{\alpha} - 1)L/c)} \sum_{i \in S_{a}} e^{-\beta \sigma_i} \nonumber\\
        &\hspace{4em} + \sum_{\alpha = -3}^{-\infty} e^{2(c-(\abs{\alpha} - 1)L/c)}\parens[\Big]{\sum_{i \in S_{a}} e^{-\beta \sigma_i}  + e^{\eta \beta (|\alpha| + 1)L} \sum_{j \in S_{a + \alpha}} e^{-\beta \sigma_j} }\Big\rparen \tag*{by \cref{eqn:feasibility-akl}}\\
    &= \norm{A}\norm{B}\Big\lparen
        \sum_{\alpha=-2}^2 \eps \nonumber\\
        &\hspace{4em} + \sum_{\alpha = 3}^\infty (e^{ \eta \beta (\alpha + 1) L} + 1) e^{2(c-(\abs{\alpha} - 1)L/c)} \nonumber\\
        &\hspace{4em} + \sum_{\alpha = -3}^{-\infty} e^{2(c-(\abs{\alpha} - 1)L/c)}\parens{1 + e^{\eta \beta (|\alpha| + 1)L} }\Big\rparen \tag*{by $\sum_a \sum_{i \in S_{a + \alpha}} e^{-\beta \sigma_i} = \tr(e^{-\beta H})$}\\
    &\leq (5+e^{2c})\norm{A} \norm{B} \parens[\Big]{\eps + \sum_{\alpha, |\alpha| \geq 3} (1 + e^{\eta \beta (|\alpha| + 1)L}) e^{-2(\abs{\alpha} - 1)L/c} } \nonumber\\
    &\leq (5+e^{2c}) \tr(e^{-\beta H}) \norm{A} \norm{B} \left(\eps + \sum_{\alpha, |\alpha| \geq 3}  2e^{-(|\alpha| - 1)L/c}  \right) \nonumber\\
    &\leq \tfrac12\eps \norm{A} \norm{B}
    \tag*{taking $L =\kappa/(3\beta) \geq  C_1 \log(1/\eps)$}
\end{align}
The last step is where we need $C_1$ to be sufficiently large in terms of $\locality,\degree$.
Plugging this back into the relation at the beginning, we conclude
\[
|\tr\parens{B q(-\beta H |  A) \rho} - \tr\parens{AB \rho} | \leq \tfrac12\eps \norm{A}\norm{B}
\]
as desired.
\end{proof}

Now we can complete the proof of \cref{lem:feasibility}.
\begin{proof}[Proof of \cref{lem:feasibility}]
The first two constraints are satisfied by \cref{lem:error-in-constraints,lem:feasibility-of-commutators}:
\begin{align}
    \abs{\wt{\tr}(A_1 A_2(H \rho - \rho H))}^2
    \leq 2\abs{\wt{\tr}(A_1 A_2(H \rho - \rho H))}^2 + 2\abs{\tr(A_1 A_2(H \rho - \rho H))}^2
    \leq \tfrac12\eps^2.
\end{align}
As for the final constraint, by \cref{thm:exp-fancy-approx}, and as stated in \eqref{eq:q-main-flatness}, the polynomial $q_{C_{\locality,\degree}\beta, \ell_0}$ is a flat exponential approximation with parameters
\[
    \parens[\Big]{ 0.001 \cdot 2^{\const\beta} \log(1/\eps) , \frac{5}{\const\beta} , 0.001\eps } \,.
\]
Thus, by \cref{lem:feasibility-of-poly-approx}, we have
\[
    \abs{\tr(B_2q_{\const\beta, \ell_0}(-\beta H | B_1) \rho - \tr(B_1B_2\rho)) } \leq 0.1\eps
\]
for all $B_1, B_2 \in \calB$.
Combining the above with \cref{lem:error-in-constraints} immediately implies that all constraints are satisfied.
\end{proof}
\section{A sum-of-squares proof that the commutator is small (\texorpdfstring{\cref{lem:sos-almost-commuting}}{Lemma 6.6})}\label{sec:commutator-is-small}

In this section, we prove \cref{lem:sos-almost-commuting}.  We will only use a subset of the constraints, namely the commutator constraints involving $\wt{\tr}(A(H'\rho - \rho H'))$, to prove this.  The proof will involve two parts.  First, we will show that the constraints imply that  $\tr([H,H'](H'\rho - \rho H'))$ is small.  Then, we show that $\tr([H,H'](H'\rho - \rho H'))$  being small actually implies that $[H,H']$ must be small.

\begin{lemma}[Trace inner product is bounded]\label{lem:commutator-bound1}
Under \cref{assumption:accurate-estimates}, we have
\begin{equation*}
   \Set{ \calC_{\lambda'} } \sststile{}{\lambda'} \Set{ \left\lvert \tr\paren{[H, H'](H'\rho  - \rho H' )} \right\rvert   \leq m^2 (10^\locality\degree)^5 \eps_0  } \,.
\end{equation*}
\end{lemma}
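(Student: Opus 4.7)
The plan is to expand $[H,H']$ into a sum over pairs of Hamiltonian terms, bound each resulting trace by one of the constraint polynomials in $\calC_{\lambda'}$, and then combine everything via SoS Cauchy--Schwarz. The target bound is very loose compared to what this approach gives, so we have plenty of slack to absorb constants.

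First I would expand using $H = \sum_a \lambda_a E_a$ and $H' = \sum_b \lambda_b' E_b$, writing
\[
\tr([H,H'](H'\rho - \rho H')) = \sum_{a,b \in [\terms]} \lambda_a \lambda_b' \parens[\big]{T_{a,b} - T_{b,a}}, \quad T_{a,b} := \tr\Paren{E_a E_b (H'\rho - \rho H')}.
\]
By \cref{lem:pauli-commutator}, $[E_a,E_b] = 0$ unless $\supp(E_a)\cap\supp(E_b)\neq\varnothing$, so at most $m(\degree+1)$ pairs $(a,b)$ contribute. Each $T_{a,b}$ is a polynomial of degree one in $\lambda'$, and since $E_a,E_b\in\calA$, the constraint $|\wt{\tr}(E_aE_b(H'\rho - \rho H'))|^2 \leq \eps_0^2$ from $\calC_{\lambda'}$ is available.

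Next I would pass from $\wt{\tr}$ to $\tr$ using \cref{assumption:accurate-estimates}. Since $E_a, E_b, E_c \in \calA$ for every $c \in [\terms]$, both $\wt{\tr}(E_a E_b E_c \rho)$ and $\wt{\tr}(E_c E_a E_b \rho)$ (the latter covers $\wt{\tr}(E_aE_b\rho E_c)$ by cyclicity) are $\eps_0$-close to their true values. Consequently
\[
T_{a,b} - \wt{T}_{a,b} = \sum_{c} \lambda_c' \, \delta_{a,b,c}, \qquad |\delta_{a,b,c}| \leq 2\eps_0,
\]
with $\wt{T}_{a,b} := \wt{\tr}(E_aE_b(H'\rho-\rho H'))$. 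Applying $(x+y)^2 \leq 2x^2 + 2y^2$ together with SoS Cauchy--Schwarz (\cref{fact:sos-holder}) and the constraint $(\lambda_c')^2 \leq 1$, I would derive in SoS
\[
T_{a,b}^2 \leq 2\wt{T}_{a,b}^2 + 2\Paren{\sum_c \lambda_c' \delta_{a,b,c}}^2 \leq 2\eps_0^2 + 2m \cdot m \cdot (2\eps_0)^2 \leq 10 m^2 \eps_0^2.
\]

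Finally I would apply SoS Cauchy--Schwarz one more time to the outer sum over the at most $2m(\degree+1)$ nonzero pairs $(a,b)$, using $|\lambda_a| \leq 1$ (constant) and $(\lambda_b')^2 \leq 1$:
\[
\Paren{\tr([H,H'](H'\rho - \rho H'))}^2 \leq 2m(\degree+1) \cdot \sum_{(a,b)} (\lambda_b')^2 \cdot (T_{a,b}-T_{b,a})^2 = O(m^4 \degree^2 \eps_0^2).
\]
Invoking \cref{fact:squared-value-to-magnitude}, this gives $|\tr([H,H'](H'\rho-\rho H'))| = O(m^2\degree\,\eps_0)$, which comfortably fits inside the target bound $m^2(10^\locality\degree)^5 \eps_0$.

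I do not anticipate a conceptual obstacle. The main bookkeeping concern is that $\wt{\tr}$ can take complex values, so the constraints $|\wt{T}_{a,b}|^2 \leq \eps_0^2$ and their SoS manipulations should be split into real and imaginary parts as is done elsewhere in the paper; the argument above goes through unchanged for each part.
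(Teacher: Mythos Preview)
Your proposal is correct and follows essentially the same strategy as the paper: expand $[H,H']$, invoke the commutator constraints of $\calC_{\lambda'}$ term by term, and separately bound the sampling error via \cref{assumption:accurate-estimates}. The only notable difference is stylistic: the paper first rewrites $[H,H'] = \sum_{F\in\locals_{2\locality}} f_F(\lambda')\,F$ in the Pauli basis and then bounds everything \emph{linearly} (using $-\degree\le f_F(\lambda')\le\degree$ together with $-\eps_0\le \wt{\tr}(F(H'\rho-\rho H'))\le\eps_0$, combined multiplicatively in SoS), whereas you keep the expansion $\sum_{a,b}\lambda_a\lambda_b'[E_a,E_b]$ and push everything through Cauchy--Schwarz. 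Your route loses harmless factors of $m$ that the paper's linear argument avoids, but both fit easily under the stated bound $m^2(10^\locality\degree)^5\eps_0$.
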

\begin{proof}
By \cref{lem:pauli-commutator}, we can write $[H,H'] = \sum_{F \in \locals_{2\locality}} f_a(\lambda') F$ where each $f_a$ is a linear functions of $\lambda_1', \dots , \lambda_m'$.  Furthermore we have the following properties
\begin{itemize}
    \item Each $f_a$ has nonzero coefficients on at most $\degree$ terms
    \item All of the coefficients in each of the $f_a$ have magnitude at most $1$
\end{itemize}  
The total number of terms $F \in \locals_{2\locality}$ is at most $\terms(10^\locality \degree)^2$ (recall \cref{coro:number-of-local-paulis}).
We have the constraint $-\eps_0 \leq \wt{\tr}\Paren{ F (H'\rho - \rho H') } \leq \eps_0$ for each $F \in \locals_{2\locality}$, and thus summing over the $F$'s 
\begin{equation}
\label{eqn:upper-bound-trace-commutator-estimate}
\begin{split}
     \Set{ \calC_{\lambda'} } \sststile{}{\lambda'} \Biggl\{
        \wt{\tr}\Paren{ [H, H'] (H'\rho - \rho H')  }
        & = \wt{\tr}\parens[\Big]{ \sum_{F \in \locals_{2\locality}} f_a(\lambda') F \Paren{ H' \rho - \rho H' } } \\
        & \leq \degree \sum_{F \in \calP_{2\locality} } \wt{\tr}\Paren{ F \Paren{H'\rho - \rho H'}}   \leq m 10^{2\locality}\degree^3  \eps_0 \Biggr\}  ,
\end{split}
\end{equation}
where the first inequality follows from the constraint that  $-1 \leq \lambda'_i \leq 1$ and that each $f_a$ has at most $\degree$ non-zeros, and the second inequality follows from there being at most $\terms(10^\locality\degree)^2$ terms $F \in \calP_{2\locality}$ and the corresponding trace being at most $\eps_0$ due to the constraints. Similarly, we can derive
\begin{equation}
\label{eqn:lower-bound-trace-commutator-estimate}
\begin{split}
     \Set{ \calC_{\lambda'} } \sststile{}{\lambda'} \braces[\Bigg]{ \wt{\tr}\Paren{ [H, H'] (H'\rho - \rho H')  }  \geq -\terms 10^{2\locality}\degree^3  \eps_0 }.
\end{split}
\end{equation}


Next, again expanding $H$ and $H'$ in the Pauli basis, we have
\begin{equation}
\label{eqn:expansion-quadratic-in-lambda-true}
    \sststile{}{\lambda'} \Set{\tr\Paren{[H, H'](H'\rho  - \rho H' )} = \sum_{ F_1, F_2 \in \locals_{2\locality}} g_{F_1, F_2}(\lambda') \tr(F_1F_2\rho)   },
\end{equation}
where $g_{F_1,F_2}(\lambda')$ are quadratic functions of the $\lambda'$ each with at most $\degree$ nonzero monomials and coefficients between $\pm 2$. Similarly, 
\begin{equation}
\label{eqn:expansion-quadratic-in-lambda-estimate}
    \sststile{}{\lambda'} \Set{\wt{\tr}\Paren{[H, H'](H'\rho  - \rho H' )} = \sum_{F_1, F_2 \in \locals_{2\locality}} g_{F_1, F_2}(\lambda') \wt{\tr}\Paren{ F_1 F_2\rho }   }.
\end{equation}
Using Assumption~\ref{assumption:accurate-estimates} and the constraint that $-1 \leq \lambda_i' \leq 1$,
we have
\begin{equation}
\begin{split}
    \Set{\calC_{\lambda'} } \sststile{}{\lambda'} \Biggl\{ & \tr\Paren{ [H, H'](H'\rho  - \rho H' )  }  \\
    & \leq \sum_{F_1 , F_2 \in \locals_{2\locality} } g_{F_1, F_2}(\lambda') \wt{\tr}\Paren{F_1 F_2 \rho}  + \eps_0 \cdot \sum_{F_1 , F_2 \in \locals_{2\locality} } g_{F_1, F_2}(\lambda')    \\
    & =  \wt{\tr}\Paren{ [H, H'](H'\rho  - \rho H' )  } + \eps_0 \sum_{F_1 , F_2 \in \calP_{2\locality} } g_{F_1, F_2}(\lambda') \\
    & \leq \terms^2 (10^\locality\degree)^5 \eps_0 \Biggr\} ,
\end{split}
\end{equation}
where the first inequality follows from \eqref{eqn:expansion-quadratic-in-lambda-true}, \eqref{eqn:expansion-quadratic-in-lambda-estimate} and 
 \cref{assumption:accurate-estimates} and the second follows from \eqref{eqn:upper-bound-trace-commutator-estimate}, the fact that $g$ has coefficients between $[-2,2]$ and there are at most $\terms^2 (10^\locality\degree)^4$ terms $F_1,F_2 \in \calP_{2\locality}$.  Similarly,  using the lower bound estimate in \eqref{eqn:lower-bound-trace-commutator-estimate} we have 
\begin{equation}
     \Set{\calC_{\lambda'} } \sststile{}{\lambda'} \Biggl\{  \tr\Paren{ [H, H'](H'\rho  - \rho H' )  }  \geq - m^2 (10^\locality\degree)^5 \eps_0 \Biggr\},
\end{equation}
which completes the proof.
\end{proof}

Next, we move on to the second part of the proof.  First, we prove the following inequality relating   $\tr\paren{[H, H'](H'\rho  - \rho H')}$ to $\tr\Paren{ (\ii[H', H ] )^2 \rho }$. We note that we consider the expression $\ii[H', H]$ to ensure that the matrix is Hermitian and the coefficients of the $\lambda_i'$ are real. 

\begin{lemma}[Lower bounding the commutator at arbitrary temperature]\label{lem:commutator-bound2}
Given $0<\beta$,  $H' = \sum_{i \in [m]} \lambda_i' E_i $,  $H = \sum_{i \in [m] } \lambda_i E_i$ and $\rho = e^{-\beta H}$, we have  
    \begin{equation*}
     \sststile{}{\lambda'} \Set{  \frac{\beta}{1+ 2\beta \norm{H} }  \tr\Paren{(\ii[H', H ] )^2 \rho }    \leq  \tr\Paren{ [H', H]  \Paren{ H'\rho - \rho H' } } } .
\end{equation*}
\end{lemma}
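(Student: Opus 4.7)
The plan is to work in the eigenbasis of the ground-truth Hamiltonian $H$ and show that the difference between the two sides, viewed as a polynomial in the indeterminates $\lambda'$, is literally a sum of squares of real linear polynomials with nonnegative scalar weights. Diagonalize $H = \sum_i \sigma_i \ket{i}\bra{i}$ and write $\rho_{ii} = e^{-\beta\sigma_i}/\tr(e^{-\beta H})$. A short computation using $[H',H]_{ij} = (\sigma_j - \sigma_i)H'_{ij}$, the fact that $\rho$ is diagonal in this basis, and the Hermiticity $H'_{ji} = \overline{H'_{ij}}$ gives
\[
    \tr\bigl([H',H](H'\rho - \rho H')\bigr) = \sum_{i,j}(\sigma_j - \sigma_i)\,|H'_{ij}|^2\,(\rho_{ii} - \rho_{jj})
\]
and, after symmetrizing in $i\leftrightarrow j$,
\[
    \tr\bigl((\ii [H',H])^2 \rho\bigr) = \frac{1}{2}\sum_{i,j}(\sigma_j - \sigma_i)^2\,|H'_{ij}|^2\,(\rho_{ii} + \rho_{jj}).
\]
The difference $\textup{RHS}-\textup{LHS}$ therefore equals $\sum_{i,j} c_{ij}\,|H'_{ij}|^2$ with
\[
    c_{ij} = (\sigma_j - \sigma_i)(\rho_{ii}-\rho_{jj}) - \frac{\beta(\sigma_j - \sigma_i)^2(\rho_{ii}+\rho_{jj})}{2(1+2\beta\|H\|)}.
\]

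Next I would verify $c_{ij} \geq 0$ pointwise. This is immediate when $\sigma_i = \sigma_j$. Otherwise, using $(\rho_{ii}-\rho_{jj})/(\rho_{ii}+\rho_{jj}) = \tanh(\beta(\sigma_j-\sigma_i)/2)$ and setting $t := \beta|\sigma_j-\sigma_i|/2 \in [0,\beta\|H\|]$, the statement $c_{ij}\geq 0$ is equivalent to the scalar inequality $\frac{t}{1+2\beta\|H\|} \leq \tanh(t)$. This follows from the elementary bound $\tanh(t)\geq t/(1+t)$ for $t\geq 0$ (equivalent to $e^{2t}\geq 1+2t$, immediate from convexity of $e^{2t}$) together with $1+t\leq 1+2\beta\|H\|$, valid since $t\leq \beta\|H\|$.

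Finally, to upgrade this numerical fact into a sum-of-squares proof in the indeterminates $\lambda'$: since $H'_{ij} = \sum_a \lambda_a'(E_a)_{ij}$ is real-linear in $\lambda'$ with complex coefficients, splitting into real and imaginary parts gives $H'_{ij} = u_{ij}(\lambda') + \ii\,v_{ij}(\lambda')$ with $u_{ij}, v_{ij}$ real linear polynomials, so $|H'_{ij}|^2 = u_{ij}^2 + v_{ij}^2$ is manifestly an SoS. The difference therefore equals the explicit SoS polynomial $\sum_{i,j} c_{ij}(u_{ij}^2+v_{ij}^2)$ with $c_{ij}\geq 0$, yielding a degree-$2$ SoS proof (alternatively, one may invoke \cref{fact:nonnegative-quadratic} directly). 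The main obstacle is simply the bookkeeping that converts the operator identity into the pointwise scalar inequality and certifies that $c_{ij}$ stays nonnegative regardless of the sign of $\sigma_j-\sigma_i$; once that reduction is in place, the analytic content reduces to the one-line convexity fact above.
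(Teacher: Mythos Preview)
Your proof is correct and follows essentially the same approach as the paper: both work in the eigenbasis of $H$, express the difference as $\sum_{i,j} c_{ij}|H'_{ij}|^2$ with nonnegative scalar weights $c_{ij}$, and then note that $|H'_{ij}|^2$ is a sum of squares of real linear forms in $\lambda'$. The only cosmetic difference is that the paper does not symmetrize in $i\leftrightarrow j$ and uses the equivalent elementary inequality $x(1-e^{-\beta x}) \geq \beta x^2/(1+\beta|x|)$ (i.e.\ $e^y\geq 1+y$) in place of your $\tanh(t)\geq t/(1+t)$ (i.e.\ $e^{2t}\geq 1+2t$).
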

\begin{proof}

Consider the basis where $H$ is diagonal and let its eigenvalues be $\sigma_i$.  Let $Z = \tr(e^{-\beta H})$.  Then, 
\begin{equation}
\label{eqn:expand-h'-rho-rho-h'}
    \begin{split}
    \sststile{}{\lambda'} \Biggl\{ \tr\paren{[H', H](H' \rho  - \rho H' )}
    &= \tr\paren{(H'H - HH')(H' \rho - \rho H')} \\
    &= \frac{1}{Z}\tr\paren{(H' \circ \braces{\sigma_j - \sigma_i}_{ij})(H' \circ \braces{e^{-\beta \sigma_j} - e^{-\beta \sigma_i}}_{ij})} \\ 
    &= \frac{1}{Z} \sum_{i, j} H_{ij}'H_{ji}'(\sigma_j - \sigma_i)(e^{-\beta \sigma_i} - e^{-\beta \sigma_j}) \\
    &= \frac{1}{Z} \sum_{i, j} \abs{H_{ij}'}^2(\sigma_j - \sigma_i)(e^{-\beta \sigma_i} - e^{-\beta \sigma_j}) \\
    &= \frac{1}{Z} \sum_{i, j} \abs{H_{ij}'}^2(\sigma_j - \sigma_i)(1 - e^{-\beta (\sigma_j - \sigma_i)})e^{-\beta \sigma_i}\Biggr\}
    \end{split}
\end{equation}

By a similar argument,
\begin{equation}
\label{eqn:expand-h'-squared-rho}
    \begin{split}
        \sststile{}{\lambda'} \Biggl \{ \tr\Paren{(\ii[H', H ] )^2 \rho } 
    &= \frac{1}{Z}\tr\paren{(H' \circ \braces{\sigma_j - \sigma_i}_{ij})(H' \circ \braces{(\sigma_i - \sigma_j)e^{-\beta \sigma_j}}_{ij})} \\
    &= \frac{1}{Z} \sum_{i, j} \abs{H_{ij}'}^2(\sigma_j - \sigma_i)^2 e^{-\beta \sigma_i} \Biggr\} \,.
    \end{split}
\end{equation}

We observe that, since $e^x \geq 1 + x$, $e^{-x} \leq \frac{1}{1+x}$, and so
\begin{align*}
    x(1 - e^{-\beta x}) \geq \abs{x}(1 - e^{-\beta \abs{x}}) \geq \frac{\beta x^2}{1+\abs{\beta x}}.
\end{align*}
Applying this inequality with $x = \sigma_j - \sigma_i$, which are constants in the sum-of-squares proof system, and substituting back into \eqref{eqn:expand-h'-rho-rho-h'}, we have,

\begin{equation}
    \begin{split}
        \sststile{}{\lambda'} \Biggl\{ \tr\paren{[H', H](H' \rho  - \rho H')}
    &= \frac{1}{Z} \sum_{i, j} \abs{H_{ij}'}^2(\sigma_j - \sigma_i)(1 - e^{-\beta (\sigma_j - \sigma_i)})e^{-\beta \sigma_i} \\
    &\geq \frac{1}{Z} \sum_{i, j} \abs{H_{ij}'}^2(\sigma_j - \sigma_i)^2\frac{\beta}{1 + \abs{\beta}\abs{\sigma_j - \sigma_i}}e^{-\beta \sigma_i} \\
    &\geq \frac{\beta}{1 + 2 \beta \norm{H}}\frac{1}{Z} \sum_{i, j} \abs{H_{ij}'}^2(\sigma_j - \sigma_i)^2e^{-\beta \sigma_i} \\
    &= \frac{\beta}{1 + 2 \beta \norm{H}}\tr\Paren{(\ii[H', H ] )^2 \rho }\Biggr\}  ,
    \end{split}
\end{equation}
where the second inequality follows from $\abs{\sigma_j - \sigma_i} \leq 2 \norm{H}$ and the last equality follows from \eqref{eqn:expand-h'-squared-rho}. 
\end{proof}

\begin{lemma}\label{lem:commutator-bound3}
Let $H' = \sum_{a=1}^{\terms} \lambda'_a E_a$ and write $\ii[H,H'] = \sum_{b = 1}^{\abs{\locals_{2\locality}}} \gamma_b F_b$ as a linear combination of $2\locality$-local Pauli matrices $F_b$ where each of the $\gamma_b$ are linear expressions in the $\lambda_a'$.  Then, for each $\gamma_b$, we have
\[
\Set{ \calC_{\lambda'} }  \sststile{}{\lambda'} \Set{ \gamma_b^2 \leq e^{\calO_{\locality,\degree}(\beta)} \tr((\ii[H',H])^2 \rho) } .
\]
\end{lemma}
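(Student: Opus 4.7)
The proof strategy is a direct reduction to \cref{lem:large-marginals} (the variance lower bound for local operators) combined with the observation that non-negative quadratics admit degree-2 SoS proofs (\cref{fact:nonnegative-quadratic}).

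Since $H$ and $H'$ are Hermitian, $[H,H']$ is anti-Hermitian, so $\ii[H,H']$ is Hermitian and its Pauli coefficients $\gamma_b$ are real. Expanding $\ii[H,H'] = \sum_{a,a'}\lambda_a\lambda'_{a'}\,\ii[E_a, E_{a'}]$ and using \cref{lem:pauli-commutator}, each nonzero $[E_a, E_{a'}]$ is a Pauli with support contained in $\supp(E_a)\cup \supp(E_{a'})$, so $\ii[H,H']$ is $2\locality$-$\graph$-local with a dual interaction graph of degree bounded by a function of $\locality,\degree$. Moreover, treating the true $\lambda_a$ as constants, each $\gamma_b$ is a linear form in the indeterminates $\lambda'_a$.

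Now fix any real assignment $\lambda' \in \R^m$. Viewing $\ii[H,H']=\sum_b \gamma_b(\lambda')F_b$ as a concrete $2\locality$-local Hermitian operator with real Pauli coefficients, \cref{lem:large-marginals} applies and yields
\begin{equation*}
\tr\bigl((\ii[H,H'])^2\rho\bigr) \;\geq\; e^{-c - c'\beta} \max_{b'} \gamma_{b'}(\lambda')^2 \;\geq\; e^{-c - c'\beta} \gamma_b(\lambda')^2,
\end{equation*}
for constants $c,c'$ depending only on $\locality,\degree$. Hence the polynomial
\begin{equation*}
p(\lambda') \;:=\; e^{c+c'\beta}\,\tr\bigl((\ii[H,H'])^2\rho\bigr) \;-\; \gamma_b(\lambda')^2
\end{equation*}
is a quadratic in $\lambda'$ (both terms are quadratic: the first because $\ii[H,H']$ is linear in $\lambda'$ and the trace is taken against the fixed state $\rho$; the second by construction), and it is pointwise non-negative on $\R^m$. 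Applying \cref{fact:nonnegative-quadratic} produces a degree-2 sum-of-squares proof of $p(\lambda') \geq 0$, which is precisely the desired inequality $\gamma_b^2 \leq e^{\calO_{\locality,\degree}(\beta)}\tr((\ii[H,H'])^2\rho)$. Since the bound holds unconditionally, the constraints in $\calC_{\lambda'}$ are not even needed.

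There is no substantial obstacle: once one notices that the inequality is just a quadratic in $\lambda'$ that can be verified pointwise, all the analytic work has already been done in \cref{lem:large-marginals} (which in turn builds on \cite[Theorem 33]{aaks20}). The only mild subtlety is confirming that $\ii[H,H']$ falls into the scope of \cref{lem:large-marginals} as a $2\locality$-local operator with bounded interaction degree, which follows directly from \cref{lem:pauli-commutator}.
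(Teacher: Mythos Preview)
Your proposal is correct and follows essentially the same route as the paper: observe that $\ii[H,H']$ is a Hermitian $2\locality$-local operator, apply \cref{lem:large-marginals} pointwise to conclude the inequality is a nonnegative quadratic in $\lambda'$, and invoke \cref{fact:nonnegative-quadratic} to obtain the degree-$2$ SoS proof. Your additional remark that the constraints in $\calC_{\lambda'}$ are not actually used is also accurate.
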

\begin{proof}
Note that $\ii[H',H]$ is Hermitian and is $2\locality$-local.  Thus, by \cref{lem:large-marginals}, we have that for any real values for the $\lambda_i'$, the inequality 
\[
\gamma_b^2 \leq e^{\calO_{\locality,\degree}(\beta)} \tr((\ii[H',H])^2 \rho)
\]
holds.  Now both sides of the above are quadratic expressions in the $\lambda_i'$ so by \cref{fact:nonnegative-quadratic}, the difference between the two sides can be written as a sum of squares of linear functions of the $\lambda_i'$.  
Thus, 
\begin{equation*}
    \Set{ \calC_{\lambda'} }  \sststile{}{\lambda'}\Set{ \gamma_b^2  \leq e^{\calO_{\locality,\degree}(\beta)} \tr((\ii[H',H])^2 \rho)  }
\end{equation*}
as desired.
\end{proof}

Now we can finish the proof of \cref{lem:sos-almost-commuting} by combining the previous lemmas.

\begin{proof}[Proof of \cref{lem:sos-almost-commuting}]
 Combining \cref{lem:commutator-bound1}, \cref{lem:commutator-bound2}, and \cref{lem:commutator-bound3}, we get 
 \[
 \Set{ \calC_{\lambda'} }  \sststile{}{\lambda'} \Set{ \gamma_b^2 \leq e^{\calO_{\locality,\degree}(\beta)} \terms^3 \eps_0 } 
 \]
 since $\norm{H} \leq m$ (and we can adjust the $O_{\locality,\degree}(1)$ in the exponent to absorb the other factors).  It follows from~\cref{fact:squared-value-to-magnitude} that
 \[
 \Set{ \gamma_b^2 \leq e^{\calO_{\locality,\degree}(\beta)} \terms^3 \eps_0 }   \sststile{}{\lambda'} \Set{ -e^{\calO_{\locality,\degree}(\beta)} \terms^{1.5}\sqrt{\eps_0} \leq \gamma_b \leq e^{\calO_{\locality,\degree}(\beta)} \terms^{1.5}\sqrt{\eps_0} }
 \]
and this is exactly the desired statement. 
\end{proof}

\section{A sum-of-squares proof of identifiability (\texorpdfstring{\cref{lem:sos-identifiability}}{Lemma 6.7})}
\label{sec:sos-identifiability}

In this section, we prove \cref{lem:sos-identifiability}.  At a high level, we will rely on properties of $q_{C_{\locality,\degree}\beta, \ell_0}$ proven in Section~\ref{sec:poly-approx}.  However, since we are working with commutator polynomials, we will need to invoke the translation between polynomials and commutators in Section~\ref{sec:poly-commutators} at each step.  First, it is critical that $H'$ and $H$ almost commute so that the ``error" terms that appear in the translation (those on the RHS of Theorem~\ref{thm:polynomial-equivalence}) are small.  We make this precise in the following subsection.

\subsection{Bounding error terms: polynomials to nested commutators }

We begin by showing the following lemma:

\begin{lemma}[Bounding switched commutators of type 3]\label{lem:commutator-coeff-bound}
Let $S,T \in \{0,1\}^*$ be arbitrary sequences of length at most $\ell$.  Let $A \in \locals_{\locality}$. 
 Then we can write
\[
\ii^{|S| + |T|}\left[ (H,H')_{S}, \left[[H',H], \left[(H,H')_{T},A \right]\right] \right] = \sum_{G_c \in \calP_{4(\ell+1)\locality} } \zeta_c G_c
\]
where $\zeta_c$ are polynomials in the $\lambda'$ and the terms $G_c $ are in $ \locals_{4(\ell + 1)\locality}$ and have distance at most $4(\ell + 1)\locality$ from the support of $A$.  If Assumption~\ref{assumption:accurate-estimates} holds then  
\begin{equation*}
\begin{split}
 \Set{\calC_{\lambda'} } \sststile{}{\lambda'} \Bigl\{ & - (|S| + |T| + 1)! (2\degree)^{4\ell}e^{\calO_{\locality,\degree}(\beta)} m^{1.5}\sqrt{\eps_0}   \\
 & \leq \zeta_c \leq  (|S| + |T| + 1)! (2\degree)^{4\ell}e^{\calO_{\locality,\degree}(\beta)} m^{1.5}\sqrt{\eps_0} \Bigr\} \,.
\end{split}
\end{equation*}
Note that the $\ii^{|S| + |T|}$ is to make the expression Hermitian so that the $\zeta_c$ are real polynomials.
\end{lemma}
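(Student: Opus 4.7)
The plan is to expand the nested commutator using \cref{lem:cluster-expansion} into a sum of Pauli terms, then use the SoS bound on the coefficients of $[H',H]$ from \cref{lem:sos-almost-commuting} to bound each coefficient $\zeta_c$.

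First, by \cref{lem:pauli-commutator}, write $\ii[H',H] = \sum_b \gamma_b F_b$ where $F_b \in \locals_{2\locality}$ and each $\gamma_b$ is linear in the $\lambda'_a$ (since $H$ has fixed coefficients and $H' = \sum \lambda'_a E_a$). By \cref{lem:sos-almost-commuting}, there is a constant $C_0 = e^{\calO_{\locality,\degree}(\beta)} m^{1.5}\sqrt{\eps_0}$ such that $\Set{\calC_{\lambda'}} \sststile{}{\lambda'} \Set{-C_0 \leq \gamma_b \leq C_0}$. By linearity of the commutator,
\[
\ii^{|S|+|T|}\bigl[(H,H')_S,\bigl[[H',H],[(H,H')_T,A]\bigr]\bigr] \;=\; \sum_b \gamma_b \cdot \ii^{|S|+|T|-1}\bigl[(H,H')_S,[F_b,[(H,H')_T,A]]\bigr].
\]

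Next, I apply cluster expansion (\cref{lem:cluster-expansion}) to each inner nested commutator, taking the term set $\mathcal{E}$ to be $\{E_a\} \cup \{F_b\}$ (which has constant-bounded support and bounded max degree up to adjusting constants). This expresses each inner commutator as a sum over ordered clusters $(A,P_{|S|+|T|+1},\dots,P_1)$, with one specific position fixed to $F_b$ and the remaining $|S|+|T|$ positions taking values in $\{E_a\}$ with coefficients $\lambda_a$ or $\lambda'_a$ according to whether the slot came from $H$ or $H'$. Each cluster contributes a factor of $\pm 2^{|S|+|T|+1}$ (with $\pm 1,\pm\ii$ from Pauli commutation) times a degree-$(|S|+|T|)$ monomial in $\lambda,\lambda'$. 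Collecting like Paulis $G_c$, I obtain $\zeta_c$ as a sum over clusters producing $G_c$, of $\gamma_b$ times such a monomial (with a bounded scalar prefactor).

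Then I bound each summand in SoS. Each $\lambda_a$ is a constant with $|\lambda_a|\leq 1$, and the constraint $\{-1\leq\lambda'_a\leq 1\}$ (equivalently $(\lambda'_a)^2\leq 1$) is in $\calC_{\lambda'}$, so the monomial $M$ of degree $\leq |S|+|T|$ satisfies $\calC_{\lambda'}\sststile{}{\lambda'}\{M^2\leq 1\}$ by iterated multiplication. Multiplying by $\{\gamma_b^2\leq C_0^2\}$ and applying \cref{fact:squared-value-to-magnitude} yields $\calC_{\lambda'}\sststile{}{\lambda'}\{-C_0 \leq \gamma_b M \leq C_0\}$. Summing the contributions of all clusters producing $G_c$, with the prefactor $2^{|S|+|T|+1}$ and the count at most $(|S|+|T|+1)!(\degree+1)^{|S|+|T|+1}$ from \cref{lem:cluster-expansion}, gives
\[
\calC_{\lambda'}\sststile{}{\lambda'}\Set{|\zeta_c|\leq (|S|+|T|+1)!\,(2(\degree+1))^{|S|+|T|+1}\,C_0} \subseteq \Set{|\zeta_c|\leq (|S|+|T|+1)!\,(2\degree)^{4\ell}\,e^{\calO_{\locality,\degree}(\beta)}m^{1.5}\sqrt{\eps_0}},
\]
after absorbing the factor $2^{|S|+|T|+1}\leq 2^{4\ell}$ into the $e^{\calO_{\locality,\degree}(\beta)}$ (using $|S|+|T|+1\leq 4\ell$). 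The locality claim $G_c\in\locals_{4(\ell+1)\locality}$ and the distance claim follow from the cluster expansion: the cluster has $|S|+|T|+1\leq 2\ell+1$ operators of support $\leq 2\locality$ attached iteratively to $A$, so the combined support is graph-connected and of size $\leq 4(\ell+1)\locality$.

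The main technical obstacle is simply the careful SoS bookkeeping: verifying that the polynomial bound on $\gamma_b$ multiplied by monomials bounded by the box constraints composes correctly, and that the cluster-counting bound from \cref{lem:cluster-expansion} correctly upper bounds the number of clusters producing a fixed output Pauli $G_c$. No new ideas beyond cluster expansion and standard SoS multiplication rules are required.
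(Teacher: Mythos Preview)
Your proposal is correct and follows essentially the same route as the paper: write $\ii[H,H']=\sum_b\gamma_bF_b$, invoke \cref{lem:sos-almost-commuting} for the bound on $\gamma_b$, then apply \cref{lem:cluster-expansion} to the full nested commutator (with locality and degree parameters inflated to accommodate the $2\locality$-local $F_b$), and finally combine the box constraints $-1\le\lambda_a'\le1$ with the $\gamma_b$ bound using the SoS multiplication rule. One minor remark: your line about ``absorbing the factor $2^{|S|+|T|+1}$ into the $e^{\calO_{\locality,\degree}(\beta)}$'' is misstated, since that factor depends on $\ell$ and not on $\beta$; the slack is actually in the target $(2\degree)^{4\ell}$ (note $|S|+|T|+1\le 2\ell+1$ and the degree adjustment for the enlarged term set only costs a constant power of $\degree$), which already dominates $(2(\degree+1))^{|S|+|T|+1}$ up to a factor depending only on $\locality,\degree$.
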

\begin{proof}
We can write $\ii[H,H'] = \sum_{b} \gamma_b F_b$ for $F_b \in \locals_{2\locality}$ and by \cref{lem:sos-almost-commuting}
\begin{equation}\label{eq:commutator-small}
\Set{\calC_{\lambda'} } \sststile{}{\lambda'} \Set{ -e^{\calO_{\locality,\degree}(\beta)} m^{1.5}\sqrt{\eps_0} \leq \gamma_b \leq e^{\calO_{\locality,\degree}(\beta)} m^{1.5}\sqrt{\eps_0} } \,.
\end{equation}
For notational convenience, for an index $j \in \{1,2, \dots , |S| + |T|\}$, let $\lambda_{a, [j]}$ be equal to $\lambda_a$ if the $j$th entry of the sequence $S T$ (concatenated) is $0$ and equal to $\lambda_a'$ otherwise.  Now we can apply \cref{lem:cluster-expansion} (with $\locality \leftarrow 2\locality$ and $\degree \leftarrow 10\degree^2$, since $[H,H']$ is $2\locality$-local) to write
\begin{equation*}
\begin{split}
&\ii^{|S| + |T|}\left[ (H,H')_{S}, \left[[H',H], \left[(H,H')_{T},A \right]\right] \right] \\ &= 2^{|S| + |T|  + 1}\sum_{a_1, \dots , a_{|S| + |T|}, b }   c_{a_1, \dots , a_{|S| + |T|}, b} \cdot \Paren{ \prod_{j \in [\abs{S}+\abs{T}]}  \lambda_{a_j. [j]  } }  \cdot \gamma_b \cdot  A_{a_1, \dots , a_{|S| + |T|}, b}
\end{split}
\end{equation*}
where $  c_{a_1, \dots , a_{|S| + |T|}, b} \in \pm 1, \pm \ii$ and the sum has at most $(|S| + |T| + 1)! \degree^{2(|S| + |T| + 1)}$ terms and each of the terms $ A_{a_1, \dots , a_{|S| + |T|}, b} \in \locals_{4(\ell + 1)\locality}$ and has distance at most $4(\ell + 1)\locality$ from the support of $A$.  Thus, we can rewrite the original commutator in the form $\sum_{G_c \in \locals_{4(\ell+1)\locality}} \zeta_c G_c$ where each of the $\zeta_c$ is a polynomial with real coefficients (because the original commutator is Hermitian) in the variables $\lambda'$ of degree at most $2\ell + 2$.

Since we have the constraint $-1 \leq \lambda_a' \leq 1$ and \eqref{eq:commutator-small} and also we know that $-1 \leq \lambda_a \leq 1$, combining over all of the terms in the above sum, 
\[
\Set{\calC_{\lambda'} } \sststile{2\ell + 2}{\lambda'} \left\{ \zeta_C \leq  (|S| + |T| + 1)! (2\degree)^{2(|S| + |T| + 1)}e^{\calO_{\locality,\degree}(\beta)} m^{1.5}\sqrt{\eps_0} \right\} \,.
\]  
The lower bound can be obtained in a similar manner. 
\end{proof}

\begin{lemma}[Bounding switched commutators of type 1 and 2]\label{lem:commutator-coeff-bound2}
Let $S \in \{0,1 \}^*$ be an arbitrary sequence of length at most $\ell$.  Let $A \in \locals_{\locality}$.  Then we can write   
\[
\ii^{|S|}\left[ (H,H')_{S}, A \right] = \sum \zeta_c G_c
\]
where the terms $G_c \in \locals_{(\ell + 1)\locality}$ and have distance at most $(\ell + 1)\locality$ from the support of $A$.  Furthermore,
\[
 \Set{\calC_{\lambda'} } \sststile{2\ell + 2}{\lambda'} \left\{ |\zeta_C| \leq  (|S| +  1)! (4\degree)^{\ell} \right\} \,.
\]
Note that the $\ii^{|S|}$ ensures that the expression is Hermitian so that the $\zeta_c$ are real.
\end{lemma}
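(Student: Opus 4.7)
The plan is to apply the cluster expansion result (\cref{lem:cluster-expansion}) directly to the nested commutator $[(H,H')_S, A]$, treating each occurrence of $H$ or $H'$ as a linear combination of the terms $\{E_a\}_{a \in [\terms]}$. At position $j$ of the nested commutator, the coefficient of $E_{a_j}$ is $\lambda_{a_j}$ if the $j$-th bit of $S$ is $0$, and $\lambda_{a_j}'$ if the $j$-th bit of $S$ is $1$. This is essentially the same expansion step used in the proof of \cref{lem:commutator-coeff-bound}, but without the intermediate $[H',H]$ insertion, so the analysis is strictly simpler.

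Invoking \cref{lem:cluster-expansion} directly (with the $\locality$-local Pauli terms $\mathcal{E} = \{E_a\}$), I would write
\[
\ii^{|S|}[(H,H')_S, A] \;=\; 2^{|S|}\!\!\!\sum_{\substack{a_1,\dots,a_{|S|} \\ (A,E_{a_{|S|}},\dots,E_{a_1})\text{ cluster}}}\!\!\! c_{a_1,\dots,a_{|S|}}\,Q_{a_1,\dots,a_{|S|}}\!\prod_{j=1}^{|S|}\lambda_{a_j,[j]},
\]
where $c_{a_1,\dots,a_{|S|}} \in \{0,\pm 1\}$ (the $\ii^{|S|}$ prefactor cancels the imaginary units so the expression is Hermitian and the coefficients real), and each $Q_{a_1,\dots,a_{|S|}} \in \locals_{(|S|+1)\locality}$ has $\graph$-distance at most $|S|\leq \ell$ from $\supp(A)$, which is within the claimed bound $(\ell+1)\locality$. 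I then collect like Pauli terms to group the sum as $\sum_c \zeta_c G_c$ with $G_c \in \locals_{(\ell+1)\locality}$ and $\zeta_c$ a real polynomial in the indeterminates $\lambda'$ of degree at most $|S| \leq \ell$.

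For the SoS magnitude bound, I use that every monomial $\prod_{j=1}^{|S|}\lambda_{a_j,[j]}$ has magnitude at most $1$ in the SoS proof system: the constraint $\lambda_a'^2 \leq 1$ (which is implied by $-1 \leq \lambda_a' \leq 1$) combined with the hypothesis $|\lambda_a| \leq 1$ yields $\bigl(\prod_j \lambda_{a_j,[j]}\bigr)^2 \leq 1$ by iterated multiplication of box constraints, and then \cref{fact:squared-value-to-magnitude} gives the magnitude bound, all at degree $2|S|+2 \leq 2\ell+2$. Summing over at most $|S|!(\degree+1)^{|S|}$ clusters (the bound from \cref{lem:cluster-expansion}) with coefficients $|c_{a_1,\dots,a_{|S|}}| \leq 1$ and the overall prefactor $2^{|S|}$, I obtain in SoS
\[
|\zeta_c| \;\leq\; 2^{|S|}\cdot |S|!\,(\degree+1)^{|S|} \;\leq\; (|S|+1)!\,(4\degree)^{\ell},
\]
which is the desired bound.

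The main obstacle, if any, is purely bookkeeping: keeping track of which positions in $S$ produce a bound that is ``free'' (the $\lambda_a$'s are constants) versus which produce an SoS constraint (the $\lambda_a'$'s). There is no new analytic difficulty here — crucially, unlike \cref{lem:commutator-coeff-bound}, we are \emph{not} forced to extract the smallness of $[H,H']$ via \cref{lem:sos-almost-commuting}, so no factor of $e^{\calO_{\locality,\degree}(\beta)}m^{1.5}\sqrt{\eps_0}$ appears, and the bound depends only on the combinatorics of the cluster expansion.
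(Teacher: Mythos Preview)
Your proposal is correct and follows exactly the approach the paper takes: the paper's proof of this lemma is a one-liner stating that it is the same as \cref{lem:commutator-coeff-bound} but without invoking \cref{lem:sos-almost-commuting}, using only the box constraints $-1 \leq \lambda_a' \leq 1$ and the known bounds $|\lambda_a| \leq 1$. You have simply written out the details of that cluster-expansion argument explicitly, and your bookkeeping (including the final numerical inequality $2^{|S|}\,|S|!\,(\degree+1)^{|S|} \leq (|S|+1)!\,(4\degree)^{\ell}$) is correct.
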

\begin{proof}
The proof is the same as \cref{lem:commutator-coeff-bound}.  We don't need \cref{lem:sos-almost-commuting} at all and just need to use the constraint that $-1 \leq \lambda_a' \leq 1$ and the fact that $-1 \leq \lambda_a \leq 1$ to bound the coefficients.
\end{proof}

We will also need the following lemma that bounds the effect of the sampling error.  This is analogous to \cref{lem:error-in-constraints} but we now need that it is enforced for all potential choices of $\lambda'$.

\begin{lemma}[Estimated expectations are close to true expectations]\label{lem:error-in-sos-constraints}
Under \cref{assumption:accurate-estimates}, for all $B_1, B_2 \in \calB$,
\[
    \Set{\calC_{\lambda'} } \sststile{}{\lambda'} \left\{ \abs{\wt{\tr}(B_2q_{C_{\locality,\degree}\beta, \ell_0}(-\beta H'| B_1) \rho) - \tr(B_2q_{C_{\locality,\degree}\beta, \ell_0}(-\beta H'| B_1) \rho)}^2 \leq   0.01\eps^2 \right\}
\]
\end{lemma}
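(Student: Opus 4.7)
The plan is to closely mirror the proof of \cref{lem:error-in-constraints}, which bounded the analogous quantity for a single fixed choice $\lambda' = \lambda$ outside the SoS proof system. The only new ingredient is that we must now certify the squared bound in SoS using the box constraints $-1 \le \lambda_a' \le 1$ that appear in $\calC_{\lambda'}$.

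First I would unfold the commutator polynomial explicitly. By \cref{def:polynomial-nested-commutator} and \cref{claim:simple-coeff-bound}, $q_{\const\beta,\ell_0}(-\beta H' \mid B_1) = \sum_{\ell=0}^{d} c_\ell (-\beta)^\ell [H',B_1]_\ell$ with $|c_\ell|\le 1/\ell!$ and $d = \mathcal O(2^{\const\beta}\log(1/\eps))$. Expanding $H' = \sum_a \lambda_a' E_a$ and applying \cref{lem:cluster-expansion} term by term writes this as a linear combination
\[
q_{\const\beta,\ell_0}(-\beta H'\mid B_1) \;=\; \sum_{A \in \calA} p_A(\lambda')\, A,
\]
where the $p_A(\lambda')$ are polynomials in $\lambda'$, and summing coefficients across the expansion of \cref{lem:cluster-expansion} (which gives $\ell!(\degree+1)^\ell$ clusters of length $\ell$, each contributing a $2^\ell$ factor) yields the total $\ell^1$ bound
\[
T \;:=\; \sum_{A,\alpha} |c_{A,\alpha}| \;\le\; \sum_{\ell=0}^{d} (2\beta(\degree+1))^\ell \;\le\; (4\beta\degree)^d,
\]
where $c_{A,\alpha}$ denote the monomial coefficients of $p_A$.

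Next I would reduce to a scalar quantity. Applying $\wt{\tr}$ and $\tr$ linearly and setting $e_A := \wt{\tr}(B_2 A\rho) - \tr(B_2 A\rho)$, which by \cref{assumption:accurate-estimates} is a real scalar with $|e_A|\le \eps_0$,
\[
\wt{\tr}(B_2 q(-\beta H'\mid B_1)\rho) - \tr(B_2 q(-\beta H'\mid B_1)\rho)
\;=\; \sum_{A \in \calA} p_A(\lambda') \cdot e_A
\;=\; \sum_\alpha (\lambda')^\alpha f_\alpha,
\]
where $f_\alpha := \sum_A c_{A,\alpha}\, e_A$ is a constant satisfying $|f_\alpha| \le \eps_0 \sum_A |c_{A,\alpha}|$. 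Squaring and applying the SoS almost-triangle inequality (\cref{fact:sos-almost-triangle}) over the $N \le \terms^d$ monomials gives
\[
\Bigl(\sum_\alpha (\lambda')^\alpha f_\alpha\Bigr)^2 \;\le\; N \sum_\alpha f_\alpha^2 \cdot ((\lambda')^\alpha)^2.
\]

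The SoS-side work is then to certify $((\lambda')^\alpha)^2 \le 1$ from $\calC_{\lambda'}$. Starting from $(1-\lambda_a')(1+\lambda_a') \ge 0$, I get $(\lambda_a')^2 \le 1$ in SoS; iterating the identity $1 - z^k = (1-z)(1+z+\dots+z^{k-1})$ with $z = (\lambda_a')^2$ (which is SoS because $z,\,1-z \ge 0$) yields $((\lambda_a')^{2\alpha_a}) \le 1$; and the telescoping decomposition $1 - \prod_a x_a = \sum_a (\prod_{b<a} x_b)(1 - x_a)$ with $x_a = (\lambda_a')^{2\alpha_a}$ promotes this to $((\lambda')^\alpha)^2 \le 1$. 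Thus the squared difference is SoS-bounded by the scalar $N \sum_\alpha f_\alpha^2 \le N\,\eps_0^2\, T^2 \le \eps_0^2 \cdot \terms^d (4\beta\degree)^{2d}$, and plugging in $\eps_0 = \eps^{10^{\const\beta}}/\terms^3$ with $\const$ large enough makes this at most $0.01\eps^2$. The whole proof is essentially bookkeeping; the only substantive step is the short SoS derivation that every monomial in $\lambda'$ is bounded by $1$ in magnitude under the box constraints, after which everything reduces to plugging numerical bounds derived exactly as in \cref{lem:error-in-constraints}.
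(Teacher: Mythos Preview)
Your proposal follows the paper's approach in spirit and gets the structure right --- expand $q(-\beta H'\mid B_1)$ via the cluster expansion, bound the $\ell^1$-norm of the coefficients, and use the box constraints $-1\le\lambda_a'\le1$ to control each monomial --- but there is a quantitative gap in the last step.

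By invoking the almost-triangle inequality (\cref{fact:sos-almost-triangle}) you pick up a factor of $N$, the number of monomials, and you bound this by $N\le \terms^d$. With $d=\Theta(4^{C_{\locality,\degree}\beta}\log(1/\eps))$ this gives $\terms^d=(1/\eps)^{\Theta(4^{C_{\locality,\degree}\beta}\log \terms)}$, which is \emph{not} polynomial in $\terms$, and the prefactor $\eps_0^2=\eps^{2\cdot 10^{C_{\locality,\degree}\beta}}/\terms^6$ cannot absorb it for large $\terms$. So as written the bound does not close to $0.01\eps^2$.

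There are two easy fixes. First, the monomials that actually appear all come from clusters rooted at the fixed $B_1$, so by \cref{lem:count-monomials} the correct count is $N\le (3\degree)^{d}$, with no $\terms$-dependence; plugging this in does close. Second --- and this is what the paper's proof does, tersely --- you can avoid the $N$ factor altogether: once you have $-1\le (\lambda')^{\alpha}\le 1$ in SoS (which you already derived), summing the signed inequalities gives the linear bound $-T\eps_0\le \sum_\alpha f_\alpha(\lambda')^\alpha\le T\eps_0$ directly, and then the squared bound follows from $(T\eps_0-\mathrm{diff})(T\eps_0+\mathrm{diff})\ge 0$. This yields $(T\eps_0)^2$ with no $N$, exactly as in \cref{lem:error-in-constraints}.

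One minor point: $e_A=\wt{\tr}(B_2A\rho)-\tr(B_2A\rho)$ is in general complex, not real, since $B_2A$ need not be Hermitian. You should split into real and imaginary parts (as the paper does when representing the system); this doesn't change the shape of the argument.
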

\begin{proof}
By \cref{lem:cluster-expansion} and \cref{claim:simple-coeff-bound}, we can write $q_{C_{\locality,\degree}\beta, \ell_0}(-\beta H'| B_1)$ as a linear combination of the form $\sum_{A \in \calA} c_A A$
where $\calA \subset \locals_{2^{C_{\locality,\degree}\beta + 1}  \ell_0 \locality}$ and the coefficients $c_A$ are polynomials in the $\lambda'$ of degree at most $q_{C_{\locality,\degree}\beta, \ell_0}$.  Furthermore, the sum of the magnitudes of all of the coefficients of all of the $c_A$ is at most $((1 + \beta) \degree)^{2^{C_{\locality,\degree}\beta + 5}\ell_0}$.  We can then write
\[
\sststile{}{}
\Set{ \wt{\tr}(B_2q_{C_{\locality,\degree}\beta, \ell_0}(-\beta H' \mid  B_1) \rho) - \tr(B_2q_{C_{\locality,\degree}\beta, \ell_0}(-\beta H'\mid B_1) \rho)  = \sum_{A \in \calA} c_A ( \wt{\tr}(B_2A\rho) - \tr(B_2A \rho))   }.
\]
Thus, conditioned on the event that \cref{assumption:accurate-estimates} holds and using the constraints that $-1 \leq \lambda' \leq 1$, we get the desired bound.
\end{proof}

\subsection{No small local marginals in sum-of-squares}\label{sec:main-identifiability-proof}

Now we move onto the main proof.  At a high level, we will show that if $H, H'$ are far, then there must be certain choices for $B_1, B_2$ in the constraint system $\Set{\calC_{\lambda'} }$ that ``witness'' this and thus the constraints will be violated.  We break up the desired statement into a series of inequalities.

First, we can take a suitable linear combination of the constraints to derive that the following quantities must be small.  Note that in the expressions below, $B$ and $[H - H',B] + 0.25 [H- H', B]_3$ are playing the role of our ``witness''.  

\begin{lemma}[Nested commutator polynomials are bounded in SoS]\label{lem:sos-test-functions}
Let $q$ denote the polynomial $q_{C_{\locality,\degree}\beta,\ell_0}$.
Under \cref{assumption:accurate-estimates}, we have for any $B \in \locals_\locality$,
\begin{equation*}
\Set{\calC_{\lambda'} } \sststile{}{\lambda'}   \left\{ \abs*{\tr\Paren{([H - H', B] + 0.25[H - H', B]_3)(q(-\beta H'|B) - q(-\beta H| B))\rho} } \leq (2\degree)^{12} \eps^2  \right\}
\end{equation*}
\end{lemma}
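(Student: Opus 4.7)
The plan is to leverage the constraints of $\calC_{\lambda'}$ with carefully chosen local test operators to bound the trace expression in the SoS system. Let me write $B_2 := [H-H',B] + 0.25[H-H',B]_3$ and $M := q(-\beta H'|B) - q(-\beta H|B)$, so the goal is to bound $T := \tr(B_2 M \rho)$ in SoS.

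I would first expand $B_2$ as a linear combination $\sum_F c_F F$ of Pauli operators $F \in \calB = \locals_{4\locality}$, where each coefficient $c_F$ is a polynomial in the indeterminates $\lambda_a - \lambda'_a$ of degree at most $3$. This follows from \cref{lem:pauli-commutator} and the cluster expansion \cref{lem:cluster-expansion}: since $B$ is $\locality$-local, the first commutator $[H-H',B]$ lies in $\locals_{2\locality}$ and has $O(\degree)$ nonzero terms in the expansion, while the cubic nested commutator $[H-H',B]_3$ lies in $\locals_{4\locality}$ with $O(\degree^3)$ terms, all within $\graph$-distance $\leq 3$ of the support of $B$. Each $c_F$ has magnitude bounded by $\poly(\degree)$ using the constraints $|\lambda_a|, |\lambda'_a| \leq 1$.

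For each such $F$, I would then apply the constraint of $\calC_{\lambda'}$ with $B_1 := B$ and $B_2 := F$ to obtain $|\wt{\tr}(F\, q(-\beta H'|B)\rho) - \wt{\tr}(BF\rho)|^2 \leq \eps^2$ in SoS. Combining this with the analogous feasibility for the true Hamiltonian (\cref{lem:feasibility-of-poly-approx} applied to the pair $(F,B)$), the sampling-error bound \cref{lem:error-in-sos-constraints} on the $H'$ side, and \cref{assumption:accurate-estimates} for $\tr(BF\rho)$, subtracting gives $\eta_F^2 \leq C\eps^2$ in SoS, where $\eta_F := \tr(FM\rho)$. Concretely, one bounds $(\tr(Fq(-\beta H'|B)\rho) - \tr(BF\rho))^2$ and $(\tr(Fq(-\beta H|B)\rho) - \tr(BF\rho))^2$ each by $O(\eps^2)$ and takes the difference inside the square using $(u-v)^2 \leq 2u^2 + 2v^2$.

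Finally, by linearity $T = \sum_F c_F \eta_F$, and SoS Cauchy--Schwarz (\cref{fact:pseudo-expectation-cauchy-schwarz}) gives
\[
    T^2 \leq \Bigl(\sum_F c_F^2\Bigr)\Bigl(\sum_F \eta_F^2\Bigr),
\]
where the first factor is a bounded polynomial in $\lambda - \lambda'$ of degree at most $6$ with coefficients of size $(2\degree)^{O(1)}$, and the second factor is $(2\degree)^{O(1)} \eps^2$ from the preceding step. The main obstacle is the careful bookkeeping of the $(2\degree)^{12}$ prefactor, which arises from the $O(\degree^3)$ nested-commutator monomials produced by $[H-H',B]_3$ via cluster expansion, further amplified by the squaring in Cauchy--Schwarz and by the conversion between $\wt{\tr}$ and $\tr$. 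A subtle remaining point is matching the stated $\eps^2$ dependence on the RHS rather than the naive $\eps$: this likely requires exploiting that $M$ itself vanishes when $H'=H$ (so that $\eta_F$ is formally of order $(\lambda - \lambda')$ in a polynomial sense) to absorb one additional factor of $\eps$ via a tighter decomposition of Cauchy--Schwarz, rather than naively applying \cref{fact:squared-value-to-magnitude} to each $\eta_F^2 \leq C\eps^2$ separately.
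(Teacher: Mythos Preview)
Your approach is correct and matches the paper's: expand $\ii([H-H',B]+0.25[H-H',B]_3)=\sum_b \gamma_b F_b$ over $F_b\in\locals_{4\locality}$ with at most $\degree^3$ nonzero terms and $|\gamma_b|\leq(2\degree)^3$ in SoS; then, for each $F_b$, chain the constraint in $\calC_{\lambda'}$ (with $B_1=B,B_2=F_b$) together with \cref{lem:feasibility}, \cref{lem:error-in-constraints}, and \cref{lem:error-in-sos-constraints} to obtain $|\tr(F_b(q(-\beta H'|B)-q(-\beta H|B))\rho)|^2\leq 1.44\eps^2$ in SoS; finally take the $\gamma_b$-weighted combination. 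The paper does the last step by direct multiplication of the interval bounds (via \cref{fact:squared-value-to-magnitude}) rather than Cauchy--Schwarz, but both yield $|T|\leq(2\degree)^6\eps$.

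Your ``subtle remaining point'' is a red herring: the paper's own proof concludes with the bound $(2\degree)^6\eps$, not $(2\degree)^{12}\eps^2$, and the $\eps$-scale bound is precisely what is used downstream in the proof of \cref{lem:sos-identifiability}. The $(2\degree)^{12}\eps^2$ in the stated lemma appears to be a typo---note that $(2\degree)^{12}\eps^2$ is \emph{stronger} than $(2\degree)^6\eps$ for small $\eps$, so it is not a harmless weakening, and your speculative mechanism for extracting an extra factor of $\eps$ (exploiting that $M$ vanishes at $H'=H$) would not work and is not needed. Your Cauchy--Schwarz step gives $T^2\leq(2\degree)^{12}\eps^2$, i.e., $|T|\leq(2\degree)^6\eps$, which is exactly the paper's conclusion.
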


\begin{proof}
Fix any $B \in \locals_\locality$.  By \cref{lem:pauli-commutator}, we have
\begin{equation*}
    \sststile{}{\lambda'} \Set{ \ii\Paren{ [H - H', B] + 0.25[H - H', B]_3 } = \sum_{F_b \in \locals_{4\locality}} \gamma_b F_b },
\end{equation*}
 for some $\gamma_b$ that are degree-$3$ polynomials in the indeterminates $(\lambda'_i)$'s with real coefficients (because the expression is Hermitian) in the $\lambda'$.  Since $-1 \leq \lambda_i \leq 1$ and we also have the constraint that $-1 \leq \lambda_i' \leq 1$, we get
\[
\Set{\calC_{\lambda'} } \sststile{}{\lambda'} \Set{ -(2\degree)^3 \leq \gamma_b \leq (2\degree)^3} \,.
\]
Also, at most $\degree^3$ of the $\gamma_b$ are nonzero.
We recall the following statements: for all $B_1, B_2 \in \calB$,
\begin{align*}
    \Set{\calC_{\lambda'} } &\sststile{}{\lambda'} \Set{
        \Abs{ \wt{\tr}\Paren{B_2q(-\beta H' \mid B_1) \rho} - \tr \Paren{B_2q(-\beta H' \mid  B_1) \rho} }^2 \leq   0.01\eps^2
    } \tag{\cref{lem:error-in-sos-constraints}} \\
    &\sststile{}{\lambda'} \Set{
        \Abs{  \wt{\tr}(B_2q(-\beta H \mid  B_1) \rho) - \tr(B_2q(-\beta H \mid B_1) \rho)}^2   \leq   0.01\eps^2
    }, \tag{\cref{lem:error-in-constraints}} \\
    &\sststile{}{\lambda'} \Set{
        \Abs{   \wt{\tr}(B_1B_2\rho) - \wt{\tr}\Paren{ B_2q(-\beta H \mid B_1) \rho } }^2   \leq   \eps^2
    } \tag{\cref{lem:feasibility}}
\end{align*}
Using the (last) constraint in the system $\calC_{\lambda'}$ and the above, we deduce
\begin{equation*}
    \Set{\calC_{\lambda'} } \sststile{}{\lambda'} \left\{ \abs{ \tr(B_2q_{C_{\locality,\degree}\beta, \ell_0}(-\beta H'| B_1) \rho) - \tr(B_2q_{C_{\locality,\degree}\beta, \ell_0}(-\beta H| B_1) \rho) }^2 \leq   1.44\eps^2 \right\}    \,.
\end{equation*}
Now we can plug in $B_1 = B$ and $B_2 = F_b$ into the above and then take a linear combination with coefficients equal to the $\gamma_b$.  Using the properties of the $\gamma_b$ at the beginning, we conclude  
\[
    \Set{\calC_{\lambda'} } \sststile{}{\lambda'} \left\{ \left\lvert \tr\parens{([H - H', B] + 0.25[H - H', B]_3)(q(-\beta H'|B) - q(\beta H| B))\rho} \right\rvert \leq (2\degree)^6 \eps  \right\}
\]
as desired.
\end{proof}

On the other hand, we will show using the properties of $q_{C_{\locality,\degree}\beta, \ell_0}$ in Theorem~\ref{thm:exp-monotone-approx} that the expression on the LHS in the above is actually lower bounded by some function of $[H-H', B]$.  The first step of this is the following lemma.

\begin{lemma}[Key polynomial identity for nested commutators in sum-of-squares]\label{lem:key-sos-identity}
Let 
\[
\begin{split}
X(B) &= \tr\left(([H - H', B] + 0.25[H - H', B]_3)(q_{C_{\locality,\degree}\beta, \ell_0}(-\beta H|B) - q_{C_{\locality,\degree}\beta, \ell_0}(-\beta H'| B))\rho \right)  \\ &\quad - \frac{1}{2000} \left( \tr\parens{[H - H', B]_2  p_{C_{\locality,\degree}\beta, \ell_0}(-\beta H|B) \rho} \right) \,.
\end{split}
\]
Under \cref{assumption:accurate-estimates}, we have for any $B \in \locals_\locality$,
    \begin{align*}
       \Set{\calC_{\lambda'} } \sststile{}{\lambda'} \{ \textsf{Re}(X(B))    \geq  - \eps \} \\
        \Set{\calC_{\lambda'} } \sststile{}{\lambda'} \{ |\textsf{Im}(X(B))| \leq \eps \}
    \end{align*}
\end{lemma}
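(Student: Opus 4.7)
The plan is to lift the scalar sum-of-squares polynomial inequality from \cref{thm:exp-monotone-approx} to the nested-commutator setting via \cref{thm:polynomial-equivalence}. Let $q = q_{C_{\locality,\degree}\beta, \ell_0}$, $p = p_{C_{\locality,\degree}\beta, \ell_0}$, and define
\[
P(x,y) := 0.5(x-y)(1+0.25(x-y)^2)(q(x)-q(y)) - 0.00025(x-y)^2 p(x).
\]
By \cref{thm:exp-monotone-approx}, $P$ is a bounded SoS polynomial, so write $P = \sum_{i=1}^{K} r_i(x,y)^2$ for real-coefficient polynomials $r_i$ of degree $2^{O(\beta)}\log(1/\eps)$. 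This yields the formal polynomial identity $p_1 q_1 + p_2 q_2 + \sum_i r_i \cdot (-r_i) = 0$, where $p_1 = 0.5(x-y)(1+0.25(x-y)^2)$, $q_1 = q(x)-q(y)$, $p_2 = -0.00025(x-y)^2$, $q_2 = p(x)$.

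Apply \cref{thm:polynomial-equivalence} to this identity with $X = -\beta H$, $Y = -\beta H'$, and $A = B$ both equal to the witness observable $B$. The result is
\[
\tr\bigl(p_1(X,Y|B)q_1(X,Y|B)^\dagger \rho\bigr) + \tr\bigl(p_2(X,Y|B)q_2(X,Y|B)^\dagger\rho\bigr) = \sum_{i} \tr\bigl(r_i(X,Y|B)r_i(X,Y|B)^\dagger \rho\bigr) + \sum_j c_j Z_j,
\]
where $|c_j| \leq (1/\eps)^{2^{O(\beta)}}$ and the $Z_j$ are of the three error types from the theorem. The first sum on the right is real and admits an SoS representation in $\lambda'$: writing $r_i(X,Y|B) = \sum_\alpha c_{i,\alpha}(\lambda') M_\alpha$ with real-coefficient polynomials $c_{i,\alpha}$ in $\lambda'$ and fixed nested-commutator matrices $M_\alpha$ (each Hermitian up to a factor of $\ii$), it becomes a Hermitian PSD quadratic form in the $c_{i,\alpha}$ whose Gram matrix $[\tr(M_\alpha M_{\alpha'}^\dagger \rho)]_{\alpha,\alpha'}$ is PSD (as $\rho$ is PSD), and after grouping $M_\alpha$'s by their Hermitian parity admits an explicit SoS decomposition.

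The error sum $\sum_j c_j Z_j$ is bounded as follows. Type~1 terms involve $[X,\rho] = -\beta[H,\rho] = 0$ and vanish. Type~2 terms involve $[Y,\rho] = -\beta(H'\rho - \rho H')$: after expanding the surrounding commutators in the Pauli basis using \cref{lem:commutator-coeff-bound2}, these reduce to expressions controlled in SoS by the commutator constraint of $\calC_{\lambda'}$ together with \cref{lem:error-in-sos-constraints}. Type~3 terms involve $[X,Y] = \beta^2[H,H']$, which is small in SoS by \cref{lem:sos-almost-commuting}; applying \cref{lem:commutator-coeff-bound} to bound the surrounding commutator expressions then controls these terms. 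Summing the per-term bounds against $|c_j|$ over the $O(Kd^6) = \poly(\log(1/\eps))\cdot 2^{O(\beta)}$ error terms, and using $\eps_0 = \eps^{10^{\const\beta}}/\terms^3$ as set in \cref{algo:learning-arbitrary-hams}, the total error is bounded by $\eps/2$ in SoS.

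It remains to identify the LHS with $X(B)$. The factors $q_1(X,Y|B) = q(-\beta H|B) - q(-\beta H'|B)$ and $q_2(X,Y|B) = p(-\beta H|B)$ follow directly from the commutator polynomial convention, since $q(x)-q(y)$ and $p(x)$ have no mixed monomials. The polynomial-commutator version of $(x-y)^n(X,Y|B)$ differs from $(-\beta)^n[H-H',B]_n$ only by reordering corrections of the form given in \cref{lem:polynomial-equivalence}, which involve $[X,Y] = \beta^2[H,H']$ and are bounded in SoS within the error budget via \cref{lem:sos-almost-commuting} and \cref{lem:commutator-coeff-bound}. After extracting the overall sign and scale and absorbing these corrections, we obtain $\tr(p_1 q_1^\dagger \rho) + \tr(p_2 q_2^\dagger \rho) = c\cdot X(B) + O(\eps/2)$ in SoS for an appropriate real scale $c$. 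Taking real parts and using the non-negativity of the SoS sum gives $\textsf{Re}(X(B)) \geq -\eps$; taking imaginary parts, where the SoS sum contributes zero, gives $|\textsf{Im}(X(B))| \leq \eps$. The main obstacle is this last identification step: carefully tracking the signs and scalings introduced by $x \mapsto -\beta H$, $y \mapsto -\beta H'$, and folding all the reordering corrections (whose per-degree structure is dictated by \cref{lem:polynomial-equivalence}) into the error budget without losing a factor that would blow up against the large coefficient bound $(1/\eps)^{2^{O(\beta)}}$.
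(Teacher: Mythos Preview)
Your proof is correct and follows essentially the same approach as the paper: both invoke \cref{thm:exp-monotone-approx} to obtain the scalar SoS identity, apply \cref{thm:polynomial-equivalence} to lift it to nested commutators, use the Frobenius-norm/Gram-matrix argument to certify $\tr(r_i r_i^\dagger\rho)\ge 0$ as an SoS in $\lambda'$, and bound the three error types via \cref{lem:commutator-coeff-bound}, \cref{lem:commutator-coeff-bound2}, and the commutator constraints. The paper's proof is terser about the identification step you flag at the end (it simply writes down the translated identity with $[H-H',B]_n$ in place of the commutator-polynomial version of $(x-y)^n$ and absorbs the discrepancy into the error term $D$), so your explicit acknowledgment that these differ by reordering corrections controlled by \cref{lem:polynomial-equivalence} and \cref{lem:sos-almost-commuting} is, if anything, more careful than the paper itself.
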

\begin{proof}
By \cref{thm:exp-monotone-approx}, we can write the following polynomial equality in two formal variables $x,y$ 
\begin{equation}
    ((x-y) + 0.25(x- y)^3)(q_{C_{\locality,\degree}\beta, \ell_0}(x) - q_{C_{\locality,\degree}\beta, \ell_0}(y)) - \frac{1}{2000} (x-y)^2  p_{C_{\locality,\degree}\beta, \ell_0}(x) = \sum_{j = 1}^{m} r_j(x,y)^2
\end{equation}
where $m \leq 10^{2^{C_{\locality,\degree}\beta}\ell_0}$ and each of the polynomials $r_j$ is $(2^{C_{\locality,\degree}\beta}\ell_0 + 10, 200^{2^{C_{\locality,\degree}\beta}\ell_0} )$-bounded.  Also note that since $H,H',B$ are Hermitian, for any polynomial $q$
\begin{equation}
\sststile{}{\lambda'} \Set{ q(H,H'\mid B)^\dagger = q(-H, -H'\mid B) } \,.
\end{equation}
Thus, we can apply \cref{thm:polynomial-equivalence} to write the following formal identity:
\begin{equation}\label{eq:translate-identity}
\begin{split}
\sststile{}{\lambda'} \Biggl\{ 
&\tr\left(([H - H', B] + 0.25[H - H', B]_3)(q_{C_{\locality,\degree}\beta, \ell_0}(-\beta H|B) - q_{C_{\locality,\degree}\beta, \ell_0}(-\beta H'| B))\rho\right) \\ &\quad -     \frac{1}{2000} \tr\parens{[H - H', B]_2  p_{C_{\locality,\degree}\beta, \ell_0}(-\beta H|B) \rho}  \\ &= \sum_{j = 1}^m \tr( r_j(H,H'|B) r_j(H,H'|B)^\dagger \rho  ) + D \Biggr\}, 
\end{split}
\end{equation}
where $D$ is a sum of four types of terms given on the RHS of Theorem~\ref{thm:polynomial-equivalence}.  By \cref{claim:simple-coeff-bound}, all of the polynomials in the original identity are 
$( 2^{C_{\locality,\degree}\beta + 1}\ell_0 , 200^{2^{C_{\locality,\degree}\beta}\ell_0} )$-bounded and thus \cref{thm:polynomial-equivalence} combined with applying  \cref{lem:commutator-coeff-bound} and \cref{lem:commutator-coeff-bound2} to the individual terms in $D$ gives us that
\[
\begin{split}
\Set{\calC_{\lambda'} } &\sststile{}{\lambda'} \left\{ |\textsf{Re}(D)| \leq  2^{4^{C_{\locality,\degree}\beta}\ell_0} m^{1.5} \sqrt{\eps_0} \right\} \\
\Set{\calC_{\lambda'} } &\sststile{}{\lambda'} \left\{ |\textsf{Im}(D)| \leq 2^{4^{C_{\locality,\degree}\beta}\ell_0} m^{1.5} \sqrt{\eps_0} \right\} \,.
\end{split}
\]
Now by definition, $2^{4^{C_{\locality,\degree}\beta}\ell_0} m^{1.5} \sqrt{\eps_0} \leq 0.1\eps$.  Finally, it remains to note that 
\[
\Set{\calC_{\lambda'} } \sststile{}{\lambda'} \left\{ \tr( r_j(H,H'|B) r_j(H,H'|B)^\dagger \rho  ) \geq 0 \right\} \,.
\]
This is because we can rewrite the LHS above as $\norm{ \rho^{1/2} r_j(H,H'|B)}_F^2$ which is a sum of squares in the real and imaginary parts of the entries of the matrix $\rho^{1/2} r_j(H,H'|B)$.  This matrix has entries that are polynomials with complex coefficients in the $\lambda'$.  Separating out the real and imaginary parts, the real and imaginary parts of the entries are thus polynomials in the $\lambda'$ with real coefficients.  So overall, $\norm{ \rho^{1/2} r_j(H,H'|B)}_F^2$ is a sum of squares of polynomials in the $\lambda'$.  
Combining everything with \eqref{eq:translate-identity} completes the proof.
\end{proof}

Next, we analyze the subtracted term $\tr\parens{[H - H', B]_2  p_{C_{\locality,\degree}\beta, \ell_0}(-\beta H|B) \rho}$ in \cref{lem:key-sos-identity}.  We use the property that $p_{C_{\locality,\degree}\beta, \ell_0}$ is a good approximation to the exponential to relate it to a much simpler expression.

\begin{lemma}[Derivative is uniformly lower bounded ]\label{lem:sos-relate-to-square} Define
\[
Y(B) = \tr\parens{[H - H', B]_2p_{C_{\locality,\degree}\beta, \ell_0}(-\beta H | B)\rho} - \tr\parens{(\ii[H - H', B])^2 \rho} \,.
\]
Then for any $B \in \locals_\locality$
    \begin{align*}
         \Set{\calC_{\lambda'} } \sststile{}{\lambda'} \{ \textsf{Re}(Y(B)) \geq   -  (2\degree)^4 \eps  \} \\
          \Set{\calC_{\lambda'} } \sststile{}{\lambda'} \{ |\textsf{Im}(Y(B))| \leq (2\degree)^4 \eps \} \,.  
    \end{align*}
\end{lemma}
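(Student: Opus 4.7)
The plan is to establish the operator identity
\[
\tr([H-H', [H-H', B]]\,\rho B) + \tr([H-H', B]^2 \rho) = \tr(B[H-H', B](H'\rho - \rho H')),
\]
which holds as a formal polynomial identity in $\lambda'$ by cyclicity of the trace together with $[H,\rho]=0$ (the true Hamiltonian commutes with its Gibbs state). The key step is to write $[X, \rho B] = [X,\rho]B + \rho[X,B]$ and then apply $\tr([X,Y]Z) = -\tr(Y[X,Z])$ for Hermitian $X = H-H'$, using that $[X,\rho] = -[H',\rho] = \rho H' - H'\rho$. Since $\tr((\ii[H-H',B])^2\rho) = -\tr([H-H',B]^2\rho)$, this identity lets us rewrite
\[
Y(B) = \underbrace{\tr\!\parens{[H-H',B]_2\,(p(-\beta H\mid B)\rho - \rho B)}}_{T_1(B)} + \underbrace{\tr\!\parens{B[H-H',B](H'\rho - \rho H')}}_{T_2(B)},
\]
so the lemma reduces to SoS-bounding the magnitudes of $T_1(B)$ and $T_2(B)$.

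For $T_2(B)$, I will use that $B$ is a Pauli with $B^2 = \id$: a direct calculation gives $B[E_a,B] = -2E_a$ when $E_a,B$ anticommute and $0$ otherwise, so $B[H-H',B] = -2\sum_{a\,:\,E_a B = -B E_a}(\lambda_a-\lambda'_a)\, E_a$. There are at most $\degree$ nonvanishing summands by the dual interaction graph, and each scalar $\tr(E_a(H'\rho - \rho H'))$ is bounded in magnitude by $2\eps_0$ using the constraint $|\wt{\tr}(E_a(H'\rho-\rho H'))|^2 \le \eps_0^2$ (instantiated with $A_1 = E_a$, $A_2 = \id$) together with \cref{assumption:accurate-estimates}. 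Combined with the SoS-derivable bound $|\lambda_a - \lambda'_a| \leq 2$, this gives $|T_2(B)| \leq O_{\locality,\degree}(\eps_0) \ll \eps$ in SoS.

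For $T_1(B)$, I will expand $[H-H',B]_2 = \sum_{a_1,a_2}(\lambda_{a_1}-\lambda'_{a_1})(\lambda_{a_2}-\lambda'_{a_2})[E_{a_1},[E_{a_2},B]]$, which has only $O(\degree^2)$ nonvanishing terms by locality. Each coefficient is the fixed (i.e.\ $\lambda'$-independent) complex number $\tr([E_{a_1},[E_{a_2},B]](p(-\beta H\mid B)\rho - \rho B))$, which I will bound in magnitude by $O(\eps)$ via a direct analog of \cref{lem:feasibility-of-poly-approx}: the polynomial $p_{C_{\locality,\degree}\beta,\ell_0}$ is also a $(\kappa,\eta,\eps')$-flat exponential approximation with the same parameters as $q_{C_{\locality,\degree}\beta,\ell_0}$ by \cref{thm:exp-fancy-approx}, so the proof of \cref{lem:feasibility-of-poly-approx} applies with $q$ replaced by $p$; the only difference is that the test operator $[E_{a_1},[E_{a_2},B]]$ has support $3\locality$ rather than $\locality$, which only changes the constants hidden in the use of \cref{lem:akl}. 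Applying the SoS-derivable bound $|(\lambda_{a_1}-\lambda'_{a_1})(\lambda_{a_2}-\lambda'_{a_2})| \leq 4$ termwise and summing then yields $|T_1(B)| \leq O_{\locality,\degree}(\degree^2\,\eps)$.

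The main obstacle is this per-coefficient flatness bound for $T_1(B)$: this is the only point in the proof of \cref{lem:sos-relate-to-square} where the flat approximation property of $p$ (as distinct from $q$, whose flatness is used in \cref{lem:feasibility} and \cref{lem:sos-test-functions}) is directly invoked, and it must be applied at a slightly larger locality than that for which \cref{lem:feasibility-of-poly-approx} is stated. Once both $T_1$ and $T_2$ are bounded, the triangle inequality on real and imaginary parts yields the SoS-derivable bound $|Y(B)| \leq (2\degree)^4 \eps$ (after absorbing constants into the large $(2\degree)^4$ prefactor), from which both $\textsf{Re}(Y(B)) \geq -(2\degree)^4\eps$ and $|\textsf{Im}(Y(B))| \leq (2\degree)^4\eps$ follow immediately.
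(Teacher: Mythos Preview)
Your proposal is correct and follows essentially the same approach as the paper's proof. Your decomposition $Y(B)=T_1(B)+T_2(B)$ coincides with the paper's splitting $Y(B)=Z(B)+(Y(B)-Z(B))$: the paper defines $Z(B)=\tr\!\big([H-H',B]_2\,p(-\beta H\mid B)\rho\big)-\tr\!\big(B[H-H',B]_2\rho\big)$, which equals your $T_1(B)$ by cyclicity, and then computes $Y(B)-Z(B)=\tr\!\big(B[H-H',B][\rho,H']\big)$, which is (up to a sign) your $T_2(B)$; both pieces are then bounded exactly as you describe, invoking \cref{lem:feasibility-of-poly-approx} for $p$ (via \cref{thm:exp-fancy-approx}) on the $3\locality$-local Paulis $F_b$ and the commutator constraints, respectively.
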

\begin{proof}
Let 
\[
Z(B) = \tr\parens{[H - H', B]_2p_{C_{\locality,\degree}\beta, \ell_0}(-\beta H | B)\rho} - \tr\parens{B[H - H', B]_2 \rho} \,.
\]
We can write $[H - H',B]_2 = \sum_{F_b \in \locals_{3\locality}}\gamma_b F_b$ for some $\gamma_b$ that are degree-$2$ polynomials with real coefficients (because the expression is Hermitian) in the $\lambda'$.  Since $-1 \leq \lambda_i \leq 1$ and we have the constraint that $-1 \leq \lambda_i' \leq 1$, we get 
\begin{equation}\label{eq:commutator-coeff-bound}
\Set{\calC_{\lambda'} } \sststile{}{\lambda'} \{ -(2\degree)^2 \leq \gamma_b \leq (2\degree)^2\} \,.
\end{equation}
Also the number of nonzero $\gamma_b$ is at most $2\degree^2$.  Now by \cref{thm:exp-fancy-approx} the polynomial $p_{C_{\locality,\degree}\beta, \ell_0}$ is a weak exponential approximation with parameters
\[
\left( 0.001 \cdot 2^{C_{\locality,\degree}\beta} \log(1/\eps) , \frac{5}{C_{\locality,\degree}\beta} , 0.001\eps \right) \,.
\]
and thus by \cref{lem:feasibility-of-poly-approx}, we have
\[
|\tr(F p_{C_{\locality,\degree}\beta, \ell_0}(-\beta H | B) \rho - \tr(B F\rho)  | \leq 0.1\eps
\]
for all $F \in \locals_{3\locality}$.  Note that the above is simply a numerical inequality and does not involve any variables of the SoS system.  Now taking a linear combination of the above over all $F \in \locals_{3\locality}$ given by the decomposition $[H - H',B]_2 = \sum_{F_b \in \locals_{3\locality}}\gamma_b F_b$ and using \eqref{eq:commutator-coeff-bound}, we conclude
\begin{equation}\label{eq:sos-bound-z}
\begin{split}
         \Set{\calC_{\lambda'} } \sststile{}{\lambda'} \{ \textsf{Re}(Z(B)) \geq   -  \degree^4 \eps  \} \\
          \Set{\calC_{\lambda'} } \sststile{}{\lambda'} \{ |\textsf{Im}(Z(B))| \leq \degree^4 \eps \} 
\end{split}
\end{equation}
Next, note that 
\[
Y(B) - Z(B) = \tr( B [H - H', B] [ \rho, H'] ) 
\]
and writing $B [H - H', B]$ as a linear combination of elements of $\locals_{2\locality}$ and using that $-1 \leq \lambda_a \leq 1$ and the constraint $-1 \leq \lambda_a' \leq 1$,  we get (from the commutator constraints in $\Set{\calC_{\lambda'} }$) 
\[
\begin{split}
\Set{\calC_{\lambda'} } \sststile{}{\lambda'} \{ |\textsf{Re}(Y(B) - Z(B))| \leq \eps  \} \\
\Set{\calC_{\lambda'} } \sststile{}{\lambda'} \{ |\textsf{Im}(Y(B) - Z(B))| \leq  \eps  \} 
\end{split}
\]
and combining this with \eqref{eq:sos-bound-z} completes the proof.
\end{proof}

\begin{lemma}[No small local marginals for $H-H'$]\label{lem:sos-aaks-v2}
Fix a $B \in \locals_\locality$.  Write $\ii[H - H',B] = \sum_{F_b \in \locals_{2\locality}} \gamma_b F_b$ as a linear combination of $2\locality$-local Pauli matrices $F_b$ where each of the $\gamma_b$ are linear expressions in the $\lambda'$.  Then, for each $\gamma_b$, we have
\[
\Set{ \calC_{\lambda'} }  \sststile{}{\lambda'} \Set{ \gamma_b^2 \leq e^{\calO_{\locality,\degree}(\beta)} \tr((\ii[H - H',B])^2 \rho) } .
\]
\end{lemma}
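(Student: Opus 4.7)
The plan is to mimic the argument used for \cref{lem:commutator-bound3} almost verbatim, replacing the role of $H'$ there by the local operator $H - H'$ and the role of $H$ by the fixed test Pauli $B$. The two ingredients are (i) the ``no small local marginals'' statement of \cref{lem:large-marginals}, which gives the desired inequality as an \emph{ordinary} numerical inequality for every real choice of $\lambda'$, and (ii) the fact that non-negative quadratic polynomials admit degree-$2$ sum-of-squares proofs (\cref{fact:nonnegative-quadratic}).

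First, I would observe that $\ii[H - H', B]$ is Hermitian (since $H - H'$ and $B$ are both Hermitian) and is $2\locality$-$\graph$-local: indeed, $H - H' = \sum_a (\lambda_a - \lambda_a') E_a$ is $\locality$-local with dual interaction graph $\graph$ of max degree $\degree$, $B \in \locals_\locality$, and by \cref{lem:pauli-commutator} the commutator lands in a linear span of $2\locality$-$\graph$-local Paulis whose coefficients $\gamma_b$ are linear functions of the indeterminates $\lambda'$ (with real coefficients, because of the factor of $\ii$).

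Second, I would apply \cref{lem:large-marginals} to the operator $X := \ii[H - H', B] = \sum_b \gamma_b F_b$, with the Pauli expansion just described. The lemma yields the numerical inequality
\begin{equation*}
    \tr\bigl( (\ii[H - H', B])^2 \rho \bigr) \;\geq\; e^{-\calO_{\locality,\degree}(\beta)} \, \max_b \gamma_b^2
\end{equation*}
for every real-valued assignment of the $\lambda'_a$'s (the constants depend only on $\locality$ and $\degree$, since the support sizes and interaction-graph degrees of $X$ are bounded in terms of $\locality$ and $\degree$). This is exactly the target inequality, but only as a pointwise statement over $\R^\terms$.

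Finally, I would upgrade this pointwise inequality to a sum-of-squares proof by invoking \cref{fact:nonnegative-quadratic}. The left-hand side $\gamma_b^2$ is the square of a linear form in $\lambda'$, hence a quadratic polynomial, and the right-hand side is $e^{\calO_{\locality,\degree}(\beta)} \tr(X^2 \rho)$ where $X$ is linear in $\lambda'$, so that $\tr(X^2 \rho)$ is also a quadratic polynomial in $\lambda'$ (with real coefficients, since the expression is a trace of a Hermitian matrix). The difference $e^{\calO_{\locality,\degree}(\beta)}\tr(X^2 \rho) - \gamma_b^2$ is therefore a quadratic polynomial in $\lambda'$ that is non-negative on all of $\R^\terms$, and \cref{fact:nonnegative-quadratic} gives a degree-$2$ SoS certificate of this non-negativity, which in particular is derivable from $\calC_{\lambda'}$. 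There is no genuine obstacle here: the proof is essentially a verbatim copy of that of \cref{lem:commutator-bound3}, and the only thing to double-check is the locality bookkeeping so that \cref{lem:large-marginals} applies to $\ii[H-H',B]$.
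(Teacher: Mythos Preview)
Your proposal is correct and follows essentially the same approach as the paper, which simply says ``The proof is exactly the same as the proof of \cref{lem:commutator-bound3}.'' You have spelled out precisely that argument: apply \cref{lem:large-marginals} to the $2\locality$-local Hermitian operator $\ii[H-H',B]$ to get the numerical inequality, then invoke \cref{fact:nonnegative-quadratic} since both sides are quadratic in $\lambda'$.
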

\begin{proof}
The proof is exactly the same as the proof of \cref{lem:commutator-bound3}.
\end{proof}

\begin{lemma}[Selecting each coefficient in Hamiltonian]\label{lem:commutator-witness}
Let $X = \sum_{E_a \in \locals_\locality} x_a E_a$ be written as a linear combination of $\locality$-local Pauli matrices.  Then for any $E_a \neq I$, there exist $B \in \locals_\locality$ and $P \in \locals_{2\locality}$ whose supports intersect with $E_a$ such that 
\[
\sststile{}{} \Set{ \left( \frac{1}{2} \tr( \ii[X, B] P ) \right)^2 = x_a^2  } \,,
\]
\end{lemma}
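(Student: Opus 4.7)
The plan is to use a standard Pauli-extraction trick. First, I would exploit the non-identity of $E_a$ to find a $B \in \locals_\locality$ that anticommutes with $E_a$: pick any qubit $i \in \supp(E_a)$, let $E_a^{(i)} \in \{\sigma_x,\sigma_y,\sigma_z\}$ be the non-identity action of $E_a$ on qubit $i$, and take $B$ to be any single-qubit Pauli on qubit $i$ that anticommutes with $E_a^{(i)}$ (there are exactly two such choices among the non-identity single-qubit Paulis). This $B$ has $\supp(B) = \{i\} \subset \supp(E_a)$, so in particular $B \in \locals_\locality$ and intersects $E_a$, and $[E_a, B] = 2 E_a B$.

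Next, I would observe that $E_a B$ is anti-Hermitian (a product of two anticommuting Hermitian matrices), so $\ii E_a B$ is Hermitian. Expanding the tensor product shows $\ii E_a B$ equals $\pm 1$ times a unique element of $\locals$ supported on $\supp(E_a) \cup \supp(B) = \supp(E_a)$; let $P$ denote this Pauli with the sign chosen so that $\ii E_a B = P$. Then $P \in \locals_\locality \subset \locals_{2\locality}$ and its support intersects $E_a$.

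The main step is to expand
\[
\tfrac{1}{2}\tr(\ii[X, B] P) = \tfrac{1}{2}\sum_b x_b \tr(\ii[E_b, B] P)
\]
and check that only the $b = a$ term survives. For $b = a$, $\ii[E_a, B] = 2 P$, giving $\tfrac{1}{2}\tr(\ii[E_a, B] P) = \tr(P^2) = \tr(I)$, which yields $1$ under the paper's trace normalization (so this summand equals $x_a$). For $b \neq a$, I split on whether $E_b$ commutes or anticommutes with $B$: in the commuting case the commutator vanishes; in the anticommuting case, $\ii[E_b, B] = 2\ii E_b B$ and hence $\tr(\ii[E_b, B] P) = 2\ii\tr(E_b B \cdot \ii E_a B) = -2 \tr(E_b B E_a B) = 2 \tr(E_b E_a)$, using the identity $B E_a B = -E_a$ (from $\{E_a, B\} = 0$ and $B^2 = I$). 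Since $E_a$ and $E_b$ are distinct elements of the standard Pauli basis, $\tr(E_b E_a) = 0$ by the trace-inner-product orthogonality recorded in \cref{def:paulis}.

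Combining these gives $\tfrac{1}{2}\tr(\ii[X, B] P) = \pm x_a$, and squaring yields $\bigl(\tfrac{1}{2}\tr(\ii[X, B] P)\bigr)^2 = x_a^2$ as an identity of polynomials in the scalar indeterminates $\{x_b\}$, which makes the $\sststile{}{}$ assertion automatic (no sum-of-squares machinery is needed beyond observing this is a genuine polynomial identity). The only real obstacle is the sign and normalization bookkeeping when passing between $\ii E_a B$ and the corresponding element $P \in \locals$; the rest is routine Pauli algebra.
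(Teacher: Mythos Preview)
Your proof is correct and essentially the same as the paper's: you both take $P$ to be the Pauli $\ii[E_a,B]/2$ and argue that no other $E_b$ contributes to its coefficient in $\ii[X,B]$. The paper phrases this as an injectivity statement (the map $E\mapsto[E,B]$ is injective on Paulis that anticommute with $B$), whereas you prove it by a direct trace computation reducing to $\tr(E_bE_a)=0$; your choice of a single-qubit $B$ versus the paper's $B$ with the same support as $E_a$ is immaterial. You also correctly flag the trace-normalization issue that the paper glosses over.
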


\begin{proof}
Note that for any $A_1, A_2 , B \in \locals_\locality$ such that $[A_1, B], [A_2, B] \neq 0$ and $A_1 \neq A_2$, then $[A_1, B] \neq [A_2 , B]$.  In other words, taking the commutator with a fixed Pauli matrix is injective as long as the commutator is nonzero.  Now since $E_a \neq I$, clearly we can choose $B \in \locals_\locality$ with the same support as $E_a$ such that $[B,E_a] \neq 0$.  By \cref{lem:pauli-commutator}, we can write
$\ii[X,B]$ as a linear combination of $2\locality$-local Pauli matrices and the coefficient of $\ii[B, E_a]/2$ (which is a $2\locality$-local Pauli matrix itself) is $\pm 2 x_a$.  Thus, taking $P$ to be $\ii[B, E_a]/2$ gives the desired equality. 
\end{proof}

Now we can complete the proof of \cref{lem:sos-identifiability}.

\begin{proof}[Proof of \cref{lem:sos-identifiability}]
Combining \cref{lem:sos-test-functions}, \cref{lem:key-sos-identity}, \cref{lem:sos-relate-to-square} gives
\[
\Set{ \calC_{\lambda'} }  \sststile{}{\lambda'} \{ \tr\parens{(\ii[H - H', B])^2 \rho} \leq (10\degree)^6 \eps \} \,.
\]
Note that $\ii[H-H',B]$ is Hermitian so the expression on the LHS is a real polynomial in the $\lambda'$.  Now fix an index $a$ and we will analyze $\lambda_a - \lambda_a'$.  By \cref{lem:commutator-witness}, we can find $B \in \locals_\locality$ and $F_b \in \locals_{2\locality}$ such that the coefficient of $F_b$ in the representation $\ii[H - H',B] = \sum_{F_b \in \locals_{2\locality}}\gamma_b F_b$ is $\pm 2(\lambda_a - \lambda_a')$.  Then \cref{lem:sos-aaks-v2} implies that 
\[
\Set{ \calC_{\lambda'} }  \sststile{}{\lambda'} \{ (\lambda_a - \lambda_a')^2 \leq 2^{C_{\locality,\degree}\beta} \eps \} 
\]
as long as $C_{\locality,\degree}$ is sufficiently large in terms of $\locality, \degree$ and this completes the proof.
\end{proof}

\section{A faster algorithm}\label{sec:faster-solver}

We now complete the proof of Theorem~\ref{thm:main-ham-learning-theorem}.  The key to obtaining the faster runtime is observing that actually only a specific family of monomials appear in all of the sum-of-squares proofs in the previous sections.  The key observation is formally stated below.

\begin{definition}[Relevant monomials via cluster expansion]\label{def:limited-monomials}
We say a monomial in the variables $\lambda'$, say $\lambda_{a_1}' \lambda_{a_2} \cdots \lambda_{a_c}'$, is relevant if there exist three clusters 
$(E_{a_1}^{[i]}, \dots , E_{a_C}^{[i]} )$ for $i \in \{1,2, 3 \}$, $E_{a_j}^{[i]} \in \locals_\locality$ such that
\begin{itemize}
    \item $C \leq 10 \cdot 4^{C_{\locality,\degree}\beta}\log(1/\eps)$
    \item $\{E_{a_1}, \dots , E_{a_c} \} \subseteq \bigcup_i \{ E_{a_1}^{[i]}, \dots , E_{a_C}^{[i]}  \}$ as unordered multisets
\end{itemize} 
We use $\calL$ to denote the set of all relevant monomials in the $\lambda'$.
\end{definition}

\begin{lemma}\label{lem:limited-monomials}
Fix a $E_a \in \locals_\locality$.  Then we can write
\[
2^{C_{\locality,\degree}\beta}\eps - (\lambda_a - \lambda_a')^2 = \sum_{i} r_i(\lambda')^2  \prod_{g \in S_i} g(\lambda')
\]
where the $r_i$ are polynomials in the $\lambda'$ and $S_i$ are subsets of constraints in $\Set{ \calC_{\lambda'}}$ (written in the form $g(\lambda') \geq 0$ ).  Furthermore, each product of constraints $\prod_{g \in S_i} g(\lambda')$ involves
\begin{itemize}
    \item At most one commutator constraint
    \item At most one polynomial approximation constraint 
    \item A product of the constraints $\lambda_a + 1 \geq 0, \lambda_a -1 \leq 0$  where the $\lambda_a$ that appear form a relevant monomial
\end{itemize}
\end{lemma}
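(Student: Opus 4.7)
The plan is to revisit the sum-of-squares proof of identifiability (\cref{lem:sos-identifiability}) and inspect which $\lambda'$-variables appear in each factor of its certificate. I will trace through the component lemmas (\cref{lem:sos-test-functions}, \cref{lem:key-sos-identity}, \cref{lem:sos-relate-to-square}, \cref{lem:sos-aaks-v2}, and \cref{lem:commutator-witness}) and show that each summand of the SoS decomposition has the required restricted form.

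First, \cref{lem:commutator-witness} identifies $(\lambda_{a^*} - \lambda_{a^*}')^2$ with the squared coefficient of some $P$ in the cluster expansion of $\ii[H-H', B]$, where $B, P$ are both supported on $\supp(E_{a^*})$; this step uses no constraints. \cref{lem:sos-aaks-v2} then bounds this squared coefficient by $e^{\calO_{\locality,\degree}(\beta)} \tr((\ii[H-H', B])^2 \rho)$ via \cref{fact:nonnegative-quadratic}, introducing boundedness constraints only on variables in a cluster around $B$. \cref{lem:sos-relate-to-square} and \cref{lem:sos-test-functions} then bound this trace using at most one commutator constraint and at most one polynomial-approximation constraint per summand, respectively; by setting $B_1 = B$ in the polynomial-approximation constraint (as done in the proof of \cref{lem:sos-test-functions}), both sources of constraint variables form clusters around the common anchor $B$.

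Second, \cref{lem:key-sos-identity} expresses the residual trace as an explicit sum of squares $\sum_j \|\rho^{1/2} r_j(H, H'|B)\|_F^2$, a sum of squares of real polynomials in $\lambda'$ whose variables form clusters around $B$ via the cluster expansion of \cref{lem:cluster-expansion}, plus Type~1, 2, and 3 error terms from \cref{thm:polynomial-equivalence}, where the Type-3 errors are handled by \cref{lem:commutator-coeff-bound} using \cref{lem:sos-almost-commuting}. Combining these pieces, each summand in the final SoS decomposition takes the form $r_i(\lambda')^2 \prod_{g \in S_i} g(\lambda')$, where $r_i$'s variables form a cluster at $B$; $S_i$ contains at most one commutator constraint (variables in a cluster at some $A$ near $B$); $S_i$ contains at most one polynomial-approximation constraint (variables in a cluster at $B$); and the remaining factors in $S_i$ are boundedness constraints $\lambda_a \pm 1$ on only the variables arising in these three sources. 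The boundedness-constraint variables thus lie in a union of at most three clusters of size $O(4^{C_{\locality,\degree}\beta}\log(1/\eps))$, matching \cref{def:limited-monomials}.

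The main obstacle will be careful bookkeeping of the commutator constraints. Specifically, the proof of \cref{lem:sos-almost-commuting} introduces its own commutator-constraint factors when bounding each $\gamma_b$ (coefficient of $F_b$ in $\ii[H, H']$), and these must be consolidated with the outer commutator constraint coming from \cref{lem:sos-relate-to-square} into a single factor per summand. Rearranging the SoS certificate so that this consolidation works, without accumulating multiple commutator constraints whose variables would span a fourth cluster, is the nontrivial step, and requires expanding \cref{lem:sos-almost-commuting}'s certificate into its atomic pieces and redistributing them across the outer decomposition.
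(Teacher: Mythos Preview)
Your approach is correct and matches the paper's. Both proceed by tracing through the component proofs in \cref{sec:commutator-is-small} and \cref{sec:main-identifiability-proof}, using \cref{lem:cluster-expansion} to characterize which $\lambda'$-variables appear in each summand of the SoS certificate. The paper's proof is a two-sentence sketch of exactly this, together with the minor observation (which you omit) that any constraint appearing twice in a set $S_i$ can be absorbed into $r_i(\lambda')^2$.

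The ``obstacle'' you flag is not real. The commutator-constraint factors coming from \cref{lem:sos-almost-commuting} enter the final certificate only through the type-3 error terms of \cref{lem:key-sos-identity} (via \cref{lem:commutator-coeff-bound}). When \cref{lem:sos-test-functions}, \cref{lem:key-sos-identity}, and \cref{lem:sos-relate-to-square} are chained together in the proof of \cref{lem:sos-identifiability}, the combination is by transitivity of $\leq$, which in the SoS proof system is \emph{addition} of certificates, not multiplication. Hence the summands arising from \cref{lem:key-sos-identity} and those arising from \cref{lem:sos-relate-to-square} are disjoint pieces of the final sum, and no single summand ever carries commutator constraints from both sources. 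There is nothing to consolidate; each summand inherits its (at most one) commutator constraint from exactly one place in the proof tree.
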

\begin{proof}
This follows from examining the proofs in Section~\ref{sec:commutator-is-small} and Section~\ref{sec:main-identifiability-proof}, using \cref{lem:cluster-expansion} to characterize all of the monomials that appear whenever we expand a nested commutator in $H$ and $H'$.  Note we can ensure that all of the constraints being multiplied in the sets $S_i$ are distinct because if any constraint is multiplied twice, we get a term of the form $g(\lambda')^2$ and can instead fold it into the $r_i(\lambda')^2$.   
\end{proof}

Now using \cref{lem:count-monomials}, we can count the total number of relevant monomials.

\begin{lemma}\label{lem:count-limited-monomials}
The total number of distinct relevant monomials is at most $m \cdot (1/\eps)^{10^{C_{\locality,\degree}\beta}}$. 
\end{lemma}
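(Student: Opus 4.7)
The plan is to count relevant monomials by counting the objects they arise from: ordered triples of clusters, each of length at most $C := 10 \cdot 4^{C_{\locality,\degree}\beta}\log(1/\eps)$, together with a choice of sub-multiset of the union. The key combinatorial input is \cref{lem:count-monomials}, which bounds the number of cluster multisets of a given length anchored at a fixed starting term.

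First, I would bound the number of individual clusters of length at most $C$. By \cref{lem:count-monomials}, for each fixed anchor $E_{a_1} \in \{E_1,\dots,E_\terms\}$, the number of multisets $\{E_{a_2},\dots,E_{a_C}\}$ such that $(E_{a_1},\dots,E_{a_C})$ forms a cluster under some ordering is at most $(3\degree)^{C}$. Summing over the $\terms$ choices of anchor and the at most $C$ choices of cluster length, the total number of cluster multisets of size at most $C$ is bounded by $\terms \cdot C \cdot (3\degree)^{C}$.

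Next, I would count ordered triples of such clusters, giving $(\terms C (3\degree)^{C})^3$. For each triple of clusters, their union is a multiset of size at most $3C$, so the number of sub-multisets contained in the union is at most $2^{3C}$ (each element is included or not). Multiplying these together yields a total count of at most $\terms^3 C^3 (3\degree)^{3C} 2^{3C} \leq \terms^3 (10\degree)^{3C} \cdot C^3$. Plugging in $C = 10 \cdot 4^{C_{\locality,\degree}\beta} \log(1/\eps)$, the dominant factor $(10\degree)^{3C}$ becomes $(1/\eps)^{O(4^{C_{\locality,\degree}\beta} \log(10\degree))}$. For $C_{\locality,\degree}$ chosen sufficiently large in terms of $\locality,\degree$, this is at most $(1/\eps)^{10^{C_{\locality,\degree}\beta}/2}$, which easily absorbs the residual factors of $\terms^3 C^3$ (using that $\terms \leq (1/\eps)^{10^{C_{\locality,\degree}\beta}/2}/\terms \cdot \terms^2$, equivalently noting that polynomial factors in $\terms$ and $C$ are dwarfed by the exponential tail). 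This produces the claimed bound $\terms \cdot (1/\eps)^{10^{C_{\locality,\degree}\beta}}$.

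The step I expect to require the most care is the final absorption of the $\terms^3 C^3$ overhead into $\terms \cdot (1/\eps)^{10^{C_{\locality,\degree}\beta}}$: this is only an issue when $\eps$ is close to $1$ and $\terms$ is large, but in that regime we may simply enlarge $C_{\locality,\degree}$ so that the exponent $10^{C_{\locality,\degree}\beta}$ accommodates the residual factors. Alternatively, one could sharpen the count by observing that, in each application of \cref{lem:key-sos-identity} within \cref{lem:limited-monomials}, the three clusters produced by the cluster expansion share a common anchor (the operator $B$ appearing in the identity), which immediately reduces the leading factor from $\terms^3$ to $\terms$ without any slack.
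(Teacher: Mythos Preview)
Your approach is the paper's: the paper's proof is a single sentence citing \cref{lem:count-monomials}, and you spell out precisely that computation (count clusters, cube for triples, multiply by the number of sub-multisets).

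The subtlety you flag about $m^3$ versus $m$ is real. With \cref{def:limited-monomials} read literally (three clusters with no relation between them), the naive count is indeed $m^3\cdot(\text{const})^{O(C)}$, and your first attempted fix---enlarging $C_{\locality,\degree}$---does not work: $(1/\eps)^{10^{C_{\locality,\degree}\beta}}$ depends only on $\eps,\beta,\locality,\degree$, so no choice of that constant can absorb a residual $m^2$ when $\eps$ is bounded away from $0$ and $m\to\infty$. Your alternative is exactly right and is what the paper implicitly relies on: in every cluster expansion invoked in Sections~\ref{sec:commutator-is-small}--\ref{sec:sos-identifiability} the nested commutators are all rooted at a single local operator $B$ (chosen via \cref{lem:commutator-witness} from the fixed term $E_{a^*}$), so the three clusters in \cref{def:limited-monomials} can always be taken to share a common anchor. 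That collapses the $m^3$ to a single factor of $|\locals_\locality|=O_{\locality,\degree}(m)$ and gives the stated bound. It is also worth noting that even $m^3$ would suffice for the downstream runtime claim, since \cref{thm:solve-limited-sos-system} cubes $N$ anyway and the final theorem only asserts $\poly(m)$ dependence.
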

\begin{proof}
This follows immediately from \cref{lem:count-monomials}.    
\end{proof}

Finally, we have the following theorem from Steurer and Tiegel~\cite{st21} that allows us to solve sum-of-squares systems with running time depending only on the number of monomials that appear in the proofs.  

\begin{theorem}[Degree reduction via linearization \cite{st21}]\label{thm:solve-limited-sos-system}
Let $p: \R^n \rightarrow \R$ be a multivariate polynomial of degree at most $t$.  Suppose that there exists a system of polynomial inequalities $\calA = \{ q_1 \geq 0, \dots , q_m \geq 0 \}$  such that $\calA \sststile{x}{t} \{ p(x) \geq 0 \}$ and further assume that this proof can be written in the form
\[
\sum_{i} r_i(x)^2 \prod_{j \in S_i} q_j(x)
\]
where $S_i \subseteq [m]$ and there are at most $M$ distinct sets $S_i$.  Also assume that the number of distinct monomials that appear in $p(x)$ is at most $N$.  Then we can write a polynomial system $\calA'$ in $x$ and some additional auxiliary variables such that
\begin{itemize}
\item $\calA'$ is feasible whenever $\calA$ is feasible
\item $\calA' \sststile{}{} \{ p(x) \geq 0 \}$
\end{itemize}
and we can compute a pseudoexpectation satisfying $\calA'$ in time $O(m + M + (tN)^3 )$.
\end{theorem}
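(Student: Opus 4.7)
The plan is to pass to a linearized system in which each relevant monomial of $x$ is represented by its own auxiliary variable, and sum-of-squares positivity is enforced by explicit PSD constraints on moment matrices associated with each distinct product $\prod_{j\in S_i} q_j(x)$. Rather than parameterizing a full degree-$t$ pseudo-distribution over $\R^n$, which has $n^{\Theta(t)}$ moments, we parameterize only the moments that the given proof actually references, yielding an SDP whose size is $\poly(tN, M)$ instead of $n^{O(t)}$.

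First I would identify the set $\mathcal{M}$ of monomials in $x$ that appear in any of the polynomials $r_i(x)^2 \prod_{j \in S_i} q_j(x)$. Since these polynomials sum to $p(x)$, which has at most $N$ monomials, and each such product involves at most $t$ multiplicands of bounded degree, one can take $|\mathcal{M}| = O(tN)$: the monomials come from monomials of $p$ together with the cross-terms generated when expanding squares against products of constraints. For each $\alpha \in \mathcal{M}$, introduce a scalar $y_\alpha$, to be interpreted as the pseudo-moment $\pexpec{}{x^\alpha}$.

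Next I would construct $\calA'$ by, for every set $S_i$ appearing in the proof, introducing a symmetric matrix $Y_i$ indexed by monomials $\beta,\gamma$ of degree at most $(t - \deg(\prod_{j\in S_i} q_j))/2$, and imposing the linear equalities
\begin{equation*}
    [Y_i]_{\beta\gamma} \;=\; \sum_{\alpha \in \mathcal{M}} c^{(i)}_{\beta\gamma\alpha}\, y_\alpha,
\end{equation*}
where $c^{(i)}_{\beta\gamma\alpha}$ is the coefficient of $x^\alpha$ in $x^{\beta+\gamma} \prod_{j \in S_i} q_j(x)$. The PSD constraint $Y_i \succeq 0$ then linearizes $\pexpec{}{r(x)^2 \prod_{j\in S_i} q_j(x)} \geq 0$ for every $r$ of appropriate degree. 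Feasibility of $\calA$ implies feasibility of $\calA'$ directly via the moments of any consistent pseudo-distribution. For soundness, any solution $\{y_\alpha, Y_i\}$ of $\calA'$ defines a linear functional $L(x^\alpha) := y_\alpha$ for which applying $L$ to the identity $p(x) = \sum_i r_i(x)^2 \prod_{j \in S_i} q_j(x)$ yields $L(p) = \sum_i u_{r_i}^\top Y_i u_{r_i} \geq 0$, where $u_{r_i}$ is the coefficient vector of $r_i$; this is the required $\calA' \sststile{}{} \{p(x) \geq 0\}$.

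The main obstacle is the bookkeeping: one must verify carefully that $\mathcal{M}$ really has size $O(tN)$, independent of $n$, by arguing only monomials reachable from the $N$ monomials of $p$ via the cancellation structure of the identity are needed, and that every $Y_i$ has dimension at most $\poly(tN)$ so the overall SDP has $O(m + M)$ constraints plus PSD blocks of total dimension $\poly(tN)$. The claimed running time $O(m + M + (tN)^3)$ then follows from standard interior-point solvers applied to this reduced SDP, and from the observation that constructing the linear maps defining $Y_i$ from $\{y_\alpha\}$ is a one-time preprocessing step of the same order.
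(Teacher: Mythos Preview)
The paper does not prove this theorem: it is invoked as a black-box result from Steurer and Tiegel~\cite{st21}, so there is no proof in the paper to compare against. Your sketch captures the correct high-level mechanism of linearization---introduce auxiliary variables for the relevant moments, enforce the localized moment-matrix PSD constraints, and solve the resulting small SDP---and that is indeed the content of the cited work.

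That said, your proposal contains a real gap at precisely the point you flag as ``the main obstacle.'' You assert that the set $\mathcal{M}$ of monomials appearing in the summands $r_i(x)^2 \prod_{j\in S_i} q_j(x)$ has size $O(tN)$ because ``only monomials reachable from the $N$ monomials of $p$ via the cancellation structure'' are needed. This is not true in general: an SoS identity can involve arbitrarily many monomials in the individual summands that cancel upon summation, so knowing that $p$ has $N$ monomials tells you nothing about the monomial support of the proof. No amount of bookkeeping will recover the bound $|\mathcal{M}|=O(tN)$ from the hypotheses as you have interpreted them.

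What makes the result go through is either (i) an explicit hypothesis bounding the monomial support of the proof itself (and indeed, this is how the present paper applies the theorem---see \cref{def:limited-monomials} and \cref{lem:count-limited-monomials}, which bound the monomials appearing in the entire SoS derivation, not just in the target $p$), or (ii) the Steurer--Tiegel device of introducing one auxiliary variable per constraint $q_j$ and one per distinct product $\prod_{j\in S}q_j$, after which the proof becomes a \emph{degree-$2$} SoS proof in the enlarged variable set, and the SDP size is governed by $m+M$ and the number of monomials in $p$ rather than by the monomials in the $r_i$. Your construction as written parameterizes the matrices $Y_i$ by monomials of $x$ of degree up to $t/2$, which could be $n^{\Theta(t)}$ many; the point of the linearization is exactly to avoid that.
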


\begin{proof}[Proof of \cref{thm:main-ham-learning-theorem}]
The proof is exactly the same as the proof of \cref{thm:weak-ham-learning-theorem} except using \cref{lem:count-limited-monomials} and \cref{lem:limited-monomials} to bound the complexity of the final sum-of-squares proof and using \cref{thm:solve-limited-sos-system} to solve the  sum-of-squares system in running time $\poly(m, \log(1/\delta), (1/\eps)^{2^{f(\locality, \degree)\beta}} )$.   
\end{proof}

\section*{Acknowledgments}
\addcontentsline{toc}{section}{Acknowledgments}

AB is supported by Ankur Moitra's ONR grant and the NSF TRIPODS program (award DMS-2022448). AL is supported in part by an NSF GRFP and a Hertz Fellowship.  AM is supported in part by a Microsoft Trustworthy AI Grant, an ONR grant and a David and Lucile Packard Fellowship.  ET is supported by the NSF GRFP (DGE-1762114) and the Miller Institute for Basic Research in Science, University of California Berkeley.

\newpage
\printbibliography

\appendix
\section{Proof of \texorpdfstring{\cref{aaks-marginals}}{Theorem 2.13}} \label{app:aaks}

We recall the setup from \cref{aaks-marginals}.
We have $H$, a $\locality$-local Hamiltonian with dual interaction graph $\graph$ with max degree $\degree$, along with $A = \sum_b \sigma_b P_b$, a $\locality$'-local operator where $P_b$ are products of Pauli matrices and $-1 \leq \sigma_b \leq 1$ and whose dual interaction graph has max degree $\degree'$.
For $\beta > 0$, $\rho$ is the corresponding Gibbs state of $H$.
We wish to show a lower bound on $\angles{A^2}$.
\begin{align*}
    \angles{A^2} = \tr(A^2 \rho)
    \geq \max_{i \in [\qubits]} \parens[\Big]{c\tr(A_{(i)}^2/\dims)}^{6 + c'\beta}.
\end{align*}

\begin{remark}
    \cite[Theorem 33]{aaks20} proves that
    \begin{equation*}
    \angles{A^2} = \tr(A^2 \rho) \geq \max_{i \in \Lambda} \tr(A_{(i)}^2 / \dims)^{\beta^{\bigOmega{1}}}
    \end{equation*}
    for \emph{quasi-local} operators.
\end{remark}

\subsection{Reproving lemmas}

First, we show that this quantity is ``protected'' by local unitary operations, meaning it doesn't change too much when we apply a local unitary to $A$.
How this works in AAKS is as follows.
There are two claims: Claim 37 and Claim 38.
Claim 37 only uses AKL for local operators, while Claim 38 uses AKL for quasi-local operators (and relies on Claim 37).

\begin{lemma}[{\cite[Claim 37]{aaks20}}] \label{lem:aaks-37}
    Let $U$ be a unitary supported on $S \subset [\qubits]$.
    Then, for any $M$ with $\norm{M} \leq 1$,
    \begin{align*}
        \angles{(U^\dagger M U)^2} \leq \bigOt[\Big]{(\degree + 1)\locality\abs{S}\parens[\big]{4e^{\frac{2\abs{S}}{\locality}}}^{\frac{2(\degree + 1)\locality\beta}{1 + 2(\degree+1)\locality\beta}} \angles{M^2}^{\frac{1}{1 + 2(\degree+1)\locality \beta}}}.
    \end{align*}
\end{lemma}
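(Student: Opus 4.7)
The starting point is the identity
\[
\angles{(U^\dagger M U)^2} = \tr(U^\dagger M^2 U\, \rho) = \tr(M^2\, U \rho U^\dagger) = \tr(M^2 \widetilde\rho),
\]
where $\widetilde\rho := U\rho U^\dagger$. Now $\widetilde\rho$ is itself a Gibbs state of $\widetilde H := UHU^\dagger$. The idea is to compare $\widetilde\rho$ and $\rho$ using a Hölder-type interpolation, with the Hölder exponent chosen so that the exponent on $\angles{M^2}$ is exactly $1/(1+2(\degree+1)\locality\beta)$.

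First I would locate the perturbation. Since $U$ is supported on $S$, only those terms $E_a$ with $\supp(E_a)\cap S\neq\varnothing$ are affected by conjugation, and each site of $S$ is contained in at most $\degree+1$ such terms, giving $\|\widetilde H - H\| \leq 2(\degree+1)\locality |S|$. Thus $\widetilde H$ is a local perturbation of $H$. Next I would apply a non-commutative Hölder inequality with exponent $p = 1+2(\degree+1)\locality\beta$ and its conjugate $q$, in the form
\[
\tr(M^2 \widetilde\rho) \leq \tr((M^2)^p \rho)^{1/p} \cdot \tr\bigl(f(\widetilde\rho,\rho)\bigr)^{1/q},
\]
where $f(\widetilde\rho,\rho)$ is an interpolating expression capturing the ``relative density'' between $\widetilde\rho$ and $\rho$ (e.g., $\rho^{-1/2} \widetilde\rho \rho^{-1/2}$ averaged appropriately; in AAKS this step is formalized via Araki--Lieb--Thirring or via an Uhlmann-type splitting). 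Because $\norm{M}\le 1$, we have $(M^2)^p \preceq M^2$, so the first factor is bounded by $\angles{M^2}^{1/p}$, which produces precisely the desired exponent on $\angles{M^2}$.

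The remaining task is to control the second factor by $\bigOt[\big]{(\degree+1)\locality|S|\cdot (4e^{2|S|/\locality})^{2(\degree+1)\locality\beta}}$, which after raising to the $1/q$ power matches the claimed prefactor. For this I would invoke \cref{lem:akl}: the spreading operator $e^{-\beta\widetilde H} e^{\beta H}$ (essentially $\widetilde\rho \rho^{-1}$ up to normalization) is the difference of two operators whose supports lie in a neighborhood of $S$, and AKL guarantees that such operators are approximately band-diagonal in the eigenbasis of $H$. Concretely, the dimension $4^{|S|}$ of operators supported on $S$ combined with the $e^{-|\sigma_i-\sigma_j|/(2(\degree+1)\locality)}$ decay of off-diagonal entries yields an ``effective spreading width'' of roughly $(\degree+1)\locality|S|$ in the spectrum of $H$, replacing the naive $e^{\beta\|\widetilde H - H\|}$ bound by the stated factor $(4e^{2|S|/\locality})^{\bigO{(\degree+1)\locality\beta}}$.

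The main obstacle is formalizing the non-commutative Hölder step and matching constants precisely. The subtlety is that a crude Hölder bound using $\|\widetilde\rho\rho^{-1}\|\leq e^{2\beta\|\widetilde H - H\|}$ would cost $e^{\bigO{(\degree+1)\locality\beta |S|}}$, which is far too large; the improvement to the stated polynomial-in-$|S|$ prefactor (times $(4e^{2|S|/\locality})^{(\ldots)}$) requires the AKL band-diagonality input to dominate over the operator-norm bound. I expect this to be carried out by splitting into low- and high-energy subspaces using spectral projectors $\Pi^{(H)}_{\calI}$ at a scale $\omega\sim (\degree+1)\locality|S|$, separately handling the low-energy part (where dimensions are small and Hölder is sharp) and the high-energy tail (where $\rho$ has exponentially small weight), and then optimizing the interpolation point to recover the exact exponent $2(\degree+1)\locality\beta/(1+2(\degree+1)\locality\beta)$.
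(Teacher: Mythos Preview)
Your proposal has a genuine gap at the H\"older step, and the paper's argument is structured quite differently.

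The inequality you write, $\tr(M^2\widetilde\rho)\le \tr((M^2)^p\rho)^{1/p}\cdot \tr(f(\widetilde\rho,\rho))^{1/q}$, is not a standard H\"older or Araki--Lieb--Thirring inequality. If you try to make it rigorous by inserting $\rho^{1/2}\rho^{-1/2}$ and applying Araki--Lieb--Thirring, you obtain $\tr((\rho^{1/2}M^2\rho^{1/2})^p)\le \tr(M^{2p}\rho^p)$, which involves $\rho^p$ rather than $\rho$; this is a Gibbs state at a different temperature with the wrong normalization, and does not collapse to $\angles{M^2}^{1/p}$. You acknowledge this as ``the main obstacle,'' but it is not a matter of matching constants---the interpolation you want simply does not exist in that form. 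Relatedly, your plan to apply \cref{lem:akl} to the relative density $e^{-\beta\widetilde H}e^{\beta H}$ is problematic: AKL is a statement about operators with small support viewed in the eigenbasis of $H$, and the relative density is neither local nor supported on few qubits.

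The paper's proof avoids H\"older entirely. It works directly in the eigenbasis of $H$, slicing the spectrum into unit windows $[i,i+1)$, and at each window splits $U$ into a near-diagonal piece $\Pi^{(H)}_{[i-\Delta,i+1+\Delta)}U\Pi^{(H)}_{[i,i+1)}$ and its complement. The off-diagonal piece is bounded by applying \cref{lem:akl} to $U$ itself (which \emph{is} local, supported on $S$), yielding decay $e^{-\Delta/(4(\degree+1)\locality)}e^{|S|/\locality}$. The on-diagonal piece costs only $e^{\beta\Delta}\angles{M^2}$, since shifting by $\Delta$ in energy changes Gibbs weights by at most $e^{\beta\Delta}$. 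Summing and optimizing over $\Delta$ produces the exponent $1/(1+2(\degree+1)\locality\beta)$ as the balance point---it is not input as a H\"older conjugate but emerges from the optimization.
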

Thinking of $\degree$ and $\locality$ as being constants and choosing $c, c'$ appropriately, the inequality becomes
\begin{align*}
        \angles{(U^\dagger M U)^2} \leq e^{c\abs{S}} \angles{M^2}^{\frac{1}{1 + c'\beta}}.
\end{align*}
The $\bigOt{\cdots}$ gets folded into the values of $c, c'$.
\begin{proof}
The overview of the proof is as follows.
We want to bound the influence of $U$ on the expression $\tr(UM^2U\rho)$, so we can consider it in terms of the eigenbasis of $H$.
The off-diagonal pieces are bounded by \cref{lem:akl} (applying it with $\onespin = \degree + 1$ and $R = (\degree + 1)\abs{S}$):
\begin{align} \label{eq:akl-in-aaks-h}
    \norm{\Pi_{[\sigma + \Delta, \infty]}^{(H)}U\Pi_{[-\infty, \sigma]}^{(H)}}
    \leq \norm{U}e^{-\frac{1}{4(\degree + 1)\locality}(\Delta - 4(\degree + 1)\abs{S})}
    = e^{-\frac{\Delta}{4(\degree+1)\locality}}e^{\frac{\abs{S}}{\locality}}.
\end{align}
We split our expression into the on-diagonal and off-diagonal pieces.
\begin{align*}
    \angles{(U^\dagger M U)^2}
    &= \tr((U^\dagger M U)^2 \rho) \\
    &= \sum_i \tr((U^\dagger M U)^2 \Pi_{[i, i+1)}^{(H)}\rho) \\
    &= \sum_i \fnorm{U^\dagger M U \Pi_{[i, i+1)}^{(H)}\sqrt{\rho}}^2 \\
    &= \sum_i \fnorm{M U \Pi_{[i, i+1)}^{(H)}\sqrt{\rho}}^2 \\
    &= \sum_i \fnorm{M (\Pi_{(-\infty, i - \Delta)}^{(H)} + \Pi_{[i-\Delta,i+1+\Delta)}^{(H)} + \Pi_{[i+1+\Delta,\infty)}^{(H)}) U \Pi_{[i, i+1)}^{(H)}\sqrt{\rho}}^2 \\
    &\leq 2\sum_i \parens[\Big]{\underbrace{\fnorm{M (\Pi_{(-\infty, i - \Delta)}^{(H)} + \Pi_{[i+1+\Delta,\infty)}^{(H)}) U \Pi_{[i, i+1)}^{(H)}\sqrt{\rho}}^2}_{(\textsc{off}_i)} + \underbrace{\fnorm{M \Pi_{[i-\Delta,i+1+\Delta)}^{(H)} U \Pi_{[i, i+1)}^{(H)}\sqrt{\rho}}^2}_{(\textsc{on}_i)}}
\end{align*}
We first bound $(\textsc{on}_i)$.
\begin{align*}
    (\textsc{on}_i)
    &= \fnorm{M \Pi_{[i-\Delta,i+1+\Delta)}^{(H)} U \Pi_{[i, i+1)}^{(H)}\sqrt{\rho}}^2 \\
    &= \fnorm{M \Pi_{[i-\Delta,i+1+\Delta)}^{(H)}}^2 \norm{U \Pi_{[i, i+1)}^{(H)}\sqrt{\rho}}^2 \\
    &\leq \fnorm{M \Pi_{[i-\Delta,i+1+\Delta)}^{(H)}}^2 e^{-\beta i} \\
    &\leq \fnorm{M \Pi_{[i-\Delta,i+1+\Delta)}^{(H)}\sqrt{\rho}}^2 e^{\beta \Delta}
\end{align*}
Next, we bound $(\textsc{off}_i)$.
Here, we use that $\norm{M} \leq 1$ and \cref{eq:akl-in-aaks-h}.
\begin{align*}
    (\textsc{off}_i)
    &= \fnorm{M (\Pi_{(-\infty, i - \Delta)}^{(H)} + \Pi_{[i+1+\Delta,\infty)}^{(H)}) U \Pi_{[i, i+1)}^{(H)}\sqrt{\rho}}^2 \\
    &\leq \norm{M}^2 \norm{(\Pi_{(-\infty, i - \Delta)}^{(H)} + \Pi_{[i+1+\Delta,\infty)}^{(H)}) U \Pi_{[i, i+1)}^{(H)}}^2 \fnorm{\Pi_{[i, i+1)}^{(H)}\sqrt{\rho}}^2 \\
    &\leq \norm{(\Pi_{(-\infty, i - \Delta)}^{(H)} + \Pi_{[i+1+\Delta,\infty)}^{(H)}) U \Pi_{[i, i+1)}^{(H)}}^2 \fnorm{\Pi_{[i, i+1)}^{(H)}\sqrt{\rho}}^2 \\
    &\leq 2(e^{-\frac{\Delta}{4(\degree+1)\locality}}e^{\frac{\abs{S}}{\locality}})^2 \fnorm{\Pi_{[i, i+1)}^{(H)}\sqrt{\rho}}^2 \\
    &= 2e^{\frac{2\abs{S}}{\locality}}e^{-\frac{\Delta}{2(\degree+1)\locality}}\fnorm{\Pi_{[i, i+1)}^{(H)}\sqrt{\rho}}^2 \\
    &= 2e^{\frac{2\abs{S}}{\locality}}e^{-\frac{\Delta}{2(\degree+1)\locality}}\fnorm{\Pi_{[i, i+1)}^{(H)}\sqrt{\rho}}^2
\end{align*}
The remaining terms can be summed up nicely.
\begin{align*}
    \angles{(U^\dagger M U)^2}
    &\leq 2\sum_{i}((\textsc{off}_i) + (\textsc{on}_i)) \\
    &\leq 2\sum_{i}(2e^{\frac{2\abs{S}}{\locality}}e^{-\frac{\Delta}{2(\degree+1)\locality}} \fnorm{\Pi_{[i, i+1)}^{(H)}\sqrt{\rho}}^2 + \fnorm{M \Pi_{[i-\Delta,i+1+\Delta)}^{(H)}\sqrt{\rho}}^2 e^{\beta \Delta}) \\
    &= 4e^{\frac{2\abs{S}}{\locality}}e^{-\frac{\Delta}{2(\degree+1)\locality}} \fnorm{\sqrt{\rho}}^2 + (2\Delta + 1)\fnorm{M\sqrt{\rho}}^2 e^{\beta \Delta} \\
    &= 4e^{\frac{2\abs{S}}{\locality}}e^{-\frac{\Delta}{2(\degree+1)\locality}} + (2\Delta + 1)e^{\beta \Delta} \angles{M^2} \\
    &= \angles{M^2}e^{\beta\Delta}\parens[\Big]{\frac{4e^{\frac{2\abs{S}}{\locality}}}{\angles{M^2}}e^{-\Delta(\frac{1}{2(\degree+1)\locality} + \beta)} + 2\Delta + 1}
\end{align*}
This holds for every $\Delta \geq 0$.
We choose 
\begin{gather*}
    \Delta
    = \frac{1}{\frac{1}{2(\degree+1)\locality} + \beta}\log \frac{4e^{\frac{2\abs{S}}{\locality}}}{\angles{M^2}}.
\end{gather*}
Since $\angles{M^2} \leq 1$, $\Delta$ is indeed non-negative.
With this choice of $\Delta$, we have
\begin{align*}
    \angles{(U^\dagger M U)^2}
    &\leq \angles{M^2}e^{\beta\Delta}\parens[\Big]{\frac{4e^{\frac{2\abs{S}}{\locality}}}{\angles{M^2}}e^{-\Delta(\frac{1}{2(\degree+1)\locality} + \beta)} + (2\Delta + 1)} \\
    &= \angles{M^2}e^{\beta\Delta}(2\Delta + 2) \\
    &= \angles{M^2}(2\Delta + 2) \parens[\Big]{\frac{4e^{\frac{2\abs{S}}{\locality}}}{\angles{M^2}}}^{\frac{\beta}{\frac{1}{2(\degree+1)\locality} + \beta}} \\
    &= (2\Delta + 2) \parens[\big]{4e^{\frac{2\abs{S}}{\locality}}}^{\frac{2(\degree + 1)\locality\beta}{1 + 2(\degree+1)\locality\beta}} \angles{M^2}^{\frac{1}{1 + 2(\degree+1)\locality \beta}} \\
    &= \bigOt[\Big]{(\degree + 1)\locality\abs{S}\parens[\big]{4e^{\frac{2\abs{S}}{\locality}}}^{\frac{2(\degree + 1)\locality\beta}{1 + 2(\degree+1)\locality\beta}} \angles{M^2}^{\frac{1}{1 + 2(\degree+1)\locality \beta}}}.
\end{align*}
In the last line above, we are careful to pull out factors of $(\degree + 1)\locality$ and $\abs{S}$ to deal with regimes where $\abs{S}/k$ and $\beta$ are $\ll 1$ and $\gg 1$.
\end{proof}

\begin{lemma}[{\cite[Claim 38 + Corollary 39]{aaks20}}] \label{lem:aaks-38}
    Let $U$ be a unitary supported on $S \subset [\qubits]$ and let $A$ be the local operator as defined previously.
    Then, for all $\gamma > 0$,
    \begin{align*}
        \angles{(U^\dagger A U)^2} \leq \gamma^2 + e^{c\abs{S}}\parens[\Big]{\angles{\Pi_{(-\infty, -\gamma) \cup (\gamma, \infty)}^{(A)}}^{\frac{1}{2(1+c'\beta)}} + \frac{\gamma + 1}{\gamma^3} \angles{A^2}}
    \end{align*}
    where $c, c'$ are constants that depend on $\degree, \locality, \degree', \locality'$.
\end{lemma}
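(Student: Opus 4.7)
The plan is to split $A$ according to the magnitude of its eigenvalues and apply \cref{lem:aaks-37} to each piece. Let $\Pi = \Pi^{(A)}_{[-\gamma,\gamma]}$ be the spectral projector of $A$ onto its small eigenvalues and $\Pi^\perp = I - \Pi$ the complementary projector. Then $A = A_{\mathsf{s}} + A_{\mathsf{l}}$ with $A_{\mathsf{s}} = A\Pi$ and $A_{\mathsf{l}} = A\Pi^\perp$, and because $A$ commutes with its own spectral projectors we have $A_{\mathsf{s}} A_{\mathsf{l}} = 0$ so $A^2 = A_{\mathsf{s}}^2 + A_{\mathsf{l}}^2$. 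Using $U U^\dagger = I$ to write $(U^\dagger A U)^2 = U^\dagger A^2 U$ then gives
\begin{equation*}
\angles{(U^\dagger A U)^2} = \angles{(U^\dagger A_{\mathsf{s}} U)^2} + \angles{(U^\dagger A_{\mathsf{l}} U)^2},
\end{equation*}
and I would bound each piece separately.

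For the small piece, $\norm{A_{\mathsf{s}}} \leq \gamma$ gives $\angles{(U^\dagger A_{\mathsf{s}} U)^2} \leq \gamma^2$, contributing the leading $\gamma^2$ term. For the large piece, I would apply a rescaled version of \cref{lem:aaks-37}: normalizing by the operator norm shows that any Hermitian $M$ satisfies
\begin{equation*}
\angles{(U^\dagger M U)^2} \leq e^{c|S|} \norm{M}^{2c'\beta/(1+c'\beta)} \angles{M^2}^{1/(1+c'\beta)}.
\end{equation*}
Applied to $M = A_{\mathsf{l}}$, this reduces the task to bounding $\angles{A_{\mathsf{l}}^2} = \angles{A^2\Pi^\perp}$. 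The key observation is to handle $\angles{A^2\Pi^\perp}$ in two complementary ways: the trivial bound $\angles{A^2\Pi^\perp} \leq \norm{A}^2 \angles{\Pi^\perp}$ yields the first correction term involving $\angles{\Pi^\perp}^{1/(2(1+c'\beta))}$, where the square root in the exponent arises from a Cauchy--Schwarz step used to combine estimates across the decomposition without losing constants; while the Markov-type bound $A^2\Pi^\perp \succeq \gamma^2 \Pi^\perp$ gives $\angles{\Pi^\perp} \leq \angles{A^2}/\gamma^2$ and yields the second correction term $\frac{\gamma+1}{\gamma^3}\angles{A^2}$ after substituting back and absorbing norm factors.

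The main obstacle will be controlling the dependence on $\norm{A}$, which is not bounded by a constant for a generic local operator with the dual interaction graph structure assumed. To handle this I would interpolate between the two bounds on $\angles{A^2\Pi^\perp}$, choosing the weighting so that norm factors absorb either into the $e^{c|S|}$ prefactor or into $\angles{A^2}$ itself; the polynomial $\frac{\gamma+1}{\gamma^3}$ in the second term arises naturally from combining Markov's $\gamma^{-2}$ factor with an additional $\gamma + 1$ that accommodates both regimes $\gamma \leq 1$ and $\gamma \geq 1$. After consolidating the remaining constants (depending on $\locality,\degree,\locality',\degree'$) into $c$ and $c'$, the stated inequality follows.
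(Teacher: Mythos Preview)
Your decomposition $A = A_{\mathsf s} + A_{\mathsf l}$ and the treatment of $A_{\mathsf s}$ are fine, but there is a genuine gap in the large-eigenvalue piece. When you apply the rescaled \cref{lem:aaks-37} to $M = A_{\mathsf l}$, the bound reads
\[
\angles{(U^\dagger A_{\mathsf l} U)^2} \;\le\; e^{c|S|}\,\norm{A_{\mathsf l}}^{\,2c'\beta/(1+c'\beta)}\,\angles{A_{\mathsf l}^2}^{1/(1+c'\beta)}.
\]
The factor $\norm{A_{\mathsf l}}^{2c'\beta/(1+c'\beta)}$ is the problem: $\norm{A_{\mathsf l}} \le \norm{A}$, and for a local operator $A = \sum_b \sigma_b P_b$ with bounded coefficients and bounded dual interaction degree, $\norm{A}$ still scales with the number of terms (hence with system size). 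It cannot be absorbed into a constant depending only on $\locality,\degree,\locality',\degree'$, nor into $e^{c|S|}$ (since $|S|$ is the support of $U$, independent of $A$), nor into $\angles{A^2}$ (which is the small quantity you are ultimately trying to lower bound). Your proposed ``interpolation between the two bounds on $\angles{A^2\Pi^\perp}$'' acts on the wrong factor: both bounds concern $\angles{A_{\mathsf l}^2}$, not the prefactor $\norm{A_{\mathsf l}}$, so no weighting between them removes the norm dependence.

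The paper's proof avoids this by refining your two-piece split into a shell decomposition $\Pi_i^{(A)} = \Pi^{(A)}_{[-(i+1)\gamma,-i\gamma)\cup[i\gamma,(i+1)\gamma)}$ and then inserting a \emph{second} application of \cref{lem:akl}, this time with the local operator $A$ playing the role of the Hamiltonian. That gives exponential decay of $\norm{\Pi_i^{(A)} U \Pi_j^{(A)}}$ in $|i-j|$, which lets the weights $(i\gamma)^2$ be summed against the shell masses $\fnorm{\Pi_j^{(A)}\sqrt\rho}^2$ to reconstitute $\angles{A^2}$ directly; \cref{lem:aaks-37} is only ever applied to the norm-one projectors $\Pi_i^{(A)}$, so $\norm{A}$ never appears. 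This second use of AKL (for $A$, not $H$) is the missing idea in your outline.
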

Our eventual goal is to get an upper bound in terms of $\angles{A^2}^{\Theta(\frac{1}{1+\beta})}$; since
\begin{equation} \label{eq:aaks-projector-mass}
    \angles{\Pi_{(-\infty, -\gamma) \cup (\gamma, \infty)}^{(A)}} \leq \gamma^{-2} \angles{A^2},
\end{equation}
by setting $\gamma$ appropriately we can do this.
\begin{remark}
    The original result proves this result for \emph{quasi-local} operators, instead of the local operator case under consideration for us.
    They attain
    \begin{align}
        \angles{(U^\dagger A U)^2} \leq \gamma^2 + \frac{1}{\gamma} e^{\bigO{\abs{S}}}\angles{\Pi_{(-\infty, -\gamma) \cup (\gamma, \infty)}^{(A)}}^{\bigO{1/\beta}} + \bigO{\abs{S}^6\frac{1}{\gamma^4}\angles{A^2}}
    \end{align}
    The final $\frac{1}{\gamma^4}$ is not present in the statement, but is incurred in \cite[Eq.\ 121]{aaks20}.
\end{remark}
\begin{proof}
Throughout, $c$'s denote positive constants that depend on $k, k', \degree, \degree'$.
We use \cref{lem:akl} to conclude the exact same bound as used in \cref{eq:akl-in-aaks-h}, but this time for $A$.
\begin{align} \label{eq:akl-in-aaks-a}
    \norm{\Pi_{[\sigma + \Delta, \infty]}^{(A)}U\Pi_{[-\infty, \sigma]}^{(A)}}
    \leq \norm{U}e^{-\frac{1}{4(\degree' + 1)\locality'}(\Delta - 4(\degree' + 1)\abs{S})}
    = e^{-\frac{\Delta}{4(\degree'+1)\locality'}}e^{\frac{\abs{S}}{\locality'}}.
\end{align}
We split up our expression in pieces based on the eigenbasis of $A$.
Let $\gamma$ be a parameter that we choose later, and let $\Pi_i^{(A)} = \Pi_{[(-i-1)\gamma, -i\gamma) \cup [i\gamma, (i+1)\gamma)}^{(A)}$.
\begin{align*}
    \angles{(U^\dagger A U)^2}
    &= \tr(U^\dagger A^2 U \rho) \\
    &= \sum_{i\geq 0} \tr(U^\dagger A \Pi_i^{(A)} A U \rho) \\
    &\leq \sum_{i\geq 0} (i+1)^2\gamma^2 \tr(U^\dagger \Pi_i^{(A)} U \rho) \\
    &= \sum_{i\geq 0} (i+1)^2\gamma^2 \fnorm{\Pi_i^{(A)} U \sqrt{\rho}}^2 \\
    &\leq \gamma^2 + \gamma^2\sum_{i\geq 1} (i+1)^2 \fnorm{\Pi_i^{(A)} U \sqrt{\rho}}^2 \\
    &\leq \gamma^2 + \gamma^2\sum_{i\geq 1} (i+1)^2 \parens[\Big]{\underbrace{\sum_{j \geq 0}\fnorm{\Pi_i^{(A)} U \Pi_j^{(A)} \sqrt{\rho}}}_{(\textsc{term}_i)}}^2
\end{align*}
By \cref{eq:akl-in-aaks-a}, we can conclude
\begin{align}
    \fnorm{\Pi_i^{(A)} U \Pi_j^{(A)} \sqrt{\rho}}
    &\leq \norm{\Pi_i^{(A)} U \Pi_j^{(A)}} \fnorm{\Pi_j^{(A)}\sqrt{\rho}}
    \leq 16e^{-\frac{\gamma\abs{j-i}}{4(\degree'+1)\locality'}}e^{\frac{\abs{S}}{\locality'}} \fnorm{\Pi_j^{(A)}\sqrt{\rho}} \\
    &\leq e^{-c_0\gamma\abs{j-i}}e^{c_1\abs{S}} \fnorm{\Pi_j^{(A)}\sqrt{\rho}} \label{eq:off-bound-nonzero}
\end{align}
We don't want to incur dependence on $\fnorm{\Pi_0^{(A)} \sqrt{\rho}}^2 = \angles{\Pi_0^{(A)}}$.
When $j = 0$, we can use \cref{lem:aaks-37} to get a bound that still depends on $\angles{I - \Pi_0^{(A)}}$.
\begin{align}
    \fnorm{\Pi_i^{(A)} U \Pi_0^{(A)} \sqrt{\rho}}
    &\leq \fnorm{\Pi_i^{(A)} U \sqrt{\rho}} + \fnorm{\Pi_i^{(A)} U (\id - \Pi_0^{(A)}) \sqrt{\rho}} \nonumber\\
    &\leq \sqrt{e^{c_2\abs{S}}\angles{\Pi_i^{(A)}}^{\frac{1}{1+c_3\beta}}} + \fnorm{(\id - \Pi_0^{(A)}) \sqrt{\rho}} \nonumber\\
    &\leq e^{c_2\abs{S}/2}\angles{I - \Pi_0^{(A)}}^{\frac{1}{2(1+c_3\beta)}} \label{eq:off-bound-zero}
\end{align}
With both of these bounds in hand, we can now bound $(\textsc{term}_i)$.
We use \cref{eq:off-bound-nonzero,eq:off-bound-zero} to get 
\begin{align*}
    (\textsc{term}_i)
    &= \sum_{j \geq 0}\fnorm{\Pi_i^{(A)} U \Pi_j^{(A)} \sqrt{\rho}} \\
    &\leq \fnorm{\Pi_i^{(A)} U \Pi_0^{(A)} \sqrt{\rho}} + \sum_{j \geq 1} e^{-c_0\gamma\abs{j-i}}e^{c_1\abs{S}} \fnorm{\Pi_j^{(A)}\sqrt{\rho}} \tag*{by \cref{eq:off-bound-nonzero}}\\
    &\leq \sqrt{e^{c_2\abs{S}/2}\angles{I - \Pi_0^{(A)}}^{\frac{1}{2(1+c_3\beta)}}e^{-c_0\gamma\abs{0-i}}e^{c_1\abs{S}} \fnorm{\Pi_0^{(A)}\sqrt{\rho}}} + \sum_{j \geq 1} e^{-c_0\gamma\abs{j-i}}e^{c_1\abs{S}} \fnorm{\Pi_j^{(A)}\sqrt{\rho}} \tag*{by \cref{eq:off-bound-zero,eq:off-bound-nonzero}}\\
    &\leq e^{(\frac{c_1}{2} + \frac{c_2}{4})\abs{S}}\angles{I - \Pi_0^{(A)}}^{\frac{1}{4(1+c_3\beta)}}e^{-\frac{c_0}{2}\gamma i} + \sum_{j \geq 1} e^{-c_0\gamma\abs{j-i}}e^{c_1\abs{S}} \fnorm{\Pi_j^{(A)}\sqrt{\rho}} \tag*{by $\fnorm{\Pi_0^{(A)}\sqrt{\rho}} \leq 1$}\\
    &\leq e^{c_4\abs{S}}\parens[\Bigg]{\angles{I - \Pi_0^{(A)}}^{\frac{1}{4(1+c_3\beta)}}e^{-\frac{c_0}{2}\gamma i} + \parens[\Big]{\sum_{j' \geq 1} e^{-c_0\gamma \abs{j'-i}}}^{\frac12}\parens[\Big]{\sum_{j \geq 1} e^{-c_0\gamma \abs{j-i}} \fnorm{\Pi_j^{(A)}\sqrt{\rho}}^2}^{\frac12}} \tag*{by Cauchy-Schwarz and taking $c_4 = \max(\frac{c_1}{2} + \frac{c_2}{4}, c_1)$} \\
    &\leq e^{c_4\abs{S}}\parens[\Big]{\angles{I - \Pi_0^{(A)}}^{\frac{1}{4(1+c_3\beta)}}e^{-\frac{c_0}{2}\gamma i} + c_5\sqrt{1 + \frac{1}{\gamma}}\parens[\Big]{\sum_{j \geq 1} e^{-c_0 \gamma \abs{j-i}} \fnorm{\Pi_j^{(A)}\sqrt{\rho}}^2}^{\frac12}} \tag*{taking $c_5$ appropriately}\\
    (\textsc{term}_i)^2
    &\leq 2e^{2c_4\abs{S}}\parens[\Big]{\angles{I - \Pi_0^{(A)}}^{\frac{1}{2(1+c_3\beta)}}e^{-c_0\gamma i} + c_5^2\parens[\Big]{1 + \frac{1}{\gamma}}\parens[\Big]{\sum_{j \geq 1} e^{-c_0 \gamma \abs{j-i}} \fnorm{\Pi_j^{(A)}\sqrt{\rho}}^2}} \\
    &\leq e^{c_6\abs{S}}\parens[\Big]{\angles{I - \Pi_0^{(A)}}^{\frac{1}{2(1+c_3\beta)}}e^{-c_0\gamma i} + \parens[\Big]{1 + \frac{1}{\gamma}}\parens[\Big]{\sum_{j \geq 1} e^{-c_0 \gamma \abs{j-i}} \fnorm{\Pi_j^{(A)}\sqrt{\rho}}^2}} \tag*{taking $c_6$ appropriately}
\end{align*}
Returning to what we originally wanted to bound,
\begin{align*}
    &\angles{(U^\dagger A U)^2} \\
    &\leq \gamma^2 + \gamma^2\sum_{i\geq 1} (i+1)^2 (\textsc{term}_i)^2 \\
    &\leq \gamma^2 + \gamma^2e^{c_6\abs{S}}\sum_{i\geq 1} (i+1)^2 \parens[\Big]{\angles{I - \Pi_0^{(A)}}^{\frac{1}{2(1+c_3\beta)}}e^{-c_0\gamma i} + \parens[\Big]{1 + \frac{1}{\gamma}}\parens[\Big]{\sum_{j \geq 1} e^{-c_0 \gamma \abs{j-i}} \fnorm{\Pi_j^{(A)}\sqrt{\rho}}^2}} \\
    &= \gamma^2 + e^{c_6\abs{S}}\angles{I - \Pi_0^{(A)}}^{\frac{1}{2(1+c_3\beta)}}\sum_{i\geq 1} \gamma^2(i+1)^2 e^{-c_0\gamma i} + e^{c_6\abs{S}}\frac{\gamma + 1}{\gamma} \sum_{j \geq 1} j^2\fnorm{\Pi_j^{(A)}\sqrt{\rho}}^2 \sum_{i \geq 1} \gamma^2(\tfrac{i+1}{j})^2 e^{-c_0\gamma \abs{j-i}} \\
    &\lesssim \gamma^2 + c_7e^{c_6\abs{S}}\angles{I - \Pi_0^{(A)}}^{\frac{1}{2(1+c_3\beta)}} + c_7e^{c_6\abs{S}}\frac{\gamma + 1}{\gamma} \sum_{j \geq 1} j^2\fnorm{\Pi_j^{(A)}\sqrt{\rho}}^2 \tag*{taking $c_7$ such that $c_7 \geq \sum_{i \geq 1} \gamma^2(i+1)^2 e^{-c_0 \gamma\abs{i - j}}$ for all $j$}\\
    &= \gamma^2 + e^{c_8\abs{S}}\parens[\Big]{\angles{I - \Pi_0^{(A)}}^{\frac{1}{2(1+c_3\beta)}} + \frac{\gamma + 1}{\gamma^3} \angles{A^2}} \tag*{by $\sum_{j \geq 1} (\gamma j)^2 \fnorm{\Pi_j^{(A)}\sqrt{\rho}}^2 \leq \angles{A^2}$}
\end{align*}
\end{proof}

\subsection{Applying the lemma}

\begin{lemma}[{\cite[Claim 36]{aaks20}}] \label{lem:aaks-36}
    There exists a unitary $U$ supported on $\supp(A_{(i)})$ such that
    \begin{align*}
        \tr(A_{(i)}^2 /\dims) \leq \tr((U^\dagger A_{(i)} U)^2 \rho) = \angles{(U^\dagger A_{(i)} U)^2}
    \end{align*}
\end{lemma}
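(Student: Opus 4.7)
The plan is a standard averaging argument over Haar-random unitaries supported on $S := \supp(A_{(i)})$. Let $\mu_S$ denote the Haar measure on the group of unitaries acting as the identity outside $S$, and consider the average
\begin{equation*}
    \int d\mu_S(U)\, \tr\!\bigl((U^\dagger A_{(i)} U)^2 \rho\bigr)
    = \int d\mu_S(U)\, \tr\!\bigl(U^\dagger A_{(i)}^2 U \rho\bigr)
    = \tr\!\Bigl(A_{(i)}^2 \int d\mu_S(U)\, U \rho U^\dagger\Bigr),
\end{equation*}
where the first equality uses $(U^\dagger A_{(i)} U)^2 = U^\dagger A_{(i)}^2 U$ and the second uses cyclicity of the trace. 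The pigeonhole/averaging principle then tells us that some fixed $U$ in the support of $\mu_S$ must achieve trace expectation at least the average; showing that this average equals $\tr(A_{(i)}^2/\dims)$ will complete the proof.

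The key computation is thus the integral $\int d\mu_S(U) U \rho U^\dagger$. First, I would decompose $\rho$ into its blocks with respect to the tensor factorization $S \sqcup \bar S$: writing $\rho = \sum_{a,b} \ket{a}\!\bra{b}_{\bar S} \otimes R_{ab}$ with $R_{ab}$ acting on $S$, the Haar average acts blockwise via the first-moment formula $\int d\mu_S(U) U R_{ab} U^\dagger = (\tr R_{ab} /2^{|S|}) \id_S$. Summing back gives
\begin{equation*}
    \int d\mu_S(U)\, U\rho U^\dagger = \frac{\id_S}{2^{|S|}} \otimes \tr_S(\rho).
\end{equation*}
Since $A_{(i)}^2$ is supported on $S$, we may write $A_{(i)}^2 = M \otimes \id_{\bar S}$, and then
\begin{equation*}
    \tr\!\Bigl(A_{(i)}^2 \cdot \tfrac{\id_S}{2^{|S|}} \otimes \tr_S(\rho)\Bigr) = \tfrac{\tr M}{2^{|S|}} \cdot \tr(\rho) = \tfrac{\tr M}{2^{|S|}} = \tfrac{\tr(A_{(i)}^2)}{2^{\qubits}} = \tr(A_{(i)}^2/\dims),
\end{equation*}
using $\tr(A_{(i)}^2) = 2^{\qubits - |S|} \tr M$ and $\tr(\rho) = 1$.

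There is no real obstacle here; the only thing to be slightly careful about is the Haar first-moment identity and making sure the index conventions line up so that the normalization $1/\dims$ (rather than $1/2^{|S|}$) comes out at the end. Both fall out cleanly once $A_{(i)}^2$ is recognized as an operator supported on $S$ tensored with identity on $\bar S$. The averaging principle then gives the existence of a unitary $U \in \mathrm{supp}(\mu_S) \subseteq$ \{unitaries on $\supp(A_{(i)})$\} with the desired lower bound, finishing the proof.
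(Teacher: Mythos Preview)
Your proof is correct but takes a different route from the paper. The paper constructs $U$ explicitly: let $S=\supp(A_{(i)})$, write $A_{(i)}^2=M\otimes\id_{\bar S}$ and $\rho_S=\tr_{\bar S}\rho$, and take $U$ to send the $a$th largest eigenvector of $M$ to the $a$th largest eigenvector of $\rho_S$. Then $\tr((U^\dagger A_{(i)}U)^2\rho)=\sum_a m_a p_a$ with both sequences of eigenvalues sorted decreasingly, and Chebyshev's sum inequality gives $\sum_a m_a p_a\geq (1/2^{|S|})(\sum_a m_a)(\sum_a p_a)=\tr(M)/2^{|S|}=\tr(A_{(i)}^2)/\dims$. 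Your Haar-averaging argument instead shows that the \emph{average} over $U$ equals $\tr(A_{(i)}^2)/\dims$ exactly, whence some $U$ attains at least this value. Your argument is arguably cleaner (no eigendecomposition or rearrangement inequality, just the first-moment twirl), while the paper's argument yields an explicit witness and shows the inequality is in fact achieved by the optimal alignment rather than merely somewhere in the group.
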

\begin{proof}
Take $U$ to be the unitary that sends the $a$th largest eigenvector of $A_{(i)}^2$ to the $a$th largest eigenvector of $\tr_{[\qubits] \setminus \supp(A_{(i)})}\rho$.
\ewin{Write this better.}
\end{proof}

\begin{proof}[Proof of \cref{aaks-marginals}]
We consider $A_{(i)}$, and let $U_*$ be the unitary whose existence is shown by \cref{lem:aaks-36}.
Since $A_{(i)}$ is supported on $(\degree' + 1)\locality'$ qubits (\cref{eq:localizing-local}), so is $U_*$.
\begin{align*}
    \tr(A_{(i)}^2/\dims)
    &\leq \angles{(U_*^\dagger A_{(i)} U_*)^2} \tag*{by \cref{lem:aaks-36}} \\
    &= \fnorm{U_*^\dagger A_{(i)} U_* \sqrt{\rho}}^2 \\
    &= \fnorm[\Big]{U_*^\dagger \parens[\Big]{A - \int \diff\mu_i(U) U^\dagger A U} U_* \sqrt{\rho}}^2 \\
    &\leq 2\fnorm{U_*^\dagger A U_* \sqrt{\rho}}^2 + 2\parens[\Big]{\int \diff\mu_i(U) \fnorm{U_*^\dagger U^\dagger A_{(i)} U U_* \sqrt{\rho}}}^2 \\
    &\leq 4\gamma^2 + 4e^{c(\degree' + 1)\locality'}\parens[\Big]{\angles{\Pi_{(-\infty, -\gamma) \cup (\gamma, \infty)}^{(A)}}^{\frac{1}{2(1+c'\beta)}} + \frac{\gamma + 1}{\gamma^3} \angles{A^2}} \tag*{by \cref{lem:aaks-38}} \\
    &\leq 4\gamma^2 + 4e^{c(\degree' + 1)\locality'}\parens[\Big]{\gamma^{-\frac{1}{1+c'\beta}}\angles{A^2}^{\frac{1}{2(1+c'\beta)}} + \frac{\gamma + 1}{\gamma^3} \angles{A^2}} \tag*{by \cref{eq:aaks-projector-mass}}
\end{align*}
We set $\gamma^2 = \tr(A_{(i)}^2/\dims) / 8$ and $c'' = 8e^{c(\degree' + 1)\locality'}$ so that this implies
\begin{align*}
    \tr(A_{(i)}^2/\dims) = 8\gamma^2
    &\leq 4\gamma^2 + c''\parens[\Big]{\gamma^{-\frac{1}{1+c'\beta}}\angles{A^2}^{\frac{1}{2(1+c'\beta)}} + \frac{\gamma + 1}{\gamma^3} \angles{A^2}} \\
    \frac{4}{c''}\gamma^2 &\leq \gamma^{-\frac{1}{1+c'\beta}}\angles{A^2}^{\frac{1}{2(1+c'\beta)}} + \frac{\gamma + 1}{\gamma^3} \angles{A^2}
\end{align*}
So, either
\begin{align}
    \tfrac{2}{c''}\gamma^2 &\leq \gamma^{-\frac{1}{1+c'\beta}}\angles{A^2}^{\frac{1}{2(1+c'\beta)}} \nonumber\\
    (\tfrac{2}{c''})^{2(1+c'\beta)}\gamma^{2 + 4(1+c'\beta)} &\leq \angles{A^2} \label{eq:aaks-final}
\end{align}
or
\begin{align*}
    \frac{2}{c''}\gamma^2 &\leq \frac{\gamma + 1}{\gamma^3} \angles{A^2} \\
    c''' \gamma^5 &\leq \angles{A^2}
\end{align*}
Above, we used that $\gamma = \tr(A_{(i)}^2/\dims)/8 \leq (\degree + 1)/8$, so it can be folded into the constant.
\cref{eq:aaks-final} gives us our final bound.
\end{proof}

\section{Proof of \texorpdfstring{\cref{thm:exp-monotone-approx}}{Theorem 4.6}}\label{appendix:poly-approx}

Here we prove \cref{thm:exp-monotone-approx}.  As the proof is quite long and computational, we break it into several manageable steps. First, we show that the even and odd truncations $s_{2\ell - 1}(x), s_{2\ell}(x)$ can be related as follows: 

\begin{lemma}[Even truncations are bounded]\label{claim:compare-to-derivative}
For all $\ell \in \mathbb{N}$,  for all  $x \in \mathbb{R}$, 
\begin{equation*}
     \abs{ s_{2\ell-1}(x) } <  99 s_{2\ell}(x) . 
\end{equation*}
\end{lemma}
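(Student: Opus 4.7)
My plan is to reduce the statement to a comparison between $s_{2\ell}(x)$ and the single monomial $x^{2\ell}/(2\ell)!$, and then handle that comparison via a ratio analysis. Using the identity $s_{2\ell}(x) - s_{2\ell-1}(x) = x^{2\ell}/(2\ell)!$, the inequality $|s_{2\ell-1}(x)| < 99\,s_{2\ell}(x)$ splits into the two assertions $99\,s_{2\ell}(x) \pm s_{2\ell-1}(x) > 0$. One of them, $99\,s_{2\ell}(x) - s_{2\ell-1}(x) = 98\,s_{2\ell}(x) + x^{2\ell}/(2\ell)!$, is immediate from \cref{coro:positive-truncation}. The other, $99\,s_{2\ell}(x) + s_{2\ell-1}(x) = 100\,s_{2\ell}(x) - x^{2\ell}/(2\ell)!$, is the actual content: it asks $s_{2\ell}(x) > x^{2\ell}/(100\,(2\ell)!)$ for all real $x$.

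For $x \geq 0$ this is trivial since the series defining $s_{2\ell}$ has only non-negative terms and includes $x^{2\ell}/(2\ell)!$. So the substance is on $x < 0$. The plan is to study the ratio $r(x) := (2\ell)!\,s_{2\ell}(x)/x^{2\ell}$, which is smooth on $(-\infty, 0)$ and strictly positive. (I would first upgrade \cref{coro:positive-truncation} to strict positivity $s_{2\ell} > 0$ everywhere, using that $s_{2\ell}$ attains its minimum only where $s_{2\ell-1}$ vanishes and at such a point equals $x^{2\ell}/(2\ell)! > 0$ for $x\neq 0$.) The function $r$ satisfies $r(x) \to \infty$ as $x \to 0^-$ and $r(x) \to 1$ as $x \to -\infty$, so by continuity its infimum on $(-\infty,0)$ is either one of those limit values or is attained at an interior critical point.

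The key algebraic step is that at any critical point $x^*$ the condition $r'(x^*) = 0$ becomes $x^* s_{2\ell-1}(x^*) = 2\ell\, s_{2\ell}(x^*)$, and eliminating $s_{2\ell-1}$ via $s_{2\ell} = s_{2\ell-1} + x^{2\ell}/(2\ell)!$ yields the clean formula $r(x^*) = |x^*|/(|x^*| + 2\ell)$ together with the side condition $s_{2\ell-1}(x^*) < 0$. Since $s_{2\ell-1}' = s_{2\ell-2} > 0$, the polynomial $s_{2\ell-1}$ is strictly increasing with a unique real root $x_0 < 0$, so any critical point of $r$ on the negative axis must lie in $(-\infty, x_0)$ and hence satisfy $|x^*| > |x_0|$. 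Thus $r(x^*) > |x_0|/(|x_0|+2\ell)$ at every negative critical point, and on the complementary subinterval $(x_0, 0)$ the function $r$ has no critical points (as $s_{2\ell-1}>0$ there), so by monotonicity $r > 1$ on $(x_0, 0)$.

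The main obstacle, and the only quantitatively delicate step, is lower-bounding $|x_0|$ by a constant multiple of $\ell$. I would obtain this from Taylor's theorem with Lagrange remainder applied at $x_0$: $s_{2\ell-1}(x_0) = 0$ gives $e^{x_0} = (x_0^{2\ell}/(2\ell)!) e^{\xi}$ for some $\xi \in (x_0, 0)$, which after using $e^\xi \leq 1$ rearranges to $|x_0|^{2\ell}\, e^{|x_0|} \geq (2\ell)!$. Stirling's bound $(2\ell)! \geq (2\ell/e)^{2\ell}$ then forces $|x_0| \geq 2\ell/e^2$, and plugging this into the critical-point formula yields $r(x^*) \geq 1/(1+e^2) > 1/100$. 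Combined with $r > 1$ on $(x_0, 0)$ and the boundary limits, this gives $r(x) > 1/100$ on all of $(-\infty, 0)$, which is exactly $s_{2\ell}(x) > x^{2\ell}/(100(2\ell)!)$, completing the proof.
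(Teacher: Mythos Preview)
Your proof is correct and takes a genuinely different route from the paper's. The paper reduces to showing $0.99\,s_{2\ell}(x)+0.01\,s_{2\ell-1}(x)\geq 0$ for $x\leq -0.1\ell$ (asserting without proof that $s_{2\ell-1}(x)\geq 0$ whenever $x>-0.1\ell$), writes $0.99\,s_{2\ell}(x)+0.01\,s_{2\ell-1}(x)-e^x$ as the Taylor integral remainder $\int_0^x (0.99-e^{y})\tfrac{(x-y)^{2\ell-1}}{(2\ell-1)!}\,dy$, and then bounds this integral by hand on a pair of explicit subintervals. Your argument instead reduces to $s_{2\ell}(x)>x^{2\ell}/(100\,(2\ell)!)$, analyzes the ratio $r(x)=(2\ell)!\,s_{2\ell}(x)/x^{2\ell}$ on $(-\infty,0)$, and exploits the closed-form value $r(x^*)=|x^*|/(|x^*|+2\ell)$ at every negative critical point together with the location bound $|x_0|\geq 2\ell/e^2$ for the unique root of $s_{2\ell-1}$ (obtained via Lagrange remainder plus Stirling). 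Your approach is more self-contained---you actually prove the root-location fact that the paper leaves as an assertion---and it gives a cleaner constant ($1/(1+e^2)$ in place of the paper's ad hoc numerics). The paper's approach is more direct computation and avoids the critical-point machinery.
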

\begin{proof}
First, note that whenever $s_{2\ell - 1}(x) \geq 0$, we clearly have $s_{2\ell}(x) \geq |s_{2\ell - 1}(x)|$ and the desired inequalities clearly hold.  Thus, it suffices to consider when $s_{2\ell - 1}(x) < 0$ which only happens when $x \leq -0.1\ell$.  It remains to show that when $x \leq -0.1\ell$, $s_{2\ell}(x) + \frac{s_{2\ell -1}(x)}{99} \geq 0$. 
Let $f(x) = 0.99s_{2\ell}(x) + 0.01 s_{2\ell -1}(x) - e^x$.  Note that $f(0) = 0$ and $f^{(k)}(0) = 0$ for $ k \leq 2\ell - 1$ and $f^{(2\ell)}(x) = 0.99 - e^x$ where $f^{(k)}(x)$ denotes the $k$th derivative of $f$.  Now we can use the fundamental theorem of calculus to write
\[
f(x) = \int_{0}^x f^{(2\ell)}(y) \frac{(x - y)^{2\ell - 1}}{(2\ell - 1)!} \diff y = \int_0^x (0.99 - e^y)\frac{(x - y)^{2\ell - 1}}{(2\ell - 1)!} \diff y \,.
\]
We can rearrange the above as
\[
f(-x) = \int_0^x (0.99- e^{-y})\frac{(x - y)^{2\ell - 1}}{(2\ell - 1)!} \diff y \,.
\]
It suffices to prove that for $x \geq 0.1\ell$ that $f(-x) \geq 0$ since then we get that $0.99s_{2\ell}(-x) + 0.01 s_{2\ell -1}(-x) \geq e^{-x} \geq 0$.  Now, we analyze the integral on the RHS of the above.  We have
\[
\begin{split}
&\int_0^x (0.99- e^{-y})(x - y)^{2\ell - 1} \diff y \\ &\geq \int_{0.05}^{0.1} (x - y)^{2\ell - 1} (0.99 - e^{-y}) \diff y - \int_{0}^{0.02} (e^{-y} - 0.99) (x - y)^{2\ell - 1}\diff y  \\ &\geq 0.0019 (x - 0.1)^{2\ell - 1} - 0.0002 x^{2\ell - 1}
\end{split} 
\]
but since $x \geq 0.1 \ell$, 
\[
\frac{(x - 0.1)^{2\ell - 1}}{x^{2\ell - 1}} = \left( 1 - \frac{0.1}{x}\right)^{2\ell - 1} > \frac{1}{9}
\]
and thus we conclude $f(-x) \geq 0$ and we are done.
\end{proof}

Next, we present a few basic facts about polynomials that allow us to construct bounded coefficient sum of squares representations.

\begin{claim}\label{claim:bounded-roots}
Let $p(x) = a_nx^n + a_{n-1}x^{n-1} + \dots + a_0$ be a polynomial.  Let $C > 0$ be  a constant such that for all $i \in [n]$,
\[
|a_i| \geq \frac{|a_{i-1}|}{C} \,. 
\]
Then all of the (complex) roots of $p$ have magnitude at most $2C$.
\end{claim}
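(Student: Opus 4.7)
The plan is to argue by contradiction: suppose $z \in \mathbb{C}$ is a root with $|z| > 2C$ and derive a contradiction from the recursive bound on the coefficients.

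The first step is to iterate the hypothesis $|a_{i-1}| \leq C|a_i|$ to obtain the global coefficient bound $|a_j| \leq C^{n-j}|a_n|$ for every $0 \leq j \leq n$. This is just a telescoping of the one-step inequalities.

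The second step is to use the identity $a_n z^n = -\sum_{j=0}^{n-1} a_j z^j$, which holds since $z$ is a root. Applying the triangle inequality and the coefficient bound gives
\begin{equation*}
    |a_n|\,|z|^n \;\leq\; \sum_{j=0}^{n-1} |a_j|\,|z|^j \;\leq\; |a_n| \sum_{j=0}^{n-1} C^{n-j} |z|^j.
\end{equation*}
Dividing by $|a_n|\,|z|^n$ (we may assume $a_n \neq 0$, else $p$ has degree $<n$ and we induct) gives $1 \leq \sum_{k=1}^n (C/|z|)^k$.

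The final step is to note that if $|z| > 2C$, then $C/|z| < 1/2$, so the right-hand side is strictly bounded above by $\sum_{k=1}^{\infty}(1/2)^k = 1$, a contradiction. Hence every root satisfies $|z| \leq 2C$.

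There is no real obstacle here; the argument is the standard Cauchy-style root bound adapted to the recursive coefficient condition. The only minor point to handle cleanly is the case $a_n = 0$, which can be dispatched by treating $p$ as a polynomial of smaller degree.
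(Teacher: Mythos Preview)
Your proof is correct and follows essentially the same approach as the paper: iterate the hypothesis to obtain $|a_j| \leq C^{n-j}|a_n|$, then use the triangle inequality and a geometric series bound to show that no $z$ with $|z| > 2C$ (the paper uses $|z| \geq 2C$) can be a root. The only cosmetic difference is that the paper phrases it as showing the leading term dominates the sum of the rest, whereas you divide through and compare to $1$; the underlying computation is identical.
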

\begin{proof}
Let $z \in \C$ have $|z| \geq 2C$.  Then for all $i$,
\[
|a_{i}z^{i}| \leq \frac{C^{n-i}}{|z|^{n-i}} \cdot |a_nz^n| \leq \frac{|a_nz^n|}{2^{n- i}}
\]
and thus,
\[
|a_n z^n| > |a_{n-1}z^{n-1}| + \dots + |a_0|
\]
so we cannot have
\[
p(z) = a_nz^n + a_{n-1}z^{n-1} + \dots + a_0 = 0 \,.
\]
Thus, it is impossible for $z$ to be a root of $p$.
\end{proof}

As a straight-forward corollary, we show that each of the following admit a bounded coefficient sum-of-squares decomposition.

\begin{corollary}[Bounded coefficient Polynomials]\label{coro:bouned-sos-coeffs1}
Let $\ell$ be a positive integer and $-0.01 \leq c \leq 0.01$ be a real number.  Then, the polynomial
\[
s_{2\ell}(x) + c s_{2\ell -1}(x)
\]
is a $(2^\ell, \ell, 10^\ell)$-bounded SoS polynomial in $x$ (see Definition~\ref{def:sos-bounded-polynomial}).    
\end{corollary}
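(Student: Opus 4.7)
The plan is to deduce the SoS decomposition by factoring $p(x) := s_{2\ell}(x) + c s_{2\ell-1}(x)$ over $\C$, pairing complex-conjugate roots into real quadratics of the form $(x-r)^2 + s^2$, and then distributing the product of sums-of-two-squares into a sum of $2^\ell$ squares. The key to getting the coefficient bound down to $10^\ell$ (rather than something like $(6\ell)^\ell$) will be to spread the tiny leading coefficient $1/(2\ell)!$ evenly across all the quadratic factors before expanding.

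First, I would verify that $p(x) \ge 0$ for every real $x$: for $x \ge 0$ (or more generally whenever $s_{2\ell-1}(x)\ge0$) this is immediate from \cref{coro:positive-truncation}, and otherwise we apply \cref{claim:compare-to-derivative}, which gives $|s_{2\ell-1}(x)| < 99\, s_{2\ell}(x)$, so $s_{2\ell}(x) + c s_{2\ell-1}(x) \ge s_{2\ell}(x) - |c|\cdot 99\, s_{2\ell}(x) \ge 0$ when $|c|\le 0.01$. Being a nonnegative univariate polynomial of degree $2\ell$, its real roots have even multiplicity and its non-real roots come in complex-conjugate pairs, so we can write
\[
p(x) = a \prod_{j=1}^{\ell} \bigl((x-r_j)^2 + s_j^2\bigr), \qquad a := \tfrac{1}{(2\ell)!},
\]
with real $r_j$ and real $s_j$ (where $s_j=0$ corresponds to a real double root). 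Next I would invoke \cref{claim:bounded-roots}: since the coefficients of $p$ satisfy $|a_{k-1}/a_k| \le 3\ell$, every complex root of $p$ has magnitude at most $6\ell$, so $|r_j|,|s_j| \le 6\ell$ for all $j$.

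Now I would spread the leading coefficient across the factors, writing $a^{1/\ell}((x-r_j)^2 + s_j^2) = \tilde\alpha_j(x)^2 + \tilde\beta_j^2$ with $\tilde\alpha_j(x) := a^{1/(2\ell)}(x-r_j)$ and $\tilde\beta_j := a^{1/(2\ell)} s_j$. Distributing gives the sum-of-squares decomposition
\[
p(x) = \prod_{j=1}^\ell \bigl(\tilde\alpha_j^2 + \tilde\beta_j^2\bigr) = \sum_{S\subseteq [\ell]} q_S(x)^2, \qquad q_S := \prod_{j\in S}\tilde\alpha_j \prod_{j\notin S}\tilde\beta_j,
\]
which is a sum of $2^\ell$ squares, each of degree at most $\ell$. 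For the coefficient bound, note $q_S(x) = a^{1/2}\bigl(\prod_{j\notin S} s_j\bigr) \prod_{j\in S}(x-r_j)$, so expanding the product of linear factors with $|r_j|\le 6\ell$, the coefficient of $x^k$ in $q_S$ has magnitude at most $a^{1/2} (6\ell)^{\ell - |S|} \binom{|S|}{k}(6\ell)^{|S|-k} = a^{1/2}\binom{|S|}{k}(6\ell)^{\ell-k}$. Using Stirling to bound $a^{1/2} = ((2\ell)!)^{-1/2} \le (e/(2\ell))^\ell$ and $\binom{|S|}{k} \le \ell^k/k!$, this simplifies to at most $(3e)^\ell/k! \le 10^\ell/k!$, establishing the claimed $(d,C)$-bound for $d = \ell$ and $C = 10^\ell$.

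The only subtle step is the final coefficient bookkeeping: naively one would pick up a factor of $(O(\ell))^\ell$ when expanding each product of linear factors with roots of size up to $6\ell$, which would exceed $10^\ell$. The essential trick is to symmetrically absorb the $a^{1/2} = (e/(2\ell))^\ell$ prefactor into each product by distributing $a^{1/\ell}$ across the $\ell$ quadratic factors, which exactly cancels the dangerous $\ell^\ell$ growth from the roots against the $(2\ell)!^{1/2}$ denominator.
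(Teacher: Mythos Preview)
Your proof is correct and follows essentially the same approach as the paper: establish positivity via \cref{claim:compare-to-derivative}, bound the roots via \cref{claim:bounded-roots}, factor into conjugate-pair quadratics, distribute into $2^\ell$ squares, and cancel the $\ell^\ell$ growth from the root bound against the $((2\ell)!)^{-1/2}$ prefactor using Stirling. The only cosmetic differences are that the paper obtains a slightly sharper root bound of $5\ell$ (versus your $6\ell$) and argues that there are in fact no real roots at all, but neither point affects the argument or the final $(2^\ell,\ell,10^\ell)$ bound.
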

\begin{proof}
Note that all complex roots of $s_{2\ell}(x) + c s_{2\ell - 1}(x)$ have magnitude at most $5\ell$ by Claim~\ref{claim:bounded-roots}.  Also, by Claim~\ref{claim:compare-to-derivative}, none of the roots are real so by the Fundamental Theorem of Algebra, they come in conjugate pairs $z_1, \overline{z_1}, \dots , z_{\ell}, \overline{z_{\ell}}$.  Then, we can write 
\begin{equation}
s_{2\ell}(x) + c \ s_{2\ell - 1}(x){99} = \frac{1}{(2\ell)!} \prod_{j \in [\ell ]} (x - z_j)(x - \overline{z_j})  = \frac{1}{(2\ell)!} \prod_{j \in [\ell]} ((x - \Real(z_j))^2 + \Imag(z_j)^2)
\end{equation}
and now we can expand out the product above to get a sum of $2^\ell$ squares of polynomials of degree at most $\ell$.  Now we bound the coefficients of each of them.  For each monomial $x^k$, its coefficient has magnitude at most 
\begin{equation}
\binom{\ell}{k}\frac{(5\ell)^{\ell - k}}{\sqrt{(2l)!}} \leq  \frac{\ell^k}{k!} \frac{(10\ell)^{\ell - k}}{ \ell^{\ell}} = \frac{10^\ell}{k!} 
\end{equation}
and thus, we conclude that $s_{2\ell}(x) + c s_{2\ell - 1}(x)$ is a $(2^\ell, \ell, 10^\ell)$-bounded sum-of-squares polynomial.
\end{proof}

\begin{claim}\label{claim:sos-integration}
Let $p(x, y , t)$ be a polynomial such that for all $t \in [0,1]$, it is a $(k,d,C)$-bounded SoS polynomial in $x,y$ (after plugging in a real value for $t$).  Then the polynomial 
\[
r(x,y) = \int_{0}^1 p(x,y, t) \diff t
\]
is a $(3d^2, d, \sqrt{k}C)$-bounded SoS polynomial in $x,y$.
\end{claim}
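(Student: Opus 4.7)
The plan is to represent each $p(\cdot,\cdot,t)$ as a quadratic form $v(x,y)^\top A(t)\, v(x,y)$ on the vector $v(x,y)$ of all monomials $x^a y^b$ with $a+b\le d$, so that $v$ has length $N = \binom{d+2}{2} \le 3d^2$ (for $d\ge 1$). For each $t\in[0,1]$, the hypothesis gives $p(x,y,t)=\sum_{i=1}^k q_{i,t}(x,y)^2$ with each $q_{i,t}$ being $(d,C)$-bounded. Writing $q_{i,t}(x,y) = u_{i,t}^\top v(x,y)$ for the monomial coefficient vector, I would take $A(t) := \sum_i u_{i,t}u_{i,t}^\top$. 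Then $A(t)$ is PSD with $v(x,y)^\top A(t) v(x,y) = p(x,y,t)$, and, critically, every diagonal entry is pointwise controlled by the coefficient bounds on $q_{i,t}$:
\[
A(t)_{(a,b),(a,b)} \;=\; \sum_{i=1}^k u_{i,t,(a,b)}^2 \;\le\; \frac{kC^2}{((a+b)!)^2}.
\]

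Next, I would construct a PSD Gram matrix $B$ for $r(x,y)$ satisfying the same diagonal bound. The natural candidate is $B = \int_0^1 A(t)\,dt$, but to avoid measurable-selection subtleties in choosing the $u_{i,t}$, I would argue by compactness. For each positive integer $M$, the Riemann sum $r_M := \frac{1}{M}\sum_{j=1}^M p(x,y,j/M)$ admits the Gram matrix $B_M := \frac{1}{M}\sum_j A(j/M)$, which is PSD and preserves the diagonal bound. Off-diagonals are controlled via Cauchy--Schwarz, $|B_{M,(a,b),(a',b')}| \le kC^2/((a+b)!(a'+b')!)$, so the sequence $\{B_M\}$ lies in a compact set; extract a convergent subsequence with limit $B$, which is PSD, satisfies $v^\top B v = r$ (since $r_M \to r$ coefficient-wise), and inherits $B_{(a,b),(a,b)} \le kC^2/((a+b)!)^2$.

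Finally, eigendecompose $B = \sum_{j=1}^N \lambda_j e_j e_j^\top$ with $\lambda_j \ge 0$ and orthonormal $e_j$, and set $s_j(x,y) := \sqrt{\lambda_j}\, e_j^\top v(x,y)$. Then $r(x,y) = \sum_{j=1}^N s_j(x,y)^2$ is a sum of at most $N \le 3d^2$ squares, each of degree $\le d$. The coefficient of the monomial $x^a y^b$ in $s_j$ is $\sqrt{\lambda_j}\, e_{j,(a,b)}$, and the key estimate is
\[
\bigl|[x^a y^b]\, s_j\bigr|^2 \;=\; \lambda_j\, e_{j,(a,b)}^2 \;\le\; \sum_{j'=1}^N \lambda_{j'}\, e_{j',(a,b)}^2 \;=\; B_{(a,b),(a,b)} \;\le\; \frac{kC^2}{((a+b)!)^2},
\]
so $|[x^a y^b]\, s_j| \le \sqrt{k}\,C/(a+b)!$. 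Hence each $s_j$ is $(d,\sqrt{k}\,C)$-bounded, certifying that $r$ is a $(3d^2, d, \sqrt{k}\,C)$-bounded SoS polynomial.

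The main obstacle is the construction of a PSD Gram matrix $B$ of $r$ whose diagonal entries inherit the pointwise bound $kC^2/((a+b)!)^2$: a direct definition via $\int_0^1 A(t)\,dt$ requires measurable selection of the decompositions $q_{i,t}$ in $t$, which I sidestep via the Riemann-sum compactness argument. Once such a $B$ is in hand, the per-square coefficient control follows from the elementary inequality $\lambda_j e_{j,(a,b)}^2 \le \sum_{j'} \lambda_{j'} e_{j',(a,b)}^2$, which is precisely the step that transmutes the trace-level bound $\sum_j \lambda_j \lesssim kC^2$ into the desired $\sqrt{k}\,C$ coefficient bound on every individual $s_j$.
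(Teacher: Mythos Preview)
Your proof is correct and follows essentially the same approach as the paper: represent $p(x,y,t)$ as a quadratic form $v(x,y)^\top A(t)\,v(x,y)$ in the degree-$\le d$ monomial vector, obtain a single PSD Gram matrix $B$ for $r$ with diagonal entries bounded by $kC^2$ (in the appropriate normalization), and then eigendecompose $B$ to extract at most $\binom{d+2}{2}\le 3d^2$ squares with $(d,\sqrt{k}C)$-bounded coefficients.

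The only substantive differences are cosmetic or in favor of your version. The paper works in the scaled monomial basis $x^ay^b/(a{+}b)!$ so that all coefficient-vector entries are bounded by $C$, then directly writes $R=\int_0^1 M(t)\,dt$ and asserts that ``we can write $R=\sum u_iu_i^\top$ with entries at most $\sqrt{k}C$''. You track the factorial factors explicitly, and you are more careful on two points the paper glosses over: (i) you avoid the measurable-selection issue for $t\mapsto A(t)$ via Riemann sums and compactness of the constrained PSD cone slice (which is a clean and fully rigorous fix), and (ii) you make explicit the inequality $\lambda_j e_{j,(a,b)}^2\le B_{(a,b),(a,b)}$ that converts the diagonal bound into the per-square coefficient bound, which the paper leaves implicit.
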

\begin{proof}
We for each $t \in [0,1]$, there are some $(d,C)$-bounded polynomials $q_{1,t}(x,y), \dots , q_{k,t}(x,y)$ so that we can write
\[
p(x,y,t) = q_{1,t}(x,y)^2 + \dots + q_{k,t}(x,y)^2 \,.
\]
Let $v(x,y)$ be the vector of monomials 
\[v(x,y) = \left(1,x,y, \frac{x^2}{2!}, \frac{xy}{2!} , \frac{y^2}{2!}, \dots , \frac{x^d}{d!}, \frac{x^{d-1}y}{d!}, \dots , \frac{y^d}{d!} \right) \,. \]
Then we can associate each $q_{i,t}$ with a vector $u_{i,t}$ such that $q_{i,t}(x,y) = v(x,y)^\top  u_{i,t}$.  Since the $q_{i,t}$ are $(d,C)$-bounded, we know that all entries of $u_{i,t}$ are at most $C$.  Define the matrix
\[
M(t) = \sum_{i = 1}^k u_{i,t} u_{i,t}^\top \,.
\]
Then $M(t)$ is PSD and 
\[
p(x,y,t) = v(x,y)^\top M(t) v(x,y) \,.
\]
Now we can write
\[
r(x,y) = \int_{0}^1 p(x,y, t) \diff t = v(x,y)^\top \left( \int_{0}^1 M(t) \diff t \right) v(x,y) \,.
\]
We know that all of the entries of $R(t) = \int_{0}^1 M(t) \diff t $ are at most $kC^2$ and also it is a $\binom{d+2}{2} \times \binom{d+2}{2}$ matrix so we can write it as 
\[
R(t)= \sum_{i = 1}^{\binom{d+2}{2}} u_i u_i^\top
\]
for some vectors $u_i$ whose entries are at most $\sqrt{k}C$.  Now we can write 
\[
r(x,y) = \sum_{i = 1}^{\binom{d+2}{2}} (v(x,y)^\top u_i )^2
\]
and each of the $v(x,y)^\top u_i $ are $(d, \sqrt{k}C)$-bounded polynomials in $x,y$.  Thus, $r(x,y)$ is a $(3d^2, d, \sqrt{k}C) $-bounded sum-of-squares polynomial, as desired.
\end{proof}

Now we move onto the main proof of \cref{thm:exp-monotone-approx}.  We have the following basic identities.

\begin{fact}[Integration Identities]
\label{fact:derivative-int-substitution-fact}
For a function $p: \R \rightarrow \R$,
   \begin{equation*}
       p(z+a) - p(z-a) = a \int_{0}^{1} (p'(z+ ta) + p'(z - ta)) \diff t 
   \end{equation*} 
and 
\begin{equation*}
       p(z+a) + p(z-a) =  2p(z) +  a \int_{0}^{1} (p'(z+ ta) + p'(z - ta)) \diff t 
   \end{equation*} 
\end{fact}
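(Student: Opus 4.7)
The plan is to derive both identities directly from the fundamental theorem of calculus combined with a linear change of variables. Since $p$ is assumed differentiable (as it appears with $p'$ on the right-hand side), FTC gives $p(z+a) - p(z) = \int_z^{z+a} p'(u)\,du$ and $p(z) - p(z-a) = \int_{z-a}^{z} p'(u)\,du$.

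First I would evaluate the right-hand integral representation. Making the substitution $u = z + ta$ (so $du = a\,dt$, with $t$ ranging over $[0,1]$) in the first integral yields $p(z+a) - p(z) = a \int_0^1 p'(z+ta)\,dt$. Making the substitution $u = z - ta$ (so $du = -a\,dt$, with $t$ ranging from $1$ down to $0$) in the second integral and flipping the limits gives $p(z) - p(z-a) = a\int_0^1 p'(z-ta)\,dt$. These are the two basic building blocks.

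Next, adding these two identities produces
\begin{equation*}
p(z+a) - p(z-a) = a\int_0^1 \bigl(p'(z+ta) + p'(z-ta)\bigr)\,dt,
\end{equation*}
which is precisely the first stated identity. Subtracting the second from the first rearranges to
\begin{equation*}
p(z+a) + p(z-a) - 2p(z) = a\int_0^1 \bigl(p'(z+ta) - p'(z-ta)\bigr)\,dt,
\end{equation*}
which gives the second identity (note: the claim as written in the fact has a $+$ where a $-$ appears to be intended, and I would flag this as a sign typo in the writeup, since the correct pairing of FTC differences forces a minus sign here; one can sanity-check with $p(z)=z^2$).

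There is no genuine obstacle here: the entire content is the two elementary substitutions above, and the identities follow by linear combinations. If one wanted a more self-contained presentation that avoided any subtlety about assumed regularity, one could instead define $\phi(t) = p(z+ta) + p(z-ta)$ and $\psi(t) = p(z+ta) - p(z-ta)$, observe $\phi(0) = 2p(z)$, $\psi(0) = 0$, compute $\phi'(t) = a(p'(z+ta) - p'(z-ta))$ and $\psi'(t) = a(p'(z+ta) + p'(z-ta))$, and then integrate from $0$ to $1$; this is essentially the same proof dressed differently.
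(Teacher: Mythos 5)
Your proof is correct and takes essentially the same route as the paper, whose entire proof of this fact is the one-line observation that both identities follow from the Fundamental Theorem of Calculus. You are also right to flag the sign: as displayed, the second identity is false (it would force $p(z-a)=p(z)$), the integrand should be $p'(z+ta) - p'(z-ta)$, and indeed the corrected version is what the paper actually uses when expanding $r(x,y)$ in the subsequent lemma.
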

\begin{proof}
Both of the equations follow immediately from the Fundamental Theorem of Calculus.
\end{proof}

Recall that the key inequality that we need to prove is the following.

\begin{lemma}[Modified Gradient Identity with bounded coeffcieints]
\label{lem:modified-gradient-identity}
For all positive integers $k,\ell$ and real numbers $x,y$,
\[
0.5(x - y)(1 + 0.25(x - y)^2) (q_{k,\ell}(x) - q_{k,\ell}(y)) - 0.00025(x - y)^2 p_{k,\ell}(x) \geq 0 \,.
\]
Furthermore, the LHS is a $(10^{2^k\ell} , 2^k \ell + 10, 200^{2^k\ell})  $-bounded SoS polynomial (in the variables $x,y$).
\end{lemma}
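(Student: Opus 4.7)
The plan is to reduce the statement to an equivalent SoS claim in new coordinates, establish a base SoS representation of $p_{k,\ell}$ via Corollary A.3, and then handle the subtracted term $-0.00025(x-y)^2 p_{k,\ell}(x)$ using a first-derivative identity for $p_{k,\ell}$ whose SoS cost is controlled by the same corollary.

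First, I would make the substitution $a = x - y$ (with $y$ unchanged) and use $q_{k,\ell}(x) - q_{k,\ell}(y) = a \int_0^1 p_{k,\ell}(y + sa)\,ds$ to rewrite the LHS as $a^2 F(y,a)$, where
\[
F(y,a) = (0.5 + 0.125\,a^2)\int_0^1 p_{k,\ell}(y+sa)\,ds \;-\; 0.00025\, p_{k,\ell}(y+a).
\]
Since $a^2 = (x-y)^2$ is a square and invertible linear substitution preserves SoS representations, it suffices to exhibit a bounded SoS representation of $F(y,a)$ in the variables $(y,a)$, then substitute $a \mapsto x-y$. This isolates the structural content of the inequality in a two-variable identity.

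Second, I would build up a bounded SoS representation of $\int_0^1 p_{k,\ell}(y+sa)\,ds$. By Corollary A.3 with $c = 0$, each $s_{2^j\ell}(u)$ is a bounded SoS polynomial, so $p_{k,\ell}(x) = \prod_{j=1}^{k} s_{2^j\ell}(x/k)$ is a bounded SoS polynomial by the multiplicative composition rule in Claim 5.4(b). Linear substitution $x \mapsto y + sa$ preserves SoS for each fixed $s$, and then Claim A.5 (integration of SoS families) produces a bounded SoS representation of $\int_0^1 p_{k,\ell}(y+sa)\,ds$ in $(y,a)$. Multiplying by the manifestly SoS factor $0.5 + 0.125\,a^2$ again uses Claim 5.4(b) and keeps the coefficient bounds within the stated $(10^{2^k\ell}, 2^k\ell+10, 200^{2^k\ell})$ range after accounting for the extra factor of $a^2$ outside.

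The core step, and the main obstacle, is producing an SoS certificate that absorbs the subtracted term $0.00025\, p_{k,\ell}(y+a)$ into the positive piece. The strategy is to use integration by parts,
\[
\int_0^1 p_{k,\ell}(y+sa)\,ds = p_{k,\ell}(y+a) - a\int_0^1 s\, p_{k,\ell}'(y+sa)\,ds,
\]
to express $p_{k,\ell}(y+a)$ as $\int_0^1 p_{k,\ell}(y+sa)\,ds + a\int_0^1 s\, p_{k,\ell}'(y+sa)\,ds$, and then control the second term via an SoS analogue of the pointwise bound $|p_{k,\ell}'| \le 99\, p_{k,\ell}$. Corollary A.3 with $c = \pm 0.01$ gives bounded SoS representations of $s_{2^j\ell}(u) \pm 0.01\, s_{2^j\ell - 1}(u) = s_{2^j\ell}(u) \pm 0.01 k \cdot s'_{2^j\ell}(u)$, which is precisely the statement we need at the level of a single factor. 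Expanding $p_{k,\ell}' = \sum_j \tfrac{1}{k} s'_{2^j\ell}(\cdot/k)\prod_{i\ne j} s_{2^i\ell}(\cdot/k)$ and applying the corollary factor-by-factor, together with Claim 5.4, yields an SoS identity of the shape $99\, p_{k,\ell}(u) \pm p_{k,\ell}'(u) = \text{SoS}$. Substituting $u = y+sa$, multiplying by $as$ (which we rewrite as a difference of squares of linear forms in $a,s$), and integrating over $s \in [0,1]$ via Claim A.5 then lets one absorb $a\int_0^1 s\, p'_{k,\ell}(y+sa)\,ds$ into $\mathrm{const}\cdot a^2\int_0^1 p_{k,\ell}(y+sa)\,ds$ modulo SoS remainders. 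The constant $0.00025$ is engineered so that, once combined with the $0.5 + 0.125 a^2$ coefficient, the remaining positive margin $0.49975\int p_{k,\ell}(y+sa)\,ds + (0.125 a^2 - \mathrm{const})\int p_{k,\ell}(y+sa)\,ds$ suffices to dominate the error.

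Finally, I would track the degree and coefficient bounds through each of the three steps. Each factor contributes degree $2^j\ell$; the product contributes $(2^{k+1}-1)\ell$; multiplying by $0.5 + 0.125 a^2$ and $a^2 = (x-y)^2$ adds at most $4$ to the degree, yielding the stated bound $2^k\ell + 10$ with room to spare. The coefficient and square-count bounds follow from iterating Claim 5.4(b) over $k$ factors (whose contributions compound at worst as $200^{2^k\ell}$ by Claim A.3's base bound $10^\ell$ for each factor, amplified by the integration bound $\sqrt{k}\,C$ in Claim A.5) and accounting for the SoS-friendly rewriting in the main step. The hardest piece by far is making the integration-by-parts manipulation fully SoS rather than merely true as real-valued inequalities; everything else is bookkeeping via the composition rules of Claim 5.4 and Claim A.5.
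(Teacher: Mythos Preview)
Your high-level strategy is sound and shares the essential idea with the paper: both routes hinge on the fact that $p_{k,\ell} \pm c\, p'_{k,\ell}$ is a bounded SoS polynomial for small $|c|$ (via the factorwise decomposition $p_{k,\ell} \pm c\, p'_{k,\ell} = \tfrac{1}{k}\sum_j (s_{2^j\ell} \pm c\, s_{2^j\ell-1})\prod_{i\neq j} s_{2^i\ell}$ and Corollary~\ref{coro:bouned-sos-coeffs1}). The paper works in the symmetric coordinates $z=(x+y)/2,\ a=(x-y)/2$ and first derives an explicit integral identity (Lemma~\ref{claim:manip-of-gradient-identity}), whereas you work in $y,a=x-y$ and use a single integration by parts. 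Either setup leads to an integrand of the shape
\[
(\alpha + \gamma a^2)\,p_{k,\ell}(u) \;-\; \delta\, a\, p'_{k,\ell}(u),\qquad u=y+sa,
\]
for each fixed $s\in[0,1]$, and the whole proof comes down to certifying this as SoS in $(y,a)$.

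Here is the gap. Your description of that step --- ``multiplying by $as$ (which we rewrite as a difference of squares of linear forms in $a,s$)'' --- does not yield an SoS certificate. Writing $as$ as a difference of squares only decomposes the coefficient of $p'_{k,\ell}$, but $p'_{k,\ell}$ itself is neither nonnegative nor SoS, so terms of the form $(\text{square})\cdot p'_{k,\ell}$ are useless on their own. Relatedly, you cannot ``absorb $a\int s\,p'\,ds$ into $\mathrm{const}\cdot a^2\int p\,ds$'': the expression $C a^2 p(u) - \delta a\, p'(u)$ is not SoS because near $a=0$ it behaves like $-\delta a\, p'(y)$. You need the constant piece too.

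The paper's fix, which slots directly into your framework, is to peel off a portion $\kappa(1+a^2)$ from the coefficient of $p$ and use the identities
\[
1+a^2=\tfrac{(1+a)^2+(1-a)^2}{2},\qquad a=\tfrac{(1+a)^2-(1-a)^2}{4}
\]
to pair each square prefactor with a $p \pm c\,p'$ combination:
\[
\kappa(1+a^2)\,p - \delta a\, p'
= \tfrac{\kappa(1+a)^2}{2}\bigl(p - \tfrac{\delta}{2\kappa}p'\bigr)
+ \tfrac{\kappa(1-a)^2}{2}\bigl(p + \tfrac{\delta}{2\kappa}p'\bigr).
\]
Choosing $\kappa$ so that $\tfrac{\delta}{2\kappa}\le 0.01$ makes each bracket a bounded SoS polynomial by Corollary~\ref{coro:bouned-sos-coeffs1}, and the leftover $(\alpha-\kappa)+(\gamma-\kappa)a^2$ times $p$ is manifestly SoS. (In your coordinates $\delta=0.00025s$, so any $\kappa\ge 0.0125$ works.) Once you articulate this pairing, your integration-by-parts route goes through and is arguably a bit shorter than the paper's symmetric rewrite; without it, the step as you wrote it fails. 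Also, your stated bound ``$99\,p \pm p'$ is SoS'' is slightly too aggressive for Corollary~\ref{coro:bouned-sos-coeffs1}, which only covers $|c|\le 0.01$; you should phrase it as $p \pm c\,p'$ is SoS for $|c|\le 0.01$, which is exactly what the pairing above needs.
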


First, we prove the following algebraic identity which will be used to rewrite the expression in Lemma~\ref{lem:modified-gradient-identity} in a form for which it is easy to prove non-negativity.

\begin{lemma}[Relating $q$ to first and second derivative]\label{claim:manip-of-gradient-identity}
Let $k,\ell$ be positive integers.  Let
\[
r(x,y) = 0.5(x - y)(1 + 0.25(x - y)^2) (q_{k,\ell}(x) - q_{k,\ell}(y)) - 0.00025(x - y)^2 ( p_{k,\ell}(x) + p_{k,\ell}(y)) \,.
\]
Let $z = (x + y)/2$ and $a = (x - y)/2$.  We have the equality
\begin{equation}
\label{eqn:expanding-rxy}
\begin{split}
r(x,y) &= \underbrace{ \int_{0}^{1} \Paren{ \Paren{0.998 a^2 + a^4 } p_{k,\ell}(z + ta ) - 0.001a^3 (2-t) p'_{k,\ell}(z + ta) } \diff t }_{\eqref{eqn:expanding-rxy}.(1)}  \\
& \hspace{0.2in} + \underbrace{ \int_{0}^{1} \Paren{ \Paren{0.998 a^2 + a^4 } p_{k,\ell}(z - ta ) - 0.001a^3 (2-t) p'_{k,\ell}(z - ta) } \diff t}_{\eqref{eqn:expanding-rxy}.(2)}  
\end{split}
\end{equation}
\end{lemma}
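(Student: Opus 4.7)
The identity is a purely algebraic statement, and the plan is to reduce both sides to a common normal form expressed in terms of evaluations of $q_{k,\ell}$ and $p_{k,\ell}$ at the three points $z+a, z, z-a$. First I would substitute $x = z+a$ and $y = z-a$ (so $x - y = 2a$, $x + y = 2z$), giving $0.5(x-y)(1 + 0.25(x-y)^2) = a + a^3$ and $0.00025(x-y)^2 = 0.001\,a^2$. The LHS then becomes
\[
r(x,y) = (a + a^3)\bigl(q_{k,\ell}(z+a) - q_{k,\ell}(z-a)\bigr) - 0.001\,a^2 \bigl(p_{k,\ell}(z+a) + p_{k,\ell}(z-a)\bigr),
\]
which matches the $(z,a)$-parameterization used throughout the RHS.

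Next I would evaluate each of the two integrals on the RHS in two steps. The ``$p$-piece'' is immediate from $q_{k,\ell}' = p_{k,\ell}$ and the fundamental theorem of calculus: $\int_0^1 p_{k,\ell}(z \pm ta)\,dt = \pm(q_{k,\ell}(z \pm a) - q_{k,\ell}(z))/a$. The ``$p'$-piece'' requires one integration by parts applied with $u = 2-t$ and $dv = p_{k,\ell}'(z \pm ta)\,dt$; noting $\tfrac{d}{dt}[\pm p_{k,\ell}(z \pm ta)/a] = p_{k,\ell}'(z \pm ta)$, this yields
\[
\int_0^1 (2-t)\,p_{k,\ell}'(z \pm ta)\,dt = \pm\frac{p_{k,\ell}(z\pm a)}{a} \mp \frac{2 p_{k,\ell}(z)}{a} \pm \frac{q_{k,\ell}(z\pm a) - q_{k,\ell}(z)}{a^2}.
\]
Combining these, each integral on the RHS reduces to a linear combination of $q_{k,\ell}$ and $p_{k,\ell}$ at $z \pm a$ and $z$ alone.

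The final step is bookkeeping: add the ``$+ta$'' and ``$-ta$'' halves and match coefficients with the rewritten $r(x,y)$. The key structural feature is that the boundary contribution $\mp 2p_{k,\ell}(z)/a$ from the integration by parts cancels between the two halves (by the opposite sign on the $z - ta$ side), so no lingering $p_{k,\ell}(z)$ term remains. The numerical coefficients $0.998$ and $0.001$ in the integrand are calibrated so that, after the integration-by-parts step contributes an extra $-0.001\,a\,(q_{k,\ell}(z+a) - q_{k,\ell}(z))$ (respectively $+0.001\,a\,(q_{k,\ell}(z) - q_{k,\ell}(z-a))$), the combined $q$-coefficient in the RHS assembles exactly into $(a + a^3)(q_{k,\ell}(z+a) - q_{k,\ell}(z-a))$, while the $p_{k,\ell}(z\pm a)$ boundary terms add up to the $-0.001\,a^2(p_{k,\ell}(z+a) + p_{k,\ell}(z-a))$ piece of $r(x,y)$.

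The main obstacle will be purely clerical: the identity depends on precise numerical constants, and small sign errors or missing factors of $a$ (from the chain rule) will create spurious residual terms that don't cancel. For that reason I would organize the calculation symmetrically, handling the $(z+a)$ and $(z-a)$ contributions in parallel and tracking the boundary terms of the integration by parts separately from the resulting ``remainder'' integrals, so that the symmetric cancellations can be read off cleanly at the end.
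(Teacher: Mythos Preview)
Your plan is sound and is essentially the paper's argument run in reverse: the paper starts from $r(x,y)$ in the $(z,a)$ form and successively applies FTC identities (first to $q(z+a)-q(z-a)$, then to $p(z+a)+p(z-a)$, then rewriting $p(z)$ via a double integral) to arrive at the integral representation, while you propose to evaluate the two integrals directly and match. Either direction works, and yours is arguably the more transparent way to verify such an identity.

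Two corrections. First, your integration-by-parts formula for the $-ta$ side has a sign slip in the last term: the correct result is
\[
\int_0^1 (2-t)\,p'(z-ta)\,dt \;=\; \frac{2p(z)-p(z-a)}{a}\;-\;\frac{q(z)-q(z-a)}{a^2},
\]
i.e.\ the $q$-contribution enters with a minus, not a plus (check it on $p\equiv 1$). Second, when you carry out the bookkeeping you will \emph{not} find exact agreement: e.g.\ for $q(w)=w$ one gets LHS $=1.998\,a^2+2a^4$ but RHS $=1.996\,a^2+2a^4$. The lemma statement (and the paper's proof of it) contains arithmetic typos; tracing the derivation carefully gives coefficient $0.999\,a^2$ rather than $0.998\,a^2$, and weight $t$ on the $p'$ term (with a sign flip on the $z-ta$ side) rather than $2-t$ on both. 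This does not affect anything downstream, since the subsequent SoS argument only uses the qualitative form of the decomposition, not the specific constants.
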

\begin{proof}

Recalling that $p_{k,\ell}$ is the derivative of $q_{k,\ell}$, we can write

\begin{equation}
\label{eqn:writing-rxy-with-z-a}
    \begin{split}
r(x,y)&=(a + a^3)(q_{k,\ell}(z + a) - q_{k,\ell}(z - a)) - 0.001a^2(p_{k,\ell}(z + a) + p_{k,\ell}(z - a)) \\ &= (a^2 + a^4)\int_{0}^1 p_{k,\ell}(z + ta) + p_{k,\ell}(z - ta) \diff t \\
& \hspace{0.2in} - 0.002a^2 p_{k,\ell}(z) -  0.001a^3 \int_{0}^1 p'_{k,\ell}(z + ta) - p'_{k,\ell}(z - ta) \diff t,
\end{split}
\end{equation}
where  $p'$ is the derivative of $p$ and the equality follows from applying Fact~\ref{fact:derivative-int-substitution-fact} to $q_{k,\ell}$ and $p_{k,\ell}$. Next, observe 
\begin{equation}
\begin{split}
    & 2p_{k,\ell}(z)  = 2 \int_{0}^{1} p_{k, \ell}(z) \diff t  \\
    & = 2 \int_{0}^{1} p_{k, \ell}(z) \diff t  - \int_{0}^{1} \Paren{ p_{k,\ell }(z + ta ) +  p_{k, \ell}(z - ta) } + \int_{0}^{1} \Paren{ p_{k,\ell }(z + ta ) +  p_{k,\ell}(z - ta) } \\
    & = \int_{0}^{1} \Paren{  p_{k, \ell} ( z + ta ) -  p_{k, \ell}(z) } \diff t  - \int_{0}^{1} \Paren{ p_{k,\ell }(z) - p_{k,\ell} (z - ta) } \diff t  + \int_{0}^{1} \Paren{ p_{k,\ell }(z + ta ) +  p_{k,\ell}(z - ta) } \diff t \\
    & = \int_{0}^{1} a \Paren{ \int_{0}^{t} \Paren{ p'_{k,\ell}(z + sa)   - p'_{k,\ell}(z - sa) } \diff s } \diff t  + \int_{0}^{1} \Paren{ p_{k,\ell }(z + ta ) +  p_{k,\ell}(z - ta) } \diff t 
\end{split}
\end{equation}

Substituting the expanded expression for $2p_{k,\ell}$ back into Equation~\eqref{eqn:writing-rxy-with-z-a}, 
\[
\begin{split}
    r(x,y) & = \Paren{0.998a^2  + a^4}  \int_{0}^1 \Paren{ p_{k,\ell}(z + ta) + p_{k,\ell}(z - ta) } \diff t      \\
    & \hspace{0.2in} - 0.001 a^2 \int_{0}^{1} a \Paren{ \int_{0}^{t} \Paren{ p'_{k,\ell}(z + sa)   - p'_{k,\ell}(z - sa) } \diff s } \diff t \\
    & \hspace{0.2in} -  0.001a^3 \int_{0}^1 p'_{k,\ell}(z + ta) - p'_{k,\ell}(z - ta) \diff t\\
    & = \Paren{0.998a^2  + a^4}  \int_{0}^1 \Paren{ p_{k,\ell}(z + ta) + p_{k,\ell}(z - ta) } \diff t      \\ 
    & \hspace{0.2in} - 0.001 a^3    \int_{0}^{1} \Paren{1 - s}  \Paren{ p'_{k,\ell}(z + sa)   - p'_{k,\ell}(z - sa) } \diff s   \\
    & \hspace{0.2in} -  0.001a^3 \int_{0}^1 \Paren{ p'_{k,\ell}(z + ta) - p'_{k,\ell}(z - ta) } \diff t \\
    & = \Paren{0.998a^2  + a^4}  \int_{0}^1 \Paren{ p_{k,\ell}(z + ta) + p_{k,\ell}(z - ta) } \diff t      \\  
    & \hspace{0.2in} - 0.001 a^3    \int_{0}^{1} \Paren{2 - t}  \Paren{ p'_{k,\ell}(z + ta)   - p'_{k,\ell}(z - ta) } \diff t , \\
    & =  \underbrace{ \int_{0}^{1} \Paren{ \Paren{0.998 a^2 + a^4 } p_{k,\ell}(z + ta ) - 0.001a^3 (2-t) p'_{k,\ell}(z + ta) } \diff t }_{\eqref{eqn:expanding-rxy}.(1)}  \\
    & \hspace{0.2in} + \underbrace{ \int_{0}^{1} \Paren{ \Paren{0.998 a^2 + a^4 } p_{k,\ell}(z - ta ) - 0.001a^3 (2-t) p'_{k,\ell}(z - ta) } \diff t}_{\eqref{eqn:expanding-rxy}.(2)}  
\end{split}
\]
where the second equality follows from switching the order of the integral and observing that the coefficient for a fixed $s$ is simply $1-s$. 
\end{proof}

In light of Lemma~\ref{claim:manip-of-gradient-identity}, it remains to prove that the expressions \eqref{eqn:expanding-rxy}.(1) and \eqref{eqn:expanding-rxy}.(2) are nonnegative and can be written as bounded SoS polynomials in the variables $x,y$.

\begin{lemma}\label{claim:bounded-sos-polynomial-piece}
Let $z = (x + y)/2$ and $a = (x - y)/2$.  Then the expressions \eqref{eqn:expanding-rxy}.(1)  and \eqref{eqn:expanding-rxy}.(2) are nonnegative and can be written as a $(10 \cdot (2^k \ell)^2 , 2^k \ell + 10, 150^{2^k\ell}) $-bounded SoS polynomials in the variables $x,y$.
\end{lemma}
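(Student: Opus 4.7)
The plan is to show that for each fixed $t \in [0,1]$, the integrand in $\eqref{eqn:expanding-rxy}.(1)$ is a bounded SoS polynomial in $x,y$ (with bounds uniform in $t$), and then apply \cref{claim:sos-integration} to convert the $t$-integral into a bounded SoS polynomial in $x,y$ alone. The argument for $\eqref{eqn:expanding-rxy}.(2)$ is identical after replacing $w := z + ta$ by $w' := z - ta$; both are convex combinations of $x$ and $y$ for $t \in [0,1]$ (since $w = \tfrac12((1+t)x + (1-t)y)$), which is what lets us apply \cref{claim:basic-composition-properties}(c) to transport SoS structure in $w$ to SoS structure in $x,y$.

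The first step is to factor the integrand in a form that is manifestly a product of SoS pieces. Using the product rule and $s'_{2^j\ell}(u) = s_{2^j\ell-1}(u)$, we have $p'_{k,\ell}(w) = \tfrac{1}{k}\sum_{j=1}^k s_{2^j\ell-1}(w/k)\prod_{j'\neq j} s_{2^{j'}\ell}(w/k)$, and distributing $(0.998 a^2 + a^4)p_{k,\ell}(w)$ equally across the same $k$ indices lets us rewrite the integrand as
\[
\sum_{j=1}^k \frac{a^2}{k}\Bigl[(0.998 + a^2)\, s_{2^j\ell}(w/k) - 0.001\, a(2-t)\, s_{2^j\ell-1}(w/k)\Bigr] \prod_{j'\neq j} s_{2^{j'}\ell}(w/k).
\]

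The key technical observation (and in my opinion the only nontrivial step) is that the sign-indefinite bracket above can be written as a non-negative combination of the two SoS polynomials supplied by \cref{coro:bouned-sos-coeffs1}. Define
\[
\lambda_1(a,t) = \tfrac{1}{2}\bigl[(0.998 + a^2) + 0.1(2-t)a\bigr], \qquad \lambda_2(a,t) = \tfrac{1}{2}\bigl[(0.998 + a^2) - 0.1(2-t)a\bigr].
\]
Direct verification gives the identity
\[
(0.998 + a^2)\, s_{2^j\ell}(w/k) - 0.001\, a(2-t)\, s_{2^j\ell-1}(w/k) = \lambda_1 \cdot U_j(w/k) + \lambda_2 \cdot V_j(w/k),
\]
where $U_j(u) = s_{2^j\ell}(u) - 0.01\, s_{2^j\ell-1}(u)$ and $V_j(u) = s_{2^j\ell}(u) + 0.01\, s_{2^j\ell-1}(u)$. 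Each $\lambda_i$ is a quadratic in $a$ of discriminant $0.01(2-t)^2 - 4(0.998)$, which is strictly negative for all $t \in [0,1]$, so completion of the square exhibits $\lambda_i$ as a bounded SoS polynomial in $a$; substituting $a = (x-y)/2$ and using \cref{claim:basic-composition-properties}(c) keeps it bounded SoS in $x,y$. Each $U_j, V_j$ is $(2^{2^{j-1}\ell},\, 2^{j-1}\ell,\, 10^{2^{j-1}\ell})$-bounded SoS in its argument by \cref{coro:bouned-sos-coeffs1}; composing with the convex combination defining $w/k$ via \cref{claim:basic-composition-properties}(c) preserves all bounds.

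The remaining step is bookkeeping. The $j$-th summand is a product of at most $k+2$ bounded SoS polynomials: $a^2$ (a $(1,2,1)$-bounded SoS), one of $\lambda_1,\lambda_2$, one of $U_j,V_j$, and $k-1$ factors $s_{2^{j'}\ell}(w/k)$ (each bounded SoS by \cref{coro:bouned-sos-coeffs1} with $c=0$). Iterated application of \cref{claim:basic-composition-properties}(b) controls the degree by $O(2^k\ell)$ and the coefficient magnitude by $150^{2^k\ell}$ (the dominant contribution is $\prod_{j=1}^k 10^{2^{j-1}\ell} = 10^{(2^k-1)\ell}$ from the $U_j,V_j,s_{2^{j'}\ell}$ factors, and the $2^{d_1+d_2}$ factors in \cref{claim:basic-composition-properties}(b) only inflate by a constant per power of the exponent). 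Summing the $k$ values of $j$ and the two choices of sign via \cref{claim:basic-composition-properties}(a) yields an $\bigO{(2^k\ell)^2}$ bound on the number of squares. All bounds are uniform in $t\in[0,1]$, so \cref{claim:sos-integration} converts the outer integral into a bounded SoS polynomial in $x,y$ with the advertised parameters $(10(2^k\ell)^2,\, 2^k\ell+10,\, 150^{2^k\ell})$, and nonnegativity is automatic from the SoS form. The main obstacle is not really an obstacle once the decomposition $\lambda_1 U_j + \lambda_2 V_j$ is spotted; after that the argument is mechanical composition tracking.
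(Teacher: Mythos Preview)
Your proof is correct and follows the same overall architecture as the paper's---expand $p_{k,\ell}$ and $p'_{k,\ell}$ via the product rule, reduce to \cref{coro:bouned-sos-coeffs1} on each factor $s_{2^j\ell}\pm c\,s_{2^j\ell-1}$, then combine with \cref{claim:basic-composition-properties} and integrate via \cref{claim:sos-integration}---but your algebraic decomposition of the sign-indefinite part is different and somewhat cleaner. The paper works at the level of $p_{k,\ell}$ itself: it uses the weights $(1\pm a)^2$ (via the identities $1+a^2=\tfrac12((1+a)^2+(1-a)^2)$ and $a=\tfrac14((1+a)^2-(1-a)^2)$) to split the integrand into two pieces proportional to $p_{k,\ell}(w)\mp 0.004\,p'_{k,\ell}(w)$ plus a leftover term proportional to $p_{k,\ell}(w)$, and only then applies the product-rule expansion to $p_{k,\ell}\mp 0.004\,p'_{k,\ell}$. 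You instead apply the product-rule expansion first and then, for each $j$, absorb the indefinite term into the positive-definite quadratics $\lambda_{1,2}(a,t)$, which avoids the leftover term entirely. Both routes hinge on the same key input (\cref{coro:bouned-sos-coeffs1} with $|c|\le 0.01$) and the same composition bookkeeping; your choice of weights is more economical, while the paper's $(1\pm a)^2$ weights are perhaps more transparent as squares. One small remark: your invocation of \cref{claim:basic-composition-properties}(c) for $a=(x-y)/2$ is not literally a convex combination, but since the factors $a^2$ and $\lambda_i$ are only degree two with $O(1)$ coefficients this is harmless (the paper glosses over the analogous point for $(1\pm a)^2$).
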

\begin{proof}
We focus on the expression \eqref{eqn:expanding-rxy}.(1).  The argument for \eqref{eqn:expanding-rxy}.(2)  is exactly the same. First, we show that expression \eqref{eqn:expanding-rxy}.(1) is non-negative. Using $1+a^2 = \frac{ (1+a)^2 + (1-a)^2}{2}$ and $a = \frac{ (1+a)^2 - (1-a)^2 }{4}$, 

\begin{equation}
\label{eqn:expanding-rxy-term1}
    \begin{split}
      \eqref{eqn:expanding-rxy}.(1) & = a^2 \cdot \int_{0}^{1} \Paren{ \Paren{0.998 + a^2 } p_{k,\ell}(z + ta ) - 0.001a (2-t) p'_{k,\ell}(z + ta) } \diff t   \\
      &  =  a^2 \cdot \int_{0}^{1} (2-t) \Paren{ \frac{ \Paren{1 + a^2 } }{8} p_{k,\ell}(z + ta ) -  \frac{ 0.001 ((1+a)^2 - (1-a)^2 )}{ 4}  p'_{k,\ell}(z + ta) } \diff t \\
      & \hspace{0.2in} +  a^2 \cdot \int_{0}^1 \Paren{ (0.998 + a^2) - \frac{ (2-t) (1+a^2) }{8 } p_{k,\ell}(z+ta) } \diff t  \\  
      & =   a^2 \cdot \int_{0}^{1} \frac{  (2-t) (1 + a)^2  }{16}  \Paren{    p_{k,\ell}(z + ta ) -  0.004 p'_{k,\ell}(z + ta)   } \diff t  \\
      & \hspace{0.2in}  + a^2 \cdot \int_{0}^{1} \frac{  (2-t) (1 - a)^2  }{16}  \Paren{    p_{k,\ell}(z + ta ) +  0.004 p'_{k,\ell}(z + ta)   } \diff t ,\\
      & \hspace{0.2in}  + a^2 \cdot \int_{0}^1 \Paren{ (0.998 + a^2) - \frac{ (2-t) (1+a^2) }{8 } } p_{k,\ell}(z+ta)  \diff t .
    \end{split}
\end{equation}



To prove nonnegativitiy, it suffices to show that $a_{k,\ell}(x) = p_{k,\ell}(x) - 0.004 p'_{k,\ell}(x)$ and $b_{k,\ell}(x) = p_{k,\ell}(x) +  0.004 p'_{k,\ell}(x)$ are non-negative for all $x \in \mathbb{R}$. 
Recall, $p_{k,\ell}(x) =  \prod_{j \in [k]} s_{2^j \ell} (x/k)$. Therefore, by product rule,
\begin{equation*}
\begin{split}
    p'_{k,\ell}(x) & = \sum_{j \in [k] }  \frac{ s'_{2^j \ell }(x/k) }{k}  \cdot \prod_{j'\neq j \in [k]} s_{2^{j'} \ell} (x/k)  \\
    & = \sum_{j \in [k] }  \frac{ s_{(2^j \ell)  - 1 }(x/k) }{k}  \cdot \prod_{j'\neq j \in [k]} s_{2^{j'} \ell} (x/k)   .
\end{split}
\end{equation*}
Therefore, 
\begin{equation}
    a_{k, \ell}(x) = \sum_{j \in [k]} \Paren{ \frac{ s_{2^j \ell }(x/k) - 0.004 \cdot s_{(2^j \ell) - 1}(x/k)  }{k}  }   \prod_{j'\neq j \in [k]} s_{2^{j'} \ell} (x/k)  \geq 0 
\end{equation}
since for all $j \in [k]$,  $s_{2^j \ell}(x/k) - 0.004 s_{ (2^j \ell)-1 }(x/k) \geq 0$ by Corollary~\ref{coro:bouned-sos-coeffs1} and $s_{2^{j'}  \ell  }(x/k) \geq 0$ by Corollary \ref{coro:positive-truncation}. Similarly, 
\begin{equation}
    b_{k,\ell}(x) = \sum_{j \in [k]} \Paren{ \frac{ s_{2^j \ell }(x/k) +  0.004 \cdot s_{(2^j \ell) - 1}(x/k)  }{k}  }   \prod_{j'\neq j \in [k]} s_{2^{j'} \ell} (x/k)  \geq 0
\end{equation}
since for all $j \in [k]$,  $s_{2^j \ell}(x/k) + 0.004 s_{ (2^j \ell)-1 }(x/k) \geq 0$ by Corollary~\ref{coro:bouned-sos-coeffs1}.

Now we have proved that the expression is nonnegative, we show that it is a SoS polynomial with bounded coefficients.  By Corollary~\ref{coro:bouned-sos-coeffs1} the polynomials  
\[ s_{2^j \ell}(x/k) \pm 0.004 s_{ (2^j \ell)-1 }(x/k) , s_{2^j \ell}(x/k) \] are all
$(2^{2^{j-1}\ell}, 2^{j-1}\ell, 10^{2^{j-1}\ell})$-bounded SoS polynomials.  Thus, by Claim~\ref{claim:basic-composition-properties}, $a_{k,\ell}(x)$  and $b_{k,\ell}(x)$ are both $(k \cdot 2^{2^{k}\ell} , 2^k \ell, 90^{2^k\ell}) $-bounded SoS polynomials.  Since $z + ta = \frac{1 + t}{2} x + \frac{1 - t}{2} y$, by Claim~\ref{claim:basic-composition-properties} again, we get that for any real number $t \in [0,1]$, 
\[
\frac{  (2-t) (1 + a)^2  }{16}  \Paren{    p_{k,\ell}(z + ta ) -  0.004 p'_{k,\ell}(z + ta)   }
\]
is a $(k \cdot 2^{2^{k}\ell} , 2^k \ell, 100^{2^k\ell}) $-bounded SoS polynomial in the variables $x,y$ after substituting $z = (x+y)/2, a = (x-y)/2$.  We can use a similar argument for the other terms in the expression in \eqref{eqn:expanding-rxy-term1}.  Then we can use Claim~\ref{claim:sos-integration} to bound the integral over $t$ and deduce that the expression \eqref{eqn:expanding-rxy}.(1) is a $(10 \cdot (2^k \ell)^2 , 2^k \ell + 10, 150^{2^k\ell}) $-bounded SoS polynomial in the variables $x,y$.
\end{proof}

Now we can complete the proof of Lemma~\ref{lem:modified-gradient-identity}.
\begin{proof}[Proof of Lemma~\ref{lem:modified-gradient-identity}]
Note that the expression in Lemma~\ref{lem:modified-gradient-identity} is equal to 
\[
r(x,y) + 0.00025(x-y)^2 p_{k,\ell}(y) \,.
\]
Now by the definition of $p_{k,\ell}$, Corollary~\ref{coro:bouned-sos-coeffs1}, and \cref{claim:basic-composition-properties}, we get that $p_{k,\ell}(y)$ is is a $(2^{2^{k}\ell} , 2^k \ell, 90^{2^k\ell}) $-bounded SoS polynomial.  \cref{claim:bounded-sos-polynomial-piece} and \cref{claim:manip-of-gradient-identity} imply that $r(x,y)$ is a $(20 \cdot (2^k \ell)^2 , 2^k \ell + 10, 150^{2^k\ell}) $-bounded SoS polynomial so overall, $r(x,y) + 0.00025(x-y)^2 p_{k,\ell}(y)$ is a $(10^{2^k\ell} , 2^k \ell + 10, 200^{2^k\ell})  $-bounded SoS polynomial (and thus also nonnegative) and we are done.
\end{proof}

Finally, this completes the proof of Theorem~\ref{thm:exp-monotone-approx}.
\begin{proof}[Proof of Theorem~\ref{thm:exp-monotone-approx}]
The desired result follows immediately from Lemma~\ref{lem:modified-gradient-identity}.
\end{proof}

\end{document}